\let\phi\varphi
\tikzstyle{causalarrow} = [draw=black,thick,decorate,decoration={snake,amplitude=.7mm,pre length=1mm,post length=2mm}]
\tikzstyle{affectsarrow} = [draw=black,thick]
\tikzstyle{routedarrow} = [draw=black,very thick]
\numberwithin{equation}{section} 
\newcommand{\CC}{\ensuremath{\mathcal{C}}}
\newcommand{\HH}{\ensuremath{\mathcal{H}}}
\newcommand{\K}{\ensuremath{\mathcal{K}}}
\newcommand{\N}{\ensuremath{\mathcal{N}}}
\newcommand{\XX}{\ensuremath{\mathcal{X}}}
\newcommand{\abs}[1]{\left| #1 \right|}
\newcommand{\editorial}[1]{}
\newcommand{\id}{\mathbb{1}}
\newcommand{\Nodes}{\texttt{Nodes}}
\newcommand{\Happ}{\mathtt{Happ}}
\newcommand{\Ind}{\mathtt{Ind}}
\newcommand{\Bran}{\mathtt{Bran}}
\DeclareMathOperator{\Tr}{Tr}
\DeclareMathOperator{\spann}{span}
\DeclareMathOperator*{\boolSum}{\scalerel*{\mathbb{\Sigma}}{\sum}}
\def\ketbra#1#2{\mathinner{|{#1}\rangle\!\langle{#2}|}}
\def\dket#1{\mathinner{|{#1}\rangle\!\rangle}}
\newtheoremstyle{note}
{1.2em}                
{1.2em}                
{}                     
{}                     
{\bfseries}    
{.}                    
{.5em}                 
{}                     
\theoremstyle{note}
\newtheorem{theorem}{Theorem}[section]
\newtheorem{definition}[theorem]{Definition}
\newtheorem{corollary}[theorem]{Corollary}
\newtheorem{example}[theorem]{Example}
\newtheorem{remark}[theorem]{Remark}
\numberwithin{theorem}{section}
\numberwithin{figure}{section}
\renewcommand\listoffigures{%
        \@starttoc{lof}%
}
\newcommand{\cG}{\mathcal{G}}
\newcommand{\cN}{\N}
\newcommand{\ravnothing}{{\mathchoice%
{\reflectbox{$\varnothing$}}%
{\reflectbox{$\varnothing$}}%
{\reflectbox{$\scriptstyle \varnothing$}}%
{\reflectbox{$\scriptscriptstyle \varnothing$}}%
}}
\begin{document}
\setlength{\parindent}{.5cm}

\author{Maarten Grothus}
    \affiliation{Univ.\ Grenoble Alpes, Inria, 38000 Grenoble, France}
    \affiliation{Univ.\ Grenoble Alpes, CNRS, Grenoble INP, Institut N\'eel, 38000 Grenoble, France}
    \orcid{0000-0002-1561-9709}
\author{Alastair A.\ Abbott}
    \affiliation{Univ.\ Grenoble Alpes, Inria, 38000 Grenoble, France}
    \orcid{0000-0002-2759-633X}
\author{Augustin Vanrietvelde}
    \affiliation{Télécom Paris-LTCI, Inria, Institut Polytechnique de Paris, 91120 Palaiseau, France}
    \orcid{0000-0001-9022-8655}
\author{Cyril Branciard}
    \affiliation{Univ.\ Grenoble Alpes, CNRS, Grenoble INP, Institut N\'eel, 38000 Grenoble, France}
    \orcid{0000-0001-9460-825X}
\title{Routing Quantum Control of Causal Order}

\maketitle

\begin{abstract}
In recent years, various frameworks have been proposed for the study of quantum processes with indefinite causal order.
In particular, quantum circuits with quantum control of causal order (QC-QCs) form a broad class of physical supermaps obtained from a bottom-up construction and are believed to represent all quantum processes physically realisable in a fixed spacetime.
Complementarily, the formalism of routed quantum circuits introduces quantum operations constrained by ``routes'' to represent processes in terms of a more fine-grained routed circuit decomposition.
This decomposition, formalised using a so-called routed graph, represents the information flow within the respective process.
However, the existence of routed circuit decompositions has only been established for a small set of processes so far, including both certain specific QC-QCs and more exotic processes as examples.

In this work, we remedy this fact by connecting these two frameworks.
We prove that for any given $N$, one can use a single routed graph to systematically obtain a routed circuit decomposition for any QC-QC with $N$ parties.
We detail this construction explicitly and contrast it with other routed circuit decompositions of QC-QCs, which we obtain from alternative routed graphs.
We conclude by pointing out how this connection can be useful to tackle various open problems in the field of indefinite causal order, particularly establishing circuit representations of subclasses of QC-QCs.
\end{abstract}

\newpage
{
	\hypersetup{linkcolor=black}
	\tableofcontents
}
\newpage

\section{Introduction}
\label{sec:introduction}

Over the course of the last fifteen years, there has been an explosion of interest in studying the quantum world through the eye of causality.
While we are used to making sense of our everyday experiences in terms of cause and effect -- a perspective which has been well-formalised in the field of classical statistics~\cite{Reichenbach1956, Spirtes1993, Pearl2009}, certain quantum phenomena defy any attempt to be explained in terms of classical causation~\cite{Wood2015}.
However, it recently has been realised that the notion of causality can be adapted
into the quantum realm, shifting from classical random variables to quantum systems as causal relata~\cite{PhysRevX.7.031021, arxiv.1906.10726}.
This has led to the increasing adoption of causal perspectives and techniques in the fields of quantum foundations and quantum information theory.
In the wake of this movement, a particularly interesting field of research that has emerged is that of indefinite causal order (ICO) in quantum processes, suggesting the necessity of cyclic quantum causal models for their description~\cite{Hardy2009, Brukner2015, Costa2016, Barrett2021, VVR, Adlam2023, Vilasini24}.

The top-down approach of \emph{quantum supermaps}~\cite{Chiribella2008_Supermap} (also known as higher-order maps and often formalised in the \emph{process matrix framework}~\cite{Oreshkov2012}) has been extensively used to model ICO phenomena.
In this framework, one considers a fixed number of agents who, as parties in a fixed process, each execute a freely chosen quantum operation precisely once.
The only condition imposed on the process is that any probabilities observed by these agents are logically consistent, but no global causal structure is assumed -- as such a structure might no longer be in place, e.g., in a theory of quantum gravity.
Such processes can, equivalently, be understood as the most general ``higher-order'' transformations that transform the parties' local operations into a valid quantum operation between a common global past and a global future.
Quantum supermaps are therefore of interest both from an informational perspective and as an extension of quantum theory into general relativistic regimes~\cite{Hardy2009, Oreshkov2012, Zych2019, Moller2021, VVR, Baumann2022, AC2024, Sahdo2024, Vilasini24}. 

Over the course of the study of ICO, several particularly interesting processes have been studied in detail,
such as the quantum switch~\cite{Chiribella2013}, in which the two possible orders of two operations are placed in a quantum superposition with the aid of a quantum control system,
as well as more exotic processes such as the Oreshkov-Costa-Brukner (OCB) process~\cite{Oreshkov2012}, for which no clear physical interpretation has been suggested so far (and which has in fact been argued to be non-physical, on the basis that it is not ``purifibable''~\cite{Araujo2017}), and the so-called Lugano (or ``AF/BW'') process~\cite{Lugano2014, Baumeler2016}, which features ICO on classical systems.
These examples and their respective properties have been the subject of intense study, and for the quantum switch~\cite{Chiribella2012,Chiribella2013,Feix2015,Guerin2016} and its natural generalisation to more operations~\cite{Colnaghi2012,Facchini2015,Araujo2014,Taddei2021}, as well as for the Lugano process~\cite{Porcreau2025}, computational and communication advantages
beyond standard causally ordered quantum information processing have been identified.
However, the top-down perspective of the process matrix framework does not provide any straightforward way to systematically establish
further processes with interesting properties, or easily assess their physical implementability. 
Accordingly, working towards a better understanding of the internal structure and physical realisability of such processes is one of the major challenges of the field.

With this goal in mind, various constructive frameworks have been proposed~\cite{Portmann2017, Wechs2021, PBS, PBSGrenoble, Purves2021, Arrighi2023, Vanrietvelde2023, Chardonnet2025}. 
Refs.~\cite{Wechs2021, Vanrietvelde2023} in particular establish formal connections with quantum supermaps, providing an avenue to categorise processes into structurally different classes.
However, research into the relation and connections between these approaches and their synergies has been rather limited so far, with a notable exception being Refs.~\cite{Salzger2022, Salzger2024} which relate the \emph{causal box} framework~\cite{Portmann2017} to the generalised circuit framework of~\cite{Wechs2021} in an effort to gauge the physical realisability of processes within a fixed spacetime.

This work contributes to closing this gap by connecting the framework of \textit{quantum circuits with quantum control of causal order} (QC-QCs)~\cite{Wechs2021} -- physically realisable processes which feature coherent control of the order of operations, and whose properties as process matrices have been well-characterised -- with \textit{routed quantum circuits}~\cite{Vanrietvelde2021, Vanrietvelde2023}, which provide a decomposition of ICO processes into more fine-grained ``routed'' circuits, with precise and intuitive diagrammatic representations in which the process validity can be checked from the circuit structure, despite the presence of feedback loops.
While for some notable examples of causally indefinite QC-QCs, such as the quantum switch and the so-called Grenoble process~\cite{Wechs2021}, explicit constructions as routed quantum circuits have been formulated~\cite{Vanrietvelde2023}, no systematic procedure has so far been established. 
Indeed, it has thus far remained unknown whether routed quantum circuits can represent all QC-QCs, even though they can also represent some more exotic purifiable processes, such as the Lugano process~\cite{Vanrietvelde2023}.

Here we establish a generic procedure for obtaining a routed quantum circuit representation of any QC-QC.
Specifically, for any given number of parties, we specify a single routed graph which will allow one to obtain any QC-QC with that number of parties, by using an appropriate ``fleshing out''. 
In \cref{sec:routed-quantum-circuits}, we provide a detailed review of the framework of routed quantum circuits and how it allows for consistent decompositions of ICO processes, which we will require to demonstrate the general validity of our construction.
We then briefly review the QC-QC framework in \cref{sec:QCQC}, before presenting our main results in \cref{sec:QCQCs_as_RQCs}, representing any QC-QC as a routed circuit:
in \cref{sec:routed-graph-qcqc}, we propose the backbone of this representation in terms of a routed graph;
after confirming the validity of this routed graph in \cref{sec:branch-graph-QCQC}, we detail the \enquote{fleshing out} of a skeletal supermap associated with this routed graph to any desired QC-QC in \cref{sec:fleshing-out}, and prove the equivalence of the resulting routed supermap to the original QC-QC construction.
We then discuss some possible alternative routed circuit decompositions for QC-QCs in \cref{sec:variations}, before concluding in \cref{sec:discussion}.

\section{Consistent Routed Quantum Circuits}
\label{sec:routed-quantum-circuits}

In this section we review the routed quantum circuit framework for ICO processes of Ref.~\cite{Vanrietvelde2023}, making a significant effort to propose a simplified yet formal presentation of the framework that slightly deviates from that of Ref.~\cite{Vanrietvelde2023} without affecting the main results.

The broader goal of \textit{routed quantum circuits} \cite{Vanrietvelde2021, Vanrietvelde2023} is to generalise quantum circuits so as to represent not only the tensor product structure of the Hilbert spaces involved in a circuit, but also their potential direct sum structure.
For instance, considering two Hilbert spaces $\HH^A = \HH^A_0\oplus\HH^A_1$ and $\HH^B=\HH^B_0\oplus\HH^B_1$, that each decomposes into the direct sum of two \emph{sectors} $\HH^A_k, \HH^B_l$, it may be relevant (and sufficient) to only refer to the composite sectorised space
\begin{equation}
	\tilde{\HH}^{AB} \ = \ \HH^A_0 \otimes \HH^B_1 \ \oplus \ \HH^A_1 \otimes \HH^B_0, \label{eq:tilde_HAB}
\end{equation}
rather than to the full space $\HH^A \otimes \HH^B \supsetneq \tilde{\HH}^{AB}$.
Sectorised spaces like this are of pivotal importance in quantum optics for instance, to describe the transmission of a photon in a superposition of two trajectories $A$ and $B$~\cite{Hler2019}.
In the framework, this is modelled by the introduction of \textit{routes}, relations between the sectors specifying which sectors of $\HH^A \otimes \HH^B$ may be mapped to each other through a given \textit{routed channel}, and possibly rendering certain sectors inaccessible, as in \cref{eq:tilde_HAB} above.

To affirm the validity of a routed quantum circuit, it is sufficient to study its route structure, which can be specified by an underlying \emph{routed graph}, which may generally be cyclic.
Accordingly, we will initially focus on establishing the route structure in isolation, before turning to the circuits themselves.

After presenting some preliminaries on relations in \cref{sec:relations}, we will
formally introduce the routed graph in \cref{sec:routed-graph}, followed by the more fine-grained \emph{branch graph} it implies in \cref{subsec:branch_graph}.
These directed graphs need to satisfy certain properties to ensure that one can build consistent routed quantum circuits on top of them:
the routed graph needs to satisfy the condition of \emph{bi-univocality} which, informally, ensures each operation is performed precisely once, while the branch graph is only allowed to contain so-called \emph{weak} loops, ensuring that these do not introduce any paradoxical feedback mechanism. 
In \cref{sec:routed-maps}, we then introduce formally the notion of \emph{routed linear maps}, before connecting it with the routed graph structure to obtain a \emph{skeletal (routed) supermap} in \cref{subsec:skeletal_fleshingout}, by associating each arrow in the routed graph to a respective Hilbert space.
This skeletal supermap will then be fleshed out to a routed or unrouted supermap by composition with compatible routed channels or superchannels at each node.
We conclude this section by recalling the main result of~\cite{Vanrietvelde2023} in \cref{subsec:main_thm}.

\subsection{Preliminaries: relations}
\label{sec:relations}

The framework heavily relies on binary relations, also called \emph{routes}. Let us start by introducing these.

A (binary) relation $\lambda$ over some sets $K$ and $L$ will be denoted $\lambda:K\to L$. 
It can be represented as a Boolean matrix $(\lambda_k^l)_{k\in K}^{l\in L}$, where the lower and upper indices represent the inputs and outputs of the relation, respectively, and with $\lambda_k^l = 1$ if $k$ and $l$ are related through $\lambda$ (which we write $k\overset{\lambda}{\sim}l$), and $\lambda_k^l = 0$ otherwise. Throughout, we will freely identify a relation with its Boolean matrix. 
A relation can also be defined by some constraint(s) on its inputs and outputs; e.g., the constraint ``$k=l$'' would define the relation with matrix elements $\lambda_k^l = \delta_{k,l}$ (where $\delta$ is the Kronecker delta).
Note also that the converse relation is represented by the transpose matrix $\lambda^\top$.

A relation $\lambda:K\to L$ can formally be seen as a function $K\to\mathcal{P}(L)$, where $\mathcal{P}(L)$ denotes the power set of $L$. We will thus use the notation $\lambda(k)$ to denote the set of all $l\in L$ that are related to $k$ through $\lambda$.
To characterise the subsets of $K$ and $L$ which are actually related by $\lambda$, we introduce the notion of a \textit{practical (co)domain}.

\begin{definition}[Practical (co)domain]
    \label{def:practical-co-domain}
    The \emph{practical domain} of a relation $\lambda: K \to L$ is defined as the subset $K_\text{in($\lambda$),prac} \coloneqq \{k \in K \mid \lambda(k) \neq \emptyset \}$.
    Similarly, the \emph{practical codomain} of $\lambda$ is defined as $L_\text{out($\lambda$),prac} \coloneqq \{ l \in L \mid \lambda^\top (l) \neq \emptyset \}$.
\end{definition}

\subsubsection{Composing relations} \label{subsubsec:compose_rel}

To build up a routed graph we will need to compose relations with one another. 
Given two relations%
\footnote{The definition of composition that we present of course also applies when $L_1$, $L_2$ and/or $L_3$ are trivial.}
$\lambda: K\to L_1\times L_2$ and $\mu:L_2\times L_3\to M$, as illustrated in \cref{fig:compose_relations}, we can define their composition through $L_2$ as the new relation $\lambda * \mu: K\times L_3\to L_1\times M$ that satisfies
\begin{align}
    (k,l_3) \overset{\lambda * \mu}{\sim} (l_1,m) \quad \text{if and only if } \quad \exists\, l_2\in L_2 \text{ such that } k \overset{\lambda}{\sim} (l_1,l_2) \text{ and } (l_2,l_3) \overset{\mu}{\sim} m.
\end{align}

\begin{figure}
    \centering
    \includegraphics[scale=1.2]{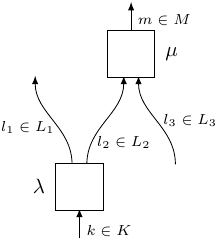}
    \caption{Two relations $\lambda:K\to L_1\times L_2$ and $\mu:L_2\times L_3\to M$ can be composed through $L_2$, resulting in a relation $\lambda * \mu: K\times L_3\to L_1\times M$ that can be written as $\lambda * \mu = \Tr_{L_2} (\lambda\otimes \mu)$, as in \cref{eq:link_prod_relations}.}
    \label{fig:compose_relations}
\end{figure}

Representing the relations $\lambda$ and $\mu$ as Boolean matrices $(\lambda_k^{l_1,l_2})_{k,l_1,l_2}$ and $(\mu_{l_2,l_3}^m)_{l_2,l_3,m}$, it is easy to verify\footnote{$(\lambda * \mu)_{k,l_3}^{l_1,m} = 1 \ \iff \ \exists\,l_2, \ \lambda_k^{l_1,l_2} \mu_{l_2,l_3}^m = 1 \ \iff \ \boolSum_{l_2} \lambda_k^{l_1,l_2} \mu_{l_2,l_3}^m = 1$.} that $\lambda * \mu$ is represented by the Boolean matrix with elements
\begin{align}
    (\lambda * \mu)_{k,l_3}^{l_1,m} = \boolSum_{l_2} \lambda_k^{l_1,l_2} \mu_{l_2,l_3}^m, \label{eq:compose_rel}
\end{align}
where $\boolSum$ denotes the Boolean sum, i.e.\ which evaluates to $1$ if at least one of the summands is $1$ (equivalent to the logical \emph{or}).
The composed relation can thus be written as a (Boolean) partial trace:\footnote{This is somewhat similar to the ``link product''~\cite{Chiribella2008,Chiribella2009} (in its ``pure version''~\cite{Wechs2021}) used to compose quantum maps in their Choi representation.}
\begin{align}
    \lambda * \mu = \Tr_{L_2} (\lambda\otimes \mu). \label{eq:link_prod_relations}
\end{align}
Remark that composing two relations $\lambda$ and $\mu$ in this manner may then generally result in a further restricted practical domain or co-domain in comparison to the individual relations.
Specifically, in the case for instance where $L_1$ and $L_3$ are trivial, $K_\text{in($\lambda * \mu$),prac}$ may be strictly contained in $K_\text{in($\lambda$),prac}$ if $\lambda(k)$ does not relate to any $l_2 \in L_2$ within the practical domain of $\mu$, and analogously for the practical co-domain $M_\text{out($\lambda * \mu$),prac}$.

This way of composing relations can be generalised to more relations, connected through more common sets of values (i.e., more linking arrows in a graphical representations as in \cref{fig:compose_relations}; see, e.g., \cref{fig:Routed Graph} below).
For the resulting relation to hold between a pair of its input and output values, it is required that there exists a value in each of the common sets, such that all individual relations hold.
The resulting relation is still obtained by tensoring all relations and partially tracing out over all common sets of indices, thus defining a (commutative and associative) multi-term ``link product''.
We will in particular use this in defining the notion of a \textit{choice relation} (\cref{def:choice_rel}) in \cref{subsubsec:choice_rel}.

\subsubsection{Branched relations}
\label{sec:branched-relations}

In the framework of routed circuits~\cite{Vanrietvelde2023} we focus on a specific type of relations, the so-called \emph{branched} relations.

\begin{definition}[Branched relation] \label{def:branched_rel}
    A relation $\lambda:K\to L$ is said to be \emph{branched} if for any $k,k'\in K$, $\lambda(k)$ and $\lambda(k')$ are either the same, or disjoint.
\end{definition}

Note that a relation $\lambda$ is branched if and only if its converse relation $\lambda^\top$ is branched. Branched relations define partitions of their (practical) domain and codomain that are related through so-called \emph{branches}.

\begin{definition}[Branch] \label{def:branch}
    A \emph{branch} $\beta = (K^\beta\to L^\beta)$ of a branched relation $\lambda:K\to L$ is a pair of nonempty sets $K^\beta \subseteq K$ and $L^\beta \subseteq L$ such that for any $k\in K^\beta$, $\lambda(k) = L^\beta$ and for any $l\in L^\beta$, $\lambda^\top(l) = K^\beta$.
\end{definition}

We denote by $\Bran(\lambda)$ the set of branches $\beta$ of $\lambda$. Note that the corresponding partitions of $K$ and $L$ need not be complete, in the sense that $\bigsqcup_{\beta\in\Bran(\lambda)} K^\beta$ might not be equal to $K$ and $\bigsqcup_{\beta\in\Bran(\lambda)} L^\beta$ might not be equal to $L$; see \cref{fig:branched_relation}.
Rather, $\bigsqcup_{\beta\in\Bran(\lambda)} K^\beta$ is precisely the practical domain of $\lambda$, while $\bigsqcup_{\beta\in\Bran(\lambda)} L^\beta$ is its practical co-domain, as introduced in \cref{def:practical-co-domain} above.

Note that generally, if the partitions $\{L^\beta \mid \beta\in\Bran(\lambda)\}$ and $\{L^\gamma \mid \gamma\in\Bran(\mu)\}$ differ, the sequential composition of two branched relations $\lambda: K \to L$ and $\mu: L \to M$ will no longer be a branched relation.\footnote{However, this is not an if and only if condition, e.g., $\lambda * \mu$ is still branched if the partition of the domain of $\mu$ is a fine-graining of the partition of the co-domain of $\lambda$ or vice versa.}

\begin{figure}
    \centering
    \includegraphics[scale=.95]{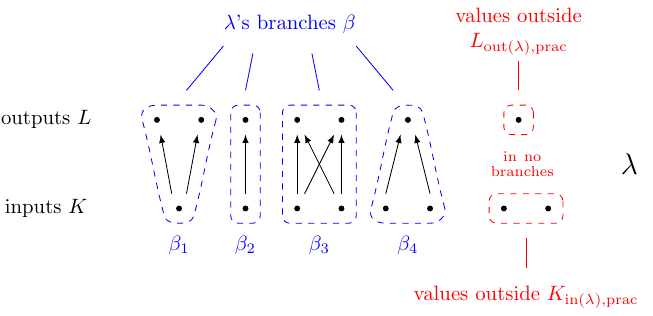}
    \caption{Example of a branched relation $\lambda: K\to L$, with its domain and codomain partitioned into branches (reproducing the example of Fig.~12 in~\cite{Vanrietvelde2023}). If $\lambda$ is not total or not surjective, then some values of $K$ or $L$ are outside of the practical domain or codomain of $\lambda$, and are in no branches.}
    \label{fig:branched_relation}
\end{figure}

\subsubsection{Augmenting relations}
\label{subsubsec:augmenting}

From a branched relation, we define its \emph{augmented} version, which will be an important technical tool in determining if the composition of some relations can describe a valid circuit structure.

\begin{definition}[{Augmented relation\protect\footnote{Note that we depart slightly here from the definition given in~\cite{Vanrietvelde2023}, which (equivalently) wrote the output of the partial function as $(l^{\bar{\beta}}, (\delta_{\beta,\bar{\beta}})_{\beta\in\Bran(\lambda)}) \in L\times\left(\bigtimes_{\beta\in\Bran(\lambda)} \mathtt{Happens}_{\beta}\right)$ (with each $\mathtt{Happens}_{\beta} \cong \{0,1\}$). Since looking at all $\delta_{\beta,\bar{\beta}}$'s, for all $\beta\in\Bran(\lambda)$ individually, will not be relevant right away when we first use augmented relations (when constructing the ``choice relation'', see \cref{subsubsec:choice_rel}), but will only be relevant later on when we need to clarify causal influences between branches (when constructing the ``branch graph'', see \cref{subsec:branch_graph}), we will keep that alternative description of $\bar{\beta}$ for later and use the simpler expression of the augmented relations' outputs described here. \label{ftn:depart_aug}}}] \label{def:augmented_rel}
    Given a branched relation $\lambda:K\to L$, its \emph{augmented} version is the partial function -- written as a branched relation -- $\lambda^\text{aug}: K\times\left(\bigtimes_{\beta\in\Bran(\lambda)} L^\beta\right) \to L\times\Bran(\lambda)$ given by
    \begin{equation}
        \lambda^\text{aug} (k, (l^\beta)_{\beta\in\Bran(\lambda)}) \coloneqq
        \begin{cases}
            (l^{\bar{\beta}}, \bar{\beta}) & \text{ if } k\in K^{\bar{\beta}} \\
            \emptyset & \text{ if } k\notin K_\text{in($\lambda$),prac} 
        \end{cases}. \label{eq:augmented_rel}
    \end{equation}
\end{definition}

That is, the augmented relation takes as input, in addition to the original input $k\in K$ of the initial branched relation, the choice of one output value $l^\beta$ -- that we call a ``bifurcation choice'' -- for each branch $\beta$ of the relation. 
It identifies whether the input $k$ belongs to the input set $K^{\bar{\beta}}$ of some branch $\bar{\beta}$. 
If it does, it then outputs the corresponding bifurcation choice for that branch, together with the name of that branch; we then say that this branch ``happens''. 
If it does not, the partial function is undefined -- or, equivalently, the input is related to the empty set. See \cref{fig:augmenting} for an illustration.

\begin{figure}
    \centering
    \includegraphics[scale=1.2]{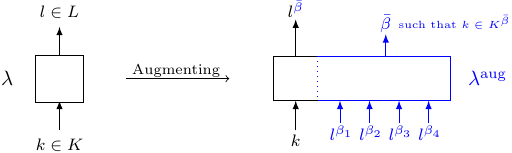}
    \caption{Augmenting the branched relation $\lambda$ of \cref{fig:branched_relation}. The augmented relation identifies in which branch $\bar{\beta}$ the input $k$ is, and outputs the choice of output $l^{\bar{\beta}}$ -- the ``bifurcation choice'' -- within that branch provided among the auxiliary inputs, together with the name $\bar{\beta}$ of the branch that thus ``happens''. If the input $k$ is in no branch, the relation has no output.}
    \label{fig:augmenting}
\end{figure}

Note that some auxiliary inputs and/or outputs may be trivial, i.e.\ they may take only one possible value. In such a case, one may remove them from the definition of the augmented relation.%
\footnote{For instance the branches $\beta_2$ and $\beta_4$ in \cref{fig:branched_relation} each have a single output value, so that the inputs $l^{\beta_2}$ and $l^{\beta_4}$ of the augmented relation can only take that single value. The auxiliary output on the other hand may be trivial if the relation has only one branch. One may choose, as e.g.\ in Fig.~13 of~\cite{Vanrietvelde2023}, not to represent these trivial inputs and outputs explicitly.}

\subsection{Routed graph}
\label{sec:routed-graph}

\subsubsection{Indexed and (branched) routed graphs}

The route structure of a routed circuit will be obtained by composing (branched) relations together to obtain a so-called \emph{routed graph}.

\begin{definition}[Indexed graph\protect\footnote{Again we slightly depart here from the definition given in~\cite{Vanrietvelde2023}, by not including the ``dimension'' of each index value in the definition of an indexed graph. We will only need to specify later on, when constructing the branch graph, which index values are considered to be ``one-dimensional''; see \cref{def:1dim}. \\
Furthermore, contrary to~\cite{Vanrietvelde2023} we do not require \emph{a priori} that for an open-ended arrow $A$, $\Ind_A$ must necessarily be a singleton; we will only impose this when this becomes necessary, namely when testing the property of ``(bi\=/)univocality'', see \cref{def:biunivocality}.
\\
Also, for simplicity we define an indexed graph as a graph rather than a multigraph, which, as noted in~\cite{Vanrietvelde2023}, does not make any difference in the framework presented here. In fact, having several arrows from a node $\mathbf{N}$ to a node $\mathbf{M}$, each with a set of index values $\Ind_{A_j}$, is equivalent to having a single arrow with the cartesian product $\bigtimes_j\Ind_{A_j}$ attached to it. Indeed, we allow index values to be
of any type, including lists of values (or sets, as we will use in our construction for QC-QCs in \cref{sec:QCQCs_as_RQCs}). \label{ftn:departures_indexed_graph}}]
    An \emph{indexed graph} $\Gamma$ is a directed graph in which each arrow $A$ is associated with a non-empty set of index values $\Ind_A$. \\
    The arrows in the graph can either connect two nodes (or possibly a node to itself) or be ``open-ended'' -- i.e., an arrow that either points to a node but ``comes from nowhere'' (from no other node), or an arrow that leaves from a node but ``goes nowhere'' (to no other node).
\end{definition}

We denote by $\Nodes_\Gamma$ the set of nodes of the graph $\Gamma$, and by $\mathtt{Arr}_\Gamma$ its set of arrows. For each node $\mathbf{N}\in\Nodes_\Gamma$, we denote by $\text{in}(\mathbf{N}) \subseteq \mathtt{Arr}_\Gamma$ the subset of arrows that point to $\mathbf{N}$, and by $\text{out}(\mathbf{N}) \subseteq \mathtt{Arr}_\Gamma$ the subset of arrows that leave from $\mathbf{N}$. We then define $\Ind^\text{in}_\mathbf{N} \coloneqq \bigtimes_{A \in \text{in}(\mathbf{N})} \Ind_A$ and $\Ind^\text{out}_\mathbf{N} \coloneqq \bigtimes_{A \in \text{out}(\mathbf{N})} \Ind_A$.

\begin{definition}[(Branched) Routed graph\protect\footnote{Once again we slightly depart and take a more general definition than that in~\cite{Vanrietvelde2023}, which required routed graphs to only involve branched relations. Here we will specify and talk about a ``branched routed graph'' when this is the case.}]
    A \emph{routed graph}
    $\cG= (\Gamma,(\lambda_\mathbf{N})_{\mathbf{N}\in\Nodes_\Gamma})$ consists of an indexed graph $\Gamma$ and, for every node $\mathbf{N}$, a relation $\lambda_\mathbf{N}:\Ind^\text{in}_\mathbf{N}\to \Ind^\text{out}_\mathbf{N}$ called the \emph{route} for node $\mathbf{N}$. \\
    A routed graph $\cG= (\Gamma,(\lambda_\mathbf{N})_{\mathbf{N}\in\Nodes_\Gamma})$ is said to be \emph{branched} if all relations $\lambda_\mathbf{N}$, for all $\mathbf{N}\in\Nodes_\Gamma$, are branched.
\end{definition}

We will often, for brevity, just write a (branched) routed graph as $\cG$ or $(\Gamma,(\lambda_\mathbf{N})_\mathbf{N})$. 
Graphically we write an index
$k_A$ (with the understanding that $k_A\in\Ind_A$) next to each arrow $A$ -- although we may drop it when $\Ind_A$ is just a singleton, or simply to lighten the graphical representation -- and we write the route $\lambda_\mathbf{N}$ next to each node $\mathbf{N}$; see \cref{fig:routed_graph}.

\begin{figure}
    \centering
    \begin{subfigure}[c]{0.4\textwidth}
        \centering
        \includegraphics[scale=.77]{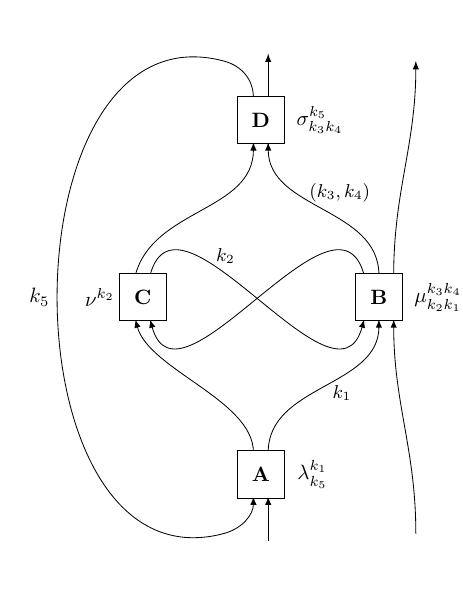} 
        \caption{A routed graph}
        \label{fig:routed_graph}
    \end{subfigure}%
    $\qquad\xrightarrow{\text{Augmenting}}\quad$%
    \begin{subfigure}[c]{0.4\textwidth}
        \centering
        \includegraphics[scale=.77]{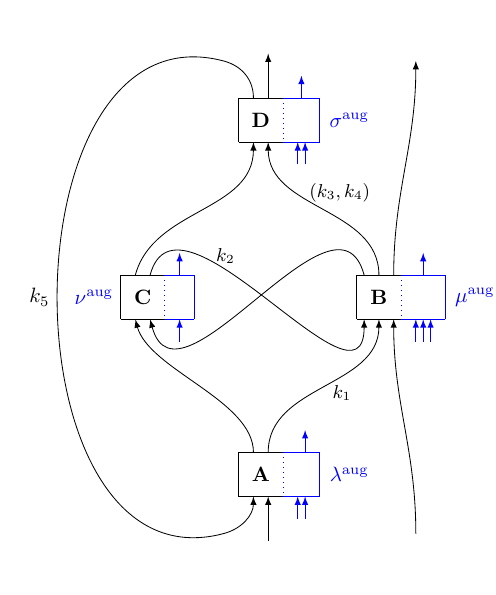}
        \caption{The associated augmented routed graph}
        \label{fig:augmented_graph}
    \end{subfigure}
    \caption{
        {\bf (a)} Example of a routed graph (reproducing that of Fig.~11(b) in~\cite{Vanrietvelde2023}). The arrows not bearing indices have a single index value attached to them (i.e.\ the corresponding $\Ind_A$ is a singleton).
        {\bf (b)}~Augmenting all of its relations (as in \cref{fig:augmenting}, with, for each node $\mathbf{N}$ the set $K$ being $\Ind^\text{in}_\mathbf{N} = \bigtimes_{A \in \text{in}(\mathbf{N})} \Ind_A$ and the set $L$ being $\Ind^\text{out}_\mathbf{N} = \bigtimes_{A \in \text{out}(\mathbf{N})} \Ind_A$) and composing the relations through the arrows of the graph (as explained in \cref{subsubsec:compose_rel}) yields the \emph{choice relation} -- a relation between (the index values attached to) all arrows with open-ended tails and all arrows with open-ended heads.}
    \label{fig:Routed Graph}
\end{figure}

In the context of a branched routed graph $(\Gamma,(\lambda_\mathbf{N})_\mathbf{N})$, the branches of a relation $\lambda_\mathbf{N}$ will generically be denoted $\mathbf{N}^\beta$ (rather than just $\beta$ as in the general preliminary section on relations); the corresponding sets of input and output values will be denoted $\Ind^\text{in}_{\mathbf{N}^\beta} \subseteq \Ind^\text{in}_\mathbf{N}$ and $\Ind^\text{out}_{\mathbf{N}^\beta} \subseteq \Ind^\text{out}_\mathbf{N}$ (instead of $K^\beta \subseteq K$ and $L^\beta \subseteq L$), respectively, so that $\mathbf{N}^\beta = (\Ind^\text{in}_{\mathbf{N}^\beta}\to\Ind^\text{out}_{\mathbf{N}^\beta})$.
By abuse of language, the inputs, outputs and branches of the relations $\lambda_\mathbf{N}$ will also be referred to as the inputs, outputs and branches of the node $\mathbf{N}$ directly.

Let us also mention the possibility to use \emph{global index constraints} as a short-hand to specify the routes $(\lambda_\mathbf{N})_\mathbf{N}$ for a routed graph, which we will extensively use throughout this work.
Indeed, just as the relation at a node can conveniently by specified by some constraint(s) on its inputs and outputs, the global routes (i.e., the multi-node relation $\bigtimes_{\mathbf{N}\in\Nodes_\Gamma} \lambda_\mathbf{N}$) can be specified indirectly by giving some constraints relating all the indices appearing in the routed graph.
Thereby, they specify the globally consistent assignments of index values across the graph.

\subsubsection{Choice relation and (bi-)univocality} \label{subsubsec:choice_rel}

Instead of attaching the original relations $\lambda_\mathbf{N}$ to each node of the branched routed graph $(\Gamma,(\lambda_\mathbf{N})_\mathbf{N})$, one may ``plug in'' their augmented versions, representing the auxiliary inputs and outputs introduced in the augmenting procedure as additional open-ended arrows (similarly to \cref{fig:augmenting}), with for each node $\mathbf{N}$ the input sets $\Ind^\text{out}_{\mathbf{N}^\beta}$ (for all $\mathbf{N}^\beta$) and the output set $\Bran(\lambda_\mathbf{N})$ attached to them.
By composing the augmented relations through the arrows of the graph (see \cref{subsubsec:compose_rel}), this allows us to define a relation between all auxiliary inputs -- together with the indices attached to arrows that ``come from nowhere'' in the original indexed graph -- and all auxiliary outputs -- together with the indices attached to arrows that ``go nowhere''; see \cref{fig:augmented_graph}.

\bigskip

Formally, denoting by $\mathtt{Arr}^\text{int}_\Gamma$ the set of all ``internal'' arrows of $\Gamma$ (which connect two nodes in the graph), by $\mathtt{Arr}^\text{in}_\Gamma$ and $\mathtt{Arr}^\text{out}_\Gamma$ the sets of all (original) arrows of $\Gamma$ that ``come from nowhere'' and that ``go nowhere'', respectively (so that $\mathtt{Arr}_\Gamma = \mathtt{Arr}^\text{int}_\Gamma \sqcup \mathtt{Arr}^\text{in}_\Gamma \sqcup \mathtt{Arr}^\text{out}_\Gamma$), we introduce the following definition.

\begin{definition}[Choice relation] \label{def:choice_rel}
    Given a branched routed graph $\cG = (\Gamma,(\lambda_\mathbf{N})_\mathbf{N})$, its \emph{choice relation} is the relation
    $\Lambda_{(\Gamma,(\lambda_\mathbf{N})_\mathbf{N})}\!: \left(\bigtimes_{A\in\mathtt{Arr}^\text{in}_\Gamma}\Ind_A\right)\times\left(\!\bigtimes_{\substack{\!\mathbf{N}\in\Nodes_\Gamma,\\ \mathbf{N}^\beta\in\Bran(\lambda_\mathbf{N})}} \!\!\!\Ind^\text{out}_{\mathbf{N}^\beta} \!\right) \to \left(\bigtimes_{A\in\mathtt{Arr}^\text{out}_\Gamma}\Ind_A\right)\times\left(\bigtimes_{\mathbf{N}\in\Nodes_\Gamma} \Bran(\lambda_\mathbf{N}) \right)$ defined 
    as
    \begin{align}
        \Lambda_{(\Gamma,(\lambda_\mathbf{N})_\mathbf{N})} \coloneqq &
        \Tr_{(\Ind_A)_{A\in\mathtt{Arr}^\text{int}_\Gamma}} \left[{\textstyle \bigotimes_{\mathbf{N}\in\Nodes_\Gamma}}\,\lambda_\mathbf{N}^\text{aug} \right] 
		= \raisebox{-1.9ex}{$\scalebox{2}{*}$}{}_{\mathbf{N}\in\Nodes_\Gamma} \ \lambda_\mathbf{N}^\text{aug} \label{eq:choice_rel}
    \end{align}
    (where \raisebox{-1.9ex}{$\scalebox{2}{*}$} denotes the multi-term ``link product'' that allows one to compose relations, generalising \cref{eq:link_prod_relations}).
\end{definition}

Thus, together with some input in $\bigtimes_{A\in\mathtt{Arr}^\text{in}_\Gamma}\Ind_A$ and some output in $\bigtimes_{A\in\mathtt{Arr}^\text{out}_\Gamma}\Ind_A$, the choice relation relates some choice of output values $(l^{\mathbf{N}^\beta})_{\mathbf{N},\mathbf{N}^\beta} \in \bigtimes_{\mathbf{N},\mathbf{N}^\beta}\Ind^\text{out}_{\mathbf{N}^\beta}$ -- some ``bifurcation choice'' -- for each of all branches of all routes in the graph, to the specific branches, for all nodes, that do ``happen'' for those bifurcation choices.

\bigskip

With all this in place, we are now ready to define the property of ``(bi-)univocality'' that we will require the branched routed graph to satisfy. This presupposes that the indexed graph has trivial input and output values attached to its open-ended arrows (i.e., that for all $A\in\mathtt{Arr}^\text{in}_\Gamma\sqcup\mathtt{Arr}^\text{out}_\Gamma$, $\Ind_A$ is a singleton),\footnote{
    \label{fn:bi-univocal-index}
    In \cite{Vanrietvelde2023}, this assumption is introduced to simplify the statement of bi-univocality, but the possibility of a direct generalisation is claimed by the authors.}
and formalises the requirement that for each possible combination of bifurcation choices, one and only one branch should happen at each node (and similarly in the adjoint graph\footnote{The adjoint of a routed graph is simply obtained by reversing all its arrows and taking the converse of all its routes. Beware that $\Lambda_{(\Gamma^\top,(\lambda^\top_\mathbf{N})_\mathbf{N})} \neq \Lambda_{(\Gamma,(\lambda_\mathbf{N})_\mathbf{N})}^\top$: the choice relation of the adjoint graph $\cG^\top$ does not even act on the same spaces as the adjoint of the choice relation on $\cG$.}).

\begin{definition}[(Bi-)Univocality] \label{def:biunivocality}
    A branched routed graph $\cG = (\Gamma,(\lambda_\mathbf{N})_\mathbf{N})$ with trivial $\Ind_A$ for all open-ended arrows $A$ is said to be \emph{univocal} if its choice relation $\Lambda_{(\Gamma,(\lambda_\mathbf{N})_\mathbf{N})}$ is a function -- which is then accordingly called the \emph{choice function}.
    \\
    Such a $\cG=(\Gamma,(\lambda_\mathbf{N})_\mathbf{N})$ is said to be \emph{bi-univocal} if both it and its adjoint $\cG^\top=(\Gamma^\top,(\lambda_\mathbf{N}^\top)_\mathbf{N})$ are univocal.
\end{definition}

Given that the definition of (bi-)univocality requires \emph{a priori} that the routed graph is branched and that $\Ind_A$ is trivial for each open-ended arrow $A$, we will no longer explicitly specify these properties when referring to a (bi-)univocal routed graph.
Furthermore, since $\bigtimes_{A\in\mathtt{Arr}^\text{in}_\Gamma}\Ind_A$ and $\bigtimes_{A\in\mathtt{Arr}^\text{out}_\Gamma}\Ind_A$ are taken to be trivial, we will make them implicit in the definition of the choice function, which will then be seen as a function
\begin{equation}
    \Lambda_{(\Gamma,(\lambda_\mathbf{N})_\mathbf{N})}: \bigtimes_{\substack{\mathbf{N}\in\Nodes_\Gamma,\\ \mathbf{N}^\beta\in\Bran(\lambda_\mathbf{N})}} \Ind^\text{out}_{\mathbf{N}^\beta} \to \bigtimes_{\mathbf{N}\in\Nodes_\Gamma} \Bran(\lambda_\mathbf{N}) \, . \label{eq:choice-function}
\end{equation}

\begin{remark}
    \label{rem:global_constraint}
    If $\Lambda_{(\Gamma,(\lambda_\mathbf{N})_\mathbf{N})}$ is not surjective, the choice function acts as a \emph{global constraint} restricting which combination of branches in $\bigtimes_{\mathbf{N}\in\Nodes_\Gamma} \Bran(\lambda_\mathbf{N})$ may jointly happen.
    If, further, the projection of the choice function onto an individual node $\mathbf{M}$, i.e.\ $
    \bigtimes_{\mathbf{N}\in\Nodes_\Gamma,\mathbf{N}^\beta\in\Bran(\lambda_\mathbf{N})} \Ind^\text{out}_{\mathbf{N}^\beta} \to \Bran(\lambda_\mathbf{M})$
    is not surjective, some branches $\mathbf{M}^\gamma$ will never happen and may be disregarded entirely.\footnote{
        For the following section, one can easily deduce that any such branch will not partake in any strong or weak parent relation.
    }
    This can be seen as a restriction of the practical input and output sets.
    In practice, we will restrict the combinations of incoming and outgoing indices $\Ind^\text{in/out}_\mathbf{M}$ considered for each node $\mathbf{M}$ \textit{a priori}, so as not to be left with unnecessary irrelevant branches $\mathbf{M}^\gamma$.
\end{remark}

\subsection{Branch graph}
\label{subsec:branch_graph}

Once we have checked that the routed graph under consideration is bi-univocal, we can build its \emph{branch graph}, on which we will check some condition (the ``only weak loops'' property) on the potential loops it may contain.
The branch graph connects the branches of the different nodes of the original routed graph, providing a more fine-grained picture of their connectivity, and differentiates between so-called \emph{strong} and \emph{weak parents}.%
\footnote{For clarity: these are two independent notions. A branch may be a strong parent but not a weak parent of another branch, or vice versa.}

\subsubsection{One-dimensional values and strong parents} \label{subsubsec:triv_dim}

Roughly speaking, the notion of \emph{strong parent} indicates whether there can be some direct transmission of information (in a way that will be specified later) from one branch at one node to another branch at some other node in the graph. For this, one first needs to characterise how -- through which index values -- the two branches under consideration are linked:

\begin{definition}[Branch-linking values]
    Let $\cG = (\Gamma,(\lambda_\mathbf{N})_\mathbf{N})$ be a branched routed graph, and let
    $\mathbf{N}^\beta\in\Bran(\lambda_\mathbf{N})$, $\mathbf{M}^\gamma\in\Bran(\lambda_\mathbf{M})$ be two branches of two nodes $\mathbf{N}, \mathbf{M}\in\Nodes_\Gamma$.
    If there is an arrow $A$ from $\mathbf{N}$ to $\mathbf{M}$ in the graph $\Gamma$, we define the \emph{set of values linking $\mathbf{N}^\beta$ to $\mathbf{M}^\gamma$} as
    \begin{align}
        & \mathtt{LinkVal}(\mathbf{N}^\beta, \mathbf{M}^\gamma) \notag \\
        & \ \coloneqq
        \left\{k_A\in \Ind_A \, \middle|\!
        \begin{array}{rl}
            \exists (k_{A'})_{A'\in\mathtt{Arr}_\Gamma\backslash\{A\}} \text{ such that}\!\!\! & \forall\,\mathbf{N}'\in\Nodes_\Gamma, \, (k_{A'})_{A'\in\text{in}(\mathbf{N}')}
            \!\overset{\lambda_{\mathbf{N}'}}{\sim}\!(k_{A'})_{A'\in\text{out}(\mathbf{N}')} \\[1mm]
            \text{and}\!\!\! & (k_{A'})_{A'\in\text{out}(\mathbf{N})} \in \Ind^\text{out}_{\mathbf{N}^\beta}, \ (k_{A'})_{A'\in\text{in}(\mathbf{M})} \in \Ind^\text{in}_{\mathbf{M}^\gamma}
        \end{array}\!\!\!
        \right\}.
    \end{align}
\end{definition}

That is, $\mathtt{LinkVal}(\mathbf{N}^\beta, \mathbf{M}^\gamma)$ is the set of index values that can be attributed to the arrow (if any) from $\mathbf{N}$ to $\mathbf{M}$, in order to satisfy the global index constraints, and belong to the branches $\mathbf{N}^\beta, \mathbf{M}^\gamma$ of the nodes $\mathbf{N}, \mathbf{M}$.
By definition, these global index constraints are satisfied if there exists a consistent assignment of all other values on all other arrows of the graph, whose appropriate combinations (corresponding to the arrows that point to, or leave from a given node) satisfy all the relations specified at all nodes in the graph.
In this case, these index combinations are related by the global routes. 

\bigskip

In fact not all index values that thus link two branches will be useful to transmit information. Indeed we shall specify and single out certain index values that will play a specific role, in not being able to transmit information by themselves.

\begin{definition}[One-dimensional index values] \label{def:1dim}
    Certain index values are singled out and said to be \emph{``one-di\-men\-sio\-nal''}.%
    \footnote{With the original definitions of~\cite{Vanrietvelde2023}, all index values are attributed a non-zero natural number called its ``dimension''. For now we actually only need to identify those whose dimension is 1.}
    \\
    Given an arrow $A$ of an indexed graph $\Gamma$, we denote by $\mathtt{1Dim}_A \subseteq \Ind_A$ the set of one-dimensional index values attached to it; 
    $(\mathtt{1Dim}_A)_{A\in\mathtt{Arr}_\Gamma}$ (which we shall abbreviate as $(\mathtt{1Dim}_A)_A$) thus denotes the set of all one-dimensional index values of $\Gamma$.
\end{definition}

Once the set of one-dimensional index values is specified, the \emph{strong parents} are then characterised as follows:

\begin{definition}[Strong parent] \label{def:strong_parent}
    Given a branched routed graph $\cG = (\Gamma,(\lambda_\mathbf{N})_\mathbf{N})$, its set of one-dimensional index values $(\mathtt{1Dim}_A)_A$, and two nodes $\mathbf{N},\mathbf{M}\in\Nodes_\Gamma$ as well as two branches $\mathbf{N}^\beta\in\Bran(\lambda_\mathbf{N})$, $\mathbf{M}^\gamma\in\Bran(\lambda_\mathbf{M})$, we say that $\mathbf{N}^\beta$ is a \emph{strong parent} of $\mathbf{M}^\gamma$ with respect to $(\Gamma,(\lambda_\mathbf{N})_\mathbf{N},(\mathtt{1Dim}_A)_A)$ 
    if there is an arrow $A$ from $\mathbf{N}$ to $\mathbf{M}$ in the graph $\Gamma$, and $\mathtt{LinkVal}(\mathbf{N}^\beta, \mathbf{M}^\gamma)$ contains 
    \begin{itemize}
        \item (at least) one element that is \emph{not} one-dimensional -- i.e., not in $\mathtt{1Dim}_A$;
        \item \emph{or} strictly more than one element.
    \end{itemize}
\end{definition}

\subsubsection{Weak parents} \label{subsubsec:weak_parents}

Recall that, according to \cref{def:augmented_rel}, the augmented version of a relation outputs the branch $\bar{\beta}$ (if any) that the input $k$ belongs to -- the branch that ``happens'' when the input is $k$.
One may then focus on a given branch $\beta$ of the relation, and ask specifically if that branch happens or not -- i.e., ask for the Boolean value of $\delta_{\beta,\bar{\beta}}$ (see \cref{ftn:depart_aug}).

Similarly, considering now the choice function $\Lambda_{(\Gamma,(\lambda_\mathbf{N})_\mathbf{N})}: \bigtimes_{\mathbf{N},\mathbf{N}^\beta} \Ind^\text{out}_{\mathbf{N}^\beta} \to \bigtimes_{\mathbf{N}} \Bran(\lambda_\mathbf{N})$ of a univocal routed graph, one may focus on a given branch $\mathbf{M}^\gamma\in\Bran(\lambda_\mathbf{M})$ of a given node $\mathbf{M}$, and ask specifically if that branch happens for some given inputs in $\bigtimes_{\mathbf{N},\mathbf{N}^\beta} \Ind^\text{out}_{\mathbf{N}^\beta}$. To formalise this, we introduce the following:

\begin{definition}[$\mathbf{M}^\gamma$-Happens function] \label{def:MBetaHapp}
    Given a univocal routed graph $\cG = (\Gamma,(\lambda_\mathbf{N})_\mathbf{N})$ and a specific node $\mathbf{M}\in\Nodes_\Gamma$ with a specific branch $\mathbf{M}^\gamma\in\Bran(\lambda_\mathbf{M})$, the \emph{$\mathbf{M}^\gamma$-Happens function} is the Boolean function\footnote{$\Happ_{\mathbf{M}^\gamma}$ here is not to be confused with the notation $\mathtt{Happens}_{\beta}$ from \cite{Vanrietvelde2023}, which was denoting the Boolean output set, $\mathtt{Happens}_{\beta} \cong \{0,1\}$ (see \cref{ftn:depart_aug}).}
    \begin{align}
        \Happ_{\mathbf{M}^\gamma}: {\textstyle \bigtimes_{\mathbf{N}\in\Nodes_\Gamma, \mathbf{N}^\beta\in\Bran(\lambda_\mathbf{N})}} \Ind^\text{out}_{\mathbf{N}^\beta} & \to \{0,1\}, \notag \\
        (l^{\mathbf{N}^\beta})_{\mathbf{N}\in\Nodes_\Gamma, \mathbf{N}^\beta\in\Bran(\lambda_\mathbf{N})} & \mapsto \delta_{\mathbf{M}^\gamma,\overline{\mathbf{M}^\gamma}}, \label{eq:MBetaHapp}
    \end{align}
    where $\overline{\mathbf{M}^\gamma}\in\Bran(\lambda_\mathbf{M})$ is the output of the choice function $\Lambda_{(\Gamma,(\lambda_\mathbf{N})_\mathbf{N})}: \bigtimes_{\mathbf{N},\mathbf{N}^\beta} \Ind^\text{out}_{\mathbf{N}^\beta} \to \bigtimes_{\mathbf{N}} \Bran(\lambda_\mathbf{N})$ of the graph corresponding to the node $\mathbf{M}$, for the same input $(l^{\mathbf{N}^\beta})_{\mathbf{N},\mathbf{N}^\beta}$.
\end{definition}

The notion of a \emph{weak parent} then characterises whether a bifurcation choice made within a given branch influences whether another branch happens or not.

\begin{definition}[Weak parent] \label{def:weak_parent}
    Given a univocal routed graph $\cG = (\Gamma,(\lambda_\mathbf{N})_\mathbf{N})$ and two nodes $\mathbf{N},\mathbf{M}\in\Nodes_\Gamma$ with two branches $\mathbf{N}^\beta\in\Bran(\lambda_\mathbf{N}), \mathbf{M}^\gamma\in\Bran(\lambda_\mathbf{M})$, we say that $\mathbf{N}^\beta$ is a \emph{weak parent} of $\mathbf{M}^\gamma$ with respect to $(\Gamma,(\lambda_\mathbf{N})_\mathbf{N})$ if the Boolean output value of the $\mathbf{M}^\gamma$-Happens function $\Happ_{\mathbf{M}^\gamma}$, as defined in \cref{eq:MBetaHapp}, depends on the input value of $l^{\mathbf{N}^\beta}$ -- i.e., the bifurcation choice -- attached to the branch $\mathbf{N}^\beta$.
\end{definition}

Note that if $\mathbf{N}^\beta$ has a single output value (i.e.\ there is no bifurcation choice to be made within that branch: $\abs{\Ind^\text{out}_{\mathbf{N}^\beta}} = 1$), then it cannot be the weak parent of any other branch.
On the other hand, if the node $\mathbf{M}$ has a single possible branch $\mathbf{M}^\gamma$, i.e.\ $\Bran(\lambda_\mathbf{M}) = \{ \mathbf{M}^\gamma \}$, then $\mathbf{M}^\gamma$ happens for sure ($\Happ_{\mathbf{M}^\gamma}$ returns the constant output 1, independently of any input values at any node of the routed graph), so that $\mathbf{M}^\gamma$ has no weak parent.

\subsubsection{Branch graph and weak loops}
\label{subsubsec:branch_graph}

\begin{definition}[Branch graph] \label{def:branch_graph}
    Given a bi-univocal routed graph $\cG = (\Gamma,(\lambda_\mathbf{N})_\mathbf{N})$ and the specification of its set of one-dimensional index values $(\mathtt{1Dim}_A)_A$, the \emph{branch graph} $\cG^\Bran$ is the directed (multi)graph in which
    \begin{itemize}
        \item the nodes are the branches of $(\Gamma,(\lambda_\mathbf{N})_\mathbf{N})$;
        \item there is a solid arrow from a branch $\mathbf{N}^\beta$ to a branch $\mathbf{M}^\gamma$ if $\mathbf{N}^\beta$ is a strong parent of $\mathbf{M}^\gamma$ with respect to $(\Gamma,(\lambda_\mathbf{N})_\mathbf{N},(\mathtt{1Dim}_A)_A)$;
        \item there is a green dashed arrow from $\mathbf{N}^\beta$ to $\mathbf{M}^\gamma$ if $\mathbf{N}^\beta$ is a weak parent of $\mathbf{M}^\gamma$ with respect to $(\Gamma,(\lambda_\mathbf{N})_\mathbf{N})$;
        \item there is a red dashed arrow from $\mathbf{N}^\beta$ to $\mathbf{M}^\gamma$ if $\mathbf{M}^\gamma$ is a weak parent of $\mathbf{N}^\beta$ with respect to the adjoint graph $(\Gamma^\top,(\lambda_\mathbf{N}^\top)_\mathbf{N})$.
    \end{itemize}
\end{definition}

Considering all types of arrows, a branch graph may contain cycles, consisting of arrows of potentially different colours. 
The framework distinguishes a certain type of loops, namely:

\begin{definition}[Weak loop]
    \label{def:weak-loops}
    A loop in a branch graph $\cG^\Bran$ is said to be \emph{weak} if it only contains green dashed arrows, or if it only contains red dashed arrows.
\end{definition}

As we will see below, such loops are not ``harmful'':
routed graphs that only contain weak loops can be used to build valid quantum processes.
This motivates the following:

\begin{definition}[Valid graph]
    \label{def:valid-routed-graph}
    We take a routed graph $\cG = (\Gamma,(\lambda_\mathbf{N})_\mathbf{N})$ with one-dimensional index values given by $(\mathtt{1Dim}_A)_A$ to be \emph{valid} if it satisfies the principle of \emph{bi-univocality} (cf.\ \cref{def:biunivocality}) and every loop in its associated branch graph is \emph{weak} (cf.\ \cref{def:weak-loops}).
\end{definition}

In particular, a bi-univocal routed graph will be valid if its associated branch graph features no loops at all.

\subsection{Routed linear maps}
\label{sec:routed-maps}

With the routed graph, we have established a graph-theoretic structure to encode -- potentially cyclic, yet consistent -- connectivity between nodes.
To bridge the gap to quantum processes, we now shift our attention to \emph{routed linear maps},
which we will use to describe quantum operations to associate with these nodes.
Introduced in \cite{Vanrietvelde2021}, routed linear maps also use relations to restrict linear maps as well as the Hilbert spaces they act upon, splitting them into multiple independent sectors.
In this work, we will only consider Hilbert spaces $\HH$ of finite dimension $\dim(\HH)$, and refrain from mentioning this fact explicitly going forward.

Consider two Hilbert spaces $\HH^K$ and $\HH^L$, which are sectorised into orthogonal subspaces $\HH^K = \bigoplus_{k\in K} \HH^K_k$ and $\HH^L = \bigoplus_{l \in L} \HH^L_l$, with $K$ and $L$ denoting finite index sets.
Then, for linear maps $A : \HH^K \to \HH^L$ (which we will also write as $A \in \mathcal{L}(\HH^K, \HH^L$)), we may introduce \textit{sectorial constraints} restricting the form of $A$.
More precisely, we specify a \emph{route} $\lambda$, given as a relation $\lambda: K \to L$ (identified with a Boolean matrix ($\lambda_k^l)_{k \in K}^{l \in L}$).
We then say that $A$ \emph{follows the route} $\lambda$ if it satisfies
\begin{equation}
     \neg (k\overset{\lambda}{\sim}l) \ \implies \ \Pi^L_l A \: \Pi^K_k = 0
     \, , \qquad
    \text{or equivalently,}  \qquad
    A = \sum_{k,l} \lambda_k^l \cdot \Pi^L_l A \: \Pi^K_k \, ,
    \label{eq:routes}
\end{equation}
where $\Pi^K_k$ is the projector from $\HH^K$ to $\HH^K_k$ and $\Pi^L_l$ is the projector from $\HH^L$ to $\HH^L_l$.
We will then write $A: \HH^K \xrightarrow{\,\lambda\,} \HH^L$ and denote the tuple $(\lambda, A)$ a \emph{routed (linear) map}.
The condition above captures the idea that the image of a state in $\HH^K_k$ has support only on those sectors $\HH^L_l$ such that $k\overset{\lambda}{\sim}l$, i.e.\ on $\bigoplus_{l \in\lambda(k)} \HH^L_l$.

These routed maps can be composed analogously to the linear maps themselves: their composition will follow the composition of the respective routes.
Consider $\lambda: K \to L_1 \times L_2$, $\mu: L_2 \times L_3 \to M$ (as in \cref{subsubsec:compose_rel}), and $A: \HH^K \xrightarrow{\lambda} \HH^{L_1 L_2}$, $B: \HH^{L_2 L_3} \xrightarrow{\mu} \HH^M$, with $\HH^{L_1 L_2} \coloneqq \HH^{L_1} \otimes \HH^{L_2}$ and $\HH^{L_2 L_3} \coloneqq \HH^{L_2} \otimes \HH^{L_3}$ being (naturally) sectorised according to $L_1 \times L_2$ and $L_2 \times L_3$, respectively.
These then compose as\footnote{
    If $A$ and $B$ follow the routes $\lambda$ and $\mu$, resp., then $\lambda_k^{l_1,l_2}=0$ implies $(\Pi_{l_1}^{L_1}\otimes\Pi_{l_2}^{L_2})A\,\Pi_k^K=0$ and $\mu_{l_2,l_3}^m=0$ implies $\Pi_m^MB(\Pi_{l_2}^{L_2}\otimes\Pi_{l_3}^{L_3})=0$. Then using \cref{eq:compose_rel}, $(\lambda * \mu)_{k,l_3}^{l_1,m} = 0 \iff \boolSum_{l_2} \lambda_k^{l_1,l_2} \mu_{l_2,l_3}^m = 0 \iff \forall\,l_2, \lambda_k^{l_1,l_2}=0$ or $\mu_{l_2,l_3}^m = 0$, from which it follows that $(\Pi_{l_1}^{L_1}\otimes\Pi_m^M)(\id^{L_1}\otimes B)(A\otimes\id^{L_3})(\Pi_k^K\otimes\Pi_{l_3}^{L_3}) = \boolSum_{l_2} (\Pi_{l_1}^{L_1}\otimes\Pi_m^M B)(\id^{L_1}\otimes\Pi_{l_2}^{L_2}\otimes\id^{L_3})(A\,\Pi_k^K\otimes\Pi_{l_3}^{L_3}) = \boolSum_{l_2} \big(\id^{L_1}\otimes\underbrace{\Pi_m^M B(\Pi_{l_2}^{L_2}\otimes\Pi_{l_3}^{L_3})}_{0}\underset{\leftarrow\text{ or }\rightarrow}{\big)\big(}\underbrace{(\Pi_{l_1}^{L_1}\otimes\Pi_{l_2}^{L_2})A\,\Pi_k^K}_{0}\otimes\id^{L_3}\big) = 0$,
    meaning that $(\id^{L_1}\otimes B)(A\otimes\id^{L_3})$ follows the route $\lambda*\mu$.}
\begin{equation}
    (\lambda, A) \ast (\mu, B) = (\lambda \ast \mu, BA) \quad \in \enspace (K \times L_3 \to L_1 \times M) \times \mathcal{L} ( \HH^K \otimes \HH^{L_3}, \HH^{L_1} \otimes \HH^{M} ) \, .
    \label{eq:compose-routed-maps}
\end{equation}
Here, we employ the composition of relations as considered in \cref{subsubsec:compose_rel}, and for matrix multiplication, assume that the involved matrices are extended with identities on the respectively missing spaces.
In the same way that the composition of unitary quantum transformations can be represented using circuit diagrams~\cite{Coecke_Kissinger_2017}, 
Theorem~1 of \cite{Vanrietvelde2021} then proves that entire routed circuits may be constructed using elementary routed maps as building blocks. 

One important consequence of this construction is that $\lambda$ may restrict the \textit{practical} input and output spaces of $A$ in contrast to its \textit{formal} input and output spaces $\HH^K$ and $\HH^L$.
This reflects the notion of practical (co-)domain for the relation $\lambda$, as introduced in \cref{def:practical-co-domain}: $\HH^K_k$ is in the practical input space of $(\lambda,A)$ if and only if $k \in K_\text{in($\lambda$),prac}$, and analogously for the practical output space.
We then define the practical in- and output spaces for $A$ as 
\begin{align}
    \HH^{K}_{\text{in}(\lambda),\text{prac}} &\coloneqq 
    \bigoplus_{k \in K_\text{in($\lambda$),prac}} \HH^K_k \ =
    \bigoplus_{k \in K} \bigg(\boolSum_l \lambda_k^l \bigg) \, \HH^K_k \:\ \subseteq \: \HH^K \notag \\
    \HH^{L}_{\text{out}(\lambda),\text{prac}} &\coloneqq
    \bigoplus_{l \in L_\text{out($\lambda$),prac}} \HH^L_l \ =
    \bigoplus_{l \in L} \bigg(\boolSum_k \lambda_k^l \bigg) \, \HH^L_l \:\ \subseteq \: \HH^L .
    \label{eq:practical}
\end{align}
This captures the idea that $A$ must annihilate sectors $\HH^K_k$ outside of its practical input space (with $\boolSum_l \lambda_k^l = 0$): the route $\lambda$ does not relate these sectors to anything.
Similarly, unrelated sectors $\HH^L_l$ at the output must be outside the image of $A$ (as $\boolSum_k \lambda_k^l =0$).
Note that under sequential composition of routed maps, the practical input and output spaces may be further reduced to smaller subspaces of the respective spaces of the individual routed maps.
This is analogous to the behaviour of the practical input and output sets of relations under composition, as outlined in \cref{subsubsec:compose_rel}.

For a \textit{branched relation} $\lambda$ in particular, we specify \textit{branch spaces} $\HH^{K^\beta}_{\text{in}}$ and $\HH^{L^\beta}_{\text{out}}$ for a given branch $\beta=(K^\beta\to L^\beta) \in \Bran(\lambda)$.
We define 
\begin{equation}
    \HH^{K^\beta}_\text{in} \coloneqq 
    \bigoplus_{k \in K^\beta} \HH^K_k
    \qquad\text{and}\qquad
    \HH^{L^\beta}_\text{out} \coloneqq
    \bigoplus_{l \in L^\beta} \HH^L_l \, .
    \label{eq:branch-spaces}
\end{equation}
Any routed map $M: \HH^K \xrightarrow{\lambda} \HH^L$ will then (uniquely) decompose into a set of routed maps
$M = \bigoplus_{\beta \in \Bran(\lambda)} M^\beta$, with $M^\beta: \HH^{K^\beta}_{\text{in}} \to \HH^{L^\beta}_{\text{out}}$ for each branch $\beta$.

We will consider a routed map a \emph{routed unitary or isometry} if it acts as a unitary or isometry, respectively, when restricted to its practical input and output space. (Due to \cref{eq:routes}, they will be $0$ outside these spaces.\footnote{
    For a routed unitary $U: \HH^K \xrightarrow{\lambda} \HH^L$, this implies that $\dim(\HH^{K,\lambda}_\text{prac,in}) = \dim(\HH^{L,\lambda}_\text{prac,out})$, while for a routed isometry $V$ on the same spaces, $\dim(\HH^{K,\lambda}_\text{prac,in}) \leq \dim(\HH^{L,\lambda}_\text{prac,out})$ is sufficient.
    If the relation is branched, then additionally for a given branch $\beta=(K^\beta\to L^\beta)$, $\dim(\HH^{K^\beta}_\text{in}) = \dim(\HH^{L^\beta}_\text{out})$ holds for unitaries and $\dim(\HH^{K^\beta}_\text{in}) \le \dim(\HH^{L^\beta}_\text{out})$ for isometries.})

\subsection{Skeletal supermap and fleshing out for a routed graph} \label{subsec:skeletal_fleshingout}

Having introduced routed linear maps, we now shift our attention to \emph{higher-order (linear) operations}, which are commonly called \emph{supermaps}~\cite{Chiribella2008_Supermap}.
These (routed) supermaps take a tuple of (routed) linear maps as an argument and yield another (routed) linear map.
Accordingly, in circuit diagrams, they are usually depicted as boxes with \enquote{slots}, allowing one to \enquote{plug in} some linear operations inside.
Anticipating our intention to associate these slots with a node in a graph later in this section, for a given slot associated with a node $\mathbf{N}$, we will generally denote its input and output space as $\HH^{\text{in}(\mathbf{N})}$ and $\HH^{\text{out}(\mathbf{N})}$ respectively, rather than the potentially more familiar syntax $\HH^{A^{I/O}}$ used e.g.\ in~\cite{Wechs2021}.
For the unrouted case, supermaps are then defined as follows.

\begin{definition}[Supermap]
    \label{def:supermap}
    A \emph{supermap} of the type
    \begin{equation}
        \bigtimes_\mathbf{N} \left(\HH^\text{in($\mathbf{N}$)} \to \HH^\text{out($\mathbf{N}$)} \right)
        \enspace \longrightarrow \enspace
        (\HH^P \,\to\, \HH^F) \, ,
    \end{equation}
    is a linear map
    $\bigotimes_\mathbf{N} \mathcal{L} (\HH^\text{in($\mathbf{N}$)},\HH^\text{out($\mathbf{N}$}) \longrightarrow \mathcal{L} (\HH^P, \HH^F)$.\footnote{
        In the literature, supermaps are usually expressed as higher-order operations of the type $\bigtimes_\mathbf{N} \big(\mathcal{L}(\HH^\text{in($\mathbf{N}$)}) \to \mathcal{L}(\HH^\text{out($\mathbf{N}$)}) \big) \enspace \longrightarrow \enspace \big(\mathcal{L}(\HH^P) \to \mathcal{L}(\HH^F)\big)$, mapping channels to channels at the level of the ``mixed'' description of quantum states and operations.
        As in this work we focus on the actions of supermaps onto ``pure'' operations, we use the simpler pure representations instead -- as was also considered in~\cite{Vanrietvelde2023}.}
\end{definition}

For a given slot in the supermap, associated with a node $\mathbf{N}$, we then say it is of type
$\HH^{\mathrm{in}(\mathbf{N})} \to \HH^{\mathrm{out}(\mathbf{N})}$.
Generally, the linear operations we compose with the supermap may feature additional input and output systems, which will denote as $\HH^\text{in($\mathbf{N}$)}_\text{anc}$ and $\HH^\text{out($\mathbf{N}$)}_\text{anc}$.
In circuit diagrams, these systems will be \enquote{sticking out} as \enquote{dangling wires}, after an operation extended in this manner has been plugged into a slot.

Being interested in quantum theory, supermaps which preserve unitarity and isometricity are of particular interest.

\begin{definition}[Superunitary, superisometry]
    A \emph{superunitary} of the type
    \begin{equation}
        \bigtimes_\mathbf{N} \left(\HH^\text{in($\mathbf{N}$)} \to \HH^\text{out($\mathbf{N}$)} \right)
        \enspace \longrightarrow \enspace
        (\HH^P \,\to\, \HH^F)
    \end{equation}
    is a supermap of the same type, such that for any choice of ancillary input and output spaces $\HH^\text{in($\mathbf{N}$)}_\text{anc}$ and $\HH^\text{out($\mathbf{N}$)}_\text{anc}$ and any choice of unitary maps
    $U_\mathbf{N}: \HH^\text{in($\mathbf{N}$)} \otimes \HH^\text{in($\mathbf{N}$)}_\text{anc} \to \HH^\text{out($\mathbf{N}$)} \otimes \HH^\text{out($\mathbf{N}$)}_\text{anc}$
    for every $\mathbf{N}$, the linear map
    \begin{equation}
        M: \ \HH^P \otimes \Big(\bigotimes_\mathbf{N} \HH^\text{in($\mathbf{N}$)}_\text{anc} \Big) \ \longrightarrow \ \HH^F \otimes \Big(\bigotimes_\mathbf{N}  \HH^\text{out($\mathbf{N}$)}_\text{anc} \Big)
    \end{equation}
    resulting from plugging the $U_\mathbf{N}$ into the slots of the supermap is unitary.\footnote{
        In order to be able to plug in unitaries $U_\mathbf{N}$ at each node $\mathbf{N}$, the dimensions of input and output systems need to match:
        $\dim(\HH^\text{in($\mathbf{N}$)})\cdot \dim(\HH^\text{in($\mathbf{N}$)}_\text{anc}) = \dim(\HH^{\text{out}(\mathbf{N})})\cdot \dim(\HH^\text{out($\mathbf{N}$)}_\text{anc})$.
        In order for the resulting map to be a unitary, one further needs to have
        $\dim(\HH^P) \cdot \prod_\mathbf{N} \dim(\HH^\text{in($\mathbf{N}$)}_\text{anc}) = \dim(\HH^F) \cdot \prod_\mathbf{N} \dim(\HH^\text{out($\mathbf{N}$)}_\text{anc})$.
        Altogether, this requires $\dim(\HH^P) \cdot \prod_\mathbf{N} \dim(\HH^\text{out($\mathbf{N}$)}) = \dim(\HH^F) \cdot \prod_\mathbf{N} \dim(\HH^\text{in($\mathbf{N}$)})$. \label{ftn:dims_U_N}}
    Analogously, if for any such choice of ancillary spaces and any corresponding choice of isometries $V_\mathbf{N}$ for every $\mathbf{N}$, the resulting linear map $M$ is an isometry, the supermap is a \emph{superisometry}.\footnote{
        Here, the ancillary systems need to be chosen such that
        $\dim(\HH^\text{in($\mathbf{N}$)}) \cdot \dim(\HH^\text{in($\mathbf{N}$)}_\text{anc}) 
        \leq \dim(\HH^\text{out($\mathbf{N}$)}) \cdot \dim(\HH^\text{out($\mathbf{N}$)}_\text{anc})$
        and $\dim(\HH^P) \cdot \prod_\mathbf{N} \dim(\HH^\text{in($\mathbf{N}$)}_\text{anc}) \leq
        \dim(\HH^F) \cdot \prod_\mathbf{N} \dim(\HH^\text{out($\mathbf{N}$)}_\text{anc})$.}
\end{definition}

Clearly, if we plug other superunitaries rather than unitaries into a superunitary, the resulting supermap will again be a superunitary, with an analogous statement holding for superisometries.

We continue by formally generalising the notions of supermaps, superunitaries and superisometries to their respective \textit{routed} analogues.

\begin{definition}[Routed supermap]
    \label{def:routed-supermap}
    Let $\lambda_\mathbf{N}$ and $\mu$ be relations
    with $\lambda_\mathbf{N}: \Ind_\mathbf{N}^\text{in} \to \Ind_\mathbf{N}^\text{out}$, and all Hilbert spaces $\HH$ be sectorised accordingly.
    Then, a routed supermap of type
    \begin{equation}
        \bigtimes_\mathbf{N} \left(\HH^\text{in($\mathbf{N}$)} \xrightarrow{\lambda_\mathbf{N}} \HH^\text{out($\mathbf{N}$)} \right)
        \enspace \longrightarrow \enspace
        (\HH^P \, \xrightarrow{\mu} \, \HH^F) \, ,
    \end{equation}
    is a supermap on the respective spaces, which is restricted to input maps in
    $\bigotimes_\mathbf{N} \mathcal{L} (\HH^\text{in($\mathbf{N}$)},\HH^\text{out($\mathbf{N}$)})$, where each tensor factor follows the associated route $\lambda_\mathbf{N}$,
    and whose output map follows the route $\mu$.
    To denote the practical input and output spaces for a node $\mathbf{N}$, we write $\HH^{\text{in/out}(\mathbf{N})}_\text{prac}$, while for individual branches $\mathbf{N}^\beta$ (if applicable), we write $\HH^{\text{in/out}(\mathbf{N})}_\beta$.

    We call a routed supermap a \emph{routed superunitary / superisometry} whenever it acts as a superunitary / superisometry when restricted to any routed unitaries $(U_\mathbf{N}, \lambda_\mathbf{N})$ / routed isometries $(V_\mathbf{N}, \lambda_\mathbf{N})$ following the route $\lambda_\mathbf{N}$ for each node $\mathbf{N}$, potentially involving unsectorised ancillary systems $\HH^\text{in($\mathbf{N}$)}_\text{anc}$ and $\HH^\text{out($\mathbf{N}$)}_\text{anc}$.%
    \footnote{
        While for unrouted superunitaries, we can adjust the dimensions of $\HH^\text{in($\mathbf{N}$)}_\text{anc}$ and $\HH^\text{out($\mathbf{N}$)}_\text{anc}$ for the input and output dimension of each unitary $U_\mathbf{N}$ to match whenever $\dim(\HH^P) \cdot \prod_\mathbf{N} \dim(\HH^\text{out($\mathbf{N}$)}) = \dim(\HH^F) \cdot \prod_\mathbf{N} \dim(\HH^\text{in($\mathbf{N}$)})$ (see \cref{ftn:dims_U_N}), this is generally not the case for a routed superunitary.
        Considering for instance a branched node $\mathbf{N}$ with sectorised practical input/output spaces $\HH^{\text{in/out}(\mathbf{N})}_\text{prac} = \bigoplus_{\mathbf{N}^\beta \in \Bran(\lambda_\mathbf{N})} \HH^{\text{in/out}(\mathbf{N})}_{\mathbf{N}^\beta}$, to which one attaches (unsectorised) ancillary systems, the extended input/output spaces are
        $\HH^{\text{in/out}(\mathbf{N})}_\text{prac} \otimes \HH^{\text{in/out}(\mathbf{N})}_\text{anc} = \bigoplus_{\mathbf{N}^\beta} (\HH_{\mathbf{N}^\beta}^{\text{in/out}(\mathbf{N})} \otimes \HH^{\text{in/out}(\mathbf{N})}_\text{anc})$.
        In order to be able to plug in unitaries that follow the branched relations, one then needs
        $\dim(\HH^{\text{in}(\mathbf{N})}_{\mathbf{N}^\beta} \otimes \HH^{\text{in}(\mathbf{N})}_\text{anc}) = \dim(\HH^{\text{out}(\mathbf{N})}_{\mathbf{N}^\beta} \otimes \HH^{\text{out}(\mathbf{N})}_\text{anc})$
        for each $\mathbf{N}^\beta$, which requires the ratio
        $\dim(\HH^{\text{in}(\mathbf{N})}_\mathbf{N^\beta}) / \dim(\HH^{\text{out}(\mathbf{N})}_\mathbf{N^\beta})$
        to be the same for all $\mathbf{N}^\beta$.
        Any routed supermap which does not satisfy this condition for all $\mathbf{N}$ will therefore \textit{trivially} be a routed superunitary, as it is unable to act on routed unitaries \textit{at all} (not even using ancillas).
        For routed super\emph{isometries}, no similar challenges arise.
    }
\end{definition}

For a given slot in the routed supermap, associated with a node $\mathbf{N}$, we then say it is of type
$\HH^{\mathrm{in}(\mathbf{N})} \xrightarrow{\lambda_\mathbf{N}} \HH^{\mathrm{out}(\mathbf{N})}$.

We continue by associating the routed graph we have constructed in \cref{sec:routed-graph} with a particular routed supermap, which we will call the \emph{skeletal supermap}.
Here, the arrows will be translated into wires in a circuit diagram as shown in \cref{fig:skeletal-supermap}, representing sectorised Hilbert spaces, while the nodes will be understood as slots for operations following their associated routes.%
\footnote{
    While supermaps are often conveniently described in the Choi picture (cf.\ \cref{sec:choi}) we will refrain from doing so for the skeletal supermap, as this would require introducing two copies of each Hilbert space, attached to each end of their supporting arrows (so as to describe the identity channels by the Choi vectors of the form $\dket{\id}^{XX'}$). We will use the Choi picture later on, when explicitly ``fleshing out'' the skeletal supermap of our generic routed graph for QC-QCs in \cref{sec:fleshing-out}.}
Thereby, a single routed graph can lead to a whole class of processes by specifying different (dimensional) Hilbert spaces and choosing different routed maps or supermaps for the fleshing out, which we will formally introduce in \cref{def:fleshing-out} below.

\begin{figure}
    \centering
    \includegraphics[scale=0.77]{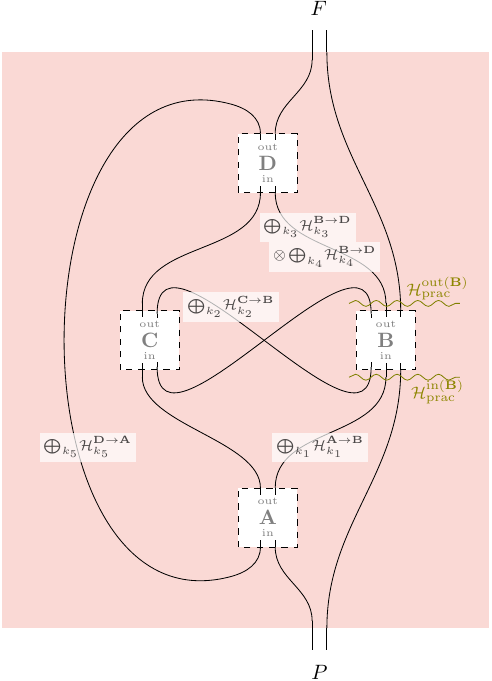}
    \caption{
        Depiction of the skeletal supermap (cf.\ \cref{def:skeletal_supermap}) associated to the routed graph shown in \cref{fig:routed_graph}.
        Each wire linking nodes $\mathbf{N}$ to $\mathbf{M}$, corresponding to an arrow $A$ from $\mathbf{N} \to \mathbf{M}$ in the routed graph, is associated with a sectorised Hilbert space $\HH^{A} = \bigoplus_{k_A \in \Ind_A} \HH^{A}_{k_A}$. 
        The dimension of these sectors is specified via a dimension assignment $\mathtt{dim}_\Gamma$.
        Due to correlations between the indices mediated through the index constraints, the physically relevant input/output space of each node is only given by a practical subspace (cf.\ \cref{sec:routed-maps}) of the tensor product of the incoming/outgoing wires.
        E.g., for the node $\mathbf{B}$, we have $\HH^{\text{in/out}(\mathbf{B})}_\text{prac} \subset \HH^{\text{in/out}(\mathbf{B})}$.\\
        To reduce clutter, we do not make explicit Hilbert spaces with a trivial sectorisation on the wires in the figure, e.g.\ $\HH^{\mathbf{B} \to \mathbf{C}}$, and omit an explicit representation of the routes in this figure.
    }
    \label{fig:skeletal-supermap}
\end{figure}
\begin{definition}[Skeletal supermap associated to a routed graph] \label{def:skeletal_supermap}
    Given a routed graph
    $\cG = (\Gamma,(\lambda_\mathbf{N})_\mathbf{N})$
    and a dimension assignment
    $\mathtt{dim}_\Gamma: \bigsqcup_{A \in \mathtt{Arr}_A} \Ind_A \to \mathbb{N}^*$, their associated \emph{skeletal supermap} is the routed supermap obtained by interpreting
    \begin{itemize}
        \item each arrow $A$ as a sectorised Hilbert space $\HH^A = \bigoplus_{k_A} \HH_{k_A}^{A}$, whose sectors are labelled by the different
        index values $k_A \in \Ind_A$ attached to this arrow and satisfy $\dim(\HH_{k_A}^A) = \mathtt{dim}_\Gamma (k_A) \ \forall\, k_A \in \Ind_A$; and
        \item each node $\mathbf{N}$ as an open slot for a routed linear map $M$, going from the tensor product of the Hilbert spaces associated to its incoming arrows in the set $\text{in} (\mathbf{N})$, to that of the Hilbert spaces associated to its outgoing arrows in the set $\text{out} (\mathbf{N})$, and following the route $\lambda_\mathbf{N}: \Ind^\text{in}_\mathbf{N} \to \Ind^\text{out}_\mathbf{N}$ associated to that node
        \begin{equation}
            \label{eq:skeletal-map}
            M: \bigotimes_{A \in \text{in}(\mathbf{N})} \bigoplus_{k_{A} \in \Ind_A} \HH_{k_A}^{A} 
            \enspace \xrightarrow{\lambda_\mathbf{N}}
            \bigotimes_{A \in \text{out}(\mathbf{N})} \bigoplus_{k_{A} \in \Ind_A} \HH_{k_A}^{A} \, .
        \end{equation}
    \end{itemize}
    The skeletal supermap acts on the respective routed linear maps $M$ as a routed supermap, composing them     as specified by the routed graph $\Gamma$, interpreting the arrows as identity channels. The open-ended arrows that ``come from nowhere'' and that ``go nowhere'' will (altogether) be identified with the ``global past'' and ``global future'' spaces $\HH^P$ and $\HH^F$ that appear the definition of a routed supermap, respectively: i.e., $\HH^P\coloneqq\bigotimes_{A\in\mathtt{Arr}^\text{in}_\Gamma}\HH^A$ and
    $\HH^F\coloneqq\bigotimes_{A\in\mathtt{Arr}^\text{out}_\Gamma}\HH^A$.
	
    Furthermore, when specifying the set of one-dimensional index values $(\mathtt{1Dim}_A)_A$ of the routed graph, one imposes that any Hilbert space sector associated to an index value in $(\mathtt{1Dim}_A)_A$ must be 1-dimensional -- i.e.\ $\forall\,A, \forall\,k_A\in\mathtt{1Dim}_A,$ $\dim(\HH_{k_A}^A) = \mathtt{dim}_\Gamma (k_A)=1$.%
    \footnote{While we will usually not consider this case, further sectors could, in general, be of dimension 1 as well.}
    If $(\mathtt{1Dim}_A)_A$ is not specified explicitly, then $(\mathtt{1Dim}_A)_A$ is implictly taken to contain precisely the indices $k$ for which $\mathtt{dim}_\Gamma (k)=1$.
\end{definition}

If a skeletal supermap is associated with a bi-univocal routed graph (\cref{def:biunivocality}), the open arrows $A$ of the routed graph must be trivially indexed.
Nonetheless, we may also consider a generalised notion of bi-univocality (see \cref{fn:bi-univocal-index}), which does feature such indices, and correspondingly, yields a non-trivially routed map $\HH^P\xrightarrow{\mu}\HH^F$.

Note that when stating the practical input and output space $\HH^{\text{in/out}(\mathbf{N})}_\text{prac}$ for each node $\mathbf{N}$, we can exclude index combinations which will never be realised due to global constraints, as outlined in \cref{rem:global_constraint}.
With respect to the associated global route, these are not part of the practical input and output spaces of the skeletal supermap slots.
Therefore, some tuples $k_\text{in/out} \in \Ind^{\text{in/out}}_\mathbf{N}$ are not practically realised due to $(\lambda_\mathbf{N})_\mathbf{N}$ restricting index combinations, and we can effectively omit the respective spaces.

With this in place, we can specify compatible routed linear (super)maps $M$ to fill the respective slots of the skeletal supermap, thereby \emph{fleshing out} the skeletal supermap.

\begin{definition}[Fleshing out]
    \label{def:fleshing-out}
    Given a skeletal supermap associated with a routed graph, its \textit{fleshing out} is specified by a collection of routed maps 
    or supermaps (with any number of slots) $\{ M_\mathbf{N} \}_{\mathbf{N} \in \Nodes_\Gamma}$,
    matching the type
    $\HH^\text{in($\mathbf{N}$)} \xrightarrow{\lambda_\mathbf{N}} \HH^\text{out($\mathbf{N}$)}$ for each $\mathbf{N}$, while allowing for ancillary systems.\footnote{
        That is, the routed maps are of type 
        $\HH^\text{in($\mathbf{N}$)}\otimes\HH^\text{in($\mathbf{N}$)}_\text{anc} \xrightarrow{\lambda_\mathbf{N}} \HH^\text{out($\mathbf{N}$)}\otimes\HH^\text{out($\mathbf{N}$)}_\text{anc}$ for each $\mathbf{N}$,
        while the routed supermaps have this type for their resulting output map, but may have (any number of) arbitrarily typed slots.
    }
\end{definition}

If a slot is fleshed out with a routed supermap, it may feature an arbitrary number of (sectorised or unsectorised) slots.

\subsection{Main theorem of Vanrietvelde \emph{et al.}~\texorpdfstring{\cite{Vanrietvelde2023}}{}}
\label{subsec:main_thm}

To conclude this review of routed quantum circuits, we state the main theorems of Ref.~\cite{Vanrietvelde2023}, which certify that when the underlying routed graph is valid (according to \cref{def:valid-routed-graph}), then the construction outlined above actually yields valid quantum processes.

\begin{theorem}[Main theorem of \cite{Vanrietvelde2023}]
    \label{def:valid-routed-superunitary}
    Let $(\Gamma,(\lambda_\mathbf{N})_\mathbf{N})$ be a valid routed graph, with a dimension assignment $\mathtt{dim}_\Gamma$.
    Then their associated skeletal supermap is a routed superunitary.
\end{theorem}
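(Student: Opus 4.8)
The plan is to fix an arbitrary fleshing out of the skeletal supermap by routed unitaries $(U_\mathbf{N},\lambda_\mathbf{N})$ at every node (allowing ancillas), to build the composed map $M$ by contracting these operations along the arrows of $\Gamma$ as identity channels, and to show that $M$ is unitary and follows the output route $\mu$. The main device is the branch decomposition of \cref{eq:branch-spaces}: each $U_\mathbf{N}$ splits as $U_\mathbf{N}=\bigoplus_{\mathbf{N}^\beta\in\Bran(\lambda_\mathbf{N})}U_\mathbf{N}^{\mathbf{N}^\beta}$, with $U_\mathbf{N}^{\mathbf{N}^\beta}$ a unitary between the ancilla-extended branch spaces $\HH^{\text{in}(\mathbf{N})}_{\mathbf{N}^\beta}\otimes\HH^{\text{in}(\mathbf{N})}_\text{anc}$ and $\HH^{\text{out}(\mathbf{N})}_{\mathbf{N}^\beta}\otimes\HH^{\text{out}(\mathbf{N})}_\text{anc}$ (this per-branch dimension matching is the only non-trivial case, cf.\ \cref{def:routed-supermap}). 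First I would invoke bi-univocality (\cref{def:biunivocality}) to reorganise the global contraction into an orthogonal direct sum indexed by the \emph{consistent global branches}, i.e.\ the branch-tuples $(\mathbf{N}^{\beta_\mathbf{N}})_\mathbf{N}$ that can jointly happen: univocality of $\cG$ states that each assignment of bifurcation values selects exactly one happening branch at every node (the choice function of \cref{eq:choice-function}), thereby partitioning the practical global-past space $\HH^P$ (tensored with the input ancillas) into sectors labelled by these tuples, while univocality of the adjoint $\cG^\top$ does the same for $\HH^F$. It then suffices to show that on each such sector $M$ restricts to a unitary onto the matching output sector.

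Within one consistent global branch, the decisive structural input is that validity forbids any loop through a solid arrow (\cref{def:valid-routed-graph,def:weak-loops}): the strong-parent relation (\cref{def:strong_parent}) is therefore acyclic and induces a partial order on the participating branches. I would pick a linear extension of this order and compose the branch-unitaries $U_\mathbf{N}^{\beta_\mathbf{N}}$ in sequence, routing the output wires of each branch into the inputs of its strong children. Since genuine information — carried by sectors of dimension $>1$, or by arrows admitting more than one consistent linking value — flows only along strong-parent arrows (by \cref{def:strong_parent}), this acyclic sequencing turns the contraction of all strong-parent wires into an honest composition of unitaries, hence a unitary on the sector.

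The hard part will be establishing that the remaining contractions — those over arrows that are not strong-parent links, which within a fixed global branch are exactly the loops formed by green and red dashed arrows — are harmless, i.e.\ that closing them neither renders $M$ ill-defined nor breaks unitarity. The point to exploit is that a non-strong link admits at most a single, one-dimensional consistent value in its $\mathtt{LinkVal}$ set (again by \cref{def:strong_parent}), so once the happening branches are fixed such wires are forced to their unique scalar sector and carry no quantum data; the only role they play is selecting which branch happens, through the Happens functions underlying the weak-parent arrows (\cref{def:weak_parent}). Closing a weak loop thus amounts to contracting a cycle of one-dimensional sectors, and here I would use the prohibition of mixed-colour and solid-containing loops (\cref{def:weak-loops}): monochromaticity guarantees that the branch-selection propagated around the loop is self-consistent with the fixed global branch, so the loop closes as a modulus-one scalar rather than imposing a non-trivial fixed-point equation or annihilating the sector. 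Making this rigorous — precisely relating the branch-graph colouring to the linear-algebraic contraction of the feedback wires — is the technical crux of the whole theorem.

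Finally I would assemble the pieces. Each input sector is mapped isometrically onto its corresponding output sector, and by bi-univocality the sectors are mutually orthogonal and exhaust the practical spaces on both sides, so $M$ is a global isometry. Re-running the identical argument on the adjoint graph $\cG^\top$ — whose univocality is precisely the second half of bi-univocality and whose weak loops are the red ones — shows that $M^\dagger$ is an isometry as well, whence $M$ is unitary; equivalently, the input/output sector correspondence supplied by bi-univocality matches dimensions and upgrades the isometry to a unitary in finite dimension. That $M$ follows $\mu$ follows from tracking which global-past sectors connect to which global-future sectors through the consistent branches, which is automatic in the bi-univocal case where the open arrows are singly indexed so that $\mu$ carries no sectorial constraint.
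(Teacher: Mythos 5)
This theorem is not proved in the paper you were given: it is stated as the ``Main theorem of \cite{Vanrietvelde2023}'' and imported wholesale from that reference, whose appendices contain the (lengthy) proof. So there is no in-paper argument to compare yours against; I can only judge your sketch on its own terms and against the cited proof, whose broad architecture — a decomposition of the contraction into a sum over consistent global branch assignments via bi-univocality, sequential composition along the acyclic solid-arrow order within each summand, and an adjoint-graph argument to upgrade isometry to unitarity — your plan does reproduce.

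There are, however, two genuine problems. First, a misstatement: in the bi-univocal setting the open-ended arrows are trivially indexed, so $\HH^P$ and $\HH^F$ carry no sectorisation and the global branches do \emph{not} partition them into mutually orthogonal sectors. What one actually shows is $M=\sum_b M_b$ with $M_b^\dagger M_{b'}=0$ for $b\neq b'$, the orthogonality coming from the \emph{internal} wires (where distinct global branches select distinct, hence orthogonal, sectors), followed by $\sum_b M_b^\dagger M_b=\id$. That last identity is where univocality genuinely enters, because which branch happens at a node is selected coherently by the plugged-in unitaries, not classically by the input state; your phrasing skips over exactly this step.

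Second, and more seriously, you have mislocated what the only-weak-loops condition is protecting against. The green and red dashed arrows of the branch graph are not wires: they record functional dependencies of the choice function (a bifurcation choice at one branch influencing whether another branch happens), so there is nothing to ``contract'' along them, and contracting a genuine graph arrow whose $\mathtt{LinkVal}$ is a single one-dimensional value is automatically harmless ($\mathbb{C}\otimes V\cong V$) — no modulus-one scalar ever arises. The exclusion of mixed-colour loops is needed for a different reason: a loop combining, say, a solid arrow from $\mathbf{M}^\gamma$ to $\mathbf{N}^\beta$ with a green arrow from $\mathbf{N}^\beta$ to $\mathbf{M}^\gamma$ describes a branch whose occurrence depends on an output computed from an input that the branch itself supplies, and it is this self-reference that can break $\sum_b M_b^\dagger M_b=\id$ even though each $M_b$ is individually a partial isometry. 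Showing that purely green (resp.\ purely red) loops are innocuous for $M^\dagger M$ (resp.\ $MM^\dagger$) while any other loop is fatal is the inductive core of the proof in \cite{Vanrietvelde2023}, and — as you concede yourself — your sketch does not supply it. As it stands the proposal is an outline with the crux missing, not a proof.
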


\begin{corollary}
    \label{thm:superunitary}
    Let $(\Gamma,(\lambda_\mathbf{N})_\mathbf{N})$ be a valid routed graph with a dimension assignment $\mathtt{dim}_\Gamma$.
    Then, any supermap built from their associated skeletal supermap by fleshing it out with routed unitaries at some of its nodes and routed superunitaries\footnote{
        The restriction to monopartite superunitaries made in \cite{Vanrietvelde2023} is clearly not necessary for the corollary to hold.}
    at the other nodes is a routed superunitary.
    In particular, if the routed superunitaries used in the fleshing out only have unsectorised slots, then the resulting supermap is a conventional (non-routed) superunitary.
\end{corollary}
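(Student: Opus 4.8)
The plan is to deduce the corollary from the Main Theorem (\cref{def:valid-routed-superunitary}) by exploiting the associativity of plugging operations into slots. Write $\mathcal{S}$ for the skeletal supermap of the given valid routed graph, which is a routed superunitary by \cref{def:valid-routed-superunitary}, and write $\mathcal{T}$ for the supermap obtained by fleshing $\mathcal{S}$ out with a routed unitary $(U_\mathbf{N},\lambda_\mathbf{N})$ at some nodes and a routed superunitary $\mathcal{U}_\mathbf{N}$ at the remaining nodes. The slots of $\mathcal{T}$ are exactly the (arbitrarily typed) inner slots of the routed superunitaries $\mathcal{U}_\mathbf{N}$, since the nodes filled with routed unitaries contribute no slots. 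As a first step I would verify that $\mathcal{T}$ is genuinely a routed supermap of the required type, i.e.\ that its output map follows the route $\mu$ and that each of its slots carries a well-defined route; this is a matter of unwinding the composition rule for routed maps and supermaps (\cref{eq:compose-routed-maps}) along the wires of the graph, under which routedness is preserved and the inner slots simply pass through unchanged.

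The core of the argument is the reduction to the Main Theorem. To check that $\mathcal{T}$ is a routed superunitary I plug arbitrary routed unitaries (following the respective slot routes, with arbitrary ancillas) into all inner slots of $\mathcal{T}$ and must show the resulting map is unitary. The key observation is that, for each node $\mathbf{N}$ carrying a routed superunitary $\mathcal{U}_\mathbf{N}$, filling its inner slots with these routed unitaries yields --- by the very definition of routed superunitary applied to $\mathcal{U}_\mathbf{N}$ --- a routed unitary $\HH^{\text{in}(\mathbf{N})}\otimes\HH^{\text{in}(\mathbf{N})}_{\text{anc}} \xrightarrow{\lambda_\mathbf{N}} \HH^{\text{out}(\mathbf{N})}\otimes\HH^{\text{out}(\mathbf{N})}_{\text{anc}}$ following $\lambda_\mathbf{N}$, where the ancillas gather those of the inner unitaries. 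At nodes already filled with a routed unitary $(U_\mathbf{N},\lambda_\mathbf{N})$, nothing further happens and these likewise follow $\lambda_\mathbf{N}$. Hence, once all inner slots are filled, every node of $\mathcal{S}$ is occupied by a routed unitary following $\lambda_\mathbf{N}$ (with ancillas). By associativity of plugging-in, the resulting map equals the one obtained by plugging these routed unitaries directly into $\mathcal{S}$; and since $\mathcal{S}$ is a routed superunitary, that map is unitary. This establishes that $\mathcal{T}$ is a routed superunitary.

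For the \enquote{in particular} statement I would invoke that a valid routed graph is bi-univocal (\cref{def:valid-routed-graph}), which by \cref{def:biunivocality} forces the open-ended arrows to be trivially indexed, so that $\HH^P$ and $\HH^F$ are unsectorised and the output route $\mu$ is trivial; thus the output map of the fleshed-out superunitary is a unitary in the ordinary sense. If moreover all inner slots of the $\mathcal{U}_\mathbf{N}$ are unsectorised, then plugging routed unitaries into these slots amounts to plugging in arbitrary (unrouted) unitaries, so $\mathcal{T}$ sends arbitrary unitaries at its slots to an ordinary unitary and is therefore a conventional superunitary.

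The main obstacle I anticipate is making the associativity of the two plugging orders fully rigorous while tracking ancillary systems: one must check that the ancillas produced when filling the inner slots of $\mathcal{U}_\mathbf{N}$ are legitimately treated as the node-ancillas $\HH^{\text{in/out}(\mathbf{N})}_{\text{anc}}$ in the routed-superunitary condition for $\mathcal{S}$, and that the dimension-matching constraints (cf.\ \cref{ftn:dims_U_N}) are consistently satisfied branch-by-branch at each node. Everything else is a direct application of the definitions together with \cref{def:valid-routed-superunitary}.
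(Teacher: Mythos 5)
Your proposal is correct and follows essentially the same route the paper intends: the paper treats this corollary as immediate from \cref{def:valid-routed-superunitary} together with the observation (stated just after the definition of a superunitary) that plugging superunitaries into a superunitary again yields a superunitary, which is precisely the reduction you carry out via associativity of slot-filling. Your elaboration of the ancilla bookkeeping and of the ``in particular'' clause is a faithful unwinding of the definitions and introduces no gap.
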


Actually, this corollary can be extended to routed isometries and superisometries, even if these were not explicitly studied in \cite{Vanrietvelde2021}.
Note, however, that we restrict ourselves to fleshing out with \emph{monopartite} superisometries for this statement.

\begin{corollary}
    \label{thm:superisometry}
     Let $(\Gamma,(\lambda_\mathbf{N})_\mathbf{N})$ be a valid routed graph with a dimension assignment $\mathtt{dim}_\Gamma$. Then, any supermap built from their associated skeletal supermap by fleshing it out with routed isometries at some of its nodes and monopartite routed superisometries at other nodes is a routed superisometry.
     In particular, if the routed superisometries used in the fleshing out only
     have unsectorised slots, then the resulting supermap is a conventional (non-routed) superisometry.
\end{corollary}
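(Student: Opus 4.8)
The plan is to reduce the statement to the case already handled by \cref{thm:superunitary} via a dilation of isometries to unitaries, exactly as one passes from unitaries to isometries in ordinary (unrouted) quantum-circuit arguments. First I would dispatch the easy, structural half: just as ``plugging superunitaries into a superunitary yields a superunitary'', plugging \emph{monopartite} routed superisometries (and routed isometries) into the slots of a routed superisometry again yields a routed superisometry, since filling the single slot of each monopartite superisometry with a routed isometry produces, by the very definition of a routed superisometry, a routed isometry, and the monopartite restriction guarantees the types match when these are composed. This closure property reduces the corollary to the core claim that \emph{the skeletal supermap of a valid routed graph is itself a routed superisometry}, i.e.\ that plugging arbitrary routed isometries $V_\mathbf{N}$ (with ancillas) into all of its slots yields an isometry $M$.

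To establish this core claim I would dilate. Fix routed isometries $V_\mathbf{N}$ at every node. Since the graph is branched, each decomposes as $V_\mathbf{N} = \bigoplus_{\mathbf{N}^\beta \in \Bran(\lambda_\mathbf{N})} V_\mathbf{N}^\beta$ with $\dim(\HH^{\mathrm{in}(\mathbf{N})}_\beta) \le \dim(\HH^{\mathrm{out}(\mathbf{N})}_\beta)$ on each branch. The idea is to append, on each branch separately, an ancillary input sector whose dimension makes the branch-wise input and output dimensions coincide, and to extend $V_\mathbf{N}^\beta$ to a unitary $U_\mathbf{N}^\beta$ on the enlarged branch spaces, so that feeding a fixed unit vector into the ancilla recovers $V_\mathbf{N}^\beta$; setting $U_\mathbf{N} = \bigoplus_\beta U_\mathbf{N}^\beta$ then gives a routed unitary following the same route $\lambda_\mathbf{N}$. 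Realising these ancillas within the routed-graph picture yields an enlarged routed graph $\tilde\cG$ with an enlarged dimension assignment: concretely I would attach to each node an extra input carrying the branch-correlated ancilla, either as an additional non-trivially indexed open-ended arrow (invoking the generalised bi-univocality alluded to after \cref{def:skeletal_supermap}) or, to remain strictly within \cref{def:valid-routed-superunitary}, as the output of an adjoined state-preparation source node, fleshed out with the isometry that prepares the fixed ancilla vector.

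I would then invoke the main theorem on the enlarged object. The crucial observation is that \emph{validity is insensitive to the enlargement}: bi-univocality and the ``only weak loops'' property depend only on the routes $(\lambda_\mathbf{N})_\mathbf{N}$, the set of one-dimensional index values, and the resulting branch graph -- not on the actual dimensions of the (higher-dimensional) sectors. Since the appended ancillary index values are slaved to the branch that happens at their node, they create no new strong or weak parents beyond trivial ones, introduce no new loops, and preserve (bi-)univocality; hence $\tilde\cG$ is again valid. By \cref{def:valid-routed-superunitary} (and \cref{thm:superunitary}) its skeletal supermap is a routed superunitary, so plugging in the $U_\mathbf{N}$ produces a unitary $\tilde M$. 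The original map is recovered as $M = \tilde M \circ \iota$, where $\iota$ is the isometric inclusion that tensors the global past with the fixed unit ancilla vectors (equivalently, restricts the enlarged past to the zero-ancilla sector); as the composite of an isometry with a unitary, $M$ is an isometry. This proves the core claim, and the closure property upgrades it to the full statement, with the non-routed case (unsectorised slots) following verbatim as in \cref{thm:superunitary}.

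The main obstacle is the dilation step. The per-branch deficits $\dim(\HH^{\mathrm{out}(\mathbf{N})}_\beta) - \dim(\HH^{\mathrm{in}(\mathbf{N})}_\beta)$ are in general \emph{branch-dependent}, and therefore cannot be absorbed into \emph{unsectorised} ancillas: an unsectorised ancilla multiplies every branch dimension by the same factor and so preserves the in/out dimension ratio on each branch, whereas matching dimensions branch-by-branch requires that ratio to vary. This is precisely the obstruction noted in the footnote to \cref{def:routed-supermap}, and it forces the ancilla -- and hence the enlargement of the graph -- to be genuinely sectorised; the care needed is thus in setting up this sectorised dilation and in confirming that the enlarged graph remains valid. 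It is also this step that explains the restriction to \emph{monopartite} superisometries, for which a single slot can be dilated unambiguously.
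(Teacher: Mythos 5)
Your overall strategy --- dilate the routed isometries to routed unitaries, apply \cref{thm:superunitary} to a suitably enlarged skeletal supermap, and recover the isometry by fixing ancilla inputs --- is the same as the paper's, and your closure argument for the monopartite superisometries is a legitimate alternative to the paper's route (the paper instead decomposes each monopartite superisometry as a one-slot comb of two isometries linked by a memory system and dilates those). You also correctly flag the obstruction, noted in the footnote to \cref{def:routed-supermap}, that an unsectorised ancilla preserves the per-branch in/out dimension ratio and so cannot equalise branch-dependent deficits. However, your dilation step as written fails: tensoring only an \emph{input} ancilla onto branch $\mathbf{N}^\beta$ and demanding $\dim(\HH^{\mathrm{in}(\mathbf{N})}_\beta)\cdot\dim(\HH^{\mathrm{anc}}_\beta)=\dim(\HH^{\mathrm{out}(\mathbf{N})}_\beta)$ requires $\dim(\HH^{\mathrm{in}(\mathbf{N})}_\beta)$ to \emph{divide} $\dim(\HH^{\mathrm{out}(\mathbf{N})}_\beta)$, which is false in general (e.g.\ a branch with input dimension $2$ and output dimension $3$). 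The fix the paper uses is to append ancillas on \emph{both} sides --- of dimension $\dim(\HH^{\mathrm{out}(\mathbf{N})}_\beta)$ at the input and $\dim(\HH^{\mathrm{in}(\mathbf{N})}_\beta)$ at the output --- and to recover $V^\beta_\mathbf{N}$ by feeding a fixed input state and then discarding the output ancilla, which one must additionally check is left in a fixed pure state so that the discard preserves isometricity; your proposal omits both the output ancilla and this last check.

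The second gap is that ``validity is insensitive to the enlargement'' is asserted rather than proved, and neither of your proposed realisations of a branch-correlated ancilla wire is actually licensed by the framework as set up here. A non-trivially indexed open-ended arrow falls outside the definition of bi-univocality (\cref{def:biunivocality}); the generalisation is only claimed in~\cite{Vanrietvelde2023} (see \cref{fn:bi-univocal-index}) and is not established in this paper. The adjoined source node is worse: its outgoing index must equal the value attached to whichever branch of $\mathbf{N}$ happens, but the source has no inputs to condition on, so either a ``wrong'' ancilla value throws the input of $\mathbf{N}$ outside the practical domain of its route (no branch happens, breaking univocality), or one admits all ancilla values and thereby multiplies the branches of $\mathbf{N}$ without equalising any dimensions. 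The paper's proof sidesteps all of this by keeping the ancilla arrows \emph{unindexed}, for which preservation of loops, routes and (bi-)univocality is immediate; your sectorised variant is better motivated dimensionally but opens a graph-validity problem that the proposal does not close.
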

\begin{proof}
    Any isometry $V: \HH^A \to \HH^B$ can be extended to a unitary by introducing additional ancillary input and output systems  $\HH^\text{in}_\text{anc}$ and $\HH^\text{out}_\text{anc}$, such that $\dim(\HH^\text{in}_\text{anc}) = \dim(\HH^B)$ and $\dim(\HH^\text{out}_\text{anc}) = \dim(\HH^A)$.
    We obtain a unitary $U: \HH^A \otimes \HH^\text{in}_\text{anc} \to \HH^B \otimes \HH^\text{out}_\text{anc}$, from which we can recover $V$ in two steps.
    By plugging in a suitable ancillary input state into $\HH^\text{in}_\text{anc}$,
    we obtain a map $V': \HH^A \to \HH^B \otimes \HH^\text{out}_\text{anc}$ that recovers the action of the isometry onto $\HH^B$.
    As this restricts the image of $V'$, when projected on $\HH^\text{out}_\text{anc}$, to be one-dimensional, we then recover $V$ from $V'$ by simply tracing out $\HH^\text{out}_\text{anc}$.

    Analogously, we can extend any monopartite superisometry, which may map unitaries to isometries, to a superunitary, by decomposing it into isometries first.\footnote{
        Each monopartite superisometry is a (monopartite) quantum comb \cite{Chiribella2008}, which can be decomposed into two isometries $V_1$ and $V_2$, with a memory system $\alpha$ and an open slot in-between. Extending both isometries to unitaries, we obtain a superunitary with two additional input and output wires $\HH^\text{in/out}_{1,\text{anc}}$ and $\HH^\text{in/out}_{2,\text{anc}}$, respectively.
        While this reasoning generalises to quantum combs and QC-QCs with more parties, it may not be applicable to general superisometries, for which no decomposition into isometries is established.}
    In a routed graph, these ancillary input / output systems correspond to additional unindexed open arrows (i.e.\ wires \enquote{coming from nowhere}) going into / coming out of the individual nodes $\mathbf{N}$ corresponding to the skeletal supermap slots fleshed out by the respective maps $U_\mathbf{N}$.
    As introducing these arrows impacts neither the presence of loops nor the index assignments, the (valid) routed graph still gives rise to a superunitary.
    
    We may then again compose the additional input wires with suitable input states, while tracing out the additional output wires, to obtain a routed superisometry (via the action of the respective isometries and monopartite superisometries).
\end{proof}

Inspired by this corollary, we specify when we consider a routed quantum circuit to be a satisfactory representation of the connectivity of a supermap.

\begin{definition}[Routed circuit decomposition]
    \label{def:decomposition}
    A \emph{routed circuit decomposition of a routed superisometry} is given by a routed graph $\cG = (\Gamma, (\lambda_\mathbf{N})_\mathbf{N})$, a dimension assignment $\mathtt{dim}_\Gamma$ --~which, together, define a skeletal supermap~-- and a fleshing out, given by a collection of routed isometries for some nodes $\mathbf{V}$ and monopartite routed superisometries for the remaining nodes $\mathbf{A}$, that do not feature any ancillary systems $\HH^{\mathrm{in/out}(\mathbf{A})}_\text{anc}$ and $\HH^{\mathrm{in/out}(\mathbf{V})}_\text{anc}$, respectively.
    
    Two routed circuit decompositions are considered \emph{equivalent} if they give rise to the same routed superisometry.
    
    If none of the remaining slots after the fleshing out features non-trivial routes, we simply call the decomposition a \emph{routed circuit decomposition of a superisometry}.
\end{definition}

By disallowing the use of ancillary systems within a routed circuit decomposition, we ensure that indeed all relevant (routed) information processing relevant to the process is encoded into the skeletal supermap. 
As an alternative, additional unindexed arrows may be added to some nodes in the routed graph in their stead, recovering the auxiliary systems from these arrows in a modified skeletal supermap.

Routed circuit decompositions have been given for several supermaps, including causally indefinite ones~\cite{Vanrietvelde2023, Mejdoub2025}. Indeed, thus far all the examples of purifiable supermaps~\cite{Araujo2017} that have been explicitly studied have been able to be given such decompositions, but it remains an open question whether this is possible for all such supermaps.
Note further that in a routed circuit decomposition of a supermap, only some of the slots in the skeletal supermap will be associated with slots in the decomposed supermap, while others will be fleshed out with \enquote{internal} operations with no open slots.
That is, the routed graph used in a routed circuit decomposition may, in general, have more nodes than there are slots in the resulting supermap.

Regarding the following section, \cref{thm:superisometry} allows for an independent proof for the validity of QC-QCs based on the routed quantum circuit framework.
However, this will not be our main point of attention in this work.
Rather, we will demonstrate a systematic construction that makes QC-QCs fit into the framework of~\cite{Vanrietvelde2023} in terms of a suitable routed circuit decomposition, which will then allow us to see how their properties translate into diagrammatic features.

\section{Quantum Circuits with Quantum Control of Causal Order}
\label{sec:QCQC}

\begin{figure}
	\centering
	\includegraphics[width=1\textwidth]{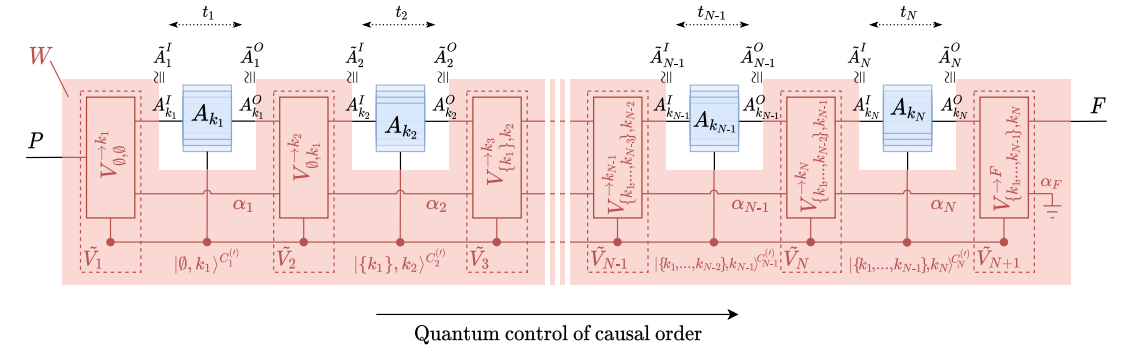}
	\caption{Generic depiction of a QC-QC. Figure reproduced from Ref.~\cite{Wechs2021}.}
	\label{fig:QCQC}
\end{figure}

\textit{Quantum circuits with quantum control of causal order} (QC-QCs) \cite{Wechs2021} are a class of supermaps with concrete physical descriptions that includes many relevant causally indefinite processes.
In an $N$-slot QC-QC, as illustrated in \cref{fig:QCQC}, the action of $N$ agents $A_k$, happening precisely once, alternate with ``internal'' transformations $\tilde{V}_n$, which control which agent $A_k$ acts at which time $t_1, \ldots, t_N$, dependent both on the agents who have acted before and on the operations they have performed.
Reflecting this, in the remainder of this work, $k$ (and $\ell$) will consistently be used to index (time-delocalised) agents or their operations, while $n$ indexes will refer to well-defined time slots.
Various results indicate that under certain conditions, the class of QC-QCs is precisely the set of processes realisable within a fixed spacetime \cite{Salzger2022, Salzger2024, VVR, Vilasini24}. It is worth noting also that QC-QCs do not allow for violations of causal inequalities~\cite{Wechs2021}.

For the formal description of QC-QCs, we write $\cN = \{ 1, \ldots, N \}$, consistently take
$k, \ell \in \cN, \K_n \subseteq \cN$ and implicitly take $\K_n$ such that $|\K_n|=n$, with $0\le n \le N$.
Furthermore, we identify a singleton $\{k\}$ with its unique element, and write for instance $\K_n\backslash k$ rather than $\K_n\backslash\{k\}$ or $\K_{n-1}\cup k$ rather than $\K_{n-1}\cup\{k\}$.

The QC-QC pattern is then realised using a quantum control system $C_n$ with Hilbert space $\HH^{C_n}$, spanned by states of the form $\ket{\K_{n} \setminus k,k}^{C_n}$.
This system coherently controls, for each time slot $t_n$, which agent $A_k$'s operation is applied, as well as the ``internal operations'' $V^{\to \ell}_{\K_{n} \setminus k,k}$ applied in-between two agent's operations.
Here, each internal $V^{\to \ell}_{\K_{n} \setminus k,k}$ sends its input from agent $A_{k}$ with $k\in\K_n$ to agent $A_{\ell}$ with $\ell \not\in \K_n$, for a given set of prior agents $\{A_j\}_{j \in \K_{n} \setminus k}$.
Jointly, they constitute an isometry $\tilde{V}_{n+1}$, specifying both the transfer
$k \to \ell$ between the agents and potential further transformations applied in-between.
In addition to the input/output systems $A_{k}^{I/O}$, which are passed to the respective agents $A_{k}$ to act upon, the isometries may also act on an ancillary system $\alpha_{n}$, which allows one to pass arbitrary information between the internal isometries without making it available to the agents.

The control system $C_n$ serves to keep track of the set of operations $\K_n$ (including $k$, for $A_k$ being applied at time $t_n$) already used, ensuring that each operation is used exactly once.
Altogether, the QC-QC construction yields a superisometry, which is why we will call the resulting circuits \enquote{pure} QC-QCs.
These have an open past $P$ and an open future $F$ system.
However, by setting the dimension of $P$ to $1$\footnote{This can be done by absorbing incoming quantum states into the isometry $\tilde V_1$.}, we may also close the past. Similarly, by partially tracing out the future $F$, we may obtain mixed supermaps with partially or fully closed future, which are no longer superisometries, but still purifiable superchannels.

\subsection{Choi picture and the link product}
\label{sec:choi}

We continue by introducing the (pure) \emph{Choi picture} for linear maps \cite{Choi1975} (on finite-dimensional spaces), which we will rely on heavily to lay out the formal QC-QC construction.
It allows one to conveniently express linear maps in the form of vectors. 
To do so, for a given Hilbert space $\HH^X$, we fix an orthonormal basis $\{ \ket{i}^X \}_i$ as the computational basis, and for any isomorphic space $\HH^{X'}$, take its computational basis states $\ket{i}^{X'}$ to be in 1-to-1 correspondence. Then, we define
\begin{equation}
    \dket{\id}^{XX'} = \sum_i \ket{i}^X \otimes \ket{i}^{X'} \in \HH^X \otimes \HH^{X'} \, .
\end{equation}
Going forward, as a general rule,
we will denote a $\ket{\psi}\in\HH^X$ as $\ket{\psi}^X$, using just the label $X$ to denote explicitly the state space. We may omit this label if it is obvious from context.
With this, we may represent a (routed) linear map $A: \HH^K \to \HH^L$ by its \emph{Choi vector},
\begin{equation}
    \dket{A} \coloneqq (\id^K \otimes A) \dket{\id}^{KK'} = \sum_i \ket{i}^K \otimes A \ket{i}^{K'} \ \in \HH^{KL} .
\end{equation}
Notice that by definition of the Choi vector, $A = \dket{A}^{\top_{K}}$ (where $^{\top_{K}}$ denotes the partial transpose on the space $\HH^K$).

Given two Choi vectors $\dket{V_1} \in \HH^{XY}$ and $\dket{V_2} \in \HH^{YZ}$, we may introduce their (pure) \emph{link product}, denoted by $\ast$, by using the partial trace over their shared systems $\{ Y \}$:
\begin{equation}
    \label{eq:link-product}
    \dket{V_1} \ast \dket{V_2} = \Tr_Y \left[ \dket{V_1}^\top \otimes \dket{V_2} \right] \ \in \HH^{XZ}.
\end{equation}
This is analogous to the composition of relations as introduced in \cref{subsubsec:compose_rel}, and provides the Choi vector for the composition of the two linear maps $V_1,V_2$~\cite{Chiribella2008,Chiribella2009,Wechs2021}.

\subsection{Formal construction of QC-QCs}
\label{sec:QCQC-formal}

We continue by outlining the formal construction of the notion outlined in the beginning of the section. For more specific details we direct the reader to Ref.~\cite{Wechs2021}.

Without loss of generality, we assume that for all agents $A_k$, their associated Hilbert spaces $\HH^{A_k^I}$ and $\HH^{A_k^O}$ are isomorphic to one another.
Therefore, they are taken to be of the same dimension $d_{A^I}$ and $d_{A^O}$.\footnote{These do not necessarily have to be equal.}
If not given by default, this can be achieved by extending the dimensions of all operations appropriately.%
\footnote{Specifically, the respective operations can be extended by adding identities as tensor factors whose additional inputs and outputs are then traced out in the process.}

To translate from the spaces of the individual agent operations to the space of all operations applied for a given time slot $t_n$, we introduce a further set of spaces $\HH^{\tilde{A}_n^{I/O}}$, which are isomorphic to $\HH^{A_k^{I/O}}$.
Going forward, we will generally use a \textasciitilde{} to indicate that an operation is expressed with respect to $\HH^{\tilde{A}_n^{I/O}}$ rather than $\HH^{A_k^{I/O}}$.
The internal isometries $V$ then act upon these spaces, as well as ancillary spaces $\HH^{\alpha_n}$, passing information amongst themselves rather than making it accessible to the agents.
Accordingly, for each internal control operator $V^{\to \ell}_{\K_{n} \setminus k,k} : \HH^{A^O_{k} \alpha_n} \to 
\HH^{A^I_{\ell} \alpha_{n+1}}$
applied between two agents $A_k$ and $A_\ell$ in $n^\text{th}$ and $(n+1)^\text{th}$ positions (with $k\in\K_n, \ell\notin\K_n$), we obtain its tilded description
$\tilde V^{\to \ell}_{\K_{n} \setminus k,k} : \HH^{\tilde A^O_{n} \alpha_n} \to 
\HH^{\tilde A^I_{n} \alpha_{n+1}}$ as follows:
\begin{equation}
    \tilde V^{\to \ell}_{\K_{n} \setminus k,k} =
        \id^{A_{k}^O \to \tilde{A}_n^O} \: 
        V^{\to \ell}_{\K_{n} \setminus k,k} \: 
        \,\id^{\tilde{A}_n^I \to A_{\ell}^I} \, ,
\end{equation}
where $\id^{X \to Y}$ is to be understood as the identity from a space $\HH^X$ to an isomorphic space $\HH^Y$.
Similarly, we may use these spaces to describe the joint input and output spaces of the superposed operations $A_k : \HH^{A_k^I} \to \HH^{A_k^O}$ being in $n^\text{th}$ position respectively.
Using the control system, we define a a joint operation $\tilde A_n : \HH^{\tilde A_n^I C_n} \to \HH^{\tilde A_n^O C_n'}$.

This gives the following formal description of the operations in the tilded spaces in terms of Choi vectors:
\begin{align}
    \dket{\tilde A_n} &= \!\!\!
        \sum_{\substack{\K_{n}, k \in \K_{n}}} \!\! (\dket{\id^{A_{k}^I \tilde{A}_n^I}} \otimes \dket{\id^{A_{k}^O \tilde{A}_n^O}}) \ast \dket{A_{k}}^{A_{k}^I A_{k}^O} \!\otimes \ket{\K_{n} \setminus k,k}^{C_{n}} \!\otimes \ket{\K_{n} \setminus k,k}^{C_{n}'}
        \ \in \, \HH^{\tilde{A}_n^I C_n \tilde{A}_n^O C_n'} \label{eq:An} , \\
    \dket{\tilde V_1} &=
        \sum_{\ell \in \cN} \dket{\tilde V^{\to \ell}_{\emptyset,\emptyset}}^{P \tilde{A}_1^{I} \alpha_1} \otimes \ket{\emptyset,\ell}^{C_{1}} 
        \!\!\quad \in \ \HH^{P \tilde{A}_1^{I} \alpha_1 C_1} \notag , \\
	\dket{\tilde V_{n+1}} &= \!\!\!
	    \sum_{\K_n, k \in \K_n, \ell \notin \K_n}  \!\!\!\!\! \!\!\!\!\!\! \dket{\tilde V^{\to \ell}_{\K_{n} \setminus k,k}}^{\tilde A_n^O \alpha_n \tilde A_{n+1}^I \alpha_{n+1}} \!\otimes \ket{\K_n\setminus k, k}^{C_n'} \!\otimes \ket{\K_n,\ell}^{C_{n+1}} \label{eq:V-choi}
        \ \in \, \HH^{\tilde{A}_n^O \alpha_n C_n^{\prime} \tilde{A}_{n+1}^I \alpha_{n+1} C_{n+1}}
        \notag , \\
    \dket{\tilde V_{N+1}} &=
        \sum_{k \in \cN} \dket{\tilde V^{\to F}_{\cN \setminus k,k}}^{\tilde{A}_N^O \alpha_N F} \otimes \ket{\cN \setminus k ,k}^{C_{N}'}
        \!\!\quad \in \ \HH^{\tilde{A}_N^O \alpha_N C_{N}' F } .
\end{align}

We then compose all of these operations to a circuit $\dket{V_{P \to F}}$ which decomposes into the (pure) QC-QC $\ket{w_\text{QC-QC}}$ and the $N$ agent operations $\dket{A_k}$.
We compose these operations as follows,
relabelling $\K_{n-1} \to \{k_1, \ldots, k_{n-1} \}, k \to k_n, \ell \to k_{n+1}$ to be able to combine the respective sums:
\begin{align}
    \dket{V_{P \to F}} & = \dket{\tilde V_1} * \dket{\tilde A_1} * \dket{\tilde V_2} * \cdots * \dket{\tilde V_N} * \dket{\tilde A_N} * \dket{\tilde V_{N+1}} \notag \\
    & = \!\!\!\sum_{(k_1,\ldots,k_N)} \!\!\! \dket{\tilde V_{\emptyset,\emptyset}^{\to k_1}} * \dket{\tilde A_{k_1}} * \dket{\tilde V_{\emptyset,k_1}^{\to k_2}} * \cdots * \dket{\tilde V_{\{k_1,\ldots,k_{N-2}\},k_{N-1}}^{\to k_N}} * \dket{\tilde A_{k_N}} * \dket{\tilde V_{\{k_1,\ldots,k_{N-1}\},k_N}^{\to F}} \notag \\[1mm]
    & = \!\!\!\sum_{(k_1,\ldots,k_N)} \!\!\! \big( \dket{A_1} \otimes \cdots \otimes \dket{A_N} \big) * \ket{w_{(k_1,\ldots,k_N,F)}} \notag \\[1mm]
    & = \big( \dket{A_1} \otimes \cdots \otimes \dket{A_N} \big) * \ket{w_\text{QC-QC}} \ \in \HH^{PF} \label{eq:Choi_V_QCQC} \, ,
\end{align}
where
\begin{equation}
    \ket{w_{(k_1,\ldots,k_N,F)}} \coloneqq \dket{V_{\emptyset,\emptyset}^{\to k_1}} * \dket{V_{\emptyset,k_1}^{\to k_2}} * \dket{V_{\{k_1\},k_2}^{\to k_3}} * \cdots * \dket{V_{\{k_1,\ldots,k_{N-2}\},k_{N-1}}^{\to k_N}} * \dket{V_{\{k_1,\ldots,k_{N-1}\},k_N}^{\to F}} \, ,
\end{equation}
and obtaining
\begin{equation}
    \ket{w_\text{QC-QC}}
    \coloneqq \sum_{(k_1,\ldots,k_N)} \ket{w_{(k_1,\ldots,k_N,F)}} \label{eq:W_QCQC}
\end{equation}
as the process vector description for a pure QC-QC.
Here, the sums are over all permutations $(k_1,\dots,k_N)$ of $\{1,\dots,N\}$ (and, more generally, when we write $(k_1,\dots,k_n)$ throughout the paper we implicitly assume that all $k_i$'s are distinct).

Finally, let us mention the generalisation from pure to mixed QC-QCs.
In the QC-QC framework, mixed processes are captured by tracing out a final ancillary system $\alpha_F$, which, in the description above, would be a subsystem of $F$.
Therefore, only this single modification is required to obtain general QC-QCs from the pure case, leaving the construction of the process vector $\ket{w_\text{QC-QC}}$ unchanged otherwise.
We obtain the process matrix representation of mixed circuits directly by splitting $\HH^F = \HH^{F'} \otimes \HH^{\alpha_F}$:
\begin{equation}
    W_\text{QC-QC} = \Tr_{\alpha_F} \ketbra{w_\text{QC-QC}}{w_\text{QC-QC}} \, .
\end{equation}

\section{QC-QCs as Routed Quantum Circuits}
\label{sec:QCQCs_as_RQCs}

In this section, we will present our main result. For each number of agents $N$, we will provide a single \enquote{generic} routed graph, whose associated skeletal supermap can be fleshed out to obtain any $N$-slot QC-QC, detailing their construction explicitly and proving the equivalence to the original QC-QC construction.
All in all, this will amount to proving the following theorem:

\begin{restatable}{theorem}{main}
    \label{thm:main}
    Any $N$-slot quantum circuit with quantum control of causal order can be implemented by fleshing out a skeletal supermap obtained from the generic routed graph $\cG_{\text{QC-QC}(N)}$ introduced in \cref{def:generic-graph}.
\end{restatable}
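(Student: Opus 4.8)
The plan is to exhibit, for an arbitrary $N$-slot QC-QC, an explicit fleshing out of the skeletal supermap associated with $\cG_{\text{QC-QC}(N)}$ (\cref{def:generic-graph}), and then to prove that the resulting routed superisometry reproduces the process vector $\ket{w_\text{QC-QC}}$ of \cref{eq:W_QCQC}. Throughout I would take for granted that $\cG_{\text{QC-QC}(N)}$ is a \emph{valid} routed graph -- bi-univocality together with the only-weak-loops property, established separately in \cref{sec:branch-graph-QCQC} -- so that by \cref{thm:superisometry} any admissible fleshing out automatically yields a genuine (routed) superisometry. The content of the theorem is therefore the \emph{equivalence} of the fleshed-out object to the original QC-QC, not its well-formedness.

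First I would fix the fleshing out. The nodes of $\cG_{\text{QC-QC}(N)}$ should split into \emph{agent} nodes $A_k$ ($k\in\cN$), which are to be left as the open slots of the resulting supermap, and \emph{internal} nodes carrying the operations $\tilde V_n$. I would flesh out each internal node with the collection of routed isometries built from the $V^{\to\ell}_{\K_n\setminus k,k}$, packaged exactly as in \cref{eq:V-choi}: the sectors of the wires carrying the control information are to be identified with the basis states $\ket{\K_n\setminus k,k}^{C_n}$, so that the route at an internal node enforces precisely the transfer $k\in\K_n \mapsto \ell\notin\K_n$ together with the bookkeeping $\K_n\mapsto\K_n\cup\ell$. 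The agent slots are left open, with the sectorisation of their incoming and outgoing wires recording the set $\K_{n-1}$ of prior agents, and with the route acting as the identity within each sector; this is what encodes that $A_k$ is applied exactly once, but in a context-dependent time slot. The ancillary systems $\alpha_n$ are routed through the internal nodes as unsectorised side-wires.

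Next I would compute the action of this fleshing out. Using the Choi/link-product dictionary of \cref{sec:choi}, the skeletal composition dictated by $\Gamma$ becomes a link product of the internal-node Choi vectors with the open agent slots. The key reduction is that, because the routes are branched and the sectors are labelled by the pairs $(\K_n\setminus k,k)$, the partial traces over the internal wires collapse exactly onto the summation over permutations $(k_1,\dots,k_N)$ appearing in \cref{eq:Choi_V_QCQC,eq:W_QCQC}: each globally consistent assignment of sector labels along a path through the graph corresponds to one ordered sequence of agents, and the route constraints guarantee that only sequences with all $k_i$ distinct survive. Matching terms, the fleshed-out process vector is seen to equal $\ket{w_\text{QC-QC}}$, with the boundary cases $\dket{\tilde V_1}$ and $\dket{\tilde V_{N+1}}$ matching the open-ended arrows identified with $\HH^P$ and $\HH^F$.

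The main obstacle I anticipate is the sector bookkeeping in this last step: one must verify that the branched route structure of $\cG_{\text{QC-QC}(N)}$ reproduces the coherent superposition of causal orders \emph{without spurious cross-terms}, i.e.\ that the Boolean partial trace over internal indices (\cref{eq:link_prod_relations}) selects precisely the diagonal control-state contributions $\ket{\K_n\setminus k,k}^{C_n}\!\otimes\ket{\K_n\setminus k,k}^{C_n'}$ used in \cref{eq:An}. This is where bi-univocality does the real work: it is exactly the statement that for each bifurcation choice one and only one branch happens at every node, which translates into each agent being used exactly once and into the control register faithfully tracking $\K_n$. Care is also needed to ensure the identity-within-each-sector routing at the agent slots does not inadvertently identify sectors belonging to distinct contexts $\K_{n-1}$, as this would conflate terms that must remain in coherent superposition.
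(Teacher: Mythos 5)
Your overall strategy matches the paper's: take the validity of $\cG_{\text{QC-QC}(N)}$ as established separately, flesh out the internal nodes with the isometries $\tilde V_{n+1}$ written in the sectorised spaces, and verify by a link-product computation that the composition collapses onto the sum over permutations in \cref{eq:Choi_V_QCQC}, recovering $\ket{w_\text{QC-QC}}$. That part is sound and is essentially what the paper does (with the explicit isomorphisms $J^{\text{in/out}}_{\mathbf{V}_{n+1}}$ of \cref{eq:Js_V} making the identification of sectors with $\HH^{\tilde A_n^{O}}\otimes\HH^{\alpha_n}\otimes\HH^{C_n'}$ precise, and the detailed recursion in \cref{sec:composing-process}).

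There is, however, a genuine gap in your treatment of the agent nodes. You propose to leave the $\mathbf{A}_k$ slots of the skeletal supermap open and to route the ancillas $\alpha_n$ ``through the internal nodes as unsectorised side-wires.'' Neither is consistent with $\cG_{\text{QC-QC}(N)}$ as defined. First, the generic routed graph has no arrows $\mathbf{V}_n\to\mathbf{V}_{n+1}$; the ancilla $\alpha_n$ necessarily travels as a tensor factor of the sector $\HH_{\K_n\setminus k,k}^{\mathbf{V}_n\to\mathbf{A}_k}\cong\HH^{A_k^{I}}\otimes\HH^{\alpha_n}$, i.e.\ it passes through the agent node. (The variant with direct $\mathbf{V}_n\to\mathbf{V}_{n+1}$ arrows is a \emph{different} routed graph, $\cG^\alpha_{\text{QC-QC}(N)}$ of \cref{sec:var-ancillary}.) Second, if the $\mathbf{A}_k$ slots are left bare, the resulting object is a \emph{routed} supermap whose slots are the sectorised spaces $\HH^{\text{in/out}(\mathbf{A}_k)}_\text{prac}=\bigoplus_{n,\K_n\ni k}\HH_{\K_n\setminus k,k}$, so the agents could act on $\alpha_n$ and on the control degree of freedom, and the supermap is not of the QC-QC type $\bigtimes_k(\HH^{A_k^I}\to\HH^{A_k^O})\to(\HH^P\to\HH^F)$. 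The missing ingredient is the fleshing out of each $\mathbf{A}_k$ with the QC-QC-independent 1-slot superunitary $\dket{\mathbf{A}_k}=\dket{J^{\text{in}}_{\mathbf{A}_k}}*\dket{J^{\text{out}}_{\mathbf{A}_k}}$ of \cref{eq:dket_node_Ak}, which factors the practical space as $\HH^{A_k^{I/O}}\otimes(\HH^{\bar\alpha}\otimes\HH^{\bar C^{(k)}})_\text{prac}$, bypasses the agent with the control and ancilla, and leaves an \emph{unsectorised} slot $\HH^{A_k^I}\to\HH^{A_k^O}$. Without this step the theorem as stated is not proved. A minor further point: the absence of spurious cross-terms in the final computation is a direct algebraic consequence of the explicit routed isometries and the link product; bi-univocality is what certifies that the skeletal supermap is a superunitary at all, not what performs the term-matching.
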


In this construction, many structural features of QC-QCs will be recovered:
both the internal isometries $\tilde{V}_{n+1}$ and the agent operations $A_k$ will be featured as nodes $\mathbf{V}_{n+1}$ and $\mathbf{A}_k$ in the routed graph, and the control system $C^{(\prime)}_n$ will emanate both the indices $\mathcal{X}^k_n$ featured in the routed graph and the branch structure these indices induce.
To make these correspondences as clear as possible, we choose our notation accordingly.

\subsection{Generic routed graph for QC-QCs}
\label{sec:routed-graph-qcqc}

We begin by describing in detail how the \enquote{generic routed graph} for a QC-QC will be defined.
To ease understanding of the construction, we will provide an intuitive interpretations of various elements in italic sentences.

\subsubsection{Definition of the routed graph}
\label{def:graph}

\begin{definition}[{$\mathcal{G}_{\text{QC-QC}(N)}$}]
\label{def:generic-graph}
The \emph{generic routed graph} $\mathcal{G}_{\text{QC-QC}(N)}$ for an $N$-slot QC-QC is defined as follows:

\begin{itemize}

    \item There are 2 types of nodes: ``\textbf{A}-nodes'' $\textbf{A}_k$, for $k=1,\ldots,N$, and ``\textbf{V}-nodes'' $\textbf{V}_{n+1}$, for $n=0,\ldots,N$.

    \emph{(As we will see, the external operations $A_k$, on which a QC-QC acts, will be plugged into the \textup{\textbf{A}}-nodes, while the internal operations $\tilde{V}_{n+1}$ of a QC-QC will be plugged into the \textup{\textbf{V}}-nodes.)}
    
    \item The indexed graph involves the following indexed arrows:
    \begin{itemize}
        \item An open-ended arrow \mbox{$\xrightarrow{\ \emptyset\ }\textbf{V}_1$} that ``comes from nowhere'' and points to $\textbf{V}_1$, with the single index value $\emptyset$ attached to it (which we will henceforth drop).
        
        \item $\forall\,n=1,\ldots,N$, $\forall\,k=1,\ldots,N$,
        \begin{align}
            \textbf{V}_n\xrightarrow{\ \XX_n^k\ }\textbf{A}_k \quad \text{and} \quad \textbf{A}_k\xrightarrow{\ {\XX_n^k}^{\,\prime}\ }\textbf{V}_{n+1}.
        \end{align}
        The indices ${\XX_n^k}^{\,(\prime)}$ take values among the $n$-element subsets of $\N$ that contain $k$, or the ``null value'' $\ravnothing$: either ${\XX_n^k}^{\,(\prime)} = \K_n$ for some $\K_n\ni k$ (still with the convention that $|\K_n|=n$), or ${\XX_n^k}^{\,(\prime)} = \ravnothing$.

        \item An open-ended arrow \mbox{$\textbf{V}_{N+1}\xrightarrow{\N\cup F}$} that leaves from $\textbf{V}_{N+1}$ and ``goes nowhere'', with the single index value $\N\cup F$ attached to it (which we will henceforth drop).
        
    \end{itemize}

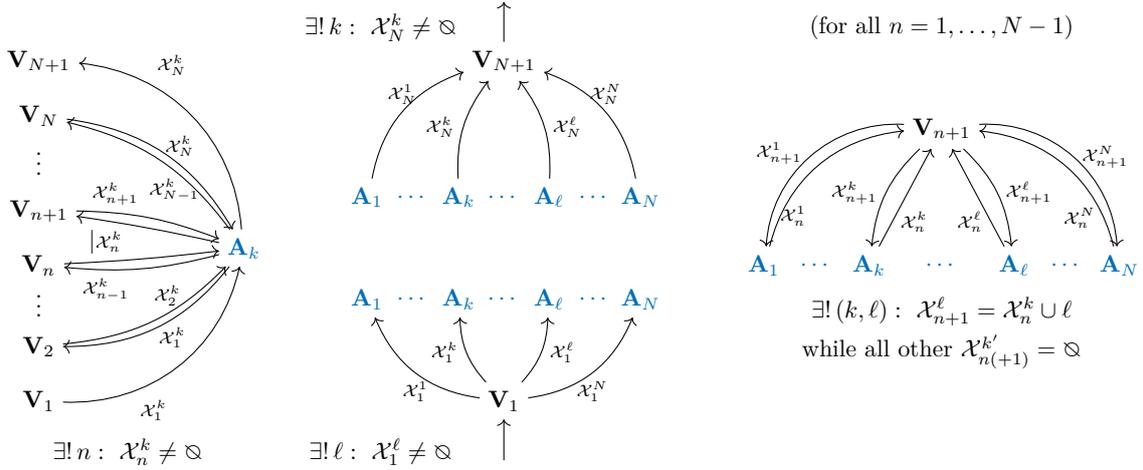
\begin{figure}[t]
\centering
\begin{tikzpicture}[scale=0.9, transform shape]
    \node (bottom) at (0,-1) {};
    \node (V0) at (0,0) {$\textbf{V}_1$};
    \node[text=NavyBlue] (A1) at (-2,1.5) {$\textbf{A}_1$};
    \node[text=NavyBlue] at (-1.33,1.5) {$\cdots$};
    \node[text=NavyBlue] (B1) at (-0.66,1.5) {$\textbf{A}_k$};
    \node[text=NavyBlue] at (0,1.5) {$\cdots$};
    \node[text=NavyBlue] (C1) at (0.66,1.5) {$\textbf{A}_\ell$};
    \node[text=NavyBlue] at (1.33,1.5) {$\cdots$};
    \node[text=NavyBlue] (D1) at (2,1.5) {$\textbf{A}_N$};
    \node (label0) at (-1.8,-0.75) {$\exists! \, \ell: \ \XX^\ell_1 \neq \ravnothing $};

    \node (top) at (0,6) {};
    \node (VN) at (0,5) {$\textbf{V}_{N+1}$};
    \node[text=NavyBlue] (A2) at (-2,3) {$\textbf{A}_1$};
    \node[text=NavyBlue] at (-1.33,3) {$\cdots$};
    \node[text=NavyBlue] (B2) at (-0.66,3) {$\textbf{A}_k$};
    \node[text=NavyBlue] at (0,3) {$\cdots$};
    \node[text=NavyBlue] (C2) at (0.66,3) {$\textbf{A}_\ell$};
    \node[text=NavyBlue] at (1.33,3) {$\cdots$};
    \node[text=NavyBlue] (D2) at (2,3) {$\textbf{A}_N$};
    \node (labelN) at (-1.8,5.5) {$\exists! \, k: \ \XX^k_N \neq \ravnothing $};

    \node at (6.4,5.5) [align=center] {
        (for all $n = 1, \ldots, N-1$)};
    \node (Vn) at (6.4,4) {$\textbf{V}_{n+1}$};
    \node[text=NavyBlue] (A3) at (3.8,2) {$\textbf{A}_1$};
    \node[text=NavyBlue] at (6-1.33-0.1,2) {$\cdots$};
    \node[text=NavyBlue] (B3) at (6-0.66,2) {$\textbf{A}_k$};
    \node[text=NavyBlue] at (6.4,2) {$\cdots$};
    \node[text=NavyBlue] (C3) at (7.46,2) {$\textbf{A}_\ell$};
    \node[text=NavyBlue] at (8.13+0.1,2) {$\cdots$};
    \node[text=NavyBlue] (D3) at (9,2) {$\textbf{A}_N$};
    \node at (6.4,1) [align=center] {
        $\exists! \, (k, \ell): \ \XX^\ell_{n+1} = \XX^k_n \cup \ell $\\[1mm]
        while all other $\XX^{k'}_{n(+1)} = \ravnothing$};

    \node[text=NavyBlue] (Ak) at (-3.8,2.25) {$\textbf{A}_k$};
    \node (VkN) at (-6.8,5) {$\textbf{V}_{N+1}$};
    \node (VkN_1) at (-6.8,4.2) {$\textbf{V}_{N}$};
    \node at (-6.8,3.6) {$\vdots$};
    \node (Vkn_1) at (-6.8,2.8) {$\textbf{V}_{n+1}$};
    \node (Vkn) at (-6.8,2) {$\textbf{V}_{n}$};
    \node at (-6.8,1.5) {$\vdots$};
    \node (Vk2) at (-6.8,0.8) {$\textbf{V}_2$};
    \node (Vk1) at (-6.8,0) {$\textbf{V}_1$};
    \node (labelk) at (-5.5,-0.75) {$\exists! \, n : \ \XX_n^k \neq \ravnothing$};
    
    \begin{scope}[
        every node/.style={fill=white,circle,inner sep=0pt}
        every edge/.style=routedarrow]
        \path [->] (bottom) edge (V0);
        \path [->] (V0) edge[bend left] node[below] {$\scriptstyle{\XX^1_1}$} (A1);
        \path [->] (V0) edge[bend left=20] node[left] {$\scriptstyle{\XX^k_1}$} (B1);
        \path [->] (V0) edge[bend right=20] node[right] {$\scriptstyle{\XX^\ell_1}$} (C1);
        \path [->] (V0) edge[bend right] node[below] {$\scriptstyle{\XX^N_1}$} (D1);

        \path [<-] (top) edge (VN);
        \path [<-] (VN) edge[bend right] node[above] {$\scriptstyle{\XX^1_N}$} (A2);
        \path [<-] (VN) edge[bend right=20] node[left] {$\scriptstyle{\XX^k_N}$} (B2);
        \path [<-] (VN) edge[bend left=20] node[right] {$\scriptstyle{\XX^\ell_N}$} (C2);
        \path [<-] (VN) edge[bend left] node[above] {$\scriptstyle{\XX^N_N}$} (D2);

        \path [<-] (Vn) edge[bend right=35] node[above left,pos=0.55] {$\scriptstyle{\XX^1_{n+1}}$} (A3);
        \path [->] (Vn) edge[bend right=45] node[right,pos=0.85] {$\scriptstyle{\XX^1_{n}}$} (A3);
        \path [->] (Vn) edge[bend right=20] node[left] {$\scriptstyle{\XX^k_{n+1}}$} (B3);
        \path [->] (Vn) edge[bend left=20] node[right] {$\scriptstyle{\XX^\ell_{n+1}}$} (C3);
        \path [<-] (Vn) edge node[right,pos=0.75] {$\scriptstyle{\XX^k_{n}}$} (B3);
        \path [<-] (Vn) edge node[left,pos=0.75] {$\scriptstyle{\XX^\ell_{n}}$} (C3);
        \path [<-] (Vn) edge[bend left=35] node[left,pos=0.85] {$\scriptstyle{\XX^N_{n}}$} (D3);
        \path [->] (Vn) edge[bend left=45] node[right] {$\scriptstyle{\XX^N_{n+1}}$}(D3);

        \path [->] (Ak) edge [bend right=40] node [above right,pos=0.72] {$\scriptstyle{\XX^k_N}$} (VkN);
        \path [<-] (Ak) edge [bend right=26] node [above,pos=0.4] {$\scriptstyle{\XX^k_{N}}$} (VkN_1);
        \path [->] (Ak) edge [bend right=20] node [below,pos=0.4] {$\scriptstyle{\XX^k_{N-1}}$} (VkN_1);
        \path [<-] (Ak) edge [bend right=10] node [above,pos=0.75] {$\scriptstyle{\XX^k_{n+1}}$} (Vkn_1);
        \path [->] (Ak) edge [bend right=2] node [below,pos=0.4] {} (Vkn_1);
        \path [<-] (Ak) edge [bend left=2] node [above,pos=0.75] {$\mid \scriptstyle{\XX^k_n}$} (Vkn);
        \path [->] (Ak) edge [bend left=10] node [below,pos=0.75] {$\scriptstyle{\XX^k_{n-1}}$} (Vkn);
        \path [<-] (Ak) edge [bend left=20] node [above,pos=0.4] {$\scriptstyle{\XX^k_{2}}$} (Vk2);
        \path [->] (Ak) edge [bend left=26] node [below,pos=0.4] {$\scriptstyle{\XX^k_{1}}$} (Vk2);
        \path [<-] (Ak) edge [bend left=40] node [below right,pos=0.72] {$\scriptstyle{\XX^k_1}$} (Vk1);
    \end{scope}
\end{tikzpicture}

    \caption{Neighbourhood of each type of node in the generic routed graph $\mathcal{G}_{\text{QC-QC}(N)}$ of a QC-QC: $\textbf{A}_k$, $\textbf{V}_1$, $\textbf{V}_{n+1}$ (for $n=1,\ldots,N-1$), and $\textbf{V}_{N+1}$.}
    \label{fig:neighbourhoods_2}
\end{figure}

\begin{figure}[t]
    \centering
    \begin{subfigure}[b]{0.4\textwidth}
		\centering
		\!\!\!\!\begin{tikzpicture}[transform shape]
			\node (bottom) at (0,0.2) {};
			\node (V0) at (0,1) {$\textbf{V}_1$};
			\node[text=NavyBlue] (A) at (-2,2) {$\textbf{A}$};
			\node (V1) at (0,2) {$\textbf{V}_2$};
			\node[text=NavyBlue] (B) at (2,2) {$\textbf{B}$};
			\node (V2) at (0,3) {$\textbf{V}_3$};
			\node (top) at (0,3.8) {};
			\node (i1) at (0,-1.2) [align=center] {
                $\exists! \, (k_1, k_2) \, :$\\[1mm]
                $\XX_1^{k_1} = \{ k_1 \}, \;\ \XX_2^{k_2} = \{k_1, k_2 \} \mathrel{(=} \{A, B \}),$\\[1mm]
                while all other $\XX^k_n = \ravnothing$};
			\node (i2) at (0,-2.5) {};
			
			\begin{scope}[
				every node/.style={fill=white,circle,inner sep=0pt}
				every edge/.style=routedarrow]
				\path [->] (bottom) edge (V0);
				\path [->] (V0) edge[bend left] node[below] {$\scriptstyle{\XX^A_1}$} (A);
				\path [->] (V0) edge[bend right] node[below] {$\scriptstyle{\XX^B_1}$} (B);
				\path [->] (A) edge[bend right=15] node[below] {$\scriptstyle{\XX^A_1}$} (V1);
				\path [->] (B) edge[bend left=15] node[below] {$\scriptstyle{\XX^B_1}$} (V1);
				\path [->] (V1) edge[bend left=15] node[above] {$\scriptstyle{\XX^B_2}$} (B);
				\path [->] (V1) edge[bend right=15] node[above] {$\scriptstyle{\XX^A_2}$} (A);
				\path [->] (B) edge[bend right] node[above] {$\scriptstyle{\XX^B_2}$} (V2);
				\path [->] (A) edge[bend left] node[above] {$\scriptstyle{\XX^A_2}$} (V2);
			    \path [->] (V2) edge (top);
			\end{scope}
		\end{tikzpicture}
		\caption{for $N=2$}
        \label{fig:routed_graphs_QCQC_N_2}
	\end{subfigure}%
	\hfill
    \begin{subfigure}[b]{0.6\textwidth}
        \centering
    	\begin{tikzpicture}[scale=0.85, transform shape]
    		\node (bottom) at (0,0) {};
    		\node (V0) at (0,0.75) {$\textbf{V}_1$};
    		\node[text=NavyBlue] (A) at (-3,3.8) {$\textbf{A}$};
    		\node (V1) at (0,2) {$\textbf{V}_2$};
    		\node[text=NavyBlue] (C) at (5,3.8) {$\textbf{C}$};
    		\node[text=NavyBlue] (B) at (1.5,3.8) {$\textbf{B}$};
    		\node (V2) at (0,5.6) {$\textbf{V}_3$};
    		\node (V3) at (0,6.85) {$\textbf{V}_4$};
    		\node (top) at (0,7.6) {};

            \node (i) at (0,-.8) [align=center] {
                $\exists! \, (k_1, k_2, k_3) \, :$
                $\XX_1^{k_1} = \{ k_1 \}, \;\ \XX_2^{k_2} = \{k_1, k_2 \},$\\[1mm]
                $ \XX_3^{k_2} = \{k_1, k_2, k_3 \} \mathrel{(=} \{A, B, C \})$,
                while all other $\XX^k_n = \ravnothing$};
    		
    		\begin{scope}[
    			every node/.style={fill=white,circle,inner sep=0pt}
    			every edge/.style={routedarrow}]
    			\path [->] (bottom) edge (V0);
    
    			\path [->] (V0) edge[bend left] node[below left] {$\scriptstyle{\XX^A_1}$} (A);
    			\path [->] (V0) edge[bend right] node[below right,pos=0.4] {$\scriptstyle{\XX^B_1}$} (B);
    			\path [->] (V0) edge[bend right] node[below right] {$\scriptstyle{\XX^C_1}$} (C);
    
    			\begin{scope}[bend angle=15]
                    \path [->] (A) edge[bend right] node[below,font=\scriptsize] {$\XX^A_1$} (V1);
    				\path [->] (B) edge node[left,pos=0.3,font=\scriptsize] {$\XX^B_1$} (V1);
    				\path [->] (C) edge[bend left] node[below,font=\scriptsize] {$\XX^C_1$} (V1);
    
    				\path [->] (V1) edge[bend left=40] node[above left,font=\scriptsize] {$\XX^B_2$} (B);
    				\path [->] (V1) edge node[above,font=\scriptsize] {$\XX^A_2$} (A);
    				\path [->] (V1) edge node[above,font=\scriptsize] {$\XX^C_2$} (C);

                    \path [->] (B) edge[bend left=40] node[below left,font=\scriptsize] {$\XX^B_2$} (V2);
    				\path [->] (A) edge node[below right,pos=0.4,font=\scriptsize] {$\XX^A_2$} (V2);
    				\path [->] (C) edge node[below left,pos=0.4,font=\scriptsize] {$\XX^C_2$} (V2);
    
    				\path [->] (V2) edge[bend right] node[above,font=\scriptsize] {$\XX^A_3$} (A);
    				\path [->] (V2) edge node[left,pos=0.7,font=\scriptsize] {$\XX^B_3$} (B);
    				\path [->] (V2) edge[bend left] node[above,font=\scriptsize] {$\XX^C_3$} (C);
    			\end{scope}
    
    			\path [->] (A) edge[bend left] node[above left] {$\scriptstyle{\XX^A_3}$} (V3);
    			\path [->] (B) edge[bend right] node[above right,pos=0.7] {$\scriptstyle{\XX^B_3}$} (V3);
    			\path [->] (C) edge[bend right] node[above right] {$\scriptstyle{\XX^C_3}$} (V3);
    
    			\path [->] (V3) edge (top);
    		\end{scope}
    	\end{tikzpicture}
        \caption{for $N=3$}
        \label{fig:routed_graphs_QCQC_N_3}
    \end{subfigure}
    \caption{The generic routed graph $\cG_{\text{QC-QC}(N)}$ for $N=2$ (left) and $N=3$ (right). Here, $\textbf{A}_1, \textbf{A}_2, \textbf{A}_3$ are denoted as $\textbf{A}$, $\textbf{B}$, $\textbf{C}$.}
    \label{fig:routed_graphs_QCQC_N_2_3}
\end{figure}

    Thus, each node $\textbf{A}_k$ has $N$ incoming arrows from the $N$ nodes $\textbf{V}_1$ to $\textbf{V}_N$, and $N$ outgoing arrows to the $N$ nodes $\textbf{V}_2$ to $\textbf{V}_{N+1}$; its input and output index values are of the form $(\XX_n^k)_{n=1}^N \in \Ind^\text{in}_{\mathbf{A}_k}$ and $({\XX_n^k}^{\,\prime})_{n=1}^N \in \Ind^\text{out}_{\mathbf{A}_k}$, resp.
    Each node $\textbf{V}_{n+1}$ on the other hand has $N$ incoming and $N$ outgoing arrows, from and to the $N$ nodes $\textbf{A}_1$ to $\textbf{A}_N$; its input and output index values are of the forms $(\XX_n^k)_{k=1}^N \in \Ind^\text{in}_{\mathbf{V}_{n+1}}$ and $({\XX_{n+1}^k})_{k=1}^N \in \Ind^\text{out}_{\mathbf{V}_{n+1}}$, respectively. 
    The neighbourhood of each type of node is represented in \cref{fig:neighbourhoods_2,fig:routed_graphs_QCQC_N_2_3}.

    \emph{(The intuitive interpretation for the values of the indices ${\XX_n^k}^{(\prime)}$ is that ${\XX_n^k}^{\,(\prime)} = \K_n$ indicates that operation $A_k$ is applied in $n^\text{th}$ position, with all other operations in $\K_n\backslash k$ being applied before; while ${\XX_n^k}^{\,(\prime)} = \ravnothing$ indicates that $A_k$ does not come in $n^\text{th}$ position.)}

    \item The routes are specified as follows (and as also stated on \cref{fig:neighbourhoods_2}):
    \begin{itemize}
        \item For each node $\textbf{A}_k$, the lists of all input and output index values $(\XX_n^k)_{n=1}^N \in \Ind^\text{in}_{\mathbf{A}_k}$ and $({\XX_n^k}^{\,\prime})_{n=1}^N \in \Ind^\text{out}_{\mathbf{A}_k}$ must be equal -- from now on we will then simply drop the primes -- and must further satisfy
        \begin{align}
            \exists!\,n, \XX_n^k\neq\ravnothing. \label{eq:A_routes_new}
        \end{align}
        Hence, the practical input and output sets of each node $\textbf{A}_k$ consist of (the same) lists of index values $(\ravnothing,\ldots,\ravnothing,\K_n,\ravnothing,\ldots,\ravnothing)$, for some set $\K_n\ni k$ at the $n^\text{th}$ position,%
        \footnote{Elements in the lists $(\XX_n^k)_{n=1}^N \in \Ind^\text{in/out}_{\mathbf{A}_k}$ are ordered from $n=1$ to $n=N$. Further below, elements in the lists $(\XX_n^k)_{k=1}^N \in \Ind^\text{in}_{\mathbf{V}_{n+1}}$ and $({\XX_{n+1}^k})_{k=1}^N \in \Ind^\text{out}_{\mathbf{V}_{n+1}}$ will be ordered from $k=1$ to $k=N$.}
        attached to the arrow coming from node $\textbf{V}_n$ and to the arrow going to node $\textbf{V}_{n+1}$.
        
        \emph{(The fact that in each admissible combination of input and output index values to the $\textup{\textbf{A}}$-nodes there must be one and only one non-null-value index translates the fact that each operation $A_k$ must be applied once and only once, at a single position $n$.)}

    \item For each node $\textbf{V}_{n+1}$, the lists of all input and output index values $(\XX_n^k)_{k=1}^N \in \Ind^\text{in}_{\mathbf{V}_{n+1}}$ and $({\XX_{n+1}^\ell})_{\ell=1}^N \in \Ind^\text{out}_{\mathbf{V}_{n+1}}$ must satisfy
        \begin{align}
            \text{for } n = 0 \ \text{ and } \ n=N: \quad & \exists!\,\ell, \XX_1^\ell\neq\ravnothing \ \text{ and } \ \exists!\,k, \XX_N^k\neq\ravnothing, \text{ resp.} \label{eq:V_routes_1_new} \\[2mm]
            \text{for } 1 \leq n \leq N-1: \quad & \exists!\,k, \XX_n^k\neq\ravnothing, \ \exists!\,\ell, \XX_{n+1}^\ell\neq\ravnothing; \ \text{ and } \XX_{n+1}^\ell=\XX_n^k\cup \ell. \label{eq:V_routes_2_new}
        \end{align}
        Hence, the practical input sets of each node $\textbf{V}_{n+1}$, for $0 < n \leq N$, consist of lists of index values of the form $(\ravnothing,\ldots,\ravnothing,\K_n,\ravnothing,\ldots,\ravnothing)$, for some set $\K_n$ at the $k^\text{th}$ position, such that $k\in\K_n$. Similarly, the practical output sets of each node $\textbf{V}_{n+1}$, for $0 \leq n < N$, consist of lists of index values of the form $(\ravnothing,\ldots,\ravnothing,\K_{n+1},\ravnothing,\ldots,\ravnothing)$, for some set $\K_{n+1}$ at the $\ell^\text{th}$ position, such that $\ell\in\K_{n+1}$. Such input and output index values are related (for $1 \leq n \leq N-1$) if and only if $\K_{n+1}=\K_n\cup \ell$.

        \emph{(These constraints translate the fact that along each route, between any two operations $\tilde{V}_n$ and $\tilde{V}_{n+1}$, one and only one operation $A_k$ is applied.)}
        
    \end{itemize}
    
    Combining the different constraints at the different nodes, we get the \emph{global index constraint} (as expressed for the cases $N=2,3$ on \cref{fig:routed_graphs_QCQC_N_2_3}) that%
    \footnote{It can indeed be seen that \cref{eq:V_routes_1_new,eq:V_routes_2_new} for all $n$ imply \cref{eq:global_cstr_QCQC}; and vice versa, that \cref{eq:global_cstr_QCQC} implies \cref{eq:A_routes_new,eq:V_routes_1_new,eq:V_routes_2_new}.}
    \begin{align}
        & \exists!\,(k_1,k_2,\ldots,k_N), \notag \\
        & \XX_1^{k_1} = \{k_1\},\ \XX_2^{k_2} = \{k_1,k_2\},\ \ldots,\ \XX_N^{k_N} = \{k_1,k_2,\ldots,k_N\}=\N, \text{ and all other } \XX_n^k=\ravnothing. \label{eq:global_cstr_QCQC}
    \end{align}
    \emph{(Here the understanding is that along each global path, all operations are applied, in a given order.)}
        
\end{itemize}
\end{definition}

\subsubsection{Identifying the branches}

\paragraph*{$\mathbf{A}$-nodes.}
The lists of index values $(\ravnothing,\ldots,\ravnothing,\K_n,\ravnothing,\ldots,\ravnothing)$ in the practical input and output sets of each node $\textbf{A}_{k}$ are in one-to-one relation.
The relation $\lambda_{\mathbf{A}_k}$ is therefore (rather trivially) branched; to help relate its branches to the QC-QC construction later on, we label them by $\K_n\setminus k$ rather than by $\K_n$ directly, i.e.\ we denote the branches as
\begin{align}
    \textbf{A}_k^{{\K_n\backslash k}} \coloneqq \Big( \big\{(\ravnothing,\ldots,\ravnothing,\K_n,\ravnothing,\ldots,\ravnothing)\big\} \to \big\{(\ravnothing,\ldots,\ravnothing,\K_n,\ravnothing,\ldots,\ravnothing)\big\} \Big). \label{eq:branches_Ak}
\end{align}
Each node $\textbf{A}_{k}$ has $2^{N-1}$ such branches, one for each subset $\K_n\subseteq\N$ that includes $k$.

\noindent\emph{(Each branch $\textup{\textbf{A}}_k^{\K_n\backslash k}$ relates to the set of parties $\K_n\setminus k$ that came before $A_k$.)}

\paragraph*{$\mathbf{V}$-nodes.}
According to the previous description, the route of each node $\textbf{V}_{n+1}$ (for $1 \leq n \leq N-1$) relates index values of the form $(\ravnothing,\ldots,\ravnothing,\K_n,\ravnothing,\ldots,\ravnothing)$ in the practical input set to index values of the form $(\ravnothing,\ldots,\ravnothing,\K_{n+1},\ravnothing,\ldots,\ravnothing)$ in the practical output set, with $\K_{n+1}=\K_n\cup \ell$ for some $\ell\notin\K_n$.
One can verify that $\lambda_{\textbf{V}_{n+1}}$ is also branched, with one branch corresponding to each subset $\K_n$, namely:%
\footnote{To see that $\lambda_{\textbf{V}_{n+1}}$ is branched, recall that its practical input set consists of lists of the form $(\ravnothing,\ldots,\ravnothing,\underset{\uparrow k}{\K_n},\ravnothing,\ldots,\ravnothing)$, for some set $\K_n$ at some position $k\in\K_n$ (as we denote by ${\uparrow}k$ below it). From \cref{eq:V_routes_2_new} it can then be seen that $\lambda_{\textbf{V}_{n+1}}\big((\ravnothing,\ldots,\ravnothing,\underset{\uparrow k}{\K_n},\ravnothing,\ldots,\ravnothing)\big) = \big\{(\ravnothing,\ldots,\ravnothing,\underset{\uparrow \ell}{\K_n\cup\ell},\ravnothing,\ldots,\ravnothing)\big|\ell\in\N\backslash\K_n\big\}$, from which one can see that $\lambda_{\textbf{V}_{n+1}}\big((\ravnothing,\ldots,\ravnothing,\underset{\uparrow k}{\K_n},\ravnothing,\ldots,\ravnothing)\big)$ and $\lambda_{\textbf{V}_{n+1}}\big((\ravnothing,\ldots,\ravnothing,\underset{\uparrow k'}{\K_n'},\ravnothing,\ldots,\ravnothing)\big)$ are either equal if $\K_n=\K_n'$ (while $k,k'\in\K_n^{(\prime)}$ may differ), or disjoint if $\K_n\neq\K_n'$.
Similarly, we find $\lambda_{\textbf{V}_{n+1}}^\top\big((\ravnothing,\ldots,\ravnothing,\underset{\uparrow \ell}{\K_n\cup\ell},\ravnothing,\ldots,\ravnothing)\big) = \big\{(\ravnothing,\ldots,\ravnothing,\underset{\uparrow k}{\K_n},\ravnothing,\ldots,\ravnothing)\big|k\in\K_n\big\}$.
From this one can indeed identify the branches of $\textbf{V}_{n+1}^{\K_n}$ as in \cref{eq:branches_Vn1}.}
\begin{align}
    \textbf{V}_{n+1}^{\K_n} \coloneqq & \Big( \big\{ (\ravnothing,\ldots,\ravnothing,\K_n,\ravnothing,\ldots,\ravnothing) \,\big|\, \K_n \text{ at the $k^\text{th}$ position}, \forall\,k\in\K_n \big\} \notag \\
    & \quad \to \big\{ (\ravnothing,\ldots,\ravnothing,\K_n\cup\ell,\ravnothing,\ldots,\ravnothing) \, \big| \,\K_n\cup\ell \text{ at the $\ell^\text{th}$ position}, \forall\,\ell\in\N\backslash\K_n \big\} \Big). \label{eq:branches_Vn1}
\end{align}

For the nodes $\textbf{V}_1$ and $\textbf{V}_{N+1}$, we find that these each have a single branch
\begin{align}
    \textbf{V}_1^\emptyset \coloneqq & \Big( \big\{ \emptyset \big\} \to \big\{ (\ravnothing,\ldots,\ravnothing,\ell,\ravnothing,\ldots,\ravnothing) \,\big| \,\ell \text{ at the $\ell^\text{th}$ position}, \forall\,\ell\in\N \big\} \Big) \notag \\
    \text{and} \quad \textbf{V}_{N+1}^{\N} \coloneqq & \Big( \big\{ (\ravnothing,\ldots,\ravnothing,\N,\ravnothing,\ldots,\ravnothing) \,\big| \,\N \text{ at the $k^\text{th}$ position}, \forall\,k\in\N \big\} \to \big\{ \N\cup F \big\} \Big), \label{eq:branches_V1_VN1}
\end{align}
respectively.

\begin{figure}[t]
    \centering
    \includegraphics{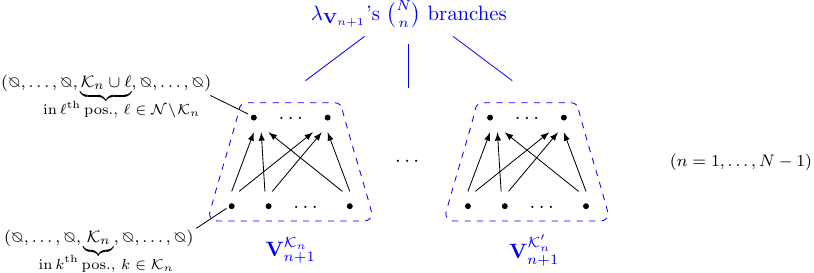}
    
    \vspace{1cm}
    
    \includegraphics{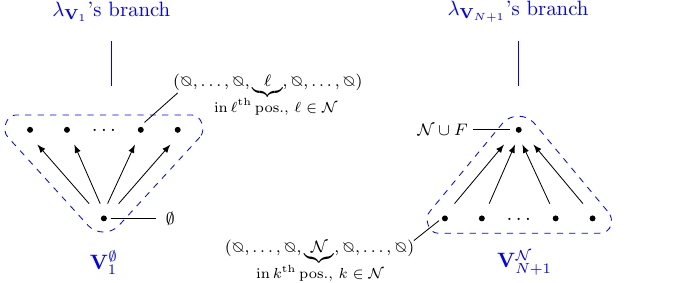}
    \caption{Branches of the $\mathbf{V}$-nodes.
    For each node $\mathbf{V}_{n+1}$, for $n=0, \ldots, N$,
    $\lambda_{\mathbf{V}_{n+1}}: \Ind^\text{in}_{\textbf{V}_{n+1}} \to \Ind^\text{out}_{\mathbf{V}_{n+1}}$
    features $\binom{N}{n}$ branches, as well as values outside its practical input and output sets (not shown).
    Each branch $V^{\K_n}_{n+1}$ is associated with $n$ input values (for $n\ge 1$), each taking the form $(\XX^k_n)_{k=1}^{N}$, where the position $k$ of the single non-null index value indicates the agent $A_k$ last visited.
    Analogously, it is associated with $N-n$ output values of the same form (for $n\le N-1$), with the position $\ell$ of the single non-null index value indicating the agent $A_\ell$ to be visited next.}
    \label{fig:V_branches}
\end{figure}

The branches of the $\textbf{V}$-nodes are illustrated on \cref{fig:V_branches}. Each node $\textbf{V}_{n+1}$ has $\binom{N}{n}$ branches $\textbf{V}_{n+1}^{\K_n}$, one for each $n$-element subset $\K_n\subseteq \N$. 

\noindent\emph{(Each branch $\textup{\textbf{V}}_{n+1}^{\K_n}$ relates to the set of parties $\K_n$ that came before $\tilde{V}_{n+1}$.)}

\subsubsection{Augmented relations}
\label{subsubsec:augmented_rel_QCQC}

Now that we have identified the branches of the relations $\lambda_{\textbf{A}_k}$ and $\lambda_{\textbf{V}_{n+1}}$, we continue by specifying their augmented versions $\lambda_{\textbf{A}_k}^\text{aug}$ and $\lambda_{\textbf{V}_{n+1}}^\text{aug}$ according to \cref{def:augmented_rel}.

\begin{figure}[t]
    \centering
    \includegraphics[scale=1]{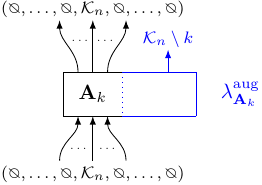}

    \vspace{.5cm}

    \includegraphics[scale=1]{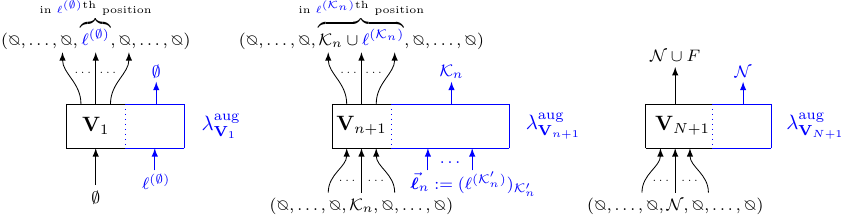}
    \caption{Augmented relations for the $\textbf{A}$- and $\textbf{V}$-nodes of the routed graph $\mathcal{G}_{\text{QC-QC}(N)}$.
    We do not show any auxiliary input for the nodes $\textbf{A}_k$ and $\textbf{V}_{N+1}$, as there are no bifurcation choices to be made. Notice also that the auxiliary output of $\textbf{V}_1$, as well as both outputs of $\textbf{V}_{N+1}$, are constant and could therefore be ignored.}
    \label{fig:augmented_rel_A_Vs_new}
\end{figure}

\paragraph*{$\textbf{A}$-nodes.}
There are no bifurcation choices to be made, hence no need to introduce auxiliary inputs (as these would anyway be trivial). The auxiliary output of some node $\textbf{A}_k$ just corresponds to the label of the branch that the input belongs to, namely $\K_n\setminus k$ as in \cref{eq:branches_Ak}. The augmented relation $\lambda_{\textbf{A}_k}^\text{aug}$, as illustrated on \cref{fig:augmented_rel_A_Vs_new}, is then defined as the (partial) function
\begin{align}
    \lambda_{\textbf{A}_k}^\text{aug}: (\ravnothing,\ldots,\ravnothing,\K_n,\ravnothing,\ldots,\ravnothing) \mapsto \Big( (\ravnothing,\ldots,\ravnothing,\K_n,\ravnothing,\ldots,\ravnothing),\; \K_n\setminus k \Big) \, . \label{eq:aug_rel_Ak_new}
\end{align}

\paragraph*{$\textbf{V}$-nodes.}
For each branch $\textbf{V}_{n+1}^{\K_n}$ of each node $\textbf{V}_{n+1}$, for $n=1,\ldots,N-1$, there is a bifurcation choice to be made, corresponding to the choice of the value $\ell\in\N\setminus\K_n$ in the output set of that branch (cf.\ \cref{eq:branches_Vn1}).%
\footnote{For $n=N-1$, the bifurcation choice within each branch $\textbf{V}_N^{\K_{N-1}}$ is trivial, as there remains only one $\ell\in\N\setminus\K_{N-1}$. In that case we may ignore the auxiliary inputs of the augmented relation.}
Hence to define the augmented relation, we add some auxiliary inputs $(\ell^{(\K_n')})_{\K_n'} \in \bigtimes_{\K_n'}\N\setminus\K_n'$ that specify these bifurcation choices, for all possible branches. 
Upon receiving an input (to the original relation $\lambda_{\textbf{V}_{n+1}}$) of the form $(\ravnothing,\ldots,\ravnothing,\K_n,\ravnothing,\ldots,\ravnothing)$, the augmented relation $\lambda_{\textbf{V}_{n+1}}^\text{aug}$ returns the element $(\ravnothing,\ldots,\ravnothing,\K_n\cup\ell^{(\K_n)},\ravnothing,\ldots,\ravnothing)$ of the output set that is determined by the value $\ell^{(\K_n)}$ in the list of auxiliary inputs, together with the value of $\K_n$ indicating the branch that happens.

The augmented relation, as illustrated on \cref{fig:augmented_rel_A_Vs_new}, can thus be written as the (partial) function
\begin{align}
    \lambda_{\textbf{V}_{n+1}}^\text{aug}: \Big( (\ravnothing,\ldots,\ravnothing,\K_n,\ravnothing,\ldots,\ravnothing), (\ell^{(\K_n')})_{\K_n'} \Big) \mapsto \ & \Big( (\ravnothing,\ldots,\ravnothing,\K_n\cup\ell^{(\K_n)},\ravnothing,\ldots,\ravnothing), \K_n \Big) \notag \\
    & \text{with $\K_n\cup\ell^{(\K_n)}$ at the ${\ell^{(\K_n)}}^\text{th}$ position} . \label{eq:aug_rel_Vn1_new}
\end{align}

The situation is similar for $\textbf{V}_1$, whose augmented relation can be written as
\begin{align}
    \lambda_{\textbf{V}_1}^\text{aug}: \big( \emptyset, \ell^{(\emptyset)} \big) \mapsto \ & \big( (\ravnothing,\ldots,\ravnothing,\ell^{(\emptyset)},\ravnothing,\ldots,\ravnothing), \emptyset \big) \ \ \text{with $\ell^{(\emptyset)}$ at the ${\ell^{(\emptyset)}}^\text{th}$ position} . \label{eq:aug_rel_V1_new}
\end{align}
For $\textbf{V}_{N+1}$ on the other hand, no bifurcation choice remains to be made, and the output of the (augmented) relation is constant:
\begin{align}
    \lambda_{\textbf{V}_{N+1}}^\text{aug}: (\ravnothing,\ldots,\ravnothing,\N,\ravnothing,\ldots,\ravnothing) \mapsto ( \N\cup F, \N ) \label{eq:aug_rel_VN1_new}
\end{align}
(see again \cref{fig:augmented_rel_A_Vs_new}).

\subsubsection{The choice function and bi-univocality}
\label{subsubsec:biunivoc_GQCQC}

By \cref{def:choice_rel}, the choice relation of $\mathcal{G}_{\text{QC-QC}(N)}$ is obtained by composing all augmented relations according to the routed graph, and is thus given, as in \cref{eq:choice_rel}, by
\begin{align}
    \Lambda_{\mathcal{G}_{\text{QC-QC}(N)}} & = \lambda_{\textbf{A}_1}^\text{aug} * \cdots * \lambda_{\textbf{A}_N}^\text{aug} * \lambda_{\textbf{V}_1}^\text{aug} * \cdots * \lambda_{\textbf{V}_{N+1}}^\text{aug} .
    \label{eq:choice_rel_QCQC_link_prod}
\end{align}
The detailed calculations, starting from this expression, are presented in \cref{sec:choice-function}. Let us describe here the result on a more intuitive level.

The choice relation $\Lambda_{\mathcal{G}_{\text{QC-QC}(N)}}$ takes as inputs all the various bifurcation choices for all possible branches $\textbf{V}_{n+1}^{\K_n}$ of all $\textbf{V}$-nodes (recalling that the $\textbf{A}$-nodes have no bifurcation choices), and outputs the possible branches that may happen at each of the $\textbf{A}$- and $\textbf{V}$-nodes.
For convenience let us denote by $\vec{\bm{\ell}}_n\coloneqq(\ell^{(\K_n')})_{\K_n'}$ the bifurcation choices for the different branches at the node $\textbf{V}_{n+1}$ (i.e.\ the auxiliary inputs of $\lambda_{\textbf{V}_{n+1}}^\text{aug}$), and by $\vec{\bm{\ell}}\coloneqq(\vec{\bm{\ell}}_n)_{n=0}^{N-1}$ the collection of all these bifurcation choices, at all \textbf{V}-nodes.
Let us then recursively define 
\begin{align}
    & \K_1^{\vec{\bm{\ell}}} \coloneqq \ell^{(\emptyset)}, \quad \K_{n+1}^{\vec{\bm{\ell}}} \coloneqq \K_n^{\vec{\bm{\ell}}}\cup \ell^{(\K_n^{\vec{\bm{\ell}}})} \ \text{ for } n=1,\ldots,N-2, \notag \\
    & \text{and for each $k\in\N$, let $n_k^{\vec{\bm{\ell}}}$ be the (single) value of $n$ such that } \ell^{(\K_{n-1}^{\vec{\bm{\ell}}})}=k. \label{eq:Kn1ell}
\end{align}

Starting from the first $\textbf{V}$-node $\textbf{V}_1$, its bifurcation choice $\ell^{(\emptyset)}$ determines the output value $(\ravnothing,\ldots,\ravnothing,\ell^{(\emptyset)},\ravnothing,\ldots,\ravnothing)=(\ravnothing,\ldots,\ravnothing,\K_1^{\vec{\bm{\ell}}},\ravnothing,\ldots,\ravnothing)$ of the augmented relation $\lambda_{\textbf{V}_1}^\text{aug}$ -- which coincides with the input of node $\textbf{V}_2$,%
\footnote{
    Recalling that the input and output index values of the $\textbf{A}$-nodes are the same (when these are in the practical input/output sets), the routes at the $\textbf{A}$-nodes that come between $\textbf{V}_n$ and $\textbf{V}_{n+1}$ are ``transparent'' for the index values of the $\textbf{V}$-nodes.
}
and which also determines the branch $\textbf{V}_2^{\K_1^{\vec{\bm{\ell}}}}$ that happens at that node. The corresponding component $\ell^{(\K_1^{\vec{\bm{\ell}}})}$ of the bifurcation choice $\vec{\bm{\ell}}_1$ at that second $\textbf{V}$-node $\textbf{V}_2$ then determines the output value $(\ravnothing,\ldots,\ravnothing,\K_1^{\vec{\bm{\ell}}}\cup \ell^{(\K_1^{\vec{\bm{\ell}}})},\ravnothing,\ldots,\ravnothing)=(\ravnothing,\ldots,\ravnothing,\K_2^{\vec{\bm{\ell}}},\ravnothing,\ldots,\ravnothing)$ of the augmented relation $\lambda_{\textbf{V}_2}^\text{aug}$ -- which coincides with the input of node $\textbf{V}_3$, and which also determines the branch $\textbf{V}_3^{\K_2^{\vec{\bm{\ell}}}}$ that happens at that node.
Continuing accordingly up to $\K_N^{\vec{\bm{\ell}}}$,
the bifurcation choices in $\vec{\bm{\ell}}$ thus determine the sequence $(\K_1^{\vec{\bm{\ell}}},\K_2^{\vec{\bm{\ell}}},\ldots,\K_{N-1}^{\vec{\bm{\ell}}},\K_N^{\vec{\bm{\ell}}})$ of (the labels of) all branches that successively happen at the $\textbf{V}$-nodes.

Note, furthermore, that each $\K_n^{\vec{\bm{\ell}}}=\K_{n-1}^{\vec{\bm{\ell}}}\cup \ell^{(\K_{n-1}^{\vec{\bm{\ell}}})}$ in the output values $(\ravnothing,\ldots,\ravnothing,\K_{n-1}^{\vec{\bm{\ell}}}\cup \ell^{(\K_{n-1}^{\vec{\bm{\ell}}})},\ravnothing,\ldots,\ravnothing)$ in the procedure above appears at a ``new'' position $k=\ell^{(\K_{n-1}^{\vec{\bm{\ell}}})}$ in the tuple of index values, ``unvisited'' before. For each $k$ there is therefore (as required) a single value of $n$ such that $\XX_n^k\neq\ravnothing$, for which $\XX_n^k=\K_n^{\vec{\bm{\ell}}}$ -- that value being $n_k^{\vec{\bm{\ell}}}$, as introduced above. This determines the branch that happens at the node $\textbf{A}_k$ to be the one labelled by $\K_{n_k^{\vec{\bm{\ell}}}}^{\vec{\bm{\ell}}}\backslash k = \K_{n_k^{\vec{\bm{\ell}}}-1}^{\vec{\bm{\ell}}}$. 

All in all, one thus finds that the collection of all bifurcation choices $\vec{\bm{\ell}}$ determines in a unique manner all the branches that happen at all nodes. The choice relation $\Lambda_{\mathcal{G}_{\text{QC-QC}(N)}}$ therefore defines a proper function, which can be written as
\begin{align}
    \Lambda_{\mathcal{G}_{\text{QC-QC}(N)}}: \Big(\underbrace{\vec{\bm{\ell}}= (\ell^{(\K_n')})_{n,\K_n'}}_{\substack{\text{auxiliary inputs}\\ \text{of the $\mathbf{V}$-nodes}}} \Big) \mapsto \Bigg(\underbrace{\Big(\K_{n_k^{\vec{\bm{\ell}}}-1}^{\vec{\bm{\ell}}}\Big)_{k\in\N}}_{\substack{\text{auxiliary outputs}\\ \text{of the $\mathbf{A}$-nodes}}} \ \ , \underbrace{\Big(\K_n^{\vec{\bm{\ell}}}\Big)_{n=1}^{N-1}}_{\substack{\text{auxiliary outputs}\\ \text{of the $\mathbf{V}$-nodes}}} \Bigg) \label{eq:choice_function_QCQC}
\end{align}
(see also \cref{eq:choice_rel_QCQC_v3} in \cref{sec:choice-function}). This means that the routed graph $\mathcal{G}_{\text{QC-QC}(N)}$ is univocal.

Conceptually, each of the inputs of the choice function says `if the set of operations up to $n$ was $\K_n$, then here is which operation should be the $(n+1)^\text{th}$', which provides an algorithm to fully determine the global order.)

Furthermore, since $\mathcal{G}_{\text{QC-QC}(N)}$ is self-adjoint -- in the sense that reversing the arrows and the routes gives the same routed graph, up to some relabelling [$\textbf{V}_{n+1}\leftrightarrow \textbf{V}_{N-n+1}$, $\XX_n^k\leftrightarrow \XX_{N-n+1}^k$ (taking values $\ravnothing\leftrightarrow\ravnothing$, $\K_n\leftrightarrow(\N\backslash\K_n)\cup k$)] -- it readily follows that $\mathcal{G}_{\text{QC-QC}(N)}^\top$ is also univocal, and therefore that $\mathcal{G}_{\text{QC-QC}(N)}$ is bi-univocal.

\subsection{Generic branch graph for QC-QCs}
\label{sec:branch-graph-QCQC}

Having constructed a bi-univocal
routed graph for QC-QCs, we will now derive its branch graph.
To do so, we need to study the information flow between the branches by determining all strong and weak parent relations.

\subsubsection{Strong parents}

To define strong parents we first need to specify the 1-dimensional values in the routed graph. We take these to be the index values $\ravnothing$ (on all arrows where these may appear), while all other index values of the form $\K_n$ are considered (a priori) to be higher-dimensional.
Hence, for all internal arrows $A \in \mathtt{Arr}^\text{int}_\Gamma$, we take $\mathtt{1Dim}_{A} = \{ \ravnothing \}$.

Recalling \cref{def:strong_parent}, given that each arrow has at most one possible 1-dimensional index value, then one can see that in order for some branch $\mathbf{V}_n^{\K_{n-1}}$ to be a strong parent of some branch $\mathbf{A}_k^{\K_{m-1}}$ (with $\K_{m-1}$, this requires that $\K_m\backslash k=\K_{n-1}$ (and hence $m=n$). Similarly, for some branch $\textbf{A}_k^{\K_n\backslash k}$ to be a strong parent of some branch $\mathbf{V}_{m+1}^{\K_m'}$, this requires that $\K_m'=\K_n$ (hence $m=n$). In both cases, $\mathtt{LinkVal}(\mathbf{V}_n^{\K_n\backslash k}, \mathbf{A}_k^{\K_n\backslash k})$ and $\mathtt{LinkVal}(\textbf{A}_k^{\K_n\backslash k}, \mathbf{V}_{n+1}^{\K_n})$ contain $\K_n$ -- which is not taken a priori to be 1-dimensional. We thus find that each $\mathbf{V}_n^{\K_n\backslash k}$ is a strong parent of each $\mathbf{A}_k^{\K_n\backslash k}$ (for all $\K_n$, $k\in\K_n$), and each $\mathbf{A}_k^{\K_n\backslash k}$ is a strong parent of $\mathbf{V}_{n+1}^{\K_n}$, and that these are the only ``strong parent'' relationships between branches.

\subsubsection{Weak parents}

Recall from \cref{subsubsec:weak_parents} that in order to determine whether a branch $\mathbf{N}^\beta$ is a weak parent of another branch $\mathbf{M}^\gamma$, we need to analyse how the bifurcation choices for $\mathbf{N}^\beta$ influence whether $\mathbf{M}^\gamma$ happens or not.
To do so, we consider the ``$\mathbf{M}^\gamma$-Happens functions'' (cf.\ \cref{def:MBetaHapp}) derived from the choice function of \cref{eq:choice_function_QCQC}.

For the branches $\textbf{A}_k^{\K_{n-1}}$ (with $\K_{n-1}$ of the form $\K_n\backslash k$) of the $\mathbf{A}$-nodes, by considering the $k^\text{th}$ element in the first tuple of outputs of the choice function, we obtain 
\begin{align}
    \Happ_{\textbf{A}_k^{\K_{n-1}}}: \ \vec{\bm{\ell}} \ \mapsto \ \delta_{\K_{n_k^{\vec{\bm{\ell}}}-1}^{\vec{\bm{\ell}}},\,\K_{n-1}}, \label{eq:Happ_Ak}
\end{align}
where $\K_m^{\vec{\bm{\ell}}}$ and $n_k^{\vec{\bm{\ell}}}$ are defined in \cref{eq:Kn1ell}, as a function of the input (the whole collection of bifurcation choices at the $\textbf{V}$-nodes) $\vec{\bm{\ell}}= (\ell^{(\K_n')})_{n,\K_n'}$.
It can be seen that $\K_{n_k^{\vec{\bm{\ell}}}-1}^{\vec{\bm{\ell}}}$, and therefore the Boolean output value of $\Happ_{\textbf{A}_k^{\K_{n-1}}}$, depends non-trivially only on%
\footnote{Notice that the $\ell^{(\K_{N-1})}$'s are excluded here, because these are trivial bifurcation choices (with only one possible values, $\ell^{(\K_{N-1})}=\N\backslash\K_{N-1}$).}
$(\ell^{(\K_{m-1})})_{m\leq n,\K_{m-1}\subseteq\K_{n-1},m<N}$ -- i.e., on the bifurcation choices made within the branches $\mathbf{V}_m^{\K_{m-1}}$, for all $m\leq \min(n,N-1)$ and all subsets $\K_{m-1}\subseteq\K_{n-1}$, so that these branches are precisely the weak parents of $\textbf{A}_k^{\K_{n-1}}$.

For the branches $\mathbf{V}_{n+1}^{\K_n}$ of the $\mathbf{V}$-nodes, by considering now the $n^\text{th}$ element (for $n=1,\ldots,N-1$) in the second tuple of outputs of the choice function, we obtain
\begin{align}
    \Happ_{\mathbf{V}_{n+1}^{\K_n}}: \ \vec{\bm{\ell}} \ \mapsto \ \delta_{\K_n^{\vec{\bm{\ell}}},\,\K_n}.
\end{align}
Here it can be seen that $\K_n^{\vec{\bm{\ell}}}$ (again defined as in \cref{eq:Kn1ell}) depends non-trivially only on $(\ell^{(\K_m)})_{m<n,\K_m\subsetneq\K_n}$
The weak parents of $\mathbf{V}_{n+1}^{\K_n}$ are thus the branches $\mathbf{V}_{m+1}^{\K_m}$ for all $m<n$ and $\K_m\subsetneq\K_n$.

As for the branches $\mathbf{V}_{1}^\emptyset$ and $\mathbf{V}_{N+1}^{\N}$, since these are the only branches of their respective nodes $\mathbf{V}_1$ and $\mathbf{V}_{N+1}$, they do not have any weak parents.

(One may notice, for completeness, that since no bifurcation choices are made at the $\mathbf{A}$-nodes, their branches are not weak parents of any branch of any other node.)

\medskip

To construct the branch graph below, we also need to characterise the weak parent relations in the adjoint graph.
Here again, we use the fact that the graph $\mathcal{G}_{\text{QC-QC}(N)}$ is self-adjoint, as already mentioned to justify its bi-univocality (end of \cref{subsubsec:biunivoc_GQCQC}).
By symmetry (through $\textbf{A}_k^{\K_n\backslash k}\leftrightarrow \textbf{A}_k^{\N\backslash\K_n}, \textbf{V}_{n+1}^{\K_n}\leftrightarrow \textbf{V}_{N-n+1}^{\N\backslash\K_n}$) we readily find that in the adjoint graph $\mathcal{G}_{\text{QC-QC}(N)}^\top$, the weak parents of the branches $\textbf{A}_k^{\K_n\backslash k}$ are the branches $\mathbf{V}_{m+1}^{\K_m}$, for all $m\geq\max(n,2)$ and all $\K_m\supseteq\K_n$, and that the weak parents of the branches $\mathbf{V}_{n+1}^{\K_n}$ are the branches $\mathbf{V}_{m+1}^{\K_m}$, for all $m>n$ and all $\K_m\supsetneq\K_n$ (and for $n=1,\ldots,N-1$; $\mathbf{V}_{1}^\emptyset$ and $\mathbf{V}_{N+1}^{\N}$ also do not have any weak parents in the adjoint graph).

\subsubsection{Constructing the branch graph and verifying the absence of loops}

The nodes of the branch graph are the various branches $\mathbf{A}_k^{\K_n\backslash k}$ and $\mathbf{V}_{n+1}^{\K_n}$. According to \cref{def:branch_graph} and to our identification of the strong and weak parent relations above, these are connected by the following arrows:
\begin{itemize}
    \item solid arrows $\mathbf{V}_n^{\K_n\backslash k} \to \mathbf{A}_k^{\K_n\backslash k}$ and $\mathbf{A}_k^{\K_n\backslash k} \to \mathbf{V}_{n+1}^{\K_n}$ for all $n,k=1,\ldots,N$ and all $\K_n\ni k$;
    
    \item green dashed arrows $\mathbf{V}_m^{\K_{m-1}} \to \textbf{A}_k^{\K_n\backslash k}$ for $1\leq m\leq n\leq N$,  $m<N$ and all $\K_{m-1}$, $k\notin\K_{m-1}$, $\K_n\supseteq\K_{m-1}\cup k$;
    
    \item red dashed arrows $\textbf{A}_k^{\K_n\backslash k} \to \mathbf{V}_{m+1}^{\K_m}$ for $1\leq n\leq m\leq N$,  $m>1$ and all $k$, $\K_n\ni k$, $\K_m\supseteq\K_n$;
    
    \item green dashed arrows $\mathbf{V}_{m+1}^{\K_m} \to \mathbf{V}_{n+1}^{\K_n}$ for $0\leq m < n < N$ and all $\K_m\subsetneq\K_n$;
    
    \item red dashed arrows $\mathbf{V}_{n+1}^{\K_n} \to \mathbf{V}_{m+1}^{\K_m}$ for $0< n < m \leq N$ and all $\K_n\subsetneq\K_m$.

\end{itemize}

\cref{fig:branch-graphs} shows the generic branch graphs for
$\mathcal{G}_{\text{QC-QC}(N)}$ with $N=2$ and $N=3$.
One can clearly see the symmetry of the graph, due to the above-mentioned self-adjointness of $\mathcal{G}_{\text{QC-QC}(N)}$: flipping the branch graph upside-down, relabelling $\textbf{A}_k^{\K_n\backslash k}\leftrightarrow \textbf{A}_k^{\N\backslash\K_n}$ and $\textbf{V}_{n+1}^{\K_n}\leftrightarrow \textbf{V}_{N-n+1}^{\N\backslash\K_n}$, flipping the orientation of the arrows and swapping the green and red colours gives the same branch graph.

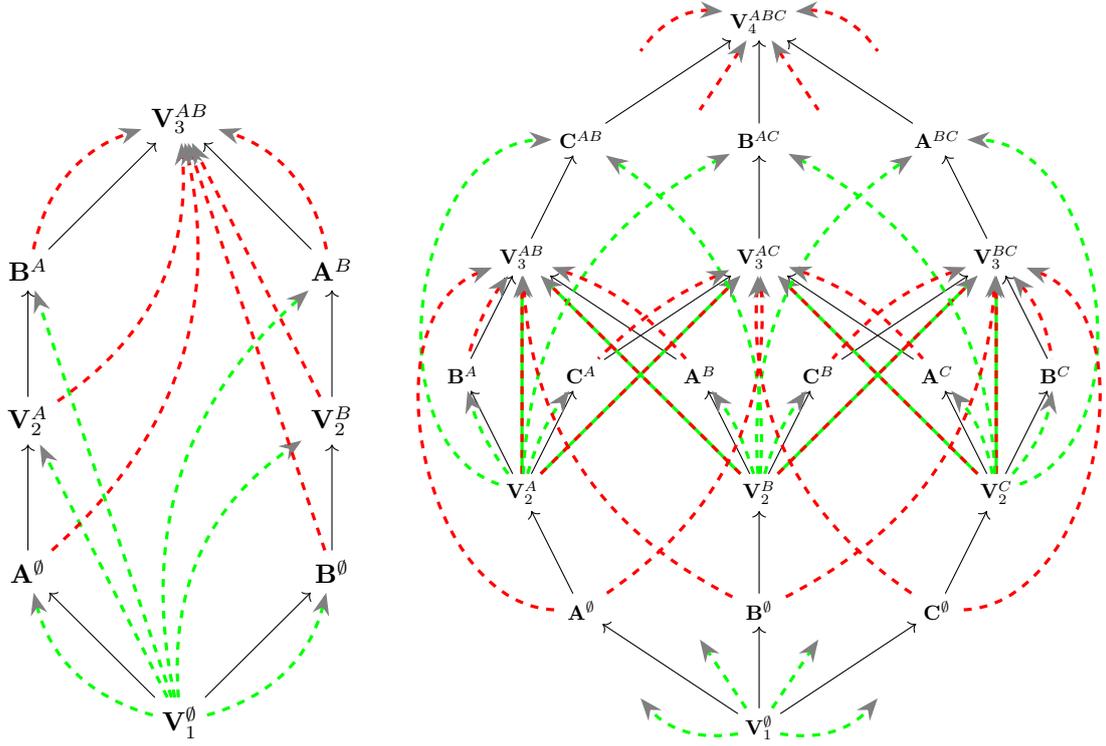
\begin{figure}
	\centering
    \begin{subfigure}[b]{0.32\textwidth}
        \centering
        \begin{tikzpicture}
            \node (V0) at (0,1) {$\mathbf{V}_1^\emptyset$};
            \node (A1) at (-2,3) {$\mathbf{A}^\emptyset$};
            \node (B1) at (2,3) {$\mathbf{B}^\emptyset$};
            \node (V1A) at (-2,5) {$\mathbf{V}_2^A$};
            \node (V1B) at (2,5) {$\mathbf{V}_2^B$};
            \node (B2) at (-2,7) {$\mathbf{B}^A$};
            \node (A2) at (2,7) {$\mathbf{A}^B$};
            \node (V2) at (0,9) {$\mathbf{V}_3^{AB}$};

            \path [->] (V0) edge (A1);
            \path [->] (A1) edge (V1A);
            \path [->] (V1A) edge (B2);
            \path [->] (B2) edge (V2);
            \path [->] (V0) edge (B1);
            \path [->] (B1) edge (V1B);
            \path [->] (V1B) edge (A2);
            \path [->] (A2) edge (V2);

            \begin{scope}[
    			>={Stealth[gray]},
    			every edge/.style={draw=green,dashed,very thick}]
                \path [->] (V0) edge[bend left] (A1);
                \path [->] (V0) edge (V1A);
                \path [->] (V0) edge (B2);
                \path [->] (V0) edge[bend right] (B1);
                \path [->] (V0) edge[bend left] (V1B);
                \path [->] (V0) edge[bend left] (A2);
            \end{scope}

            \begin{scope}[
    			>={Stealth[gray]},
    			every edge/.style={draw=red,dashed,very thick}]
                \path [->] (A1) edge[bend right] (V2);
                \path [->] (V1A) edge[bend right] (V2);
                \path [->] (B2) edge[bend left] (V2);
                \path [->] (B1) edge (V2);
                \path [->] (V1B) edge (V2);
                \path [->] (A2) edge[bend right] (V2);
            \end{scope}
        \end{tikzpicture}
    \end{subfigure}%
    \begin{subfigure}[b]{0.64\textwidth}
        \centering
    	\begin{tikzpicture}[scale=0.78, transform shape]
    		\node (V0) at (0,0) {$\mathbf{V}_1^\emptyset$};
    
    		\node (A1) at (-3,2) {$\mathbf{A}^\emptyset$};
    		\node (B1) at (0,2) {$\mathbf{B}^\emptyset$};
    		\node (C1) at (3,2) {$\mathbf{C}^\emptyset$};
    
    		\node (V1A) at (-4,4) {$\mathbf{V}_2^A$};
    		\node (V1B) at (0,4) {$\mathbf{V}_2^B$};
    		\node (V1C) at (4,4) {$\mathbf{V}_2^C$};
    
    		\node (BA) at (-5,6) {$\mathbf{B}^A$};
    		\node (CA) at (-3,6) {$\mathbf{C}^A$};
    		\node (AB) at (-1,6) {$\mathbf{A}^B$};
    		\node (CB) at (1,6) {$\mathbf{C}^B$};
    		\node (AC) at (3,6) {$\mathbf{A}^C$};
    		\node (BC) at (5,6) {$\mathbf{B}^C$};
    
    		\node (V2AB) at (-4,8) {$\mathbf{V}_3^{AB}$};
    		\node (V2BC) at (4,8) {$\mathbf{V}_3^{BC}$};
    		\node (V2AC) at (0,8) {$\mathbf{V}_3^{AC}$};
    
    		\node (C3) at (-3,10) {$\mathbf{C}^{AB}$};
    		\node (A3) at (3,10) {$\mathbf{A}^{BC}$};
    		\node (B3) at (0,10) {$\mathbf{B}^{AC}$};
    
    		\node (V3) at (0,12) {$\mathbf{V}_4^{ABC}$};
    
    		\begin{scope}[
    			every node/.style={fill=white,circle,inner sep=0pt}
    			every edge/.style=routedarrow]
    
    			\path [->] (V0) edge (A1);
    			\path [->] (V0) edge (B1);
    			\path [->] (V0) edge (C1);
    
    			\path [->] (A1) edge (V1A);
    			\path [->] (B1) edge (V1B);
    			\path [->] (C1) edge (V1C);
    
    			\path [->] (V1A) edge (BA);
    			\path [->] (V1A) edge (CA);
    			\path [->] (V1B) edge (AB);
    			\path [->] (V1B) edge (CB);
    			\path [->] (V1C) edge (AC);
    			\path [->] (V1C) edge (BC);
    
    			\path [->] (AB) edge (V2AB);
    			\path [->] (AC) edge (V2AC);
    			\path [->] (BA) edge (V2AB);
    			\path [->] (BC) edge (V2BC);
    			\path [->] (CA) edge (V2AC);
    			\path [->] (CB) edge (V2BC);
    
    			\path [->] (V2AB) edge (C3);
    			\path [->] (V2BC) edge (A3);
    			\path [->] (V2AC) edge (B3);
    
    			\path [->] (C3) edge (V3);
    			\path [->] (A3) edge (V3);
    			\path [->] (B3) edge (V3);
    		\end{scope}
    
    		\begin{scope}[
    			>={Stealth[gray]},
    			every edge/.style={draw=green,dashed,very thick}]
    
    			\path [->] (V0) edge ++(1,1.5);
    			\path [->] (V0) edge ++(-1,1.5);
    			\path [->, bend right] (V0) edge ++(2,0.5);
    			\path [->, bend left] (V0) edge ++(-2,0.5);
    		\end{scope}
    
    		\begin{scope}[
    			>={Stealth[gray]},
    			every edge/.style={draw=red,dashed,very thick}]
    
    			\path [<-] (V3) edge ++(1,-1.5);
    			\path [<-] (V3) edge ++(-1,-1.5);
    			\path [<-, bend left] (V3) edge ++(2,-0.5);
    			\path [<-, bend right] (V3) edge ++(-2,-0.5);
    		\end{scope}

    		\begin{scope}[
    					  >={Stealth[gray]},
    					  every edge/.style={draw=green,very thick,postaction={draw=red,very thick,dashed}}]
    			\path [->] (V1A) edge (V2AB);
    			\path [->] (V1A) edge (V2AC);
    			\path [->] (V1B) edge (V2BC);
    			\path [->] (V1B) edge (V2AB);
    			\path [->] (V1C) edge (V2AC);
    			\path [->] (V1C) edge (V2BC);
    		\end{scope}
    
    		\begin{scope}[
    			>={Stealth[gray]},
    			every edge/.style={{draw=green,very thick,dashed}}]
    			\path [->, bend left=10] (V1A) edge (BA);
    			\path [->, bend left=10] (V1A) edge (CA);
    			\path [->, bend left=80] (V1A) edge (C3);
    			\path [->, bend left=30] (V1A) edge (B3);
    
    			\path [->, bend right=10] (V1B) edge (AB);
    			\path [->, bend left=10] (V1B) edge (CB);
    			\path [->, bend right] (V1B) edge (C3);
    			\path [->, bend left] (V1B) edge (A3);

    			\path [->, bend right=15] (V1C) edge (AC);
    			\path [->, bend right=15] (V1C) edge (BC);
    			\path [->, bend right=80] (V1C) edge (A3);
    			\path [->, bend right=30] (V1C) edge (B3);
    		\end{scope}
    
    		\begin{scope}[
    			>={Stealth[gray]},
    			every edge/.style={{draw=red,very thick,dashed}}]
    			\path [<-, bend right=10] (V2AB) edge (BA);
    			\path [<-, bend right=10] (V2AC) edge (CA);
    			\path [<-, bend left=80] (V2BC) edge (C1);
    			\path [<-, bend left=30] (V2BC) edge (B1);
    
    			\path [<-, bend left=15] (V2AB) edge (AB);
    			\path [<-, bend right=15] (V2BC) edge (CB);
    			\path [<-, bend right] (V2AC) edge (C1);
    			\path [<-, bend left] (V2AC) edge (A1);

    			\path [<-, bend left=15] (V2AC) edge (AC);
    			\path [<-, bend left=15] (V2BC) edge (BC);
    			\path [<-, bend right=80] (V2AB) edge (A1);
    			\path [<-, bend right=30] (V2AB) edge (B1);
    		\end{scope}
    		
    	\end{tikzpicture}
    \end{subfigure}
    \caption{Generic branch graphs for $\mathcal{G}_{\text{QC-QC}(N)}$ with $N=2$ (left) and with $N=3$ (right). For simplicity of notation, $\{A, B\}$ and similar in the superscripts are written as $AB$.
    In the graph on the right, the unconnected green dashed arrows leaving from $\mathbf{V}_1^\emptyset$ are understood as being connected to all other nodes, except for $\mathbf{V}_4^{ABC}$; similarly the unconnected red dashed arrows pointing to $\mathbf{V}_4^{ABC}$ are understood as being connected to all other nodes, except for $\mathbf{V}_1^\emptyset$.
    As discussed in \cref{sec:grouping}, these branch graphs are closely related to Figures~9 and~24 in \cite{Vanrietvelde2023}, and may be obtained by identifying $P$ with $\mathbf{V}_1^\emptyset$ and $F$ with $\mathbf{V}_3^{AB}$/$\mathbf{V}_4^{ABC}$, merging all nodes $\mathbf{V}_2^{\K_1}$ with their unique parent and all nodes $\mathbf{V}_3^{\K_2}$ with their unique child.}
	\label{fig:branch-graphs}
\end{figure}

\medskip

Notice that along each (solid or dashed) arrow described above, either the size of the branch indicator $\K_n(\backslash k)$ strictly increases, or it stays the same and the arrow goes from from a $\mathbf{V}$-node to a $\mathbf{A}$-node.
This implies that there cannot be any loop in the graph: indeed, since the cardinality of the branch indicators cannot decrease, a loop would also require at least one arrow between two nodes of the same type, or from an $\mathbf{A}$-node to a $\mathbf{V}$-node, with the same branch indicator cardinality.

It then follows that the routed graph $\mathcal{G}_{\text{QC-QC}(N)}$, with its 1-dimensional index values taken to be $\ravnothing$, is valid (according to \cref{def:valid-routed-graph}).
According to \cref{def:valid-routed-superunitary}, its skeletal supermap is therefore a routed superunitary.
We will now show how to flesh it out and obtain any QC-QC.

\subsection{Obtaining any QC-QC: Fleshing out the skeletal supermap for \texorpdfstring{$\mathcal{G}_{\text{QC-QC}(N)}$}{the QC-QC generic routed graph}}
\label{sec:fleshing-out}

Having demonstrated the validity of the routed graph that we defined in \cref{sec:routed-graph-qcqc}, we now finally demonstrate that and how we can use it to represent QC-QCs.
We begin by obtaining a skeletal supermap for the routed graph $\cG_{\text{QC-QC}(N)}$, associating each arrow with a Hilbert space of suitable dimension.

Then, we provide a fleshing out of both the $\mathbf{V}$- and the $\mathbf{A}$-nodes.
The general technique we will use for this purpose is to specify general isomorphisms $J^\text{in/out}_{\mathbf{V}_{n+1}}$ and $J^\text{in/out}_{\mathbf{A}_k}$, which will be wrapped around all transformations performed in the slots of the skeletal supermap.
They will serve to translate between spaces and representations natural to routed quantum circuits, e.g.\ $\HH^{\text{in/out}(\mathbf{V}_{n+1})}$ or $\HH^{\text{in/out}(\mathbf{A}_{k})}$, and the spaces known and used in the QC-QC framework, e.g.\ $\HH^{A^{I/O}_k}$, $\HH^{\tilde A^{I/O}_n}$, $\HH^{\alpha_n}$ or $\HH^{C^{(\prime)}_n}$.

This will enable the use of the original QC-QC construction in the slots of the skeletal supermap, and leave us with a fleshing out for the $\mathbf{V}$-nodes by plugging in the (routed) isometries $\tilde{V}_{n+1}$ for the \enquote{internal} nodes, dependent on the particular QC-QC, as well as with a fleshing out for the $\mathbf{A}$-nodes, with monopartite superunitaries independent of the particular QC-QC, maintaining an unsectorised slot for each agent operation $A_k$.

We conclude this section by demonstrating that the fleshing out of the skeletal supermap indeed yields the same supermap
as the QC-QC construction stated in \cref{sec:QCQC}, and straightforwardly generalise to mixed QC-QCs.

\subsubsection{The skeletal supermap}
\label{subsubsec:skeletal_QCQC}

As a first step to obtain a supermap associated with the generic routed graph stated in \cref{def:generic-graph}, we use \cref{def:skeletal_supermap} to obtain a skeletal supermap.
Here, each arrow $A$ connecting two nodes $\mathbf{M} \to \mathbf{N}$ is associated with a sectorised Hilbert space $\HH^{\mathbf{M} \to \mathbf{N}} = \bigoplus_{k_A} \HH^{\mathbf{M} \to \mathbf{N}}_{k_A}$.

As established in \cref{sec:routed-graph-qcqc}, for each internal (i.e.\ not open-ended) arrow, its index ${\XX_n^k}$ takes values among the $n$-element subsets $\K_n$ of $\N$ that contain $k$, or the null value $\ravnothing$.
As stated in \cref{sec:branch-graph-QCQC}, we take $\mathtt{1Dim}_{M \to N} = \{ \ravnothing \}$,
implying the associated sector to be one-dimensional and be denoted as $\HH^{\mathbf{M} \to \mathbf{N}}_\ravnothing \cong \mathbb{C}$.
While for the higher-dimensional sectors, our generic notation for an index $\XX_n^k = \K_n \in \Ind_{\mathbf{M} \to \mathbf{N}}$ would be $\HH^{\mathbf{M} \to \mathbf{N}}_{\K_n}$, we will deviate from this notation going forward, writing $\HH^{\mathbf{M} \to \mathbf{N}}_{\K_n \setminus k, k}$.
Thereby, we reintroduce and single out $k$ (which disappears when writing $\XX_n^k = \K_n$) and relate more directly to the notations of the control state $\ket{\K_n \setminus k, k}^{C^{(\prime)}_n}$ for QC-QCs.
Altogether, we endow the arrows with a sectorised Hilbert space as follows for $n,k=1,\ldots,N$:
\begin{align}
    \textbf{V}_n\xrightarrow{\ \XX_n^k\ }\textbf{A}_k: \quad &
    \HH^{\mathbf{V}_n \to \mathbf{A}_k} \coloneqq
    \bigoplus_{\K_n\ni k} \HH_{\K_n\backslash k,k}^{\textbf{V}_n\to\textbf{A}_k} \oplus \HH_\ravnothing^{\textbf{V}_n\to\textbf{A}_k}, \notag \\
    \textbf{A}_k\xrightarrow{\ \XX_n^k\ }\textbf{V}_{n+1}: \quad & 
    \HH^{\mathbf{A}_k \to \mathbf{V}_{n+1}} \coloneqq
    \bigoplus_{\K_n\ni k} \HH_{\K_n\backslash k,k}^{\textbf{A}_k\to\textbf{V}_{n+1}} \oplus \HH_\ravnothing^{\textbf{A}_k\to\textbf{V}_{n+1}}.
\end{align}

Specifically, we choose these spaces (and thereby, specify a dimension assignment) such that
$\HH_{\K_n\backslash k,k}^{\textbf{V}_n\to\textbf{A}_k}\cong\HH^{A_k^I}\otimes\HH^{\alpha_n} \cong\HH^{\tilde{A}_n^I}\otimes\HH^{\alpha_n}$, $\HH_{\K_n\backslash k,k}^{\textbf{A}_k\to\textbf{V}_{n+1}}\cong\HH^{A_k^O}\otimes\HH^{\alpha_n} \cong\HH^{\tilde{A}_n^O}\otimes\HH^{\alpha_n}$
where the respective sectors each are a copy of their respective isomorphic space introduced in \cref{sec:QCQC-formal}.

To the two open-ended arrows we attach non-sectorised spaces:
\begin{align}
    \xrightarrow{\ \emptyset\ }\textbf{V}_1: \quad \HH^P, & \quad\qquad
    \textbf{V}_{N+1}\xrightarrow{\N\cup F}\ : \quad \HH^F .
\end{align}
Thereby, we obtain the following \enquote{formal} input/output spaces, as introduced in \cref{sec:routed-maps},
for each node (for $n,k=1,\ldots,N$):
\begin{align}
    \HH^{\text{in}(\textbf{V}_{n+1})} & =
    \bigotimes_{k=1}^N \HH^{\mathbf{A}_k \to \mathbf{V}_{n+1}} =
    \bigotimes_{k=1}^N \Big( \bigoplus_{\K_n\ni k} \HH_{\K_n\backslash k,k}^{\textbf{A}_k\to\textbf{V}_{n+1}} \oplus \HH_\ravnothing^{\textbf{A}_k\to\textbf{V}_{n+1}} \Big), \qquad
    \HH^{\text{in}(\textbf{V}_{1})} \eqqcolon \HH^P, \notag \\
    \HH^{\text{out}(\textbf{V}_{n})} & = 
    \bigotimes_{k=1}^N \HH^{\mathbf{V}_n \to \mathbf{A}_k} =
    \bigotimes_{k=1}^N \Big( \bigoplus_{\K_n\ni k} \HH_{\K_n\backslash k,k}^{\textbf{V}_n\to\textbf{A}_k} \oplus \HH_\ravnothing^{\textbf{V}_n\to\textbf{A}_k} \Big), \qquad
    \HH^{\text{out}(\textbf{V}_{N+1})} \eqqcolon \HH^F \, , \notag \\
    \HH^{\text{in}(\textbf{A}_{k})} & = 
    \bigotimes_{n=1}^N \HH^{\mathbf{V}_n \to \mathbf{A}_k} =
    \bigotimes_{n=1}^N \Big( \bigoplus_{\K_n\ni k} \HH_{\K_n\backslash k,k}^{\textbf{V}_n\to\textbf{A}_k} \oplus \HH_\ravnothing^{\textbf{V}_n\to\textbf{A}_k} \Big), \notag \\
    \HH^{\text{out}(\textbf{A}_{k})} & = 
    \bigotimes_{n=1}^N \HH^{\mathbf{A}_k \to \mathbf{V}_{n+1}} =
    \bigotimes_{n=1}^N \Big( \bigoplus_{\K_n\ni k} \HH_{\K_n\backslash k,k}^{\textbf{A}_k\to\textbf{V}_{n+1}} \oplus \HH_\ravnothing^{\textbf{A}_k\to\textbf{V}_{n+1}} \Big). \label{eq:H_in_out_nodes_formal}
\end{align}
These spaces are restricted by index constraints on the routed graph, which imply sectorial correlations between the incoming/outgoing spaces at each slot of the skeletal supermap.
Specifically, for each valid index value assignment, the parties $A_k$ act in a certain order.
For each space $\HH^{\textbf{V}_n\to\textbf{A}_k}$ (or $\HH^{\textbf{A}_k\to\textbf{V}_{n+1}}$), its higher-dimensional sector\footnote{Composed of multiple subsectors, for each set of ``previous parties'' $\K_n\!\setminus k$.} $\bigoplus_{\K_n\ni k} \HH_{\K_n\backslash k,k}^{\textbf{V}_n\to\textbf{A}_k}$ (or $\bigoplus_{\K_n\ni k} \HH_{\K_n\backslash k,k}^{\textbf{A}_k\to\textbf{V}_{n+1}}$)
is associated with to a party $A_k$ being in $n^\text{th}$ position.
For each $n' \neq n$, we then find a sectorial correlation between $\HH^{\textbf{V}_n\to\textbf{A}_k}$ and $\HH^{\textbf{V}_{n'}\to\textbf{A}_k}$, selecting the respective 1-dimensional sector $\HH^{\textbf{V}_{n'} \to\textbf{A}_k}_\ravnothing$  (or $\HH^{\textbf{A}_k\to\textbf{V}_{n'+1}}_\ravnothing$) for $\HH^{\textbf{V}_{n'} \to\textbf{A}_k}$ (or $\HH^{\textbf{A}_k\to\textbf{V}_{n'+1}}$).
For each $k' \neq k$ and $\HH^{\textbf{V}_{n} \to\textbf{A}_{k'}}$ (or $\HH^{\textbf{A}_{k'}\to\textbf{V}_{n+1}}$), we find analogous results.
In a slight abuse of notation, we will drop these 1-dimensional tensor factors when denoting the practical subspaces.\footnote{
    Specifically, we will identify a Hilbert space $\HH$ with its tensor product with 1-dimensional Hilbert spaces. For instance, we more explicitly have
    \begin{equation}
        \label{eq:1d-factor}
        \HH^{\text{in}(\textbf{A}_{k})} \supset
        \tilde\HH^{\text{in}(\textbf{A}_{k})}_\text{prac}
        = \bigoplus_{n=1}^N \Bigg[ \bigoplus_{\K_n\ni k} \HH_{\K_n\backslash k,k}^{\textbf{V}_n\to\textbf{A}_k} \otimes \bigotimes_{n' \neq n} \HH_\ravnothing^{\textbf{V}_{n'}\to\textbf{A}_k} \Bigg]
        \cong \bigoplus_{n=1}^N \bigoplus_{\K_n\ni k} \HH_{\K_n\backslash k,k}^{\textbf{V}_n\to\textbf{A}_k}
        = \HH^{\text{in}(\textbf{A}_{k})}_\text{prac} \, .
    \end{equation}}
Therefore, the practical input/output subspaces for each node are given by:
\begin{align}
    \HH^{\text{in}(\textbf{V}_{n+1})}_\text{prac} & = \bigoplus_{k=1}^N \bigoplus_{\K_n\ni k} \HH_{\K_n\backslash k,k}^{\textbf{A}_k\to\textbf{V}_{n+1}},
    \qquad\HH^{\text{in}(\textbf{V}_{1})}_\text{prac} = \HH^P, \notag \\
    \HH^{\text{out}(\textbf{V}_{n})}_\text{prac} & = \bigoplus_{k=1}^N \bigoplus_{\K_n\ni k} \HH_{\K_n\backslash k,k}^{\textbf{V}_n\to\textbf{A}_k}, \qquad
    \HH^{\text{out}(\textbf{V}_{N+1})}_\text{prac} = \HH^F \notag \\
    \HH^{\text{in}(\textbf{A}_{k})}_\text{prac} & = \bigoplus_{n=1}^N \bigoplus_{\K_n\ni k} \HH_{\K_n\backslash k,k}^{\textbf{V}_n\to\textbf{A}_k}, \notag \\
    \HH^{\text{out}(\textbf{A}_{k})}_\text{prac} & = \bigoplus_{n=1}^N \bigoplus_{\K_n\ni k} \HH_{\K_n\backslash k,k}^{\textbf{A}_k\to\textbf{V}_{n+1}}.
    \label{eq:H_in_out_nodes_prac}
\end{align}

\begin{remark}
    The QC-QC control systems $C_{n}$ can be seen as providing an isomorphic embedding of $\HH^{\text{out}(\textbf{V}_{n})}_\text{prac}$ into a joint space, as 
    $
        \HH^{C_n} \cong \bigoplus_{k=1}^N \bigoplus_{\K_{n} \ni k} \spann \{ \;\!\ket{\K_{n}\setminus k,k}^{C_n} \}
    $.
    Similarly, the control systems $C'_{n}$ correspond to $\HH^{\text{in}(\textbf{V}_{n+1})}_\text{prac}$.
\end{remark}

\medskip

With these, we have introduced (and simplified) the Hilbert spaces needed to construct the skeletal supermap, as explained in \cref{subsec:skeletal_fleshingout},
with the nodes replaced by empty slots connected by identities following the respective routes.
In order to obtain the desired QC-QC, we will then flesh out the slots for both the $\mathbf{V}$- and $\mathbf{A}$-nodes, as specified in the following subsections.

\subsubsection{Fleshing out for the \textbf{V}-nodes}

\begin{figure}[p]
    \hspace*{0.2\textwidth}\includegraphics[scale=0.95]{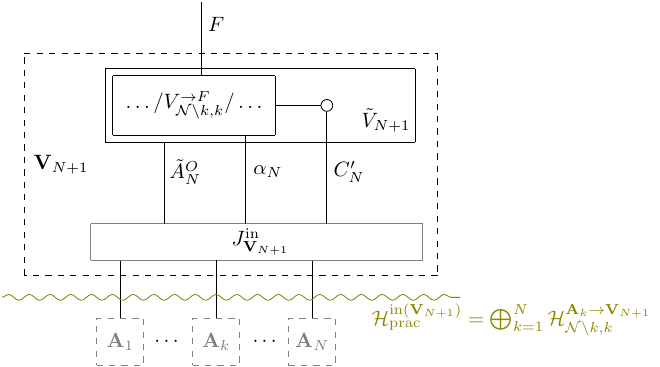}

    \vspace{.6cm}
    
     \hspace*{0.2\textwidth}\includegraphics[scale=0.95]{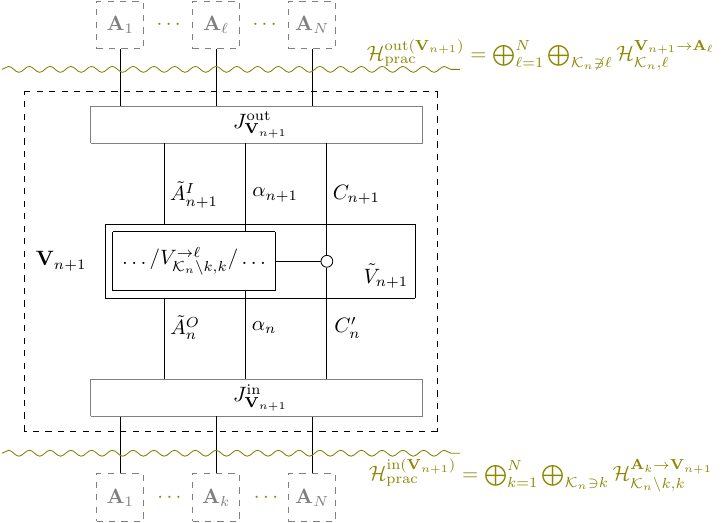}

    \vspace{.6cm}
    
    \hspace*{0.2\textwidth}\includegraphics[scale=0.95]{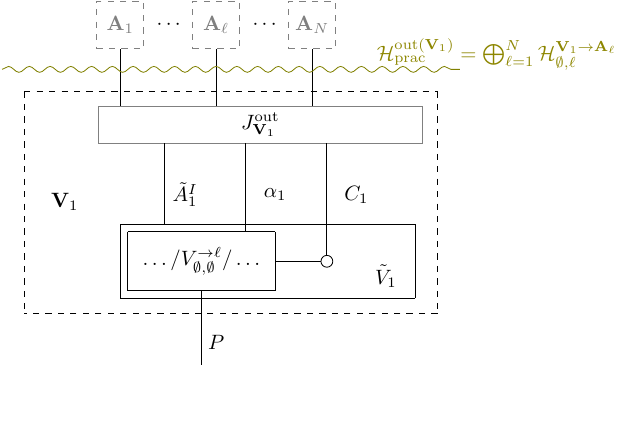}

    \vspace{-.5cm}
    
    \caption{
        Fleshing out for the \textbf{V}-nodes (with a channel).
        For each slot $\mathbf{V}_{n+1}$, isomorphisms $J^\text{in/out}_{\mathbf{V}_{n+1}}$ translate between the skeletal supermap and its associated spaces in the neighbourhood of the slot, and the QC-QC systems $\tilde A_n^{I/O}$, $\alpha_{n+1}$ and $C^{(\prime)}_{n+1}$ within the slot.
        In-between these isomorphisms, the QC-QC isometries $\tilde{V}_{n+1}$ control whose agent $A_\ell$ acts next dependent on the previous agents.}
    \label{fig:Fleshing_out_V_nodes}
\end{figure}

Having constructed the skeletal supermap associated with the generic routed graph, we first proceed by fleshing out the $\textbf{V}$-nodes with routed isometries.
In order to obtain the desired QC-QC, the idea of this fleshing out is to use precisely the internal isometries $\tilde{V}_{n+1}$ of the QC-QC in question, acting on appropriate subspaces of the Hilbert spaces associated with the $\textbf{V}$-nodes.
To this end, we first introduce two isomorphisms $J^{\text{in}}_{\mathbf{V}_{n+1}}$ and $J^{\text{out}}_{\mathbf{V}_{n+1}}$ for each node $\mathbf{V}_{n+1}$.
They identify explicitly the subspaces of the practical spaces onto which the internal isometries $\tilde{V}_{n+1}$ can be applied -- recalling that each $\HH_{\K_n\backslash k,k}^{\textbf{A}_k\to\textbf{V}_{n+1}}\cong\HH^{\tilde{A}_n^O}\otimes\HH^{\alpha_n}$ and each%
\footnote{As we consider here the node $\textbf{V}_{n+1}$, then rather than considering $\HH^{\text{out}(\textbf{V}_{n})}_{\text{prac}}$ as in \cref{eq:H_in_out_nodes_prac}, we will consider $\HH^{\text{out}(\textbf{V}_{n+1})}_{\text{prac}}$ -- hence the change in the notations: $n\to n+1$, $(\K_n\backslash k,k)\to(\K_n,\ell)$.}
$\HH_{\K_n,\ell}^{\textbf{V}_{n+1}\to\textbf{A}_k}\cong\HH^{\tilde{A}_{n+1}^I}\otimes\HH^{\alpha_{n+1}}$:
\begin{align*}
    (\text{for }n=1,\ldots,N)
    \quad J_{\textbf{V}_{n+1}}^\text{in} : \ \HH^{\text{in}(\textbf{V}_{n+1})}_\text{prac}
    = \bigoplus_{k=1}^N \bigoplus_{\K_n\ni k} \HH_{\K_n\backslash k,k}^{\textbf{A}_k\to\textbf{V}_{n+1}}
    \ \to & \quad
    \HH^{\tilde A_n^O}\otimes\HH^{\alpha_n}\otimes\HH^{C_n'}, \notag \\
    \ket{\psi} \in \HH_{\K_n\backslash k,k}^{\textbf{A}_k\to\textbf{V}_{n+1}}
    \ \mapsto & \quad
    \ket{\psi}^{\tilde A_n^O\alpha_n}\otimes\ket{\K_n\backslash k,k}^{C_n'},
\end{align*}
\vspace{-.2cm}
\begin{align}
    (\text{for }n=0,\ldots,N-1)
    \quad J_{\textbf{V}_{n+1}}^\text{out} : \ \HH^{\tilde A_{n+1}^I}\otimes\HH^{\alpha_{n+1}}\otimes\HH^{C_{n+1}}
    \ \to & \quad
   \HH^{\text{out}(\textbf{V}_{n+1})}_\text{prac} = \bigoplus_{\ell=1}^N \bigoplus_{\K_n\not\ni \ell} \HH_{\K_n,\ell}^{\textbf{V}_{n+1}\to\textbf{A}_\ell}, \notag \\
    \ket{\psi}^{\tilde A_{n+1}^I\alpha_{n+1}}\otimes\ket{\K_n,\ell}^{C_{n+1}}
    \ \mapsto & \quad
    \ket{\psi} \in \HH_{\K_n,\ell}^{\textbf{V}_{n+1}\to\textbf{A}_\ell}. \label{eq:Js_V}
\end{align}
These isomorphisms are represented on \cref{fig:Fleshing_out_V_nodes}. As Choi vectors, they can be written as
\begin{align}
    \dket{J_{\textbf{V}_{n+1}}^\text{in}} &= \bigoplus_{k=1}^N \bigoplus_{\K_n\ni k} \dket{\id}^{\HH_{\K_n\backslash k,k}^{\textbf{A}_k\to\textbf{V}_{n+1}},\tilde A_n^O\alpha_n}\otimes\ket{\K_n\backslash k,k}^{C_n'}, \notag \\
    \dket{J_{\textbf{V}_{n+1}}^\text{out}} &= \bigoplus_{\ell=1}^N \ \bigoplus_{\K_n\not\ni \ell} \ \dket{\id}^{\tilde A_{n+1}^I\alpha_{n+1},\HH_{\K_n,\ell}^{\textbf{V}_{n+1}\to\textbf{A}_\ell}} \otimes\ket{\K_n,\ell}^{C_{n+1}} \, .
\end{align}
Here, we have kept the whole sectorised Hilbert spaces explicit in the superscripts of $\dket{\id}$ to be clear with regard to the upper and lower indices.
Using \cref{eq:V-choi} for $\dket{\tilde V_{n+1}}$, the Choi vector of the overall operation that fleshes out the slot for $\textbf{V}_{n+1}$ (where we abuse slightly the notation by also calling this operation $\textbf{V}_{n+1}$), obtained by composing the above two isomorphisms with $\tilde{V}_{n+1}$ (for $n=1,\ldots,N-1$), is then given by (cf.\ \cref{fig:Fleshing_out_V_nodes})
\begin{align}
    \dket{\textbf{V}_{n+1}} & = \dket{J_{\textbf{V}_{n+1}}^\text{in}} * \dket{\tilde{V}_{n+1}} * \dket{J_{\textbf{V}_{n+1}}^\text{out}} \notag \\
    & = \dket{J_{\textbf{V}_{n+1}}^\text{in}} \ast \Bigg( \sum_{\substack{\K_{n},k\in\K_n,\\ \ell\notin\K_n}} \dket{\tilde V_{\K_{n}\setminus k,k}^{\to \ell}}^{\tilde A_n^O \alpha_n\tilde A_{n+1}^I \alpha_{n+1}} \otimes \ket{\K_{n}\setminus k,k}^{C_n'} \otimes \ket{\K_{n},\ell}^{C_{n+1}} \Bigg) \ast \dket{J_{\textbf{V}_{n+1}}^\text{out}} \notag \\
    & = \bigoplus_{\substack{\K_{n},k\in\K_n,\\ \ell\notin\K_n}} \dket{\id}^{\HH_{\K_n\backslash k,k}^{\textbf{A}_k\to\textbf{V}_{n+1}},\tilde A_n^O\alpha_n} * \dket{\tilde V_{\K_{n}\setminus k,k}^{\to \ell}}^{\tilde A_n^O \alpha_n\tilde A_{n+1}^I \alpha_{n+1}} * \dket{\id}^{\tilde A_{n+1}^I\alpha_{n+1},\HH_{\K_n,\ell}^{\textbf{V}_{n+1}\to\textbf{A}_\ell}} \notag \\
    & = \bigoplus_{\substack{\K_{n},k\in\K_n,\\ \ell\notin\K_n}} \dket{V_{\K_{n}\setminus k,k}^{\to \ell}}^{\HH_{\K_n\backslash k,k}^{\textbf{A}_k\to\textbf{V}_{n+1}},\HH_{\K_n,\ell}^{\textbf{V}_{n+1}\to\textbf{A}_\ell}}, \label{eq:dket_node_Vn1}
\end{align}
where
$\ \dket{V_{\K_{n}\setminus k,k}^{\to \ell}}^{\HH_{\K_{n}\setminus k,k}^{\textbf{A}_k\to\textbf{V}_{n+1}},\HH_{\K_{n},\ell}^{\textbf{V}_{n+1}\to\textbf{A}_\ell}}\ $
is formally the same as
$\ \dket{\tilde V_{\K_{n}\setminus k,k}^{\to \ell}}^{\tilde A_n^O \alpha_n\tilde A_{n+1}^I \alpha_{n+1}}\ $
(or as $\dket{V_{\K_{n}\setminus k,k}^{\to \ell}}^{A_k^O \alpha_n A_\ell^I \alpha_{n+1}}$),
just written in the (isomorphic) spaces
$\HH_{\K_{n}\setminus k,k}^{\textbf{A}_k\to\textbf{V}_{n+1}}$, $\HH_{\K_{n},\ell}^{\textbf{V}_{n+1}\to\textbf{A}_\ell}$ instead of $\HH^{\tilde A_n^O \alpha_n}$, $\HH^{\tilde A_{n+1}^I \alpha_{n+1}}$ (or of $\HH^{A_k^O \alpha_n}$, $\HH^{A_\ell^I \alpha_{n+1}}$), as introduced in \cref{sec:QCQC-formal}.
For the nodes $\textbf{V}_1$ and $\textbf{V}_{N+1}$, the fleshing-out operations are
\begin{align}
    \dket{\textbf{V}_1} &= 
    \dket{\tilde{V}_{1}} * \dket{J_{\textbf{V}_{1}}^\text{out}} =
    \bigoplus_{\ell\in\N} \dket{V_{\emptyset,\emptyset}^{\to \ell}}^{P,\HH_{\emptyset,\ell}^{\textbf{V}_1\to\textbf{A}_\ell}}, \label{eq:dket_node_V1} \\
    \dket{\textbf{V}_{N+1}} &=
    \dket{J_{\textbf{V}_{N+1}}^\text{in}} * \dket{\tilde{V}_{N+1}} =
    \bigoplus_{k\in\N} \dket{V_{\N\backslash k,k}^{\to F}}^{\HH_{\N\backslash k,k}^{\textbf{A}_k\to\textbf{V}_{N+1}},F}, \label{eq:dket_node_VN1}
\end{align}
respectively. Analogously, 
$\dket{V_{\emptyset,\emptyset}^{\to \ell}}^{P,\HH_{\emptyset,\ell}^{\textbf{V}_1\to\textbf{A}_\ell}}$
and
$\dket{V_{\N\backslash k,k}^{\to F}}^{\HH_{\N\backslash k,k}^{\textbf{A}_k\to\textbf{V}_{N+1}},F}$
are formally the same as
$\dket{\tilde V_{\emptyset,\emptyset}^{\to \ell}}^{P,\tilde A_1^I \alpha_1}$
and
$\dket{\tilde V_{\N\backslash k,k}^{\to F}}^{\tilde A_N^O \alpha_N,F}$ (or as $\dket{V_{\emptyset,\emptyset}^{\to \ell}}^{P,A_\ell^I \alpha_1}$
and
$\dket{V_{\N\backslash k,k}^{\to F}}^{A_k^O \alpha_N,F}$), respectively.

Note that these operations are indeed routed isometries on the (practical) spaces: they explicitly follow the routes at each node, and they are isometries due to the fact that the $\tilde{V}_{n+1}$ of QC-QCs are isometries (on the \enquote{practical} spaces of the QC-QC).
In fact, \emph{every} fleshing out for the $\textbf{V}$-nodes specifies a QC-QC, given that the decomposition in \cref{eq:V-choi} precisely ensures that each $\tilde{V}_n$ follows the relevant routes.

\subsubsection{Fleshing out for the $\mathbf{A}$-nodes}

\begin{figure}
    \centering
    \includegraphics[scale=0.95]{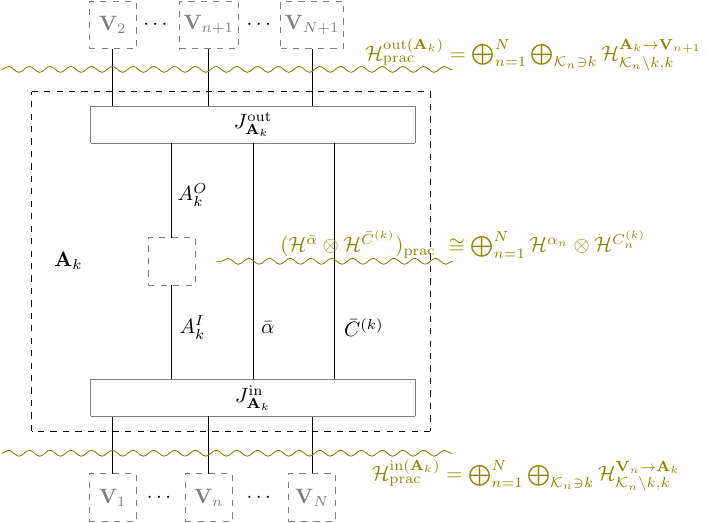}
    \caption{
        Fleshing out for the \textbf{A}-nodes (with a 1-slot comb).
        After the fleshing out, an empty (unsectorised) slot $A_k^I \to A_k^O$ remains, associated with the free operation of an agent $A_k$, while in parallel, system $\bar{C}^{(k)}$ maintains the control system of the QC-QC.
        Two isomorphisms $J^\text{in/out}_{\mathbf{A}_{k}}$ translate between the skeletal supermap and its associated spaces in the neighbourhood of the slot, and the QC-QC systems $A_k^{I/O}$, $\alpha_{n}$ and $C^{(k)}_{n}$ within the slot.
        Altogether, the fleshing out of $\mathbf{A}_k$ is independent of the QC-QC under consideration.}        
    \label{fig:Fleshing_out_A_nodes}
\end{figure}

As the next step, as illustrated in \cref{fig:Fleshing_out_A_nodes}, we flesh out the slots for the $\textbf{A}$-nodes using 1-slot superunitaries of the type
\begin{equation}
    (\HH^{A^I_k}\to\HH^{A^O_k})
    \quad \longrightarrow \quad
    (\HH^{\text{in}(\textbf{A}_k)}_\text{prac} \to \HH^{\text{out}(\textbf{A}_k)}_\text{prac}) \, ,
\end{equation}
which will take the routed slot of type $\HH^{\text{in}(\textbf{A}_k)}_\text{prac} \to \HH^{\text{out}(\textbf{A}_k)}_\text{prac}$ to a non-routed slot of type $\HH^{A^I_k}\to\HH^{A^O_k}$.

For this purpose, we need to consider the subspace associated with the specific agent node $\mathbf{A}_k$,
rather than the tilded spaces $\HH^{\tilde{A}_n^{I/O}}$, which are employed in \cref{eq:An} to obtain a QC-QC implementation.
To do so, we introduce a Hilbert space $\HH^{\bar{C}^{(k)}}$, serving as the control space for the causal order, reorganised to be grouped by the \enquote{current} agent $A_k$:
\begin{equation}
     \HH^{\bar{C}^{(k)}} \coloneqq \bigoplus_{n=1}^N \HH^{C_n^{(k)}} \quad \text{with} \quad 
     \HH^{C_n^{(k)}} \coloneqq \bigoplus_{\K_{n}\ni k} \spann \{ \ket{\K_{n}\setminus k,k}^{C_n} \}\,.
\end{equation}
Due to the tensor product with the ancillary space, this decomposition by $k$ carries over to the ancillary system $\alpha_n$, which takes different values for $n$ dependent on the system $C_n^{(k)}$ we consider.
Accordingly, we introduce
$\HH^{\bar{\alpha}} \coloneqq \bigoplus_{n=1}^N \HH^{\alpha_n}$.
Following the treatment of indexed wires in the skeletal supermap picture, we differentiate between the formal tensor product of $\HH^{\bar{C}^{(k)}}$ and $\HH^{\bar{\alpha}}$ and their more restricted practical space
\begin{equation}
    (\HH^{\bar{\alpha}}\otimes\HH^{\bar{C}^{(k)}})_\text{prac} \cong \bigoplus_{n=1}^N \HH^{\alpha_n}\otimes\HH^{C_n^{(k)}} \, .
\end{equation}
With this in place, we introduce two isomorphisms $J^{\text{in}}_{\mathbf{A}_k}$ and $J^{\text{out}}_{\mathbf{A}_k}$ for each node $\mathbf{A}_k$, transforming the incoming and outgoing spaces $\HH^{\text{in}(\mathbf{A}_k)}_\text{prac}$ and $\HH^{\text{out}(\mathbf{A}_k)}_\text{prac}$ to the spaces used by the QC-QC formalism, as introduced in \cref{sec:QCQC}:
\begin{align}
    J_{\textbf{A}_k}^\text{in} &:
    & \HH^{\text{in}(\textbf{A}_k)}_\text{prac}
    = \bigoplus_{n=1}^N \bigoplus_{\K_{n}\ni k} \HH_{\K_{n}\setminus k,k}^{\textbf{V}_n\to\textbf{A}_k}
    \ \to & \
    \HH^{A_k^I}\otimes(\HH^{\bar{\alpha}}\otimes\HH^{\bar{C}^{(k)}})_\text{prac}, \notag \\
    & & \ket{\psi} \in \HH_{\K_{n}\setminus k,k}^{\textbf{V}_n\to\textbf{A}_k}
    \ \mapsto & \
    \ket{\psi}^{A_k^I\alpha_n}\otimes\ket{\K_{n}\setminus k,k}^{C_n^{(k)}}, \notag \\[3mm]
    J_{\textbf{A}_k}^\text{out} &: & \HH^{A_k^O}\otimes(\HH^{\bar{\alpha}}\otimes\HH^{\bar{C}^{(k)}})_\text{prac}
    \ \to & \
    \HH^{\text{out}(\textbf{A}_k)}_\text{prac}
    = \bigoplus_{n=1}^N \bigoplus_{\K_{n}\ni k} \HH_{\K_{n}\setminus k,k}^{\textbf{A}_k\to\textbf{V}_{n+1}}, \notag \\
    & & \ket{\psi}^{A_k^O\alpha_n}\otimes\ket{\K_{n}\setminus k,k}^{C_n^{(k)}}
    \ \mapsto & \
    \ket{\psi} \in \HH_{\K_{n}\setminus k,k}^{\textbf{A}_k\to\textbf{V}_{n+1}}. \label{eq:def_J_Ak_out}
\end{align}
As Choi vectors, these isomorphisms can be represented as:
\begin{align}
    \dket{J_{\textbf{A}_k}^\text{in}} &= \bigoplus_{n=1}^N \bigoplus_{\K_{n}\ni k} \dket{\id}^{\HH_{\K_{n}\setminus k,k}^{\textbf{V}_n\to\textbf{A}_k},A_k^I\alpha_n}\otimes\ket{\K_{n}\setminus k,k}^{C_n^{(k)}}, \notag \\
    \dket{J_{\textbf{A}_k}^\text{out}} &= \bigoplus_{n=1}^N \bigoplus_{\K_{n}\ni k} \dket{\id}^{A_k^O\alpha_n,\HH_{\K_{n}\setminus k,k}^{\textbf{A}_k\to\textbf{V}_{n+1}}} \otimes\ket{\K_{n}\setminus k,k}^{C_n^{(k)}},
\end{align}
using the same notational conventions as in the previous section. So the Choi vector of the overall 1-slot superunitary that fleshes out the slot for $\textbf{A}_k$ (where again, we abuse slightly the notation by also calling this operation $\textbf{A}_{k}$), as pictured in \cref{fig:Fleshing_out_A_nodes}, is directly given as
\begin{align}
    \dket{\textbf{A}_k} & = \dket{J_{\textbf{A}_k}^\text{in}} * \dket{J_{\textbf{A}_k}^\text{out}} = \bigoplus_{n=1}^N \bigoplus_{\K_{n}\ni k} \dket{\id}^{\HH_{\K_{n}\setminus k,k}^{\textbf{V}_n\to\textbf{A}_k},A_k^I\alpha_n} * \dket{\id}^{A_k^O\alpha_n,\HH_{\K_{n}\setminus k,k}^{\textbf{A}_k\to\textbf{V}_{n+1}}}. \label{eq:dket_node_Ak}
\end{align}

\subsubsection{Composing the process vector and obtaining any QC-QC}
Altogether, fleshing out the slots for all \textbf{V}-nodes and all \textbf{A}-nodes, with isometries and 1-slot superunitaries respectively, for the supermap we obtain the Choi vector
\begin{align}
    & \Big( \bigotimes_{n=0}^N \dket{\textbf{V}_{n+1}} \Big) * \Big( \bigotimes_{k\in\N} \dket{\textbf{A}_k} \Big) \notag \\
    & = \!\Bigg( \!\! \Big( \! \bigoplus_{\ell\in\N} \dket{V_{\emptyset,\emptyset}^{\to \ell}}^{P,\HH_{\emptyset,\ell}^{\textbf{V}_1\to\textbf{A}_\ell}} \!\Big) \!\otimes\! \bigotimes_{n=1}^{N-1} \! \Big( \!\!\! \bigoplus_{\substack{\K_{n},\\ k\in\K_n,\\ \ell\notin\K_n}} \!\! \dket{V_{\K_{n}\setminus k,k}^{\to \ell}}^{\HH_{\K_{n} \setminus k,k}^{\textbf{A}_k\to\textbf{V}_{n+1}},\HH_{\K_{n},\ell}^{\textbf{V}_{n+1}\to\textbf{A}_\ell}} \!\Big) \!\otimes\! \Big( \! \bigoplus_{k\in\N} \dket{V_{\N\backslash k,k}^{\to F}}^{\HH_{\N\backslash k,k}^{\textbf{A}_k\to\textbf{V}_{N+1}},F} \Big) \!\! \Bigg) \notag \\[-3mm]
    & \hspace{60mm} * \Bigg( \bigotimes_{k\in\N} \! \Big( \bigoplus_{n=1}^N \bigoplus_{\K_{n} \ni k} \!\dket{\id}^{\HH_{\K_{n}\setminus k,k}^{\textbf{V}_n\to\textbf{A}_k},A_k^I\alpha_n} * \dket{\id}^{A_k^O\alpha_n,\HH_{\K_{n}\setminus k,k}^{\textbf{A}_k\to\textbf{V}_{n+1}}} \Big) \!\Bigg) \notag \\[2mm]
    & = \sum_{(k_1,k_2,\ldots,k_N)} \dket{V_{\emptyset,\emptyset}^{\to k_1}}^{P,A_{k_1}^I\alpha_1} * \dket{V_{\emptyset,k_1}^{\to k_2}}^{A_{k_1}^O\alpha_1,A_{k_2}^I\alpha_2} * \dket{V_{\{k_1\},k_2}^{\to k_3}}^{A_{k_2}^O\alpha_2,A_{k_3}^I\alpha_3} * \ \cdots \notag \\[-3mm]
    & \hspace{20mm} \cdots \ * \dket{V_{\{k_1,\ldots,k_{N-2}\},k_{N-1}}^{\to k_N}}^{A_{k_{N-1}}^O\alpha_{N-1},A_{k_N}^I\alpha_N} * \dket{V_{\{k_1,\ldots,k_{N-1}\},k_N}^{\to F}}^{A_{k_N}^O\alpha_N,F} = \ket{w_\text{QC-QC}}, \label{eq:recover_wQCQC}
\end{align}
composing \cref{eq:dket_node_Vn1,eq:dket_node_V1,eq:dket_node_VN1,eq:dket_node_Ak}
so that we indeed recover the process vector of the QC-QC under consideration, as given in \cref{eq:Choi_V_QCQC}.
For the detailed calculation, see \cref{sec:composing-process}.

Having successfully reconstructed the process vector $\ket{w_\text{QC-QC}}$ representing a pure QC-QC, we still need to generalise to the mixed case to capture generic QC-QCs.
As discussed in \cref{sec:QCQC-formal}, we obtain the generalisation to mixed circuits directly by splitting $\HH^F = \HH^{F'} \otimes \HH^{\alpha_F}$:
\begin{equation}
    W = \Tr_{\alpha_F} \ketbra{w}{w} \, .
\end{equation}
As in the respective skeletal supermap, the output space of $\mathbf{V}_{N+1}$ is not sectorised, discarding $\alpha_F$ does not require any special care in comparison to conventional quantum circuits.

We have thus proven our main result, which we recall here:
\main*

\section{Alternative Routed Quantum Circuit Descriptions}
\label{sec:variations}

The construction of \cref{sec:QCQCs_as_RQCs} provides a systematic approach to obtaining any $N$-slot QC-QC by fleshing out the skeletal supermap associated with the generic routed graph $\cG_{\text{QC-QC}(N)}$. 
It is important to note, however, that this construction does not provide a \textit{unique} way to obtain any given QC-QC as a routed quantum circuit.
Not only can any given QC-QC supermap have different decompositions into internal operations $V_{\K_{n}\setminus k,k}^{\to \ell}$, but the same routed supermap can also be obtained by fleshing out skeletal supermaps associated with different routed graphs.
Indeed, as a case in point, Ref.~\cite{Vanrietvelde2023} gave a routed circuit decomposition of the quantum switch different from what one would obtain using our construction from the graph $\cG_{\text{QC-QC}(2)}$ (compare Fig.~6 from Ref.~\cite{Vanrietvelde2023} to \cref{fig:routed_graphs_QCQC_N_2} here), and similarly for what the authors of~\cite{Vanrietvelde2023} called the ``Grenoble process'' (introduced in \cite{Wechs2021}) in contrast to $\cG_{\text{QC-QC}(3)}$ (compare Fig.~21 from Ref.~\cite{Vanrietvelde2023} to \cref{fig:routed_graphs_QCQC_N_3} here).

In this section, we introduce some alternative routed quantum circuit decompositions for QC-QCs, demonstrating that they indeed yield the same sets of processes.
First, we provide an alternative treatment of the  QC-QC ancillary systems $\alpha_n$ in \cref{sec:var-ancillary}.
In \cref{sec:var-split}, we split apart the $\mathbf{V}$-nodes according to their branch structure, before rewriting them as pre- and post-processings for individual agents in \cref{sec:grouping}.
In certain cases, this allows for a less verbose representation that focuses on the connectivity between the individual agents. 
We then apply these alternatives to reconstruct previous routed quantum circuit decompositions for both the quantum switch and the Grenoble process.
Finally, in \cref{sec:less-arrows}, we sketch how certain restrictions regarding the index values practically attained allow one, for certain QC-QCs, to omit arrows from the routed graph.

\subsection{An alternative treatment of the QC-QC ancillary systems}
\label{sec:var-ancillary}

The first such variation concerns a more distinctive treatment of the ancillary systems, which is perhaps closer to how they are treated in the QC-QC framework.
Recall that in that framework, an ancillary memory system $\alpha_n$ is passed between the internal isometries $\tilde{V}_n$ and $\tilde{V}_{n+1}$, to which the agents' operations $A_k$ do not have access, see \cref{fig:QCQC}.
However, for the skeletal supermap associated with the routed graph $\cG_{\text{QC-QC}(N)}$, in \cref{subsubsec:skeletal_QCQC} we attach Hilbert spaces to the wires such that the ancillary system $\alpha_n$ is passed to the agent nodes $\mathbf{A}_k$ from each node $\mathbf{V}_n$.
The requirement that the agents' operations $A_k$ cannot act on these ancillary degrees of freedom was instead enforced by the choice of fleshing out for the $\mathbf{A}$-nodes (see \cref{fig:Fleshing_out_A_nodes}).

A natural, alternative definition of the routed graph is also possible, in which the separation of these ancillary systems from the agents' nodes is reflected in the structure of the graph and enforced by the skeletal supermap.
Matching more closely the original formulation of QC-QCs, one could add further direct arrows $\mathbf{V}_n \to \mathbf{V}_{n+1}$ (for $1\le n \le N$) to the routed graph $\cG_{\text{QC-QC}(N)}$, each associated with a single index value.\footnote{
    For instance, $\Ind_{\mathbf{V}_n \to \mathbf{V}_{n+1}} = \{ \alpha_n\}$, the choice of value bearing no further relevance -- it just in general should not be one-dimensional (so as to allow for non-trivial ancillary systems).}
Going forward, we will drop these indices, just as for the open unindexed arrows in the routed graph, and denote the resulting (valid) routed graph as $\cG_{\text{QC-QC}(N)}^\alpha$.
The additional arrows can then be associated directly with the Hilbert spaces $\HH^{\alpha_n}$ in the skeletal supermap.
For the $N=2$ and $N=3$ cases, whose routed graphs are shown explicitly for our initial construction in \cref{fig:routed_graphs_QCQC_N_2_3}, this would correspond simply to adding the direct arrows $\mathbf{V}_1\to \mathbf{V}_2$, $\mathbf{V}_2\to \mathbf{V}_3$ and (for $N=3$) $\mathbf{V}_3\to \mathbf{V}_4$.
The associated branch graph is then similarly augmented with additional solid arrows, connecting compatible branches of $\mathbf{V}_n$ and $\mathbf{V}_{n+1}$ directly.\footnote{
    This follows as we obtain additional strong parent relations via the branch-linking values
    $\mathtt{LinkVal}(\mathbf{V}_n^{\K_{n-1}}, \mathbf{V}_{n+1}^{\K_{n}}) = \{ \alpha_n \}$
    if $\K_{n-1} \subsetneq \K_n$,
    with $\mathtt{LinkVal}(\mathbf{V}_n^{\K_{n-1}}, \mathbf{V}_{n+1}^{\K_{n}}) = \emptyset$ otherwise, with the arrow $\mathbf{V}_n \xrightarrow{\alpha_n} \mathbf{V}_{n+1}$ being of non-trivial dimension.
    The weak parent relations remain unchanged.}

With this alternative treatment of the ancillary systems, the $\bar\alpha$ wire is removed from the fleshing out for the nodes $\mathbf{A}_k$, given in \cref{eq:def_J_Ak_out} and illustrated in \cref{fig:Fleshing_out_A_nodes}.
The fleshing out is thereby reduced to withholding the routing information encoded in $C_n^{(\prime)}$ from the operations $A_k$, while still producing an unsectorised slot.
Considering this, we note that one could also refrain from fleshing out the $\mathbf{A}$-node slots entirely, as the agent's access to the control system $\bar{C}^{(k)}$ is restricted due to sectorial constraints.
Specifically, we would obtain a generalisation of any QC-QC as a \emph{routed supermap} -- with its agents $A_k$ performing routed operations dependent on the state of $\bar C^{(k)}$, which may not be modified (by e.g.\ a measurement in a basis incompatible with the sectorisation).%
\footnote{
    We may then consider $\bigotimes_{n=0}^{N} \dket{\textbf{V}_{n+1}}$ as the Choi representation of this \emph{routed supermap} (cf.\ \cref{def:routed-supermap}) generalisation of QC-QCs, while additionally bearing in mind the routes for the open slots associated with $\mathbf{A}_k$.
}

Ultimately, both representations are valid and allow one to construct any QC-QC via appropriate fleshing outs.%
\footnote{Furthermore, both routed graphs generate the same set of routed supermaps, obtained from fleshing out a skeletal supermap associated with the respective routed graph.}
However, as is generally the case, certain representations may be more natural depending on the physical situation, e.g.\ the specific implementation being considered.
For example, in a quantum optics experiment where ancillary information is stored in a photonic degree of freedom (presumed to be) inaccessible to the agent (even though it passes through their labs), the first construction might seem more appropriate.
By contrast, the second picture would more closely reflect a setup where the respective degree of freedom is stored elsewhere and never physically coincident with the agent, for instance in a physical system directly transferred between the internal isometries $\tilde{V}_n$.%
\footnote{A similar idea to physically prevent agents from accessing the control qubit is proposed in \cite{Costa2025}.}
A similar distinction is also discussed in \cite{Ormrod2023}, where the authors argue for the importance of accounting for all physical transformations \textit{in principle possible} that an agent might perform when characterising the causal structure of an experiment, rather than imposing artificial restrictions on their operations.
Nonetheless, as our generic routed graph is not intended to represent any particular physical realisation of the indefinite causal order in a QC-QC, and as we do not analyse any specific causal structure here, these distinctions are of no immediate importance to this work.

\subsection{Splitting the internal nodes according to their branch structure}
\label{sec:var-split}

We continue by introducing an alternative representation to the generic routed graph $\cG_{\text{QC-QC}(N)}$ of \cref{def:generic-graph}, which splits apart the internal nodes $\mathbf{V}_{n+1}$ according to their respective branch structure.
Intuitively, the idea is to split each node $\mathbf{V}_{n+1}$ into one individual node per branch $\mathbf{V}^{\K_n}_{n+1}$.
Thus, the connectivity of the agents via the internal nodes becomes more apparent in the routed graph picture than for the original generic routed graph $\cG_{\text{QC-QC}(N)}$, at the cost of increasing the number of internal nodes exponentially.
\begin{definition}[{$\mathcal{G}^\text{split}_{\textup{QC-QC}(N)}$}]
    \label{def:internal-split}
    The \emph{split-node generic routed graph} $\mathcal{G}^\text{split}_{\textup{QC-QC}(N)}$
    for an $N$-slot QC-QC is defined as follows:
    \begin{itemize}
        \item There are 2 types of nodes: ``\textbf{A}-nodes'' $\textbf{A}_k$, for $k=1,\ldots,N$, \:\ and ``$\mathbf{\hat V}$-nodes'' $\mathbf{\hat V}^{\K_n}_{n+1}$, for $n=0,\ldots,N$, with $\K_n \subseteq \mathcal{N}$ and $\abs{\K_n} = n$. (In total, there are $N + 2^N$ nodes.)
        \item The indexed graph involves the following indexed arrows:
        \begin{itemize}
            \item open-ended arrows $\ \xrightarrow{\ \emptyset \ } \mathbf{\hat V}^\emptyset_1 \ $ and $\ \mathbf{\hat V}^\cN_{N+1}\xrightarrow{\N\cup F} \ \ $ (whose indices which we will usually drop in practice),
            \item
            $\forall\,n=1,\ldots,N, \ \forall\,k=1,\ldots,N, \ \forall\,\K_n \ni k,$
            \begin{equation}
                \mathbf{\hat V}^{\K_n \backslash k}_n
                \xrightarrow{\ \XX_{\K_n \backslash k}^k\ }
                \mathbf{A}_k
                \qquad \text{and} \qquad
                \mathbf{A}_k
                \xrightarrow{\ \XX_{\K_n \backslash k}^{k\ \prime} \ }
                \mathbf{\hat V}^{\K_n}_{n+1},
                \label{eq:split-arrows}
            \end{equation}
            with a given index $\XX_{\K_n \backslash k}^{k\ (\prime)}$ taking either $\XX_{\K_n \backslash k}^{k\ (\prime)} = \K_n$ as its value, or $\XX_{\K_n \backslash k}^{k\ (\prime)} = \ravnothing$.
        \end{itemize}
        
        \item The routes are specified such that
        \begin{itemize}
            \item For each node $\textbf{A}_k$, the lists of all input and output index values $({\XX_{\K_n \backslash k}^k})_{\K_n\ni k} \in \Ind^\text{in}_{\mathbf{A}_k}$ and $(\XX_{\K_n \backslash k}^{k\ \prime})_{\K_n\ni k} \in \Ind^\text{out}_{\mathbf{A}_k}$ must be equal -- from now on we will then simply drop the primes -- and must further satisfy $\exists!\,n,\exists!\,\K_n, \XX_{\K_n \backslash k}^k\neq\ravnothing$.
            Hence, the practical input and output sets of each node $\textbf{A}_k$ consist of (the same) lists of index values $(\ravnothing,\ldots,\ravnothing,\K_n,\ravnothing,\ldots,\ravnothing)$, for some $n$ and some set $\K_n \ni k$ at the position corresponding to the arrow coming from node $\mathbf{\hat V}^{\K_n \backslash k}_n$ and to the arrow going to node $\mathbf{\hat V}_{n+1}^{\K_n}$.
            
            \item For each node $\mathbf{\hat V}_{n+1}^{\K_n}$, the lists of all input and output index values $(\XX_{\K_n \backslash k}^k)_{k\in\K_n} \in \Ind^\text{in}_{\mathbf{\hat V}^{\K_n}_{n+1}}$ and $({\XX_{\K_n}^\ell})_{\ell\notin\K_n} \in \Ind^\text{out}_{\mathbf{\hat V}^{\K_n}_{n+1}}$ must satisfy
            \begin{align}
                \text{for } n =0 \text{ and } n=N: \quad & \exists!\,\ell, \XX_\emptyset^\ell\neq\ravnothing \ \text{ and } \ \exists!\,k, \XX_{\cN\setminus k}^k\neq\ravnothing, \text{ respectively.} \\[2mm]
                \text{for } \emptyset \subsetneq \K_n \subsetneq \cN: \quad & \left\{ \begin{tabular}{rl}
                    $\text{either}$ & $\forall\,k, \XX_{\K_n \backslash k}^k=\ravnothing, \ \forall\,\ell, \XX_{\K_n}^\ell=\ravnothing$ \\[1mm]
                    $\text{or}$ & $\exists!\,k, \XX_{\K_n \backslash k}^k\neq\ravnothing, \ \exists!\,\ell, \XX_{\K_n}^\ell\neq\ravnothing$
                \end{tabular} \right.
                .
            \end{align}
            Hence, the practical input sets of each node $\mathbf{\hat V}_{n+1}^{\K_n}$, for $0 < n \leq N$ and $\emptyset \subsetneq \K_n \subseteq \cN$, consist either of lists of all-$\ravnothing$ index values, or of lists of index values of the form $(\ravnothing,\ldots,\ravnothing,\K_n,\ravnothing,\ldots,\ravnothing)$, with the set $\K_n$ attached to the arrow coming from some node $\mathbf{A}_k$, with $k\in\K_n$.
            Similarly, the practical output sets of each node $\mathbf{\hat V}_{n+1}^{\K_{n}}$, for $0 \leq n < N$ and $\emptyset \subseteq \K_{n} \subsetneq \N$, consist either of lists of all-$\ravnothing$ index values, or of lists of index values of the form $(\ravnothing,\ldots,\ravnothing,\K_{n}\cup\ell,\ravnothing,\ldots,\ravnothing)$, with the set $\K_{n}\cup\ell$ attached to the arrow going to some node $\mathbf{A}_\ell$, such that $\ell\notin\K_{n}$.
            Lists of all-$\ravnothing$ index values at the input are related to lists of all-$\ravnothing$ index values at the output; lists of index values the form $(\ravnothing,\ldots,\ravnothing,\K_n,\ravnothing,\ldots,\ravnothing)$ at the input are related to lists of index values the form $(\ravnothing,\ldots,\ravnothing,\K_n\cup\ell,\ravnothing,\ldots,\ravnothing)$ at the output.
        \end{itemize}
        
        Both conditions are analogous to the non-split case (\cref{def:generic-graph}), when replacing the subscripts $n$ of the index labels with some $\K_{n} \setminus k$ and adapting the constraints in an appropriate manner.
        They can be condensed into the global index constraint
        \begin{align}
            & \exists!\,(k_1,k_2,\ldots,k_N), \XX_\emptyset^{k_1} = \{k_1\},\, \XX_{\{k_1\}}^{k_2} = \{k_1,k_2\},\, \ldots,\, \XX_{\mathcal{N} \setminus k_N}^{k_N} = \{k_1,k_2,\ldots,k_N\}=\N, \notag \\
            & \qquad \text{and all other } \XX_{\K_{n} \setminus k}^k=\ravnothing. 
        \end{align}
        Just like \cref{eq:global_cstr_QCQC}, this again captures the understanding that, along each global path, all operations $A_k$ are applied once and only once, in a given order.
    \end{itemize}
\end{definition}

Contrasting this with the definition of the generic routed graph $\cG_{\text{QC-QC}(N)}$ in \cref{def:graph}, we see that splitting the $\mathbf{V}$-nodes forces the index \emph{labels} to carry additional information to ensure their uniqueness across the graph.
By contrast, we see that the index \emph{values} remain virtually unchanged,
such that the value $\K_n$ does not encode any additional information other than being non-$\ravnothing$.
With the values being binary, we could alternatively pick $\XX^k_{\K_n} \in \{0, 1\}$, and write the global index constraints as follows:
\begin{equation}
    \label{eq:simple-values}
    \exists!\,(k_1,k_2,\ldots,k_N): \quad
    \XX_\emptyset^{k_1} =
    \XX_{\{k_1\}}^{k_2} =
    \cdots =
    \XX_{\mathcal{N} \setminus k_N}^{k_N} = 1,
\end{equation}
while all other $\XX_{\K_n}^k = 0$.
This would reproduce index values closer to the presentation of \cite{Vanrietvelde2023}.
However, to keep representations as close to \cref{def:graph} as possible, we will stick to the more redundant convention in this work.

\begin{figure}
	\centering
	\begin{subfigure}[b]{0.4\textwidth}
		\centering
		\!\!\!\!\begin{tikzpicture}[transform shape]
			\node (bottom) at (0,-0.3) {};
			\node (V0) at (0,0.5) {$\mathbf{V}_1$};
			\node[text=NavyBlue] (A) at (-2,2) {$\mathbf{A}$};
			\node (V1t) at (0,2.3) {$\mathbf{V}_2^A$};
            \node (V1b) at (0,1.7) {$\mathbf{V}_2^B$};
			\node[text=NavyBlue] (B) at (2,2) {$\mathbf{B}$};
			\node (V2) at (0,3.5) {$\mathbf{V}_3$};
			\node (top) at (0,4.3) {};
			\node (i1) at (0,-1.7) [align=center] {
                $\exists! \, (k_1, k_2) \, :$\\[1mm]
                $\XX_\emptyset^{k_1} = \{ k_1 \}, \;\ \XX_{k_1}^{k_2} = \{k_1, k_2 \} \mathrel{(=} \{A, B \}),$\\[1mm]
                while all other $\XX^k_{\K_n\backslash k} = \ravnothing$};
			\node (i2) at (0,-3) {};
			
			\begin{scope}[
				every node/.style={fill=white,circle,inner sep=0pt}
				every edge/.style=routedarrow]
				\path [->] (bottom) edge (V0);
				\path [->] (V0) edge[bend left] node[below] {$\scriptstyle{\XX^A_\emptyset}$} (A);
				\path [->] (V0) edge[bend right] node[below] {$\scriptstyle{\XX^B_\emptyset}$} (B);
				\path [->] (A) edge[bend left=15] node[above] {$\scriptstyle{\XX^A_\emptyset}$} (V1t);
				\path [->] (B) edge[bend left=15] node[below] {$\scriptstyle{\XX^B_\emptyset}$} (V1b);
				\path [->] (V1t) edge[bend left=15] node[above] {$\scriptstyle{\XX^B_A}$} (B);
				\path [->] (V1b) edge[bend left=15] node[below] {$\scriptstyle{\XX^A_B}$} (A);
				\path [->] (B) edge[bend right] node[above] {$\scriptstyle{\XX^B_A}$} (V2);
				\path [->] (A) edge[bend left] node[above] {$\scriptstyle{\XX^A_B}$} (V2);
			    \path [->] (V2) edge (top);
			\end{scope}
		\end{tikzpicture}
		\caption{\mbox{Split-node generic routed graph $\cG^\text{split}_{\text{QC-QC}(2)}$.}}
        \label{fig:split-routed-graph-switch}
	\end{subfigure}%
	\hfill
    \begin{subfigure}[b]{0.6\textwidth}
        \centering
    	\begin{tikzpicture}[scale=0.78, transform shape]
    		\node (bottom) at (0,-0.25) {};
    		\node (V0) at (0,0.7) {$\mathbf{V}_1$};
    		\node[text=NavyBlue] (A) at (-3.3,4.5) {$\mathbf{A}$};
    		\node (V1A) at (-1,1.7) {$\mathbf{\hat V}^A_2$};
            \node (V1B) at (0.75,2.5) {$\mathbf{\hat V}^B_2$};
            \node (V1C) at (2.5,1.7) {$\mathbf{\hat V}^C_2$};
    		\node[text=NavyBlue] (C) at (5,4.5) {$\mathbf{C}$};
    		\node[text=NavyBlue] (B) at (1.5,4.5) {$\mathbf{B}$};
    		\node (V2BC) at (-1,7.3) {$\mathbf{\hat V}^{BC}_3$};
            \node (V2AC) at (0.75,6.5) {$\mathbf{\hat V}^{AC}_3$};
            \node (V2AB) at (2.5,7.3) {$\mathbf{\hat V}^{AB}_3$};
    		\node (V3) at (0,8.3) {$\mathbf{V}_4$};
    		\node (top) at (0,9.25) {};

            \node (i) at (0,-1.25) [align=center] {
                $\exists! \, (k_1, k_2, k_3) \, :$
                $\XX_\emptyset^{k_1} = \{ k_1 \}, \;\ \XX_{\{k_1 \}}^{k_2} = \{k_1, k_2 \},$\\[1mm]
                $ \XX_{\cN \setminus k_3}^{k_2} = \{k_1, k_2, k_3 \} \mathrel{(=} \{A,B,C\} )$,
                while all other $\XX^k_{\K_n\backslash k} = \ravnothing$};
    		
    		\begin{scope}[
    			every node/.style={fill=white,circle,inner sep=0pt}
    			every edge/.style={routedarrow}]
    			\path [->] (bottom) edge (V0);
    
    			\path [->] (V0) edge[bend left=45] node[below left] {$\scriptstyle{\XX^A_\emptyset}$} (A);
    			\path [->] (V0) edge[bend right] node[right,pos=0.15] {$\scriptstyle{\XX^B_\emptyset}$} (B);
    			\path [->] (V0) edge[bend right=40] node[below right] {$\scriptstyle{\XX^C_\emptyset}$} (C);
    
    			\begin{scope}[bend angle=15]
    				\path [->] (A) edge node[below] {$\scriptstyle{\XX^A_\emptyset}$} (V1A);
    				\path [->] (B) edge node[left] {$\scriptstyle{\XX^B_\emptyset}$} (V1B);
    				\path [->] (C) edge node[below] {$\scriptstyle{\XX^C_\emptyset}$} (V1C);
    
    				\path [->] (V1A) edge[bend left] node[left,pos=0.9,font=\scriptsize] {$\XX^B_A$} (B);
    				\path [->] (V1C) edge[bend right] node[right,pos=0.9,font=\scriptsize] {$\XX^B_C$} (B);
                    \path [->] (V1B) edge[bend right=15] node[above,pos=0.6,font=\scriptsize] {$\XX^A_B$} (A);
                    \path [->] (V1C) edge[bend left] node[above,pos=0.7,font=\scriptsize] {$\XX^A_C$} (A);
    				\path [->] (V1A) edge[bend right] node[above,pos=0.75,font=\scriptsize] {$\XX^C_A$} (C);
                    \path [->] (V1B) edge[bend left=15] node[above,pos=0.55,font=\scriptsize] {$\XX^C_B$} (C);
    
    				\path [<-] (A) edge node[above left,pos=0.7] {$\scriptstyle{\XX^A_{BC}}$} (V2BC);
    				\path [<-] (B) edge node[left] {$\scriptstyle{\XX^B_{AC}}$} (V2AC);
    				\path [<-] (C) edge node[right] {$\scriptstyle{\XX^C_{AB}}$} (V2AB);

                    \path [<-] (V2BC) edge[bend right] node[left,pos=0.9,font=\scriptsize] {$\XX^B_C$} (B);
    				\path [<-] (V2AB) edge[bend left] node[right,pos=0.9,font=\scriptsize] {$\XX^B_A$} (B);
                    \path [<-] (V2AC) edge[bend left=15] node[above,pos=0.6,font=\scriptsize] {$\XX^A_C$} (A);
                    \path [<-] (V2AB) edge[bend right] node[above,pos=0.7,font=\scriptsize] {$\XX^A_B$} (A);
    				\path [<-] (V2BC) edge[bend left] node[above,pos=0.7,font=\scriptsize] {$\XX^C_B$} (C);
                    \path [<-] (V2AC) edge[bend right=15] node[above,pos=0.55,font=\scriptsize] {$\XX^C_A$} (C);
    
    			\end{scope}
    
    			\path [->] (A) edge[bend left=45] node[above left] {$\scriptstyle{\XX^A_{BC}}$} (V3);
    			\path [->] (B) edge[bend right] node[right,pos=0.85] {$\scriptstyle{\XX^B_{AC}}$} (V3);
    			\path [->] (C) edge[bend right=40] node[above right] {$\scriptstyle{\XX^C_{AB}}$} (V3);
    
    			\path [->] (V3) edge (top);
    		\end{scope}
    	\end{tikzpicture}
        \caption{Split-node generic routed graph $\cG^\text{split}_{\text{QC-QC}(3)}.$}
        \label{fig:split-routed-graph-grenoble}
    \end{subfigure}
	\begin{subfigure}[b]{0.4\textwidth}
		\centering
		\!\!\!\!\begin{tikzpicture}[transform shape]
			\node (bottom) at (0,-0.3) {};
			\node (V0) at (0,0.5) {$\mathbf{V}_1$};
			\node[text=NavyBlue] (A) at (-2,2) {$\mathbf{A'}$};
			\node[text=NavyBlue] (B) at (2,2) {$\mathbf{B'}$};
			\node (V2) at (0,3.5) {$\mathbf{V}_3$};
			\node (top) at (0,4.3) {};
			\node (i1) at (0,-1.7) [align=center] {
                $\exists! \, (k_1, k_2) \, :$\\[1mm]
                $\XX_\emptyset^{k_1} = \{ k_1 \}, \;\ \XX_{k_1}^{k_2} = \{k_1, k_2 \} \mathrel{(=} \{A, B \}),$\\[1mm]
                while all other $\XX^k_{\K_n\backslash k} = \ravnothing$};
			\node (i2) at (0,-3) {};
			
			\begin{scope}[
				every node/.style={fill=white,circle,inner sep=0pt}
				every edge/.style=routedarrow]
				\path [->] (bottom) edge (V0);
				\path [->] (V0) edge[bend left] node[below] {$\scriptstyle{\XX^A_\emptyset}$} (A);
				\path [->] (V0) edge[bend right] node[below] {$\scriptstyle{\XX^B_\emptyset}$} (B);
				\path [->] (A) edge[bend left=25] node[below] {$\scriptstyle{\XX^B_A}$} (B);
				\path [->] (B) edge[bend left=25] node[above] {$\scriptstyle{\XX^A_B}$} (A);
				\path [->] (B) edge[bend right] node[above] {$\scriptstyle{\XX^B_A}$} (V2);
				\path [->] (A) edge[bend left] node[above] {$\scriptstyle{\XX^A_B}$} (V2);
			    \path [->] (V2) edge (top);
			\end{scope}
		\end{tikzpicture}
		\caption{$\cG^\text{split}_{\text{QC-QC}(3)}$ after merging $\mathbf{V}_2^{A/B} \hookrightarrow \mathbf{A} \lor \mathbf{B}$.}
        \label{fig:group-routed-graph-switch}
	\end{subfigure}%
	\hfill
    \begin{subfigure}[b]{0.6\textwidth}
        \centering
    	\begin{tikzpicture}[scale=0.78, transform shape]
    		\node (bottom) at (0,-0.25) {};
    		\node (V0) at (0,0.7) {$\mathbf{V}_1$};
    		\node[text=NavyBlue] (A) at (-3.3,4.5) {$\mathbf{A}'$};
    		\node[text=NavyBlue] (C) at (5,4.5) {$\mathbf{C}'$};
    		\node[text=NavyBlue] (B) at (1.5,4.5) {$\mathbf{B}'$};
    		\node (V3) at (0,8.3) {$\mathbf{V}_4$};
    		\node (top) at (0,9.25) {};
            \node (topper) at (0,10) {};

            \node (i) at (0,-1.25) [align=center] {
                $\exists! \, (k_1, k_2, k_3) \, :$
                $\XX_\emptyset^{k_1} = \{ k_1 \}, \;\ \XX_{\{ k_1 \}}^{k_2} = \{k_1, k_2 \},$\\[1mm]
                $ \XX_{\cN \setminus k_3}^{k_2} = \{k_1, k_2, k_3 \} \mathrel{(=} \{A, B, C \})$,
                while all other $\XX^k_{\K_n\backslash k} = \ravnothing$};
    		
    		\begin{scope}[
    			every node/.style={fill=white,circle,inner sep=0pt}
    			every edge/.style={routedarrow}]
    			\path [->] (bottom) edge (V0);
    
    			\path [->] (V0) edge[bend left=45] node[below left] {$\scriptstyle{\XX^A_\emptyset}$} (A);
    			\path [->] (V0) edge[bend right] node[right,pos=0.15] {$\scriptstyle{\XX^B_\emptyset}$} (B);
    			\path [->] (V0) edge[bend right=40] node[below right] {$\scriptstyle{\XX^C_\emptyset}$} (C);
    
    			\begin{scope}[bend angle=25]
    				\path [->] (A) edge[bend left] node[below,font=\scriptsize] {$\XX^A_{C} \, \XX^B_A$} (B);
    				\path [->] (C) edge[bend left] node[above,font=\scriptsize] {$\XX^B_{A} \, \XX^C_B$} (B);
                    \path [->] (B) edge[bend left] node[above,font=\scriptsize] {$\XX^B_{C} \, \XX^A_B$} (A);
                    \path [->] (C) edge[bend right=45] node[below,pos=0.6,font=\scriptsize] {$\XX^C_{B} \, \XX^A_C$} (A);
    				\path [->] (A) edge[bend right=45] node[above,pos=0.4,font=\scriptsize] {$\XX^A_{B} \, \XX^C_A$} (C);
                    \path [->] (B) edge[bend left] node[below,font=\scriptsize] {$\XX^B_{A} \, \XX^C_B$} (C);    
    			\end{scope}
    
    			\path [->] (A) edge[bend left=45] node[above left] {$\scriptstyle{\XX^A_{BC}}$} (V3);
    			\path [->] (B) edge[bend right] node[right,pos=0.85] {$\scriptstyle{\XX^B_{AC}}$} (V3);
    			\path [->] (C) edge[bend right=40] node[above right] {$\scriptstyle{\XX^C_{AB}}$} (V3);
    
    			\path [->] (V3) edge (top);
    		\end{scope}
    	\end{tikzpicture}
        \caption{$\cG^\text{split}_{\text{QC-QC}(3)}$ after merging $\mathbf{V}_2^{A} \hookrightarrow \mathbf{A} \hookleftarrow \mathbf{V}_3^{BC}$ etc.}
        \label{fig:group-routed-graph-grenoble}
    \end{subfigure}
	\label{fig:group-routed-graphs}
	\caption{{\bf (a,b)} The split-node generic routed graph $\cG^\text{split}_{\text{QC-QC}(N)}$ (for $N=2,3$) according to \cref{def:internal-split}, splitting the isometries $\mathbf{V}_{n+1}$ featured in the generic routed graph $\cG_{\text{QC-QC}(N)}$ shown in \cref{fig:routed_graphs_QCQC_N_2_3} into one node $\mathbf{\hat V}_{n+1}^{\K_n}$ for each branch $\mathbf{\hat V}_{n+1}^{\K_n}$.\quad 
    {\bf(c,d)} The routed graphs obtained by merging internal isometries in $\cG^\text{split}_{\text{QC-QC}(N)}$ into agent nodes, denoted by $\hookrightarrow$, following the procedure established in \cref{sec:grouping}. We obtain a reduced number of internal nodes, 
    each of which is associated with applying not only their respective agent operations $A_k$, but also additional non-trivial \enquote{internal} pre- and post-processing encoding the action of the supermap.
    We thereby essentially recover the routed graphs shown in Figs.~6 and~21 of~\cite{Vanrietvelde2023}.
    }
	\label{fig:split-routed-graphs}
\end{figure}

The routed graphs $\mathcal{G}^\text{split}_{\textup{QC-QC}(N)}$ for $N=2$ and $N=3$ are shown in \cref{fig:split-routed-graph-switch,fig:split-routed-graph-grenoble}.

The branches $\textbf{A}_k^{\K_n\backslash k}$ of the $\mathbf{A}$-nodes maintain the form given in \cref{eq:branches_Ak}, albeit with a larger number of indices (stemming from the larger number of $\mathbf{\hat V}$-nodes) taking $\ravnothing$ as value.
For the $\mathbf{\hat V}$-nodes, we obtain a branch structure
$\Bran( \lambda_{\mathbf{\hat V^{\K_n}_{n+1}}} ) = \{ \mathbf{V}^{\K_n}_{n+1} , \mathbf{\bar V}^{\K_n}_{n+1} \}$, encoding precisely whether a given branch $\mathbf{V}^{\K_n}_{n+1}$ happens or not, with $\mathbf{V}^{\K_n}_{n+1}$ of the same form as in \cref{eq:branches_Vn1,eq:branches_V1_VN1} and with $\mathbf{\bar V}^{\K_n}_{n+1} = \big(\{(\ravnothing, \ldots, \ravnothing)\}\to\{(\ravnothing, \ldots, \ravnothing)\}\big)$
Considering the index values $\ravnothing$ to be 1-dimensional, we may then construct the branch graph for $\cG^\text{split}_{\text{QC-QC}(N)}$. 
It is closely related to the branch graph for $\cG$, with the addition of the branches $\mathbf{\bar V}^{\K_n}_{n+1}$ being subject of no strong parent/child relations as well as the same weak parent/child relations as $\mathbf{V}^{\K_n}_{n+1}$.
For the instance of $N=2$, the respective branch graph is shown in \cref{fig:split-branch-graph}. (A general comment on how node splitting affects the branch graph is presented in \cref{fig:half-switch-split}.)

\begin{figure}
    \centering
    \begin{tikzpicture}
        \node (V0) at (0,1) {$\mathbf{V}_1^\emptyset$};
        \node (A1) at (-2,3) {$\mathbf{A}^\emptyset$};
        \node (B1) at (2,3) {$\mathbf{B}^\emptyset$};
        \node (V1A) at (-2,5) {$\mathbf{V}_2^A$};
        \node[text=gray] (V1Abar) at (-3.2,5) {$\mathbf{\bar V}_2^A$};
        \node (V1B) at (2,5) {$\mathbf{V}_2^B$};
        \node[text=gray] (V1Bbar) at (3.2,5) {$\mathbf{\bar V}_2^B$};
        \node (B2) at (-2,7) {$\mathbf{B}^A$};
        \node (A2) at (2,7) {$\mathbf{A}^B$};
        \node (V2) at (0,9) {$\mathbf{V}_3^{AB}$};

        \path [->] (V0) edge (A1);
        \path [->] (A1) edge (V1A);
        \path [->] (V1A) edge (B2);
        \path [->] (B2) edge (V2);
        \path [->] (V0) edge (B1);
        \path [->] (B1) edge (V1B);
        \path [->] (V1B) edge (A2);
        \path [->] (A2) edge (V2);

        \begin{scope}[
            >={Stealth[gray]},
            every edge/.style={draw=green,dashed,very thick}]
            \path [->] (V0) edge[bend left] (A1);
            \path [->] (V0) edge (V1A);
            \path [->] (V0) edge[in=270,out=180] (V1Abar);
            \path [->] (V0) edge (B2);
            \path [->] (V0) edge[bend right] (B1);
            \path [->] (V0) edge[bend left] (V1B);
            \path [->] (V0) edge[in=270,out=0] (V1Bbar);
            \path [->] (V0) edge[bend left] (A2);
        \end{scope}

        \begin{scope}[
            >={Stealth[gray]},
            every edge/.style={draw=red,dashed,very thick}]
            \path [->] (A1) edge[bend right] (V2);
            \path [->] (V1A) edge[bend right] (V2);
            \path [->] (V1Abar) edge[in=180,out=90] (V2);
            \path [->] (B2) edge[bend left] (V2);
            \path [->] (B1) edge (V2);
            \path [->] (V1B) edge (V2);
            \path [->] (V1Bbar) edge[in=0,out=90] (V2);
            \path [->] (A2) edge[bend right] (V2);
        \end{scope}
    \end{tikzpicture}
    \caption{
        The branch graph for the split-node generic routed graph $\cG^{\text{split}}_{\text{QC-QC}(2)}$ according to \cref{def:internal-split}, featuring additional branches $\mathbf{\bar V}_{n+1}^{\K_n}$ in addition to those present in the branch graph of $\cG_{\text{QC-QC}(2)}$ shown in \cref{fig:branch-graphs}.}
    \label{fig:split-branch-graph}
\end{figure}
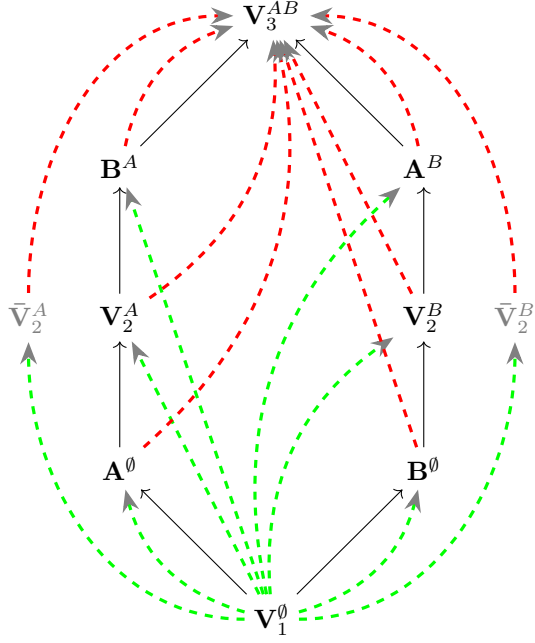

\begin{restatable}{lemma}{genericSplitGraph}
    \label{thm:split-graph}
    Considering the index values $\ravnothing$ to be 1-dimensional, the split-node generic routed graph $\cG^\text{split}_{\text{QC-QC}(N)}$ is a valid routed graph.
\end{restatable}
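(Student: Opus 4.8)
The plan is to verify the two defining conditions of a valid routed graph (\cref{def:valid-routed-graph}) for $\cG^\text{split}_{\text{QC-QC}(N)}$ in turn: bi-univocality, and the requirement that every loop in its associated branch graph be weak. Throughout, I would exploit the fact that $\cG^\text{split}_{\text{QC-QC}(N)}$ is obtained from $\cG_{\text{QC-QC}(N)}$ merely by splitting each internal node $\mathbf{V}_{n+1}$ into the family of single-branch nodes $\mathbf{\hat V}^{\K_n}_{n+1}$, each carrying exactly the two branches $\mathbf{V}^{\K_n}_{n+1}$ (the active branch, of the same form as in $\cG_{\text{QC-QC}(N)}$) and the trivial null branch $\mathbf{\bar V}^{\K_n}_{n+1}$. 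The crucial observation is that this split changes neither the index values nor the available bifurcation choices: only the active branch $\mathbf{V}^{\K_n}_{n+1}$ carries a nontrivial bifurcation choice (the choice of $\ell\notin\K_n$), whereas $\mathbf{\bar V}^{\K_n}_{n+1}$ has trivial input and output. Consequently the analyses of \cref{subsubsec:biunivoc_GQCQC,sec:branch-graph-QCQC} can be transported almost verbatim.

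For bi-univocality I would show that the choice relation remains a function. Given a collection $\vec{\bm{\ell}}$ of bifurcation choices (one $\ell^{(\K_n)}$ per active branch, exactly as for $\cG_{\text{QC-QC}(N)}$), the same recursion of \cref{eq:Kn1ell} determines the sequence $\K^{\vec{\bm{\ell}}}_1,\dots,\K^{\vec{\bm{\ell}}}_N$. At each split node $\mathbf{\hat V}^{\K_n}_{n+1}$ precisely one branch then happens---namely $\mathbf{V}^{\K_n}_{n+1}$ if $\K_n=\K^{\vec{\bm{\ell}}}_n$ and $\mathbf{\bar V}^{\K_n}_{n+1}$ otherwise---while the $\mathbf{A}$-nodes behave exactly as in $\cG_{\text{QC-QC}(N)}$, so the outcome is uniquely determined and $\cG^\text{split}_{\text{QC-QC}(N)}$ is univocal. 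Since $\cG^\text{split}_{\text{QC-QC}(N)}$ is again self-adjoint under the relabelling $\mathbf{\hat V}^{\K_n}_{n+1}\leftrightarrow\mathbf{\hat V}^{\N\backslash\K_n}_{N-n+1}$ (with $\mathbf{A}_k\leftrightarrow\mathbf{A}_k$), univocality of the adjoint follows immediately, giving bi-univocality.

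Next I would determine the strong and weak parent relations in order to build the branch graph. With $\ravnothing$ taken as the only one-dimensional value, the active branches reproduce exactly the strong- and weak-parent structure already computed for $\cG_{\text{QC-QC}(N)}$ in \cref{sec:branch-graph-QCQC}. The null branches $\mathbf{\bar V}^{\K_n}_{n+1}$ engage in no strong-parent relation (their only incident branch-linking value is the one-dimensional $\ravnothing$, occurring singly), and, because $\Happ_{\mathbf{\bar V}^{\K_n}_{n+1}}=1-\Happ_{\mathbf{V}^{\K_n}_{n+1}}$ depends on the same inputs as $\Happ_{\mathbf{V}^{\K_n}_{n+1}}$ (and likewise in the adjoint graph), they inherit exactly the weak-parent and weak-child relations of their active counterpart $\mathbf{V}^{\K_n}_{n+1}$. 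Concretely, each null branch receives only green arrows from active branches $\mathbf{V}^{\K_m}_{m+1}$ with $\K_m\subsetneq\K_n$ and emits only red arrows to active branches $\mathbf{V}^{\K_m}_{m+1}$ with $\K_m\supsetneq\K_n$; in particular no arrow joins two null branches.

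Finally, to rule out non-weak loops I would extend the cardinality-monotonicity argument of \cref{sec:branch-graph-QCQC}. Along every arrow the size of the branch indicator $\K$ is non-decreasing, and it stays constant only along the solid or green arrows running from a $\mathbf{V}$-branch to an $\mathbf{A}$-branch among the active branches, exactly as in $\cG_{\text{QC-QC}(N)}$. Every arrow incident to a null branch, by the description above, \emph{strictly} increases the cardinality of the indicator, so no null branch can lie on a loop. Hence any loop would have to live entirely among the active branches and consist solely of constant-cardinality arrows, i.e.\ of $\mathbf{V}\to\mathbf{A}$ arrows, which cannot close up---reproducing the contradiction already obtained for $\cG_{\text{QC-QC}(N)}$. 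Therefore the branch graph of $\cG^\text{split}_{\text{QC-QC}(N)}$ contains no loops at all, so \emph{a fortiori} all its loops are weak, and the graph is valid. The main obstacle I anticipate is the bookkeeping in the third step: carefully confirming that the null branches carry precisely the inherited weak relations (and no strong ones) and that all their incident arrows strictly raise the cardinality, since this is exactly what guarantees that they cannot spoil the loop-freeness inherited from $\cG_{\text{QC-QC}(N)}$.
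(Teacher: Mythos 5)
Your proposal is correct and follows essentially the same route as the paper's proof: bi-univocality is transferred from $\cG_{\text{QC-QC}(N)}$ because the null branches $\mathbf{\bar V}^{\K_n}_{n+1}$ carry no bifurcation choice and so leave the choice function (and its adjoint) practically unchanged, and the branch graph only gains null branches with no strong-parent relations and the same incoming green / outgoing red arrows as their active counterparts, so no loops are introduced. Your write-up is in fact somewhat more explicit than the paper's (the identification $\Happ_{\mathbf{\bar V}^{\K_n}_{n+1}}=1-\Happ_{\mathbf{V}^{\K_n}_{n+1}}$ and the cardinality-monotonicity check for arrows incident to null branches), but the underlying argument is the same.
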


The proof of this statement is given in \cref{sec:proof5}.

The larger number of nodes leads to an associated skeletal supermap of a different type, featuring a larger number of slots.
As however all slots for $\mathbf{\hat V}$-nodes are fleshed out by routed isometries rather than routed supermaps, this need not impact the type of the supermap we ultimately obtain.

We continue by showing that indeed, both routed graphs generate the same class of processes by means of routed circuit decompositions.
For each decomposition for $\cG_{\text{QC-QC}(N)}$, we can specify a closely related equivalent decomposition for $\cG^\text{split}_{\text{QC-QC}(N)}$.

\begin{restatable}{lemma}{genericSplitEquivalence}
    \label{thm:split-equivalence}
    Consider the two routed graphs $\cG_{\text{QC-QC}(N)}$ and $\cG^\text{split}_{\text{QC-QC}(N)}$.
    Restricting to the case in which all $\mathbf{V}_n$ and $\mathbf{V}^{\K_n}_{n+1}$ are fleshed out by routed isometries,
    there is a canonical mapping between equivalent routed circuit decompositions (as defined in \cref{def:decomposition}) for the two routed graphs.
\end{restatable}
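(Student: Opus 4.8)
The plan is to exhibit an explicit bijection between the two families of decompositions and then check that matched decompositions induce the very same routed superisometry. The structural fact I would lean on throughout is the unique decomposition of a routed map over the branches of its route, recalled just after \cref{eq:branch-spaces}: any routed isometry $\dket{\mathbf{V}_{n+1}}$ fleshing out a merged node of $\cG_{\text{QC-QC}(N)}$ splits uniquely as an orthogonal direct sum $\dket{\mathbf{V}_{n+1}} = \bigoplus_{\K_n} \dket{\mathbf{V}_{n+1}}^{\K_n}$ of branch-wise isometries, one per branch $\mathbf{V}_{n+1}^{\K_n}$, each of type $\HH^{\mathrm{in}(\mathbf{V}_{n+1})}_{\mathbf{V}_{n+1}^{\K_n}} \to \HH^{\mathrm{out}(\mathbf{V}_{n+1})}_{\mathbf{V}_{n+1}^{\K_n}}$, and conversely such a collection reassembles into a single routed isometry on the merged node. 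This is precisely the data attached to the $\mathbf{\hat V}$-nodes of $\cG^\text{split}_{\text{QC-QC}(N)}$: each $\mathbf{\hat V}^{\K_n}_{n+1}$ carries the two branches $\mathbf{V}^{\K_n}_{n+1}$ and $\mathbf{\bar V}^{\K_n}_{n+1}$, the latter supported on the all-$\ravnothing$ (hence $1$-dimensional) sectors, on which the only admissible routed isometry is the trivial identity $\mathbb{C} \to \mathbb{C}$.

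First I would define the forward map. Given a decomposition of $\cG_{\text{QC-QC}(N)}$ with V-node isometries $\dket{\mathbf{V}_{n+1}}$, I assign to each split node $\mathbf{\hat V}^{\K_n}_{n+1}$ the routed isometry whose $\mathbf{V}^{\K_n}_{n+1}$-branch component equals $\dket{\mathbf{V}_{n+1}}^{\K_n}$ and whose $\mathbf{\bar V}^{\K_n}_{n+1}$-branch component is the forced trivial identity. The A-node fleshing-out transfers verbatim: the practical space $\HH^{\mathrm{in}(\mathbf{A}_k)}_\text{prac} = \bigoplus_{n,\K_n \ni k} \HH_{\K_n \setminus k,k}^{\ldots}$ is identical in both graphs (only its distribution over arrows changes), so the Choi vector $\dket{\mathbf{A}_k}$ of \cref{eq:dket_node_Ak} carries over under the canonical relabelling of sectors. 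I would then verify this is a legitimate fleshing-out, namely that the assigned map follows $\lambda_{\mathbf{\hat V}^{\K_n}_{n+1}}$ (immediate, since that route is exactly the restriction of $\lambda_{\mathbf{V}_{n+1}}$ to the branch $\mathbf{V}^{\K_n}_{n+1}$ together with the trivial all-$\ravnothing$ branch) and is a routed isometry (inherited branch-wise from $\dket{\mathbf{V}_{n+1}}$, validity of the target graph being guaranteed by \cref{thm:split-graph}). The backward map gathers the $\mathbf{V}^{\K_n}_{n+1}$-branch components across all split nodes sharing the same $n$ and reassembles them by direct sum; uniqueness of the branch decomposition makes the two maps mutually inverse.

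It then remains to show that matched decompositions yield the same supermap. Since the A-nodes and the sectorised wires joining A- and V-nodes are identical in the two constructions (indices $\K_n$ or $\ravnothing$), and since both skeletal supermaps compose their slots via the same link product interpreting arrows as identity channels, the only point to verify is that link-producing a merged isometry with its neighbours agrees with link-producing the corresponding split isometries with theirs. This follows because the link product distributes over the orthogonal sum $\dket{\mathbf{V}_{n+1}} = \bigoplus_{\K_n} \dket{\mathbf{V}_{n+1}}^{\K_n}$, while the $\mathbf{\bar V}^{\K_n}_{n+1}$ components, being identities on $1$-dimensional spaces, contribute only trivial scalar factors. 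Concretely, I would carry out the composition of \cref{eq:recover_wQCQC} node-by-node and observe that both graphs produce the identical sum over orderings $\sum_{(k_1,\ldots,k_N)}$ of link-products of the branch-wise isometries $\dket{V^{\to \ell}_{\K_n \setminus k,k}}$, whether these are gathered into merged $\mathbf{V}_{n+1}$ slots or spread across split $\mathbf{\hat V}^{\K_n}_{n+1}$ slots; hence $\ket{w_\text{QC-QC}}$, and therefore the resulting pure or mixed supermap, coincide. I expect the main obstacle to be purely bookkeeping, namely tracking the relabelling of the A-node arrows and the $1$-dimensional tensor factors carefully enough to see the two link-product expressions are term-by-term identical, rather than any conceptual difficulty, since the branch-decomposition theorem does all the structural work.
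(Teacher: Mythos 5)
Your overall strategy is the same as the paper's: use the unique branch decomposition $\dket{\mathbf{V}_{n+1}} = \bigoplus_{\K_n} \dket{\mathbf{V}_{n+1}}^{\K_n}$ of a routed isometry to distribute the components over the split nodes $\mathbf{\hat V}^{\K_n}_{n+1}$ in the forward direction, and reassemble them in the backward direction. Your additional explicit check that the composed supermaps coincide (link product distributing over the orthogonal sum, trivial scalar contributions from the all-$\ravnothing$ branches) is a sound way to make explicit what the paper leaves implicit.

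There is, however, one concrete error that creates a gap in the backward direction. You assert that on the branch $\mathbf{\bar V}^{\K_n}_{n+1}$, supported on a $1$-dimensional sector, ``the only admissible routed isometry is the trivial identity $\mathbb{C}\to\mathbb{C}$.'' This is false: the general isometry $\mathbb{C}\to\mathbb{C}$ is a phase $e^{i\varphi_{\K_n}}$, so the most general fleshing out of $\mathbf{\hat V}^{\K_n}_{n+1}$ is $V^{\K_n} \oplus e^{i\varphi_{\K_n}}\,\id_{\mathbf{\bar V}^{\K_n}_{n+1}}$. These phases are not harmless global factors: for a given causal ordering $(k_1,\ldots,k_N)$, the branch $\mathbf{\bar V}^{\K'_n}_{n+1}$ happens for every $\K'_n \neq \{k_1,\ldots,k_n\}$, so the term $\ket{w_{(k_1,\ldots,k_N,F)}}$ acquires the ordering-dependent phase $\prod_n \prod_{\K'_n \neq \{k_1,\ldots,k_n\}} e^{i\varphi_{\K'_n}}$, and relative phases between orderings change the process. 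Consequently your backward map -- ``gather the $\mathbf{V}^{\K_n}_{n+1}$-components and reassemble'' -- does not send a general split decomposition to an \emph{equivalent} merged one. The fix is exactly the step the paper supplies: observe that each $\varphi_{\K_n}$ can be absorbed into the nontrivial components $V^{\K'_n}$ of the other branch nodes at the same level (equivalently, one may restrict without loss of generality to $\varphi_{\K_n}=0$ after this reabsorption), after which your construction goes through.
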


The proof of this statement is given in \cref{sec:proof5}, using an approach that readily generalises to a variety of routed graphs and nodes.

    For the skeletal supermap associated with $\cG^\text{split}_{\text{QC-QC}(N)}$, the Hilbert spaces associated with the arrows going in/out of these branch nodes then factorise as
    \begin{equation}
        \label{eq:fine-grained-practical}
        \HH^{\text{in}(\mathbf{\hat V}^{\K_n}_{n+1})}_\text{prac}
        = \bigoplus_{k=1}^N \,\HH_{k}^{\mathbf{A}_k\to\mathbf{\hat V}^{\K_n}_{n+1}}
        \oplus \HH_\ravnothing^{\text{in}(\mathbf{\hat V}^{\K_n}_{n+1})}
    \end{equation}
    where
    $\HH_{k}^{\mathbf{A}_k\to\mathbf{\hat V}^{\K_n}_{n+1}} \coloneqq \HH_{\K_n\backslash k,k}^{\mathbf{A}_k\to\mathbf{\hat V}_{n+1}}$ and
    $\HH_\ravnothing^{\text{in}(\mathbf{\hat V}^{\K_n}_{n+1})} \coloneqq \bigotimes_{k=1}^N \,\HH_\ravnothing^{\mathbf{A}_k\to\mathbf{\hat V}^{\K_n}_{n+1}}$, and similarly for $\HH^{\text{out}(\mathbf{\hat V}^{\K_n}_{n+1})}_\text{prac}$.

    To obtain a given QC-QC from this alternative routed graph instead of $\cG_{\text{QC-QC}(N)}$, we split the fleshing out for the $\mathbf{V}$-nodes along the same lines, obtaining
    $\dket{J_{\mathbf{\hat V}^{\K_n}_{n+1}}^\text{in}} \ast \dket{\hat{V}_{n+1}^{\K_n}} \ast \dket{J_{\mathbf{\hat V}^{\K_n}_{n+1}}^\text{out}}$, where for $n=1,\ldots,N$: 
    \begin{align}
        J_{\mathbf{V}^{\K_n}_{n+1}}^\text{in} : \ \HH^{\text{in}(\mathbf{\hat V}^{\K_n}_{n+1})}_\text{prac}
        = \bigoplus_{k=1}^N \,\HH_{k}^{\mathbf{A}_k\to\mathbf{\hat V}^{\K_n}_{n+1}}
        \oplus \HH_\ravnothing^{\text{in}(\mathbf{\hat V}^{\K_n}_{n+1})}
        \ \to & \quad
        \HH^{\hat A_n^O}\otimes\HH^{\alpha_n}\otimes\HH^{C_n'}
        \oplus \HH^{\text{in}(\mathbf{\hat V}^{\K_n}_{n+1})}_\ravnothing, \notag \\
        \ket{\psi} \in \HH_{k}^{\mathbf{A}_k\to\mathbf{\hat V}^{\K_n}_{n+1}}
        \ \mapsto & \quad
        \ket{\psi}^{\hat A_n^O\alpha_n}\otimes\ket{\K_n\backslash k,k}^{C_n'}, \\
        \ket\Omega \in \HH_\ravnothing^{\text{in}(\mathbf{\hat V}^{\K_n}_{n+1})} \ \mapsto & \quad \ket\Omega \, ,
        \label{eq:vacuum-isometry}
    \end{align}
    where $\ket{\Omega}$ is the single state in the 1-dimensional sector
    $\HH_\ravnothing^{\text{in}(\mathbf{\hat V}^{\K_n}_{n+1})}$,
    and similarly define $\dket{J_{\mathbf{\hat V}^{\K_n}_{n+1}}^\text{out}}$,
    while 
    \begin{equation}
        \dket{\hat{V}_{n+1}^{\K_n}} = 
        \dket{\tilde{V}_{n+1}^{\K_n}} \oplus \dket{\bar{V}_{n+1}^{\K_n}}
        =
        \!\!
	    \sum_{k \in \K_n, \ell \notin \K_n}  \!\!\!\!\!\! \dket{\tilde V^{\to \ell}_{\K_{n} \setminus k,k}}^{\tilde A_n^O \alpha_n \tilde A_{n+1}^I \alpha_{n+1}}
        \oplus \id^{\mathrm{in}(\mathbf{\hat V}^{\K_n}_{n+1})\to\mathrm{out}(\mathbf{\hat V}^{\K_n}_{n+1})}_{\mathbf{\bar V}^{\K_n}_{n+1}} \, .
    \end{equation}

While in principle, we could apply an analogous technique for the routed graph $\cG_{\text{QC-QC}(N)}^\alpha$, we would obtain a large number of arrows in the resulting graph in this case, connecting $\mathbf{\hat V}^{\K_{n} \setminus k}_n$ and $\mathbf{\hat V}^{\K_{n}}_{n+1}$ respectively.

Moreover, an analogous rewriting could be performed for the agent nodes $\mathbf{A}_k$, obtaining nodes $\mathbf{A}^{\K_n}_k$ for each $\K_n \subseteq \cN \setminus k$.
However, fleshing out the resulting slots for $\mathbf{A}^{\K_n}_k$ with monopartite supermaps (in the sense of a routed circuit decomposition according to \cref{def:decomposition}) would produce a supermap of a different type (than for $\cG_{\text{QC-QC}(N)}$), increasing the number of slots within the routed supermap from $N$ to $2^N$.
Furthermore, the resulting slot spaces are in direct sum form, including 1-dimensional Hilbert spaces $\HH^{\text{in/out}( \mathbf{\hat A}^{\K_n}_\ravnothing)}$, preventing us from obtaining a unsectorised slot of non-trivial dimension by factorising.
Therefore, no fleshing out for such $\mathbf{\hat A}$-nodes could recover a meaningful unrouted supermap.

Furthermore, one may wonder whether the idea of splitting a node according to its branch structure can be generalised to arbitrary routed graphs.
Indeed, this seems possible in many cases, first replacing a given node $\mathbf{N}$ with a copy $\mathbf{\hat N}^\beta$ for each of its branches $\mathbf{N}^\beta$, endowed with a copy of the respective arrows, and then adjusting their indices such that for each $\mathbf{\hat N}^\beta$, we have $\lambda_{\mathbf{\hat N^\beta}} = \lambda_{\mathbf{N}^\beta} \oplus \id_{\mathbf{\bar N}^\beta}$.
Here, the identity acts on a 1-dimensional subspace corresponding to the case that $\mathbf{N}^\beta$ does not happen.
A simple example for this is illustrated in \cref{fig:half-switch}.
However, in cases where the branch structure is dictated by correlating multiple incoming indices\footnote{
    Consider e.g.\ a routed subgraph $\mathbf{A} \xrightarrow{i} \mathbf{X} \xleftarrow{j} \mathbf{B}$ with $i, j \in \{ 1, 2 \}$ and no 1-dimensional index values, where $\Bran(\lambda_\mathbf{X}) = \{ \mathbf{X}^\gamma, \mathbf{X}^{\delta} \}$, $\Ind^\text{in}_{\mathbf{X}^\gamma} = \{ (1,1); (2,2) \}$ and $\Ind^\text{in}_{\mathbf{X}^{\delta}} = \{ (1,2); (2,1) \}$.}
(and similarly for outgoing indices), there seems to be no straightforward way to adjust the index structure accordingly, as this hinders an association of individual arrow index values to a given branch.
The precise conditions under which the index structure can be adjusted like this and a generic procedure to do so remain important open questions, especially for characterising a minimal set of routed circuit decompositions to obtain all (routed) processes obtainable from routed quantum circuits.

\begin{figure}
	\centering
    \begin{subfigure}[b]{0.17\textwidth}
		\centering
		\begin{tikzpicture}[transform shape]
			\node (bottom) at (0,-0.3) {};
			\node (V0) at (0,0.5) {$\mathbf{V}$};
			\node (A) at (0,2) {$\mathbf{N}$};
			\node (V2) at (0,3.5) {$\mathbf{W}$};
			\node (top) at (0,4.3) {};
			\node (i1) at (0,-0.8) [align=center] {
                $i+j+k=1$};
			
			\begin{scope}[
				every node/.style={fill=white,circle,inner sep=0pt}
				every edge/.style=routedarrow]
				\path [->] (bottom) edge (V0);
				\path [->] (V0) edge node[right] {$\scriptstyle{i\,j\,k}$} (A);
				\path [->] (A) edge node[right] {$\scriptstyle{i\,j\,k}$} (V2);
			    \path [->] (V2) edge (top);
			\end{scope}
		\end{tikzpicture}
		\caption{Before splitting}
        \label{fig:half-switch-condensed}
	\end{subfigure}%
    \hfill
    \begin{subfigure}[b]{0.28\textwidth}
		\centering
		\begin{tikzpicture}[transform shape]
			\node (bottom) at (0,-0.3) {};
			\node (V0) at (0,0.5) {$\mathbf{V}$};
			\node (A) at (-1.5,2) {$\mathbf{\hat N}^{i=1}$};
			\node (B) at (0,2) {$\mathbf{\hat N}^{j=1}$};
            \node (C) at (1.5,2) {$\mathbf{\hat N}^{k=1}$};
			\node (V2) at (0,3.5) {$\mathbf{W}$};
			\node (top) at (0,4.3) {};
			\node (i1) at (0,-0.8) [align=center] {
                $i+j+k=1$};
			
			\begin{scope}[
				every node/.style={fill=white,circle,inner sep=0pt}
				every edge/.style=routedarrow]
				\path [->] (bottom) edge (V0);
				\path [->] (V0) edge[bend left] node[below] {$\scriptstyle{i}$} (A);
				\path [->] (V0) edge node[right] {$\scriptstyle{j}$} (B);
                \path [->] (V0) edge[bend right] node[below] {$\scriptstyle{k}$} (C);
				\path [->] (C) edge[bend right] node[above] {$\scriptstyle{k}$} (V2);
                \path [->] (B) edge node[right] {$\scriptstyle{j}$} (V2);
				\path [->] (A) edge[bend left] node[above] {$\scriptstyle{i}$} (V2);
			    \path [->] (V2) edge (top);
			\end{scope}
		\end{tikzpicture}
		\caption{After splitting}
        \label{fig:half-switch-split}
	\end{subfigure}%
    \hfill
    \begin{subfigure}[b]{0.55\textwidth}
        \centering
        \begin{tikzpicture}
            \node (left) at (-4,3) {};
            \node (V0) at (0,1) {$\mathbf{V}$};
            \node (A1) at (-2.8,3) {$\mathbf{N}^{i=1}$};
            \node[text=gray] (A1bar) at (-1.4,3) {$\mathbf{\bar N}^{i=1}$};
            \node (A2) at (0,3) {$\mathbf{N}^{j=1}$};
            \node[text=gray] (A2bar) at (1.4,3) {$\mathbf{\bar N}^{j=1}$};
            \node (A3) at (2.7,3) {$\mathbf{N}^{k=1}$};
            \node[text=gray] (A3bar) at (4,3) {$\mathbf{\bar N}^{k=1}$};
            \node (V2) at (0,5) {$\mathbf{W}$};

            \node (i1) at (0,0) {};

            \path [->] (V0) edge (A1);
            \path [->] (V0) edge (A2);
            \path [->] (V0) edge (A3);
            \path [->] (A3) edge (V2);
            \path [->] (A2) edge (V2);
            \path [->] (A1) edge (V2);

            \begin{scope}[
    			>={Stealth[gray]},
    			every edge/.style={draw=green,dashed,very thick}]
                \path [->] (V0) edge[bend left] (A1);
                \path [->] (V0) edge (A1bar);
                \path [->] (V0) edge[bend left] (A2);
                \path [->] (V0) edge (A2bar);
                \path [->] (V0) edge[bend right] (A3);
                \path [->] (V0) edge[bend right] (A3bar);
            \end{scope}

            \begin{scope}[
    			>={Stealth[gray]},
    			every edge/.style={draw=red,dashed,very thick}]
                \path [->] (A1) edge[bend left] (V2);
                \path [->] (A1bar) edge (V2);
                \path [->] (A2) edge[bend right] (V2);
                \path [->] (A2bar) edge (V2);
                \path [->] (A3) edge[bend right] (V2);
                \path [->] (A3bar) edge[bend right] (V2);
            \end{scope}
        \end{tikzpicture}
        \caption{Composite picture of the branch graphs}
        \label{fig:half-switch-branch-graph}
    \end{subfigure}%
    \caption{
        A general example for a node-splitting procedure, obtaining the routed graph shown in {\bf (b)} from the routed graph shown in {\bf (a)}.
        While for the latter, we have $\Bran(\lambda_\mathbf{N}) = \{ \mathbf{N}^{i=1}, \mathbf{N}^{j=1}, \mathbf{N}^{k=1} \}$, each of these branches is represented by an individual node $\mathbf{\hat N^\beta}$ with $\Bran(\lambda_\mathbf{\hat N^\beta}) = \{ \mathbf{\bar N}^{\beta}, \mathbf{N}^{\beta}\}$ for the former.
        While in this example, $\mathbf{N}$ has only a single parent / child, this is not generally required this to be the case.
        \newline
        \indent\hspace{1em} In {\bf (c)}, a composite representation of the branch graph of both circuits are shown: They are identical, except that after the node splitting we obtain an additional branch $\mathbf{\bar N}^\beta$ complementing each original branch $\mathbf{N}^\beta$.
        It has the same weak parents as $\mathbf{N}^\beta$, and hence, features the same green (and, analogously, red) arrows. Accordingly, this branch does not add any non-trivial information to the branch graph and does not affect the presence of weak loops.
        Note that when considering the associated supermap, whether we flesh out $\mathbf{N}$ or the nodes $\mathbf{\hat N}$ with monopartite supermaps amounts to different supermaps:
        For {\bf (b)}, we obtain a single agent performing a (potentially routed) operation, while for {\bf (a)} we obtain three agents performing a (potentially routed) operation, dependent on the specific fleshing out.
        By contrast, if we flesh out these nodes with isometries (i.e.\ with no slots), the resulting maps are identical.
    }
    \label{fig:half-switch}

\end{figure}
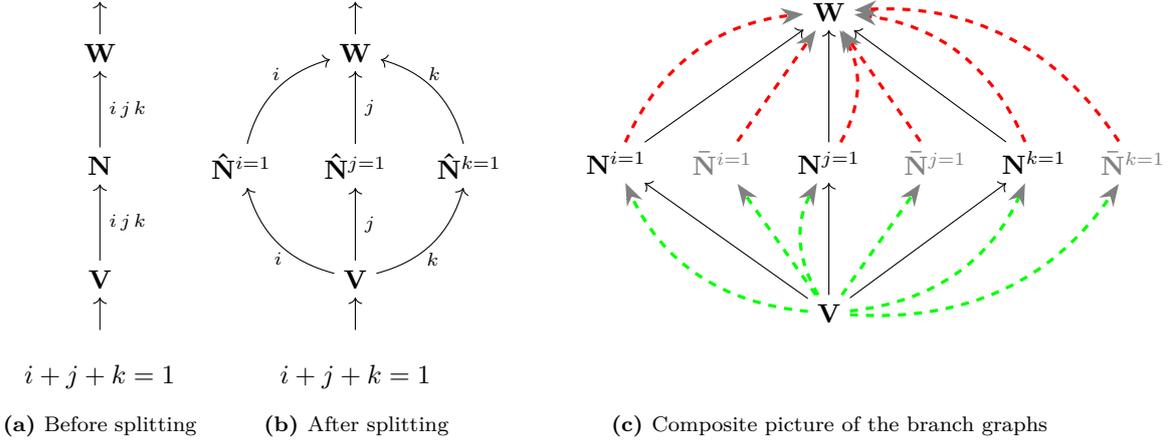

\subsection{Merging nodes in the routed graph}
\label{sec:grouping}

If we compare our construction of $\cG_{\text{QC-QC}(N)}$ (as well as $\cG^\alpha_{\text{QC-QC}(N)}$ and $\cG^\text{split}_{\text{QC-QC}(N)}$) with the routed graphs described in~\cite{Vanrietvelde2023} for the examples of the quantum switch (shown there in Fig.~6) and the Grenoble process (in Fig.~16), one immediately notices that the latter have fewer nodes than those that we provide for the same scenarios.
Indeed, while we can identify our nodes $\mathbf{V}_{1}$ and $\mathbf{V}_{N+1}$ with their nodes $P$ and $F$, respectively, the other nodes in the routed graphs of~\cite{Vanrietvelde2023} correspond more naturally to the agent nodes $\mathbf{A}_k$, and there is no clear equivalent for the remaining \enquote{internal} $\mathbf{V}$-nodes of our representation.

In this section, we will demonstrate that starting from the split-node generic routed graph
$\cG^\text{split}_{\text{QC-QC}(N)}$, we indeed recover their routed graphs and routed circuit decompositions as alternative representations -- albeit at the cost of more complex fleshing outs for the remaining nodes.
Indeed, by merging internal nodes into agent nodes, one can effectively recast the isometries as pre- and post-processing for the respective nodes.
However, we will find that a decomposition with no internal nodes other than $P$ and $F$ is generally not possible for generic QC-QCs when $N>3$.

In order to recover the routed circuits of~\cite{Vanrietvelde2023} from our approach, we will first introduce a slightly more general procedure, demonstrating that the idea of node merging is applicable to a wider variety of routed graphs, before applying it to the generic routed graph which forms the focus of our study.

\begin{figure}
    \begin{subfigure}{0.28\textwidth}
        \centering
        \begin{tikzpicture}[scale=0.9, transform shape]
            \node (V) at (0,4.5) {$\textbf{A}$};
            \node (Vpal) at (-1.5,3) {$\cdots$};
            \node (Vpar) at (1.5,3) {$\cdots$};
            \node (Vchl) at (-1,6) {$\cdots$};
            \node (Vchr) at (1,6) {$\cdots$};
            \node (A) at (0,2.7) {\textbf{V}};
            \node (Apal) at (-1,1.5) {$\cdots$};
            \node (Apar) at (1,1.5) {$\cdots$};

            \begin{scope}[
                every node/.style={fill=white,circle}
                every edge/.style=routedarrow]

                \path [->] (V) edge node[below left] {\scriptsize $\cdot$} (Vchl);
                \path [->] (V) edge node[below right] {\scriptsize $\cdot$} (Vchr);
                \path [<-] (V) edge[bend right] node[above left] {\scriptsize $\cdot$} (Vpal);
                \path [<-] (V) edge node[right] {\footnotesize $k$} (A);
                \path [<-] (V) edge[bend left] node[above right] {\scriptsize $\cdot$} (Vpar);
                \path [->] (Apal) edge node[below right] {\scriptsize $\cdot$} (A);
                \path [->] (Apar) edge node[below left] {\scriptsize $\cdot$} (A);
            \end{scope}
        \end{tikzpicture}
        \caption{Routed graph $\cG$ with \textbf{A} as sole child of \textbf{V}  (cf.~\cref{thm:var-split}).}
        \label{fig:merge-one-before}
    \end{subfigure}
    \hspace{0.04\textwidth}
    \begin{subfigure}{0.32\textwidth}
        \centering
        \begin{tikzpicture}[scale=0.9, transform shape]
            \node (A1) at (-1,4.5) {$\textbf{A}^\beta$};
            \node (A2) at (1.25,4.5) {$\mathbf{A}\mkern-1mu^\gamma$};
            
            \node (V0) at (-1.75,2.7) {$\textbf{V}^\delta$};
            \node (V1) at (-0.25,2.7) {$\textbf{V}^\varepsilon$};
            \node (V2) at (1.25,2.7) {$\textbf{V}^\eta$};
            
            \node (top) at (-1,6) {};
            \node (bottom) at (-1,1.5) {};

            \path [->] (V0) edge (A1);
            \path [->] (V1) edge (A1);
            \path [->] (V2) edge (A2);

            \begin{scope}[
            >={Stealth[gray]},
            every edge/.style={draw=red,dashed,very thick}]
                \path [->] (V0) edge[bend left] (A1);
                \path [->] (V1) edge[bend right] (A1);
            \end{scope}
        \end{tikzpicture}
        \caption{Schematic excerpt of a branch graph conforming with \cref{thm:var-split}.}
        \label{fig:merge-one-branch}
    \end{subfigure}
    \hspace{0.04\textwidth}
    \begin{subfigure}{0.28\textwidth}
        \centering
        \begin{tikzpicture}[scale=0.9, transform shape]
            \node (V) at (0,4.5) {$\mathbf{A}'$};
            \node (Vpal) at (-1.5,3) {$\cdots$};
            \node (Vpar) at (1.5,3) {$\cdots$};
            \node (Vchl) at (-1,6) {$\cdots$};
            \node (Vchr) at (1,6) {$\cdots$};
            \node (Apal) at (-1,1.5) {$\cdots$};
            \node (Apar) at (1,1.5) {$\cdots$};

            \begin{scope}[
                every node/.style={fill=white,circle}
                every edge/.style=routedarrow]

                \path [->] (V) edge node[below left] {\scriptsize $\cdot$} (Vchl);
                \path [->] (V) edge node[below right] {\scriptsize $\cdot$} (Vchr);
                \path [<-] (V) edge[bend right] node[above left] {\scriptsize $\cdot$} (Vpal);
                \path [<-] (V) edge[bend left] node[above right] {\scriptsize $\cdot$} (Vpar);
                \path [->] (Apal) edge node[below right] {\scriptsize $\cdot$} (V);
                \path [->] (Apar) edge node[below left] {\scriptsize $\cdot$} (V);
            \end{scope}
        \end{tikzpicture}
        \caption{Routed graph $\cG^{\text{merge}\uparrow}$ after merging \textbf{V} into \textbf{A}.}
        \label{fig:merge-one-complete}
    \end{subfigure}
    \par\bigskip
    \begin{subfigure}{0.28\textwidth}
        \centering
        \begin{tikzpicture}[scale=0.9, transform shape]
            \node (V) at (0,4.5) {$\textbf{V}$};
            \node (Vpal) at (-1.5,4) {$\cdots$};
            \node (Vpar) at (1.5,4) {$\cdots$};
            \node (Vchl) at (-1,6) {$\cdots$};
            \node (Vchr) at (1,6) {$\cdots$};
            \node (A) at (0,2.7) {\textbf{A}};
            \node (Apal) at (-1,1.5) {$\cdots$};
            \node (Apar) at (1,1.5) {$\cdots$};

            \begin{scope}[
                every node/.style={fill=white,circle}
                every edge/.style=routedarrow]

                \path [->] (V) edge node[below left] {\scriptsize $\cdot$} (Vchl);
                \path [->] (V) edge node[below right] {\scriptsize $\cdot$} (Vchr);
                \path [->] (A) edge[bend left] node[below left] {\scriptsize $\cdot$} (Vpal);
                \path [->] (A) edge node[right] {\footnotesize $k$} (V);
                \path [->] (A) edge[bend right] node[below right] {\scriptsize $\cdot$} (Vpar);
                \path [->] (Apal) edge node[below right] {\scriptsize $\cdot$} (A);
                \path [->] (Apar) edge node[below left] {\scriptsize $\cdot$} (A);
            \end{scope}
        \end{tikzpicture}
        \caption{Routed graph $\cG$ with \textbf{A} as sole parent of \textbf{V} (cf.~\cref{thm:var-split-two}).}
        \label{fig:merge-two-before}
    \end{subfigure}
    \hspace{0.04\textwidth}
    \begin{subfigure}{0.32\textwidth}
        \centering
        \begin{tikzpicture}[scale=0.9, transform shape]
            \node (A1) at (-1,2.7) {$\textbf{A}^\beta$};
            \node (A2) at (1.25,2.7) {$\mathbf{A}\mkern-1mu^\gamma$};
            
            \node (V0) at (-1.75,4.5) {$\textbf{V}^\delta$};
            \node (V1) at (-0.25,4.5) {$\textbf{V}^\varepsilon$};
            \node (V2) at (1.25,4.5) {$\textbf{V}^\eta$};
            
            \node (top) at (-1,6) {};
            \node (bottom) at (-1,1.5) {};

            \path [<-] (V0) edge (A1);
            \path [<-] (V1) edge (A1);
            \path [<-] (V2) edge (A2);

            \begin{scope}[
            >={Stealth[gray]},
            every edge/.style={draw=green,dashed,very thick}]
                \path [<-] (V0) edge[bend right] (A1);
                \path [<-] (V1) edge[bend left] (A1);
            \end{scope}
        \end{tikzpicture}
        \caption{Schematic excerpt of a branch graph conforming with \cref{thm:var-split-two}.}
        \label{fig:merge-two-branch}
    \end{subfigure}
    \hspace{0.04\textwidth}
    \begin{subfigure}{0.32\textwidth}
        \centering
        \begin{tikzpicture}[scale=0.9, transform shape]
            \node (Vpal) at (-1.5,4) {$\cdots$};
            \node (Vpar) at (1.5,4) {$\cdots$};
            \node (Vchl) at (-1,6) {$\cdots$};
            \node (Vchr) at (1,6) {$\cdots$};
            \node (A) at (0,2.7) {$\mathbf{A}'$};
            \node (Apal) at (-1,1.5) {$\cdots$};
            \node (Apar) at (1,1.5) {$\cdots$};

            \begin{scope}[
                every node/.style={fill=white,circle}
                every edge/.style=routedarrow]

                \path [->] (A) edge node[below left] {\scriptsize $\cdot$} (Vchl);
                \path [->] (A) edge node[below right] {\scriptsize $\cdot$} (Vchr);
                \path [->] (A) edge[bend left] node[below left] {\scriptsize $\cdot$} (Vpal);
                \path [->] (A) edge[bend right] node[below right] {\scriptsize $\cdot$} (Vpar);
                \path [->] (Apal) edge node[below right] {\scriptsize $\cdot$} (A);
                \path [->] (Apar) edge node[below left] {\scriptsize $\cdot$} (A);
            \end{scope}
        \end{tikzpicture}
        \caption{Routed graph $\cG^{\text{merge}\downarrow}$ after merging \textbf{V} into \textbf{A}.}
        \label{fig:merge-two-complete}
    \end{subfigure}
    
    \caption{
        Routed and branch graph architectures which allow for node merging according to \cref{thm:merge-one}.
        For the branch graph excerpts, we only show the most relevant arrows for merging, which will no longer be featured in the branch graph obtained afterwards.
        In the associated branch graph, we observe that each branch of $\mathbf{V}$ is associated with exactly one branch of $\mathbf{A}$.
        For details regarding the derivation of the branch graph, see the proof of \cref{thm:var-split} in \cref{sec:proof5}.
    }
    \label{fig:merge}
\end{figure}
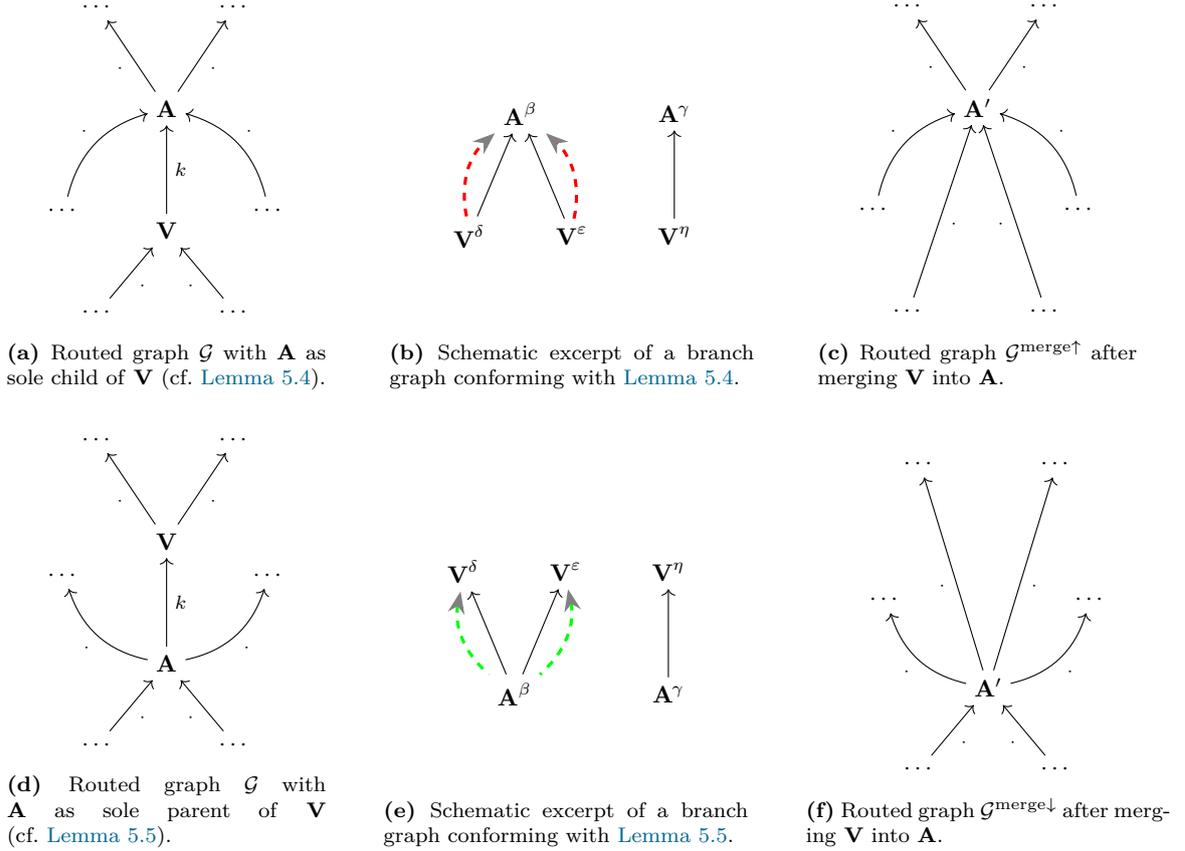

\begin{restatable}{lemma}{generalMergeOne}
    \label{thm:var-split}
    Let $\cG = (\Gamma, (\lambda_\mathbf{N})_{\mathbf{N}\in \mathtt{Nodes}_\Gamma})$ be a valid routed graph without any weak loops,
    and $\mathbf{A}, \mathbf{V} \in \mathtt{Nodes}_\Gamma$ such that $\mathbf{A}$ is the sole child of $\mathbf{V}$ (cf.\ \cref{fig:merge-one-before}),
    and the associated branch structure is such that
    \begin{equation}
        \forall \, \mathbf{A}^\beta \in \Bran(\lambda_\mathbf{A}) \quad
        \exists! \, \mathbf{V}^\gamma \in \Bran(\lambda_{\mathbf{V}}) : \quad
        \mathtt{LinkVal} ( \mathbf{V}^\gamma , \mathbf{A}^\beta ) \neq \emptyset \label{eq:merge-condition} 
    \end{equation}
    Consider another routed graph $\cG^{\text{merge}\uparrow}$ obtained by merging $\mathbf{V}$ into $\mathbf{A}$, as follows:
    \begin{itemize}
        \item the two nodes being replaced by a single node $\mathbf{A}'$, with incoming arrows $\mathrm{in}(\mathbf{A}') = \mathrm{in}(\mathbf{V}) \cup \big( \mathrm{in}(\mathbf{A}) \setminus \mathrm{out}(\mathbf{V}) \big)$ and outgoing arrows $\mathrm{out}(\mathbf{A}') = \mathrm{out}(\mathbf{A})$, as shown in \cref{fig:merge-one-complete},
        \item the resulting node $\mathbf{A}'$ being endowed with the route $\lambda_{\mathbf{A}'} = \lambda_\mathbf{A} \ast \lambda_{\mathbf{V}}$.
    \end{itemize}
    Then $\cG^{\text{merge}\uparrow}$ is a valid routed graph.
\end{restatable}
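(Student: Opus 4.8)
The plan is to verify directly the three ingredients of validity from \cref{def:valid-routed-graph} for $\cG^{\text{merge}\uparrow}$: that it is branched, that it is bi-univocal, and that its branch graph contains only weak loops. A useful guiding intuition is that merging two adjacent slots leaves the underlying skeletal supermap unchanged -- one simply composes the two linear maps plugged into the slots $\mathbf{V}$ and $\mathbf{A}$ -- but since validity is a purely graph-theoretic property, each condition must be re-established at the level of the routes and the branch graph rather than deduced backwards from \cref{def:valid-routed-superunitary}.

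First I would settle the branch structure of the merged route $\lambda_{\mathbf{A}'} = \lambda_\mathbf{A} \ast \lambda_{\mathbf{V}}$. All outgoing arrows of $\mathbf{V}$ enter $\mathbf{A}$ (the sole-child hypothesis), so these are exactly the arrows traced out in the composition. Condition \eqref{eq:merge-condition} forces each branch of $\mathbf{A}$ to be linked to a single branch of $\mathbf{V}$; this is precisely the situation in which the partition induced on the shared arrows by one relation refines that induced by the other, so by the remark following \cref{def:branch} the composite $\lambda_{\mathbf{A}'}$ is again branched. I would then exhibit its branches explicitly, indexed by the linked pairs $(\mathbf{V}^\gamma,\mathbf{A}^\beta)$: the input set of such a branch of $\mathbf{A}'$ is assembled from $\Ind^\text{in}_{\mathbf{V}^\gamma}$ together with the non-shared part of $\Ind^\text{in}_{\mathbf{A}^\beta}$, while its output set is $\Ind^\text{out}_{\mathbf{A}^\beta}$. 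This identification is the bookkeeping backbone for the remaining two steps.

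Next I would establish bi-univocality. The choice relation $\Lambda_{\cG^{\text{merge}\uparrow}}$ is obtained from $\Lambda_{\cG}$ by replacing the two factors $\lambda_{\mathbf{V}}^\text{aug} \ast \lambda_{\mathbf{A}}^\text{aug}$ (composed through the now-internal arrows $\mathrm{out}(\mathbf{V})$) by the single factor $\lambda_{\mathbf{A}'}^\text{aug}$. The content to check is that this replacement neither loses nor multiplies information: using the branch correspondence from the previous step, the branch of $\mathbf{V}$ that happens and the bifurcation choice made within it are determined by, respectively absorbed into, the data of the $\mathbf{A}$-branch that happens, so that marginalising over the traced-out arrows and the $\mathbf{V}$-bifurcation choices is itself a function. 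Post-composing the (functional) choice relation of $\cG$ with this marginalisation therefore again yields a function, so $\cG^{\text{merge}\uparrow}$ is univocal. For the adjoint I would invoke symmetry: reversing all arrows turns ``$\mathbf{A}$ is the sole child of $\mathbf{V}$'' into ``$\mathbf{A}$ is the sole parent of $\mathbf{V}$'', turns the merge into that of \cref{thm:var-split-two}, and leaves $\mathtt{LinkVal}$ (hence \eqref{eq:merge-condition}) invariant; since $\cG^\top$ is univocal by hypothesis, the same argument applied to it gives univocality of $(\cG^{\text{merge}\uparrow})^\top$.

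The hard part will be the branch graph: I must show that contracting each linked pair $\mathbf{V}^\gamma,\mathbf{A}^\beta$ to the single node $\mathbf{A}'^\delta$ creates no non-weak loop. By hypothesis $\cG$ is valid \emph{and} free of weak loops, so its branch graph is acyclic; hence any loop in the contracted graph must pass through some merged node and rely on the identification of $\mathbf{V}^\gamma$ with $\mathbf{A}^\beta$. Lifting such a loop back to $\cG$ and appending the internal strong arrow $\mathbf{V}^\gamma \to \mathbf{A}^\beta$ produces a closed walk, and acyclicity rules out every configuration except a ``parallel path'' of the form $\mathbf{V}^\gamma \to Y \to \mathbf{A}^\beta$ skirting the contracted edge, which after contraction becomes a two-cycle $\mathbf{A}'^\delta \to Y \to \mathbf{A}'^\delta$. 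The remaining and most delicate task is to show that any such induced cycle is weak: this amounts to tracking how weak-parent relations propagate through the contracted strong edge -- the dependence of $Y$ on $\mathbf{V}^\gamma$'s bifurcation choice is, after merging, mediated by $\mathbf{A}^\beta$, so the two arrows of the induced cycle must inherit the same colour -- and this propagation argument, which leans on the uniqueness in \eqref{eq:merge-condition}, is where the bulk of the care will be required.
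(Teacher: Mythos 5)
Your overall strategy (branchedness, then bi-univocality, then the loop analysis) matches the paper's, but two steps contain genuine problems. First, the branch bookkeeping: \eqref{eq:merge-condition} makes the assignment $\mathbf{A}^\beta \mapsto \mathbf{V}^\gamma$ well-defined but not injective, so a single branch $\mathbf{V}^\gamma$ may feed several branches $\mathbf{A}^{\beta_1}, \mathbf{A}^{\beta_2}, \ldots$ of $\mathbf{A}$. An input in $\Ind^\text{in}_{\mathbf{V}^\gamma}$ is related by $\lambda_{\mathbf{V}}$ to all of $\Ind^\text{out}_{\mathbf{V}^\gamma}$, hence by $\lambda_\mathbf{A} \ast \lambda_{\mathbf{V}}$ to the disjoint union $\bigsqcup_{\beta} \Ind^\text{out}_{\mathbf{A}^\beta}$ over all linked $\beta$; the branches of $\mathbf{A}'$ are therefore labelled by $\gamma$, each absorbing all the linked $\mathbf{A}^\beta$'s, and not by linked pairs with output set $\Ind^\text{out}_{\mathbf{A}^\beta}$ as you claim. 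This is not cosmetic: it is precisely because the merged bifurcation choice at $\mathbf{A}'^\gamma$ ranges over that union --- simultaneously encoding which $\mathbf{A}^\beta$ would have happened and the output chosen within it --- that the choice relation can be argued to stay single-valued, the $\mathbf{V}$-level choice being ``washed out''. Relatedly, your adjoint step cannot be ``the same argument by symmetry'': reversing the arrows puts you in the sole-parent configuration of \cref{thm:var-split-two}, where the bifurcation choice at $\mathbf{V}$ sits downstream of $\mathbf{A}$ and is \emph{preserved} rather than washed out; the paper treats this asymmetry with a separate (if brief) argument.

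Second, the loop analysis stops exactly where the proof has to land. Your reduction of the surviving configurations to a single two-cycle $\mathbf{A}'^\delta \to Y \to \mathbf{A}'^\delta$ is incorrect: acyclicity of the branch graph of $\cG$ leaves open parallel paths $\mathbf{V}^\gamma \to Y_1 \to \cdots \to Y_m \to \mathbf{A}^\beta$ of arbitrary length, as well as paths between two distinct branches $\mathbf{A}^{\beta_1}$ and $\mathbf{A}^{\beta_2}$ that become identified inside the same merged branch $\mathbf{A}'^\gamma$. The ``colour propagation'' argument you then defer to is the entire content of the claim, not a detail to be filled in later. The paper closes this step differently and more cheaply: since by hypothesis the branch graph of $\cG$ has no loops at all, and the branch graph of $\cG^{\text{merge}\uparrow}$ is contained in its contraction along the $\mathbf{V}^\gamma \to \mathbf{A}^\beta$ arrows, a new loop would require another directed path between the contracted branches, which is then excluded via the sole-child hypothesis. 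As written, your proposal does not establish the absence of non-weak loops.
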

The statement of this and the following lemma, as well as schematic excerpts of the associated branch graphs, are illustrated in \cref{fig:merge}.
\begin{restatable}{lemma}{generalMergeTwo}
    \label{thm:var-split-two}
    Let $\cG = (\Gamma, (\lambda_\mathbf{N})_{\mathbf{N}\in \mathtt{Nodes}_\Gamma})$ be a valid routed graph without any weak loops,
    and $\mathbf{A}, \mathbf{V} \in \mathtt{Nodes}_\Gamma$ such that $\mathbf{A}$ is the sole parent of $\mathbf{V}$ (cf.\ \cref{fig:merge-two-before})
    and the associated branch structure is such that
    \begin{equation}
        \forall \, \mathbf{A}^\beta \in \Bran(\lambda_\mathbf{A}) \quad
        \exists! \, \mathbf{V}^\gamma \in \Bran(\lambda_{\mathbf{V}}) : \quad
        \mathtt{LinkVal} ( \mathbf{A}^\beta, \mathbf{V}^\gamma ) \neq \emptyset \label{eq:merge-condition-two}
    \end{equation}
    Consider another routed graph $\cG^{\text{merge}\downarrow}$ obtained by merging $\mathbf{V}$ into $\mathbf{A}$, similarly to \cref{thm:var-split}. 
    Then $\cG^{\text{merge}\downarrow}$ is a valid routed graph.
\end{restatable}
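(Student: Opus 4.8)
The plan is to obtain \cref{thm:var-split-two} as the adjoint dual of \cref{thm:var-split}, rather than reproving it from scratch: I would apply the already-established merging lemma for a sole child to the adjoint graph $\cG^\top = (\Gamma^\top, (\lambda_\mathbf{N}^\top)_\mathbf{N})$ and then transport the conclusion back. The guiding observation is that reversing all arrows turns ``$\mathbf{A}$ is the sole parent of $\mathbf{V}$'' into ``$\mathbf{A}$ is the sole child of $\mathbf{V}$'', so that the hypothesis of \cref{thm:var-split-two} for $\cG$ becomes exactly the hypothesis of \cref{thm:var-split} for $\cG^\top$, and the downward merge $\cG^{\text{merge}\downarrow}$ corresponds under this reversal to the upward merge $(\cG^\top)^{\text{merge}\uparrow}$.

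First I would set up the precise dictionary between the two situations. Since an arrow $\mathbf{A} \to \mathbf{V}$ in $\Gamma$ becomes an arrow $\mathbf{V} \to \mathbf{A}$ in $\Gamma^\top$ carrying the same index set, one has $\mathtt{LinkVal}_{\cG^\top}(\mathbf{V}^\gamma, \mathbf{A}^\beta) = \mathtt{LinkVal}_{\cG}(\mathbf{A}^\beta, \mathbf{V}^\gamma)$ (the branch-membership conditions defining $\mathtt{LinkVal}$ are insensitive to the global orientation, and the branches of $\lambda^\top_\mathbf{N}$ are those of $\lambda_\mathbf{N}$ with input and output sets swapped). Hence the branch condition \cref{eq:merge-condition-two} for $\cG$ is literally \cref{eq:merge-condition} for $\cG^\top$. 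I would then check that merging commutes with the adjoint, i.e.\ $(\cG^{\text{merge}\downarrow})^\top = (\cG^\top)^{\text{merge}\uparrow}$. At the level of the indexed graph this is a direct bookkeeping check: reversing the arrow sets $\mathrm{in}(\mathbf{A}') = \mathrm{in}(\mathbf{A})$ and $\mathrm{out}(\mathbf{A}') = \mathrm{out}(\mathbf{V}) \cup (\mathrm{out}(\mathbf{A}) \setminus \mathrm{in}(\mathbf{V}))$ of the downward-merged node reproduces exactly the arrow sets prescribed by \cref{thm:var-split} applied in $\cG^\top$. At the level of the route it reduces to the identity $(\lambda_\mathbf{A} \ast \lambda_\mathbf{V})^\top = \lambda_\mathbf{A}^\top \ast \lambda_\mathbf{V}^\top$, which follows immediately from the Boolean-partial-trace formula \cref{eq:compose_rel} (transposing swaps the upper and lower indices of each factor while leaving the summed-over linking index untouched), together with the commutativity of the link product.

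With this dictionary in hand, I would invoke the fact that validity is preserved under taking adjoints. Bi-univocality is symmetric by its very definition (\cref{def:biunivocality}), since $\cG$ bi-univocal means both $\cG$ and $\cG^\top$ are univocal, and $(\cG^\top)^\top = \cG$. For the loop condition I would show that the branch graph of $\cG^\top$ is obtained from that of $\cG$ by reversing every arrow and exchanging the green and red dashed colours (solid strong-parent arrows reverse; a green weak-parent arrow of $\cG$, being a weak-parent relation read off the adjoint in \cref{def:branch_graph}, becomes a red one of $\cG^\top$, and vice versa). This transformation maps cycles to cycles and monochromatic cycles to monochromatic cycles, so it preserves both acyclicity and the ``only weak loops'' property of \cref{def:weak-loops,def:valid-routed-graph}; in particular a valid graph with no weak loops is sent to a valid graph with no weak loops, so the hypotheses of \cref{thm:var-split} indeed hold for $\cG^\top$. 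Applying \cref{thm:var-split} to $\cG^\top$ then yields that $(\cG^\top)^{\text{merge}\uparrow}$ is valid, and since this equals $(\cG^{\text{merge}\downarrow})^\top$, the adjoint-symmetry of validity gives that $\cG^{\text{merge}\downarrow}$ is itself valid, as claimed.

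The main obstacle I anticipate is the careful justification of the branch-graph/adjoint correspondence: one must verify that strong- and weak-parent relations behave as described under arrow reversal -- in particular that the $\mathtt{1Dim}$ data and the ``not one-dimensional, or strictly more than one element'' criterion of \cref{def:strong_parent} transport correctly, and that the red dashed arrows (which are already defined via the adjoint) swap roles cleanly with the green ones. Once this correspondence is nailed down, the rest is the routine bookkeeping of the two previous paragraphs, and no genuinely new loop analysis is required beyond that already carried out for \cref{thm:var-split}.
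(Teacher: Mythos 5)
Your proposal is correct and takes essentially the same route as the paper, whose entire proof reads ``This is the adjoint scenario for \cref{thm:var-split}, so the proof follows by analogy''; your argument is simply a careful spelling-out of that analogy (the dictionary $\mathtt{LinkVal}_{\cG^\top}(\mathbf{V}^\gamma,\mathbf{A}^\beta)=\mathtt{LinkVal}_{\cG}(\mathbf{A}^\beta,\mathbf{V}^\gamma)$, the identity $(\cG^{\text{merge}\downarrow})^\top=(\cG^\top)^{\text{merge}\uparrow}$, and the preservation of validity under adjoints via the reversed, colour-swapped branch graph). All of these checks are sound, so no further comment is needed.
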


The proofs of these lemmas are given in \cref{sec:proof5}.
While the naming of the nodes $\mathbf{A}$ and $\mathbf{V}$ are highly suggestive with regard to the generic routed graphs and its alternative versions, they are not intended as restrictions with regard to the scope of these two lemmas.

Note that the merged routed graphs $\cG^{\text{merge}\downarrow}$ and $\cG^{\text{merge}\uparrow}$ are essentially created by merging the branches $\mathbf{A}^\beta$ and $\mathbf{V}^\gamma$.
In the branch graph, the respective branches are merged along their respective strong parent relations.
The branches of the merged node feature a subset of the union of the weak parent/child relations of the original nodes, with weak parent relations \enquote{within} merged branches vanishing.
Additionally, due to the potentially reduced number of branches and the coarse-graining of the image of the choice function, generally some green arrows \emph{to} $\mathbf{V},\mathbf{A}$ as well as some red arrows \emph{from} these nodes may vanish from the routed graph under this coarse-graining.\footnote{However, for the generic routed graph as our main case of interest, this will not be the case.}

The resulting routed graphs are not only valid, but indeed generate, in a precise sense, the same class of processes by means of routed circuit representations.

\begin{restatable}{proposition}{generalMergeThree}
    \label{thm:merge-one}
    Any process with a routed circuit decomposition (as defined in \cref{def:decomposition}) according to the routed graph $\cG$, where $\mathbf{V}$ is fleshed out with a routed isometry (an \enquote{internal} node), can also be expressed with a routed circuit decomposition according to $\cG^{\text{merge}\uparrow}$ / $\cG^{\text{merge}\downarrow}$, related to $\cG$ via \cref{thm:var-split} / \cref{thm:var-split-two}, and vice versa.
\end{restatable}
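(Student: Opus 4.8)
The plan is to establish both directions of the equivalence by exploiting the fact, built into \cref{def:skeletal_supermap}, that the skeletal supermap composes the operations plugged into its slots via identity channels along the arrows of the graph; merging $\mathbf{V}$ into its sole child $\mathbf{A}$ (to form $\cG^{\text{merge}\uparrow}$) amounts precisely to absorbing the identity channel on the connecting arrow into a single slot. Throughout I may assume, by \cref{thm:var-split} and \cref{thm:var-split-two}, that $\cG^{\text{merge}\uparrow}$ and $\cG^{\text{merge}\downarrow}$ are themselves valid routed graphs, so only the equivalence of routed circuit decompositions (in the sense of \cref{def:decomposition}) remains to be shown. I treat the $\cG^{\text{merge}\uparrow}$ case in detail; the $\cG^{\text{merge}\downarrow}$ case follows by the same argument with pre-processings replaced by post-processings, or simply by passing to the adjoint graph.

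\emph{Forward direction.} Given a routed circuit decomposition of a process $\Phi$ according to $\cG$, with a routed isometry $M_\mathbf{V}$ at the internal node $\mathbf{V}$ and arbitrary admissible fleshings out $M_\mathbf{N}$ elsewhere (in particular a monopartite routed superisometry $M_\mathbf{A}$ at $\mathbf{A}$), I define a decomposition according to $\cG^{\text{merge}\uparrow}$ by leaving all other slots untouched and fleshing out the merged node $\mathbf{A}'$ with $M_{\mathbf{A}'} \coloneqq M_\mathbf{V} * \id * M_\mathbf{A}$, the link product taken through the (identity-channel) arrow $\mathbf{V}\to\mathbf{A}$. By the composition rule for routed maps \cref{eq:compose-routed-maps}, $M_{\mathbf{A}'}$ follows the route $\lambda_{\mathbf{A}'} = \lambda_\mathbf{A} * \lambda_\mathbf{V}$ prescribed in \cref{thm:var-split}; and since pre-composing a monopartite routed superisometry with a routed isometry yields again a monopartite routed superisometry (the isometry is absorbed as an additional pre-processing stage of the comb, introducing no ancillas), $M_{\mathbf{A}'}$ is admissible. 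The two fleshed-out skeletal supermaps then compute the same linear map: they differ only in that the arrow $\mathbf{V}\to\mathbf{A}$, an identity channel in $\cG$, is now internal to the single slot $\mathbf{A}'$, leaving the overall composite unchanged.

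\emph{Backward direction and main obstacle.} Conversely, given a decomposition of $\Phi$ according to $\cG^{\text{merge}\uparrow}$ with a monopartite routed superisometry $M_{\mathbf{A}'}$ at $\mathbf{A}'$, I reconstruct a decomposition according to $\cG$ by splitting $M_{\mathbf{A}'}$ across the re-introduced internal arrow. Here I invoke the merge condition \cref{eq:merge-condition}, which guarantees that each branch of $\mathbf{A}$ is linked to a unique branch of $\mathbf{V}$, so that $\lambda_\mathbf{V}$ is a branch-wise bijection between its practical input and output sets. I assign to the arrow $\mathbf{V}\to\mathbf{A}$ sectors matching, branch by branch, the practical input space of $\mathbf{V}$, take $M_\mathbf{V}$ to be the corresponding sector-faithful routed isometry following $\lambda_\mathbf{V}$, and set $M_\mathbf{A} \coloneqq M_\mathbf{V}^\dagger * M_{\mathbf{A}'}$ (link product through the $\mathrm{in}(\mathbf{V})$ wires), so that $M_\mathbf{A}$ decodes the transmitted state before applying $M_{\mathbf{A}'}$. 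Because $M_\mathbf{V}$ is an isometry, $M_\mathbf{V} * M_\mathbf{A} = M_{\mathbf{A}'}$ on the practical spaces, so the reconstructed circuit reproduces $\Phi$, and the branch bijection forces $\lambda_\mathbf{V}^\top * \lambda_\mathbf{V}$ to act as the identity on the practical output sectors, whence $\lambda_\mathbf{V}^\top * (\lambda_\mathbf{A} * \lambda_\mathbf{V})$ collapses to $\lambda_\mathbf{A}$. The step I expect to be most delicate is precisely the verification that $M_\mathbf{A}$ is admissible—that the route composition reduces correctly and that isometricity/superisometricity is inherited from $M_{\mathbf{A}'}$—since it is here, and only here, that the single-linked-branch hypothesis \cref{eq:merge-condition} is indispensable; for $\cG^{\text{merge}\downarrow}$ the analogous role is played by \cref{eq:merge-condition-two}.
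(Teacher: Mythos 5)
Your easy direction (absorbing $M_\mathbf{V}$ into $M_{\mathbf{A}'}\coloneqq M_\mathbf{V}\ast M_\mathbf{A}$ via the link product) coincides with the paper's, and your overall strategy for the converse --- choose the dimension assignment of the reintroduced wire freely and make $M_\mathbf{V}$ essentially trivial --- is also in the spirit of the paper's dimension-counting argument. However, your explicit reconstruction $M_\mathbf{A}\coloneqq M_\mathbf{V}^\dagger\ast M_{\mathbf{A}'}$ has a genuine gap: it rests on the claim that \cref{eq:merge-condition} makes $\lambda_\mathbf{V}$ a branch-wise bijection, so that $\lambda_\mathbf{V}^\top\ast\lambda_\mathbf{V}$ acts as the identity and $\lambda_\mathbf{V}^\top\ast(\lambda_\mathbf{A}\ast\lambda_\mathbf{V})$ collapses to $\lambda_\mathbf{A}$. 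That is false in general: \cref{eq:merge-condition} only says that each branch of $\mathbf{A}$ links to a \emph{unique} branch of $\mathbf{V}$; it does not prevent a single branch $\mathbf{V}^\gamma$ with several output values from linking to several branches of $\mathbf{A}$, and $\lambda_\mathbf{V}^\top\ast\lambda_\mathbf{V}$ relates \emph{any} two output values lying in the same branch, which is the identity only when every branch has a single output value. When $\mathbf{V}^\gamma$ feeds two branches $\mathbf{A}^{\beta_1},\mathbf{A}^{\beta_2}$, your $M_\mathbf{A}$ is only guaranteed to follow the coarser route $\lambda_\mathbf{V}^\top\ast\lambda_\mathbf{A}\ast\lambda_\mathbf{V}$, which relates inputs of $\beta_1$ to outputs of $\beta_2$; so $M_\mathbf{A}$ need not follow $\lambda_\mathbf{A}$, and the reconstructed fleshing out is then inadmissible for $\cG$. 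A correct argument must either project back onto the individual $\mathbf{A}$-branches (with appropriate renormalisation, and a check that the result is still a superisometry) or, as the paper does, be phrased at the level of the \emph{set} of composites $M_\mathbf{A}\circ M_\mathbf{V}$ realisable for the chosen dimensions rather than via a single closed-form splitting.

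Second, you dispatch $\cG^{\text{merge}\downarrow}$ as following \enquote{by the same argument \ldots\ or simply by passing to the adjoint graph}. The paper treats this case separately, and for a reason: the requirement that the internal node $\mathbf{V}$ be fleshed out by a routed \emph{isometry} is not invariant under taking adjoints. In the $\uparrow$ case any isometric $M_\mathbf{V}$ transmits all of $\HH^{\text{in}(\mathbf{V})}$ forward, so a faithful embedding suffices; in the $\downarrow$ case the wire $\mathbf{A}\to\mathbf{V}$ sits \emph{before} the isometry, so its dimension bounds what can reach $\HH^{\text{out}(\mathbf{V})}$, and the adjoint of your $\uparrow$ construction would turn $M_\mathbf{V}$ into a co-isometry, which is not an admissible fleshing out. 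The paper's resolution is to match the wire dimension to $\mathbf{V}$'s output so that $M_\mathbf{V}$ can be taken unitary, and then to argue separately (via state preparation inside the superisometry at $\mathbf{A}$, as in the proof of \cref{thm:superisometry}) that restricting $\mathbf{V}$ to unitaries loses no processes. Your proposal needs an explicit treatment of this asymmetry.
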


The proof of this lemma is given in \cref{sec:proof5}.
Essentially, the merge composes the channels and supermaps fleshing out the skeletal supermap associated with $\cG$ using the link product.

For $\cG^{\text{merge}\downarrow}$, we would like to point out that the dimension of $\HH^{\mathbf{A} \to \mathbf{V}}$ could generally restrict the information flow from $\HH^{\mathrm{in}(\mathbf{A})}$ to $\HH^{\mathrm{out}(\mathbf{V})}$. 
Therefore, the equivalence of both representations does not necessarily hold on the level of the individual skeletal supermaps associated with $\cG$ and $\cG^{\text{merge}\downarrow}$, respectively.
However, as the same restriction may alternatively be introduced on the level of the fleshing out, this fact does not impair the fact that both routed graphs indeed generate the same set of routed circuit decompositions.

In practice, when merging an internal node $\mathbf{V}$ into an agent node $\mathbf{A}$, we will denote this as $\mathbf{V} \hookrightarrow \mathbf{A}$, merging the internal node \enquote{into} the agent node, which will then be denoted as $\mathbf{A}'$.

We now continue by applying these results to our generic routed graph $\cG_{\text{QC-QC}(N)}$.
For $N=2$ and $N=3$, we see that a rewriting in this fashion allows to get rid of all internal nodes in $\cG_{\text{QC-QC}(N)}$, except for $\mathbf{V}_1$ and $\mathbf{V}_{N+1}$.

Considering $\cG^\text{split}_{\text{QC-QC}(3)}$ as an example, we observe that both $\mathbf{\hat V}_2^{k_1}$ and $\mathbf{\hat V}_3^{\K_2}$ have agent nodes as unique parent/child nodes.
Therefore, we may absorb $\mathbf{\hat V}_2^{k_1} \hookrightarrow \mathbf{A}_{k_1}$ and  $\mathbf{\hat V}_3^{\K_2} \hookrightarrow \mathbf{A}_{\N \setminus \K_2}$.
The resulting routed graph is illustrated in \cref{fig:group-routed-graph-grenoble} (denoting again the agents $(\mathbf{A}_1, \mathbf{A}_2, \mathbf{A}_3)$ as $(\mathbf{A}, \mathbf{B}, \mathbf{C})$).%
\footnote{
    To see explicitly how the composition of branches plays out, consider the case of $\mathbf{\hat V}_2^{B} \hookrightarrow \mathbf{B}$ in detail.
    Here,
    $\Bran(\lambda_{\mathbf{\hat V}_2^{B}}) = \{ \mathbf{V}_2^{B}, \mathbf{\bar V}_2^{B} \}$ and
    $\Bran(\lambda_\mathbf{B}) = \{ \mathbf{B}^{\K_n} \mid \K_n \subset \cN \setminus \mathbf{B} \}$.
    Then, $\mathbf{V}_2^{B}$ and $\mathbf{B}^\emptyset$ are composed, while the 1-dimensional branch is trivially composed with all other branches of $\mathbf{B}$.
    This leaves their respective operations unchanged, as the operation for $\mathbf{\bar V}_2^{B}$,
    specified by \cref{eq:vacuum-isometry}, acts identically (and on one-dimensional tensor factors, cf.\ \cref{eq:1d-factor}).
}
For $N=2$, analogous arguments apply.

Indeed, this allows us to ultimately recover the routed graph examples from \cite{Vanrietvelde2023}.\footnote{
    By contrast, introducing additional arrows to represent ancillary systems, as described in \cref{sec:var-ancillary}, introduces additional parents for the $\mathbf{V}$-nodes, which prevents us from applying the same technique for simplification: each node $\mathbf{V}^{\K_n}_{n+1}$ features additional parents $\mathbf{V}^{\K_n \setminus k}_{n}$.
}
\begin{example}[Quantum switch]
    \label{ex:quantum-switch}
    To represent the quantum switch without any internal node to represent $V_2$ explicitly, we take \cref{fig:split-routed-graph-switch} as a starting point. 
    Due to the simplicity of the routed graph, we have some freedom for the rewrite:
    Both for $\mathbf{\hat V}_2^A$ and $\mathbf{\hat V}_2^B$, we may choose to absorb them either into $\mathbf{A}$ or $\mathbf{B}$, i.e., both \cref{thm:var-split} and \cref{thm:var-split-two} could be applied we to recover the quantum switch as presented in Fig.~6 of \cite{Vanrietvelde2023}.
    For \cref{fig:group-routed-graph-switch} and this example, we choose $\mathbf{V}_2^A \hookrightarrow \mathbf{A}$ and $\mathbf{V}_2^B \hookrightarrow \mathbf{B}$.\footnote{
        This results in $\mathbf{A'} \xrightarrow{X^B_A} \mathbf{B'}$, with the alternative resulting in $\mathbf{A'} \xrightarrow{X^A_\emptyset} \mathbf{B'}$, and equally when exchanging $A$ / $B$.}

    Regarding the fleshing out, we consider the resulting node $\mathbf{B}'$ as an example.
    Here, we take the composition of the fleshing outs $\dket{\mathbf{B}}$ and $\dket{\mathbf{\hat V}_2^B}$ from $\cG^\mathrm{split}_{\text{QC-QC}(N)}$ (according to the link product) along the node merging as the new fleshing out:
    \begin{align}
        \dket{\mathbf{B}'}
        &= \dket{\mathbf{B}} \ast \dket{\mathbf{\hat V}_2^B} \notag \\
        &= \dket{J_{\textbf{B}}^\text{in}}
            \ast \dket{J_{\textbf{B}}^\text{out}}
            \ast \dket{J_{\textbf{V}^{B}_{2}}^\text{in}} 
            \ast \dket{\hat{V}_{2}^{B}}
            \ast \dket{J_{\textbf{V}^{B}_{2}}^\text{out}} \, .
    \end{align}
    and similarly for $\dket{\mathbf{A}}$.
    This reconstructs the routed graph of Fig.\ 6 of \cite{Vanrietvelde2023} when relabelling their indices accordingly.\footnote{
        Specifically, one translates their index values as follows:
        \begin{align}
            i,l = 1 \to \XX_1^A = \{ A \} \qquad
            j,k = 1 \to \XX_1^B = \{ B \} \qquad
            m = 1 \to \XX_2^A = \{A, B\} \qquad
            n = 1 \to \XX_2^B = \{A, B\} \, .
        \end{align}
        For all of their indices, if they take the value $0$, we set the associated index $\XX_n^k = \ravnothing$.
    }
\end{example}

\begin{figure}
    \centering
    \includegraphics{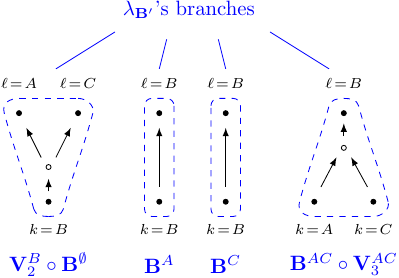}
    \caption{
    The branch structure for $\mathbf{B}'$ in \cref{fig:split-routed-graph-grenoble},
    given by 
    $\lambda_{\mathbf{B}'} = \lambda_{\mathbf{\hat{V}}_3^{AC}} \ast \lambda^{\vphantom{AC}}_{\mathbf{B}} \ast \lambda_{\mathbf{\hat V}_2^B}$    
    with a fine-graining into the branches of the original routed graph being shown in \cref{fig:branch-graphs}.
    The branch structures for $\mathbf{A}'$ and $\mathbf{C}'$ are given analogously by permutation.
    Due to the node merging, non-trivial index relations and bifurcation choices are carried over to the agent node $\mathbf{B}'$, reproducing the branch structure given for the nodes of the Grenoble process in Fig.~23 of \cite{Vanrietvelde2023}.}
    \label{fig:branches-grouping}
\end{figure}

\begin{example}[Grenoble process]
    \label{ex:Grenoble}
    The Grenoble process is a tripartite process with dynamical causal order, and was originally introduced in \cite{Wechs2021}.
    We use the transformation techniques introduced in this section to reconstruct the routed quantum circuit of the Grenoble process given in \cite{Vanrietvelde2023}.
    To reconstruct their routed graph, we take \cref{fig:split-routed-graph-grenoble} as as a starting point.
    We merge
    $\mathbf{\hat V}_2^A \hookrightarrow \mathbf{A} \hookleftarrow \mathbf{\hat V}_3^{BC}$,
    $\mathbf{\hat V}_2^B \hookrightarrow \mathbf{B} \hookleftarrow \mathbf{\hat V}_3^{AC}$,
    $\mathbf{\hat V}_2^C \hookrightarrow \mathbf{C} \hookleftarrow \mathbf{\hat V}_3^{AB}$,
    obtaining \cref{fig:group-routed-graph-grenoble}.
    In this graph, the arrows connecting the agents feature multiple indices, as several node mergings induce an arrow between those. 
    
    Renaming $\mathbf{V}_1$ to $P$ and $\mathbf{V}_4$ to $F$, we obtain the same skeletal supermap as in \cite{Vanrietvelde2023} (up to a renaming of the index values and Hilbert spaces).
    For example, for the merged node $\mathbf{B}'$ we obtain the branch structure shown in \cref{fig:branches-grouping}, reflecting Fig.~23 of \cite{Vanrietvelde2023}.

    Regarding the fleshing out, we again consider the node $\mathbf{B}'$ as an example.
    We take the composition of the fleshing outs $\dket{\mathbf{B}}$, $\dket{\mathbf{\hat V}_2^B}$ and $\dket{\mathbf{\hat V}_3^{AC}}$ along the node merging as the new fleshing out:
    \begin{align}
        \dket{\mathbf{B}'}
        &= \dket{\mathbf{\hat{V}}_3^{AC}}
            \ast \dket{\mathbf{B}}
            \ast \dket{\mathbf{\hat V}_2^B} \notag \\
        &= \dket{J_{\textbf{V}^{AC}_{3}}^\text{in}} 
            \ast \dket{\hat{V}_{3}^{AC}}
            \ast \dket{J_{\textbf{V}^{AC}_{3}}^\text{out}}
            \ast \dket{J_{\textbf{B}}^\text{in}}
            \ast \dket{J_{\textbf{B}}^\text{out}}
            \ast \dket{J_{\textbf{V}^{B}_{2}}^\text{in}} 
            \ast \dket{\hat{V}_{2}^{B}}
            \ast \dket{J_{\textbf{V}^{B}_{2}}^\text{out}} \, .
    \end{align}
    How this fleshing out relates to the fleshing out specified in Fig.~25 of \cite{Vanrietvelde2023} is described in detail in \cref{sec:Grenoble-reconstruction}.
\end{example}

For larger $N$, the possibility of merging nodes into agent nodes in $\cG_{\text{QC-QC}(N)}$ applies analogously for the internal nodes $\mathbf{V}^{\K_1}_2$ and $\mathbf{V}^{\K_{N-1}}_N$.
For the remaining internal nodes, however, the technique is no longer applicable, as the remaining $\mathbf{V}$-nodes in $\cG^\text{split}_{\text{QC-QC}(N)}$ will feature multiple parents and children respectively.
However, a specific fleshing out of this routed graph (according to a specific QC-QC implementation) may potentially be compatible with a more fine-grained branch structure.
Imposing such a branch structure can effectively restrict the set of processes supported by the corresponding routed graph.
In \cref{sec:local}, we investigate this fact in more detail and reproduce an example of a 4-partite process \cite{Salzger2022, Lukas} which cannot be entirely fine-grained in such a fashion.

Note that we do not reconstruct here the representation of the quantum 3-switch given in Section 4.1 of \cite{Vanrietvelde2023}.
Doing so would require further adjustments to the index constraints on the routed graph, restricting the set of supported processes even further. We leave this for future work.

\subsection{Removing arrows in the routed graph}
\label{sec:less-arrows}

In addition to considering alternatives to the generic routed graph $\cG_{\text{QC-QC}(N)}$ to obtain arbitrary QC-QCs, we may consider ways to simplify the routed graph when the QC-QC has additional structure yielding simplified connectivity.

Consider a given fleshing out of a skeletal supermap for the generic routed graph $\cG_{\text{QC-QC}(N)}$.
If for a given arrow $\mathbf{M} \to \mathbf{N}$, the only sector that can be populated (i.e.\ whose index value can be selected) is $\HH^{\mathbf{M}\to\mathbf{N}}_\ravnothing$ (cf.\ \cref{subsubsec:skeletal_QCQC}), we may remove the respective arrow from the routed graph -- if for the adjoint of the graph, the respective space is also never populated.
This follows as in this case, the respective 1-dimensional tensor factor in the formal spaces $\HH^{\mathrm{out}(\mathbf{M})}$ and $\HH^{\mathrm{in}(\mathbf{N})}$ may be omitted (up to an isomorphism), with the index associated to the arrow being constantly $\ravnothing$ for all remaining bifurcation choices $\vec{\bm{\ell}}$ (see \cref{subsubsec:biunivoc_GQCQC}) -- both for the original and the adjoint graph.
The branch graph and the skeletal supermap simplify accordingly:
in comparison to the original branch graph, the strong parent relations associated with the respective arrows (i.e.,  branches $\mathbf{M}^\gamma$ and $\mathbf{N}^\beta$) disappear, as they require an arrow from $\mathbf{M} \to \mathbf{N}$ in the first place.
Furthermore, due to the bifurcation choice for the branch happening being removed, it never happens\footnote{And therefore, all potential further bifurcation choices made within this branch never matter.} and all green arrows may be removed as well.
To finally remove the respective red arrows as well, the same argument is applied to the adjoint graph.
We can then remove all branches no longer associated with any arrows.

For the remaining weak arrows in the graph, one needs to check whether the corresponding functional dependence in the choice function remains.
For any routed graphs (even beyond the generic routed graph), each node will no longer have weak parents if only a single branch of it remains, or if in the resulting graph a given branch $\mathbf{M}^\gamma$ is a weak parent of just one branch of a given node $\mathbf{N}$: a non-trivial choice always requires multiple possibilities (for the branch which happens). 

Specifically for a given implementation of a QC-QC (and therewith, a choice of $\{ V_{\K_{n}\setminus k,k}^{\to \ell} \}$ yielding the respective process according to \cref{eq:Choi_V_QCQC}), arrows may be removed in particular if for some $\ell,n$, the condition $V_{\K_{n}\setminus k,k}^{\to \ell} = 0$ is satisfied for all $\K_{n}, k$.
Then in the image of the isometry given by $\dket{\mathbf{V}_{n+1}}$, only the sector $\HH^{\mathbf{V}_{n+1}\to\mathbf{A}_\ell}_\ravnothing$ will be populated, and $\mathcal{X}^\ell_{n+1} = \ravnothing$ for any choice of operations $A_k$ and past state $\ket{\psi}^P$.
By extension, the sector $\HH^{\mathbf{A}_\ell\to\mathbf{V}_{n+2}}_\ravnothing$ is fixed as well.
Hence, both the arrows $\mathbf{V}_{n+1} \xrightarrow{\mathcal{X}^\ell_{n+1}} \mathbf{A}_\ell$ and $\mathbf{A}_\ell \xrightarrow{\mathcal{X}^\ell_{n+1}} \mathbf{V}_{n+2}$ may be omitted from the routed graph, and we may modify $\tilde{V}_{n+2}$ such that for all $m \not\in \K_{n} \cup \ell$ and all $\K_{n} \not\ni \ell$, $V_{\K_{n},\ell}^{\to m} = 0$ holds -- which then ensures the analogous property for the adjoint of the graph.
For the associated branch graph, incoming green dashed arrows will generally always vanish if the respective branch is now certain either to happen or not to happen, and analogously for the red dashed arrows.
In such a QC-QC, there is always an implementation in which $A_\ell$ never acts in the $(n+1)$th position, which allows to remove the corresponding arrows.

\begin{example}
    Consider a bipartite process compatible with a fixed order $A \prec B$.
    Then clearly, there exist implementations of this process where $V^{\to B}_{\emptyset,\emptyset} = V^{\to A}_{\emptyset,B} = 0$.
    Hence, for any choice of agent operations, $\XX^B_1 = \XX^A_2 = \ravnothing$, and there exists an implementation in which $V^{\to F}_{B,A} = 0$ as well.   
    Accordingly, we may remove the arrows
    \vspace{-.1cm}
    \begin{equation}
        \mathbf{V}_{1} \xrightarrow{\mathcal{X}^B_{\emptyset}} \mathbf{B}, \;\
        \mathbf{B} \xrightarrow{\mathcal{X}^B_{\emptyset}} \mathbf{V}_2, \;\
        \mathbf{V}_{2} \xrightarrow{\mathcal{X}^A_{B}} \mathbf{A} \;\ \text{and} \;\
        \mathbf{A} \xrightarrow{\mathcal{X}^A_{B}} \mathbf{V}_3 \; ,
    \end{equation}
    from the graph in \cref{fig:routed_graphs_QCQC_N_2}, leaving us with the acyclic routed graph
    \vspace{-.1cm}
    \begin{equation}
        \longrightarrow \mathbf{V}_1
        \xrightarrow{\XX^A_\emptyset} \mathbf{A}
        \xrightarrow{\XX^A_\emptyset} \mathbf{V}_2
        \xrightarrow{\XX^B_A} \mathbf{B} 
        \xrightarrow{\XX^B_A} \mathbf{V}_3
        \longrightarrow \!\quad .
    \end{equation}
    Here, the indices always take the values $\XX^A_1=\{A\}, \ \XX^B_2=\{A,B\}$, and could be omitted from the graph.
    The resulting branch graph is then almost identical to the routed graph, as for each node, there is only a single branch, which always happens.
    \begin{equation}
        \mathbf{V}_1^\emptyset
        \xrightarrow{\XX^A_\emptyset} \mathbf{A}^\emptyset
        \xrightarrow{\XX^A_\emptyset} \mathbf{V}_2^A
        \xrightarrow{\XX^B_A} \mathbf{B}^A
        \xrightarrow{\XX^B_A} \mathbf{V}_3^{AB}.
    \end{equation}
\end{example}

\section{Discussion}
\label{sec:discussion}

In providing a representation of arbitrary QC-QCs in terms of routed quantum circuits, we have established an application of the latter framework well beyond the particular examples previously considered, extending it to a wide class of processes.
Our construction notably provides a more fine-grained picture of the logical structure of QC-QCs, with the control of the causal order being represented by indices $\mathcal{X}_n^k$ in the routed graph.

The fact that we achieved this by specifying just a single template for the routed graph and a generic fleshing out is testament to the ability of the framework to condense entire classes of processes in terms of their underlying information-processing structure, which is concisely encapsulated in the routed graph.
In addition, the more fine-grained branch graph it brings forth may be closely related to the underlying causal structure of the respective process, as has been tentatively demonstrated in \cite{Ormrod2023}.

\subsection{Contrasting routed circuit decompositions}

In Section~4 of \cite{Vanrietvelde2023}, explicit constructions were given for two instances of QC-QCs: the quantum switch \cite{Chiribella2013} and the Grenoble process \cite{Wechs2021}.
These constructions differ from our generic routed graph $\cG_{\mathrm{QC-QC}(N)}$, with fewer nodes: internal nodes corresponding to the agents $A_k$, as well as nodes for the past and future $P$ and $F$, which play roles analogous to our internal nodes $\mathbf{V}_1$ and $\mathbf{V}_{N+1}$.
By featuring additional internal nodes $\mathbf{V}_n$, our construction appears to yield a significantly more verbose routed graph representation.
In \cref{sec:grouping}, we discussed how one can, under certain conditions, nonetheless recover the structure of their routed graph from ours by merging nodes. 
However, for $N>3$, a representation without further internal (i.e., non-agent) nodes can no longer generally be found, as we exemplify in \cref{sec:local}.
The generalisation of the routed graphs given in~\cite{Vanrietvelde2023}, without additional internal nodes, is thus unable to capture the full expressivity of QC-QCs.

Even in the $N=2$ and $N=3$ cases, the more compact routed graph representation of~\cite{Vanrietvelde2023} comes at the cost of a more complex fleshing out for the agent nodes $\mathbf{A}_k$, which must not only act as an adapter for allowing the local operations of the agents to be plugged in, but also includes the, in general non-trivial, (internal) routing of the process. 

However, with our generic routed graph (modified to represent the ancillary systems explicitly in $\cG_{\mathrm{QC-QC}(N)}^\alpha$), we can mostly forgo the necessity of a fleshing out to describe the class of QC-QCs.
Indeed, for the $\mathbf{V}$-nodes the routes actually restrict the associated slots precisely to valid QC-QC isometries, and any routed isometries following the routes constitutes a valid fleshing out.
For the $\mathbf{A}$-nodes, as observed in \cref{sec:var-ancillary}, allowing the agents to perform any routed operation\footnote{With the skeletal supermap associated with the routed graph from \cref{sec:var-ancillary}, that is, to prevent the agents from accessing the ancillary systems $\alpha_n$.} following the routes essentially gives them just the power to perform routed operations dependent on the causal order -- which in fact mirrors many physical implementations suggested so far more closely, while still conveying a form of ICO \cite{Ormrod2023, delaHamette2024}.
One can moreover show that each such routed process may still be simulated as a conventional (non-routed) QC-QC, potentially decreasing the relevance of this distinction.

Indeed, consider a process where a coherent copy of the control system $\bar{C}^{(k)}$ is passed to each agent $A_k$.
Then, they may perform their operation controlled on the set of agents $\K_n \setminus k$ who acted previously, as in a routed process, or perform a measurement on the causal order, destroying the coherence of the quantum control in the QC-QC.
Any $N$-slot QC-QC process may be amended to pass a copy of the control to the agents in such a fashion, increasing the dimensions of $\HH^{A_k^{I/O}}$ accordingly.
After their operation, the coherent copy may then be undone -- presuming, the agents did not modify the copy -- and be traced out without harming the coherence of the control.
If the agents modify their copy of the control however, they are transcending the set of operations which would be allowed in a routed switch, and may collapse the action of the process to a QC-CC (with \emph{classical} control of causal order).

Finally, we briefly comment on the index structure we utilised.
While in \cite{Vanrietvelde2023}, all indices used in practice only take values $\{ 0, 1 \}$, we use the index value to encode subsets $\K_n \subseteq \cN$ (or $\ravnothing$) -- with the notable exception of \cref{eq:simple-values}.
Thereby, we demonstrate how a more complex set of possible index values may be helpful to obtain an adequate routed graph.

\subsection{Characterising classes of processes}

Several previous works have proposed subclasses of QC-QCs with specific properties of interest. 
These include a restriction to classical control (\textsf{QC-CC}) \cite{Wechs2021}, to superpositions of fixed orders (\textsf{Sup} \cite{Liu2023, Mothe2024}, or the more general \textsf{QC-supFO} class \cite{Mothe2025}), and to non-influenceable causal orders (\textsf{QC-NIQC}), capturing a weak kind of dynamicality of causal order~\cite{Mothe2025}.

Typically, the restrictions on these subclasses can be characterised as restrictions on the internal operations of a QC-QC. 
Our routed quantum circuit construction provides a new perspective from which to study these classes, as these restrictions may correspond to additional constraints on the routed graph.
For instance, modifying the index structure to capture the full order $(k_1, \ldots, k_N)$ from $\mathbf{V}_1$ all the way to $\mathbf{V}_{N+1}$ should yield a superposition of fixed orders.
Generally, constraining the routed graph will involve introducing additional structure into the indices, effectively fine-graining the sectorisation of the systems to store additional information on the causal order, and imposing restrictions on the branch structure of the internal nodes $\mathbf{V}_{n+1}$.
The resulting graphs can then potentially be simplified following techniques analogous to those presented in \cref{sec:variations}.

Conversely, imposing constraints directly on the routed graph may be an interesting avenue to study new types of QC-QCs,
a potential example being the subclass of processes which can be represented without involving internal transformations beyond local pre- and post-processing in-between the agents (cf.\ \cref{sec:local}).
When assuming the operations associated with each branch of the respective nodes to be localised in spacetime, this property in particular affects the embeddability of the associated circuit into spacetime \cite{VVR, Salzger2025}.

In addition to capturing these classes in terms of routed quantum circuits, the properties of these representations may also serve as starting points to generalise the concepts they capture to non-QC-QC processes.
For instance, in the language of routed quantum circuits, green dashed arrows in the branch graph may indicate whether a node has an influence on the order of later nodes:
beyond QC-QCs, this has been showcased for both the Lugano process \cite{Vanrietvelde2023}, and more recently, for a process with ``unilateral determination of causal order'' \cite{Mejdoub2025}.
Studying this relation in detail would potentially allow the notion of dynamicality, which has so far been formalised for QC-QCs only~\cite{Mothe2025}, to be studied in processes beyond QC-QCs.

\subsection{Consequences for the routed quantum circuit framework}

Last but not least, we consider the impact of our results on the general status of the routed quantum circuit framework.
By extending the number of processes with an established routed quantum circuit decomposition from a small handful to the entire class of QC-QCs, we confirm the applicability of the framework to a wide class of physically implementable processes.
Moreover, recent results connecting QC-QCs with the causal box framework \cite{Portmann2017, Salzger2022, deBank2025} indicate that QC-QCs appear to be the most general class of processes compatible (and implementable) within a fixed (acyclic) spacetime.
Combined with our results, this would imply that skeletal supermaps associated with routed graphs (without weak loops) -- or even the subset of routed graphs from our construction -- are capable of expressing any quantum process implementable in a fixed spacetime.\footnote{However, this alone does not imply the ability to interpret each (valid) skeletal supermap as a blueprint for a non-routed process!}
This would provide some additional support the first main conjecture of \cite{Vanrietvelde2023}, concerning the hypothesis that any unitary process may be obtained from fleshing out some skeletal supermap associated with a routed graph. 
The results of \cite{Lukas} further indicate that these statements may also be generalisable from fixed acyclic spacetimes to any settings where all agents (independently) record a monotonically increasing time, e.g.\ capturing gravitational quantum switch scenarios.
Further progress in this direction, extending the result beyond processes implementable within a fixed spacetime, would most likely require significant progress in the study of causal decompositions of unitary channels \cite{Lorenz2021, vanrietvelde2025qca}.

Considering the fact that QC-QCs do not violate causal inequalities \cite{Wechs2021}, our results are in agreement with the second major conjecture of~\cite{Vanrietvelde2023} -- that a skeletal supermap violates causal inequalities if and only if its branch graph features weak loops.
However, as our procedure only yields a small subset of valid routed graphs without weak loops, we are unable to provide any stronger statement in that regard.

The joint perspective obtained by studying QC-QCs and routed quantum circuits may also be valuable to provide further insight on the composability of processes.
Indeed, although the validity conditions for routed graphs provide a condition to affirm the validity of compositions, a systematic study of the resulting family of routed graphs has been lacking so far.
Restricting to compositions of routed graphs which represent faithful reconstructions of QC-QCs may help both to render the scope of this study more manageable and to restrict oneself to physically admissable implementations \cite{Chardonnet2025, Simmons2024, Kissinger2019, PRXQuantum.4.040307, Apadula2024}.
Conversely, the clear graphical representation of connectivity may also allow one to restrict the class of processes attainable by composing a restricted set of \enquote{elementary} processes.

\section{Acknowledgements}

We thank Hlér Kristjánsson, Ved Kunte, Raphaël Mothe, Pierre Pocreau, Matthias Salzger, V.~Vilasini and Emilien de Bank for helpful discussions.

This research was funded in part by l'Agence Nationale de la Recherche (ANR) projects ANR-15-IDEX-02 and ANR-22-CE47-0012, the PEPR integrated project EPiQ ANR-22-PETQ-0007 as part of Plan France 2030, and the STEP2 grant ANR-22-EXES-0013. For the purpose of open access, the authors have applied a CC-BY public copyright license to any Author Accepted Manuscript (AAM) version arising from this submission.



\appendix

\clearpage

\begin{center}
    {\LARGE \bf{Appendix}}
\end{center}

\section{Derivation of the choice function \texorpdfstring{for $\mathcal{G}_{\mathrm{QC-QC}(N)}$}{}}
\label{sec:choice-function}

Here we will evaluate explicitly the choice relation $\Lambda_{\mathcal{G}_{\text{QC-QC}(N)}}$, given by \cref{eq:choice_rel_QCQC_link_prod}:
\begin{align}
    \Lambda_{\mathcal{G}_{\text{QC-QC}(N)}} & = \lambda_{\textbf{A}_1}^\text{aug} * \cdots * \lambda_{\textbf{A}_N}^\text{aug} * \lambda_{\textbf{V}_1}^\text{aug} * \cdots * \lambda_{\textbf{V}_{N+1}}^\text{aug} . \label{eq:choice_rel_QCQC_link_prod_2}
\end{align}

For this, let us start by explicitly writing the Boolean matrix elements of the different augmented relations in the graph:
\begin{itemize}
    \item For $\lambda_{\textbf{A}_k}^\text{aug}$, according to \cref{eq:A_routes_new,eq:aug_rel_Ak_new}, one can write (using the label $\K_{n_k-1}^{[\textbf{A}_k]}$ rather than $\K_n\backslash k$ for the branch that happens at node $\textbf{A}_k$, to distinguish it further below from the labels at the other nodes):
    \begin{align}
        \left(\lambda_{\textbf{A}_k}^\text{aug}\right)_{(\XX_n^k)_{n=1}^N}^{(\XX_n^k)_{n=1}^N,\K_{n_k-1}^{[\textbf{A}_k]}} = \delta_{\XX_{n_k}^k,\K_{n_k-1}^{[\textbf{A}_k]}\cup k} \ \Big( \prod_{n\neq n_k} \delta_{\XX_n^k,\ravnothing} \Big). \label{eq:aug_rel_Ak_Bool_terms}
    \end{align}
    
    \item For $\lambda_{\textbf{V}_{n+1}}^\text{aug}$ ($n=1,\ldots,N-1$), according to \cref{eq:V_routes_2_new,eq:aug_rel_Vn1_new}, one can write (here for ease of notations we keep the label $\K_n$ for the branch that happens at node $\textbf{V}_{n+1}$, rather than introducing $\K_n^{[\textbf{V}_{n+1}]}$):
    \begin{align}
    		& \left(\Lambda^\text{aug}_{\textbf{V}_{n+1}}\right)_{(\XX_n^k)_{k=1}^N,\vec{\bm{\ell}}_n=(\ell^{(\K_n')})_{\K_n'}}^{(\XX_{n+1}^\ell)_{\ell=1}^N,\K_n} \notag \\
    		& \quad = \Bigg( \sum_{k\in\K_n} \Big[ \delta_{\XX_n^k,\K_n} \ \Big( \prod_{k'\in\N\backslash k} \delta_{\XX_n^{k'},\ravnothing} \Big) \Big] \Bigg) \Bigg( \delta_{\XX_{n+1}^{\ell^{(\K_n)}},\K_n\cup\ell^{(\K_n)}} \ \Big( \prod_{\ell'\in\N\backslash \ell^{(\K_n)}} \delta_{\XX_{n+1}^{\ell'},\ravnothing} \Big)\Bigg)\,, \label{eq:aug_rel_Vn1_Bool_terms}
    \end{align}
    while for $\lambda_{\textbf{V}_1}^\text{aug}$ and $\lambda_{\textbf{V}_{N+1}}^\text{aug}$, according to \cref{eq:V_routes_1_new,eq:aug_rel_V1_new,eq:aug_rel_VN1_new}, one can write (ignoring the trivial inputs and outputs, see \cref{fig:augmented_rel_A_Vs_new}):
    \begin{align}
        \left(\Lambda^\text{aug}_{\textbf{V}_1}\right)_{\ell^{(\emptyset)}}^{(\XX_1^\ell)_{\ell=1}^N} & = \delta_{\XX_1^{\ell^{(\emptyset)}},\ell^{(\emptyset)}} \ \Big( \prod_{\ell'\in\N\backslash \ell^{(\emptyset)}} \delta_{\XX_1^{\ell'},\ravnothing} \Big) \label{eq:aug_rel_V1_Bool_terms} \\
        \text{and} \quad \left(\Lambda^\text{aug}_{\textbf{V}_{N+1}}\right)_{(\XX_N^k)_{k=1}^N} & = \sum_{k\in\N} \Big[ \delta_{\XX_N^k,\N} \ \Big( \prod_{k'\in\N\backslash k} \delta_{\XX_N^{k'},\ravnothing} \Big) \Big]\,. \label{eq:aug_rel_VN1_Bool_terms}
    \end{align}
\end{itemize}

\medskip

Recalling from \cref{subsubsec:compose_rel} how the ``link product'' $*$ in \cref{eq:choice_rel_QCQC_link_prod_2} evaluates, we can write the Boolean matrix elements of the choice relation as
\begin{align}
    & \left(\Lambda_{\mathcal{G}_{\text{QC-QC}(N)}}\right)_{\vec{\bm{\ell}}=(\ell^{(\K_n')})_{n=0,\ldots,N-1;\K_n'}}^{(\K_{n_k-1}^{[\textbf{A}_k]})_{k\in\N},(\K_n)_{n=1}^{N-1}} \notag \\
    & \qquad = \boolSum_{(\XX_n^k)_{n;k}} \left( \prod_{k\in\N} \left(\lambda_{\textbf{A}_k}^\text{aug}\right)_{(\XX_n^k)_{n=1}^N}^{(\XX_n^k)_{n=1}^N,\K_{n_k-1}^{[\textbf{A}_k]}} \right) \notag \\[-2mm]
    & \hspace{30mm} \left(\Lambda^\text{aug}_{\textbf{V}_1}\right)_{\ell^{(\emptyset)}}^{(\XX_1^\ell)_{\ell=1}^N} \left( \prod_{n=1}^{N-1} \left(\Lambda^\text{aug}_{\textbf{V}_{n+1}}\right)_{(\XX_n^k)_{k=1}^N,\vec{\bm{\ell}}_n=(\ell^{(\K_n')})_{\K_n'}}^{(\XX_{n+1}^\ell)_{\ell=1}^N,\K_n} \right) \left(\Lambda^\text{aug}_{\textbf{V}_{N+1}}\right)_{(\XX_N^k)_{k=1}^N} .\label{eq:choice_rel_QCQC}
\end{align}

To evaluate this expression, using \cref{eq:aug_rel_Ak_Bool_terms,eq:aug_rel_Vn1_Bool_terms,eq:aug_rel_V1_Bool_terms,eq:aug_rel_VN1_Bool_terms} above, one may start by noticing that for $m=1,\ldots,N$,
\begin{align}
    & \Bigg( \delta_{\XX_m^{\ell^{(\K_{m-1})}},\K_{m-1}\cup\ell^{(\K_{m-1})}} \ \Big( \prod_{\ell'\in\N\backslash \ell^{(\K_{m-1})}} \delta_{\XX_m^{\ell'},\ravnothing} \Big)\Bigg) \Bigg( \sum_{k\in\K_m} \Big[ \delta_{\XX_m^k,\K_m} \ \Big( \prod_{k'\in\N\backslash k} \delta_{\XX_m^{k'},\ravnothing} \Big) \Big] \Bigg) \notag \\
    & \qquad = \delta_{\K_m,\K_{m-1}\cup\ell^{(\K_{m-1})}} \ \delta_{\XX_m^{\ell^{(\K_{m-1})}},\K_{m-1}\cup\ell^{(\K_{m-1})}} \ \Big( \prod_{\ell'\in\N\backslash \ell^{(\K_{m-1})}} \delta_{\XX_m^{\ell'},\ravnothing} \Big) \,.
\end{align}
From this one easily obtains, recursively, that
\begin{align}
    & \left(\Lambda^\text{aug}_{\textbf{V}_1}\right)_{\ell^{(\emptyset)}}^{(\XX_1^\ell)_{\ell=1}^N} \left( \prod_{n=1}^m \left(\Lambda^\text{aug}_{\textbf{V}_{n+1}}\right)_{(\XX_n^k)_{k=1}^N,\vec{\bm{\ell}}_n=(\ell^{(\K_n')})_{\K_n'}}^{(\XX_{n+1}^\ell)_{\ell=1}^N,\K_n} \right) \notag \\
    & = \delta_{\K_1,\ell^{(\emptyset)}} \, \delta_{\K_2,\K_1\cup \ell^{(\K_1)}} \, \ldots \, \delta_{\K_m,\K_{m-1}\cup \ell^{(\K_{m-1})}} \ \prod_{n=0}^m \Bigg( \delta_{\XX_{n+1}^{\ell^{(\K_n)}},\K_n\cup\ell^{(\K_n)}} \ \Big( \prod_{\ell'\in\N\backslash \ell^{(\K_n)}} \!\!\delta_{\XX_{n+1}^{\ell'},\ravnothing} \Big) \!\Bigg),
\end{align}
and that the third line in \cref{eq:choice_rel_QCQC} takes the same form as above (for $m=N-1$), so that (recalling also \cref{eq:aug_rel_Ak_Bool_terms})
\begin{align}
    & \left(\Lambda_{\mathcal{G}_{\text{QC-QC}(N)}}\right)_{\vec{\bm{\ell}}=(\ell^{(\K_n')})_{n=0,\ldots,N-1;\K_n'}}^{(\K_{n_k-1}^{[\textbf{A}_k]})_{k\in\N},(\K_n)_{n=1}^{N-1}} \notag \\
    & \quad = \delta_{\K_1,\ell^{(\emptyset)}} \, \delta_{\K_2,\K_1\cup \ell^{(\K_1)}} \, \ldots \, \delta_{\K_{N-1},\K_{N-2}\cup \ell^{(\K_{N-2})}} \notag \\
    & \hspace{8mm} \boolSum_{(\XX_n^k)_{n;k}} \! \Bigg[ \prod_{k\in\N} \Bigg( \delta_{\XX_{n_k}^k,\K_{n_k-1}^{[\textbf{A}_k]}\cup k} \ \Big( \prod_{n\neq n_k} \delta_{\XX_n^k,\ravnothing} \Big) \Bigg) \Bigg] \Bigg[ \prod_{n=0}^{N-1} \Bigg( \delta_{\XX_{n+1}^{\ell^{(\K_n)}},\K_n\cup\ell^{(\K_n)}} \ \Big( \prod_{\ell'\in\N\backslash \ell^{(\K_n)}} \!\!\delta_{\XX_{n+1}^{\ell'},\ravnothing} \Big) \!\Bigg) \Bigg]. \label{eq:choice_rel_QCQC_v2}
\end{align}

Recalling the recursive definition of $\K_1^{\vec{\bm{\ell}}} \coloneqq \ell^{(\emptyset)}, \K_{n+1}^{\vec{\bm{\ell}}} \coloneqq \K_n^{\vec{\bm{\ell}}}\cup \ell^{(\K_n^{\vec{\bm{\ell}}})}$ we introduced in \cref{eq:Kn1ell}, one can see that the above expression can only be nonzero if the $\K_n$'s satisfy the same recursive constraints, i.e.\ if $\K_n = \K_n^{\vec{\bm{\ell}}}$ for all $n=1,\ldots,N-1$.
The last product term in square brackets above then restricts the Boolean sum to only index values $\XX_n^k$ such that, for any $n=1,\ldots,N$, $\XX_n^k = \K_{n-1}\cup\ell^{(\K_{n-1})} = \K_n^{\vec{\bm{\ell}}}$ if $k = \ell^{(\K_{n-1})} = \ell^{(\K_{n-1}^{\vec{\bm{\ell}}})}$ and $\XX_n^k = \ravnothing$ otherwise.
For the first product term in square bracket to be nonzero, one then needs these to satisfy $\XX_{n_k}^k \neq \ravnothing$: this requires that for each $k\in\N$, $n_k$ be such that $k = \ell^{(\K_{n_k-1}^{\vec{\bm{\ell}}})}$, i.e., $n_k$ must be equal to $n_k^{\vec{\bm{\ell}}}$ as we defined in \cref{eq:Kn1ell}. In that case $\XX_{n_k^{\vec{\bm{\ell}}}}^k = \K_{n_k^{\vec{\bm{\ell}}}}^{\vec{\bm{\ell}}}$, and the first Kronecker delta in the last line of \cref{eq:choice_rel_QCQC_v2} requires $\K_{n_k^{\vec{\bm{\ell}}}}^{\vec{\bm{\ell}}} = \K_{n_k-1}^{[\textbf{A}_k]}\cup k$ -- or equivalently, $\K_{n_k-1}^{[\textbf{A}_k]} = \K_{n_k^{\vec{\bm{\ell}}}-1}^{\vec{\bm{\ell}}}$ -- to be nonzero.
Reciprocally, satisfying these various conditions indeed gives an overall value of 1 for the expression in \cref{eq:choice_rel_QCQC_v2} above.

All in all, we find that we can write
\begin{align}
    & \left(\Lambda_{\mathcal{G}_{\text{QC-QC}(N)}}\right)_{\vec{\bm{\ell}}=(\ell^{(\K_n')})_{n=0,\ldots,N-1;\K_n'}}^{(\K_{n_k-1}^{[\textbf{A}_k]})_{k\in\N},(\K_n)_{n=1}^{N-1}} = \Bigg( \prod_{k\in\N} \delta_{\K_{n_k-1}^{[\textbf{A}_k]},\K_{n_k^{\vec{\bm{\ell}}}-1}^{\vec{\bm{\ell}}}} \Bigg) \Bigg( \prod_{n=1}^{N-1} \delta_{\K_n,\K_n^{\vec{\bm{\ell}}}} \Bigg) . \label{eq:choice_rel_QCQC_v3}
\end{align}
As one sees, for any fixed input value $\vec{\bm{\ell}}=(\ell^{(\K_n')})_{n=0,\ldots,N-1;\K_n'}$ of the choice relation, there exists a unique output value $\big((\K_{n_k-1}^{[\textbf{A}_k]})_{k\in\N},(\K_n)_{n=1}^{N-1}\big)$ for which all Kronecker deltas above give the value 1, so that the Boolean matrix element is 1. The choice relation thus defines a proper function, as described in \cref{eq:choice_function_QCQC} of the main text.

\section{Obtaining a QC-QC by fleshing out a skeletal supermap}
\label{sec:composing-process}

We provide here the detailed calculations that allowed us to obtain
\begin{align}
    & \Big( \bigotimes_{n=0}^N \dket{\textbf{V}_{n+1}} \Big) * \Big( \bigotimes_{k\in\N} \dket{\textbf{A}_k} \Big) \notag \\[-1mm]
    & = \sum_{(k_1,k_2,\ldots,k_N)} \dket{V_{\emptyset,\emptyset}^{\to k_1}}^{P,A_{k_1}^I\alpha_1} * \dket{V_{\emptyset,k_1}^{\to k_2}}^{A_{k_1}^O\alpha_1,A_{k_2}^I\alpha_2} * \dket{V_{\{k_1\},k_2}^{\to k_3}}^{A_{k_2}^O\alpha_2,A_{k_3}^I\alpha_3} * \ \cdots \notag \\[-4mm]
    & \hspace{20mm} \cdots \ * \dket{V_{\{k_1,\ldots,k_{N-2}\},k_{N-1}}^{\to k_N}}^{A_{k_{N-1}}^O\alpha_{N-1},A_{k_N}^I\alpha_N} * \dket{V_{\{k_1,\ldots,k_{N-1}\},k_N}^{\to F}}^{A_{k_N}^O\alpha_N,F} = \ket{w_\text{QC-QC}}, 
\end{align}
as in \cref{eq:recover_wQCQC}.

To prove this, we first show recursively that, for any $m=0,\ldots, N-1$,
\begin{align}
    & \Big( \bigotimes_{n=0}^m \dket{\textbf{V}_{n+1}} \Big) * \Big( \bigotimes_{k\in\N} \dket{\textbf{A}_k} \Big) \notag \\
    & = \sum_{(k_1,k_2,\ldots,k_m)} \dket{V_{\emptyset,\emptyset}^{\to k_1}}^{P,A_{k_1}^I\alpha_1} * \dket{V_{\emptyset,k_1}^{\to k_2}}^{A_{k_1}^O\alpha_1,A_{k_2}^I\alpha_2} * \ \cdots \ * \dket{V_{\{k_1,\ldots,k_{m-2}\},k_{m-1}}^{\to k_m}}^{A_{k_{m-1}}^O\alpha_{m-1},A_{k_m}^I\alpha_m} \notag \\[-3mm]
    & \hspace{24mm} * \bigoplus_{k_{m+1}} \dket{V_{\{k_1,\ldots,k_{m-1}\},k_m}^{\to k_{m+1}}}^{A_{k_m}^O\alpha_m,A_{k_{m+1}}^I\alpha_{m+1}} * \dket{\id}^{A_{k_{m+1}}^O\alpha_{m+1},\HH_{\{k_1,\ldots,k_m\},k_{m+1}}^{\textbf{A}_{k_{m+1}}\to\textbf{V}_{m+2}}} \notag \\[-3mm]
    & \hspace{95mm} * \Big( \bigotimes_{k\in\N\backslash\{k_1,\ldots,k_{m+1}\}} \dket{\textbf{A}_k} \Big) . \label{eq:recover_wQCQC_recursively_for_m}
\end{align}

Indeed, starting from $m=0$, one has (changing the dummy index $\ell$ in \cref{eq:dket_node_V1} to $k_1$)
\begin{align}
    & \dket{\textbf{V}_1} * \Big( \bigotimes_{k\in\N} \dket{\textbf{A}_k} \Big) = \bigoplus_{k_1\in\N} \dket{V_{\emptyset,\emptyset}^{\to k_1}}^{P,\HH_{\emptyset,k_1}^{\textbf{V}_1\to\textbf{A}_{k_1}}} * \Big( \dket{\textbf{A}_{k_1}} \otimes \bigotimes_{k\in\N\backslash k_1} \dket{\textbf{A}_k} \Big) \notag \\
    & \quad = \bigoplus_{k_1} \dket{V_{\emptyset,\emptyset}^{\to k_1}}^{P,\HH_{\emptyset,k_1}^{\textbf{V}_1\to\textbf{A}_{k_1}}} \!\! * \Big( \bigoplus_{n=1}^N \bigoplus_{\K_{n}\ni k_1} \!\dket{\id}^{\HH_{\K_{n}\setminus k_1,k_1}^{\textbf{V}_n\to\textbf{A}_{k_1}},A_{k_1}^I\alpha_n} * \dket{\id}^{A_{k_1}^O\alpha_n,\HH_{\K_{n}\setminus k_1,k_1}^{\textbf{A}_{k_1}\to\textbf{V}_{n+1}}} \!\Big) * \Big( \bigotimes_{k\in\N\backslash k_1} \!\! \dket{\textbf{A}_k} \!\Big) \notag \\
    & \quad = \bigoplus_{k_1} \dket{V_{\emptyset,\emptyset}^{\to k_1}}^{P,A_{k_1}^I\alpha_1} * \dket{\id}^{A_{k_1}^O\alpha_1,\HH_{\emptyset,k_1}^{\textbf{A}_{k_1}\to\textbf{V}_{2}}} * \Big( \bigotimes_{k\in\N\backslash k_1} \dket{\textbf{A}_k} \Big) ,
\end{align}
as in \cref{eq:recover_wQCQC_recursively_for_m} -- with the understanding that for $m=0$, only the direct sum over $k_1$ was left, and that the link product of all $\dket{V_{\cdot,\cdot}^\cdot}$'s reduced to just $\dket{V_{\emptyset,\emptyset}^{\to k_1}}^{P,A_{k_1}^I\alpha_1}$.

Supposing that \cref{eq:recover_wQCQC_recursively_for_m} holds for some value $m=0,\ldots, N-2$, one can then write (similarly changing the dummy index $\ell$ in \cref{eq:dket_node_Vn1} to $k_{m+2}$)
\begin{align}
    & \Big( \bigotimes_{n=0}^{m+1} \dket{\textbf{V}_{n+1}} \Big) * \Big( \bigotimes_{k\in\N} \dket{\textbf{A}_k} \Big) = \Big( \bigotimes_{n=0}^m \dket{\textbf{V}_{n+1}} \Big) * \dket{\textbf{V}_{m+2}} * \Big( \bigotimes_{k\in\N} \dket{\textbf{A}_k} \Big) \notag \\[1mm]
    & = \sum_{(k_1,k_2,\ldots,k_m)} \dket{V_{\emptyset,\emptyset}^{\to k_1}}^{P,A_{k_1}^I\alpha_1} * \dket{V_{\emptyset,k_1}^{\to k_2}}^{A_{k_1}^O\alpha_1,A_{k_2}^I\alpha_2} * \ \cdots \ * \dket{V_{\{k_1,\ldots,k_{m-2}\},k_{m-1}}^{\to k_m}}^{A_{k_{m-1}}^O\alpha_{m-1},A_{k_m}^I\alpha_m} \notag \\[-3mm]
    & \hspace{24mm} * \bigoplus_{k_{m+1}} \dket{V_{\{k_1,\ldots,k_{m-1}\},k_m}^{\to k_{m+1}}}^{A_{k_m}^O\alpha_m,A_{k_{m+1}}^I\alpha_{m+1}} * \dket{\id}^{A_{k_{m+1}}^O\alpha_{m+1},\HH_{\{k_1,\ldots,k_m\},k_{m+1}}^{\textbf{A}_{k_{m+1}}\to\textbf{V}_{m+2}}} \notag \\[-3mm]
    & \hspace{85mm} * \dket{\textbf{V}_{m+2}} * \Big( \bigotimes_{k\in\N\backslash\{k_1,\ldots,k_{m+1}\}} \dket{\textbf{A}_k} \Big) \notag \\[1mm]
    & = \sum_{(k_1,k_2,\ldots,k_m)} \cdots * \bigoplus_{k_{m+1}} \dket{V_{\{k_1,\ldots,k_{m-1}\},k_m}^{\to k_{m+1}}}^{A_{k_m}^O\alpha_m,A_{k_{m+1}}^I\alpha_{m+1}} * \dket{\id}^{A_{k_{m+1}}^O\alpha_{m+1},\HH_{\{k_1,\ldots,k_m\},k_{m+1}}^{\textbf{A}_{k_{m+1}}\to\textbf{V}_{m+2}}} \notag \\[-2mm]
    & \hspace{10mm} * \bigoplus_{\substack{\K_{m+1},k\in\K_{m+1},\\ k_{m+2}\notin\K_{m+1}}} \dket{V_{\K_{m+1}\setminus k,k}^{\to k_{m+2}}}^{\HH_{\K_{m+1}\backslash k,k}^{\textbf{A}_k\to\textbf{V}_{m+2}},\HH_{\K_{m+1},k_{m+2}}^{\textbf{V}_{m+2}\to\textbf{A}_{k_{m+2}}}} * \Big( \dket{\textbf{A}_{k_{m+2}}} \otimes \!\!\!\bigotimes_{\substack{k\in\N\backslash\{k_1,\ldots\\ \qquad\quad\ldots,k_{m+2}\}}} \!\!\!\dket{\textbf{A}_k} \Big) \notag \\[1mm]
    & = \sum_{(k_1,k_2,\ldots,k_m)} \cdots * \bigoplus_{k_{m+1}} \dket{V_{\{k_1,\ldots,k_{m-1}\},k_m}^{\to k_{m+1}}}^{A_{k_m}^O\alpha_m,A_{k_{m+1}}^I\alpha_{m+1}} * \dket{\id}^{A_{k_{m+1}}^O\alpha_{m+1},\HH_{\{k_1,\ldots,k_m\},k_{m+1}}^{\textbf{A}_{k_{m+1}}\to\textbf{V}_{m+2}}} \notag \\[-1mm]
    & \hspace{10mm} * \bigoplus_{k_{m+2}} \dket{V_{\{k_1,\ldots,k_m\},k_{m+1}}^{\to k_{m+2}}}^{\HH_{\{k_1,\ldots,k_m\},k_{m+1}}^{\textbf{A}_{k_{m+1}}\to\textbf{V}_{m+2}},\HH_{\{k_1,\ldots,k_{m+1}\},k_{m+2}}^{\textbf{V}_{m+2}\to\textbf{A}_{k_{m+2}}}} \notag \\[-1mm]
    & \hspace{3mm} * \Big( \bigoplus_{n=1}^N \bigoplus_{\K_{n}\ni k_{m+2}} \dket{\id}^{\HH_{\K_{n}\setminus k_{m+2},k_{m+2}}^{\textbf{V}_n\to\textbf{A}_{k_{m+2}}},A_{k_{m+2}}^I\alpha_n} * \dket{\id}^{A_{k_{m+2}}^O\alpha_n,\HH_{\K_{n}\setminus k_{m+2},k_{m+2}}^{\textbf{A}_{k_{m+2}}\to\textbf{V}_{n+1}}} \Big) * \Big( \!\!\!\!\!\bigotimes_{\substack{k\in\N\backslash\{k_1,\ldots\\ \qquad\quad\ldots,k_{m+2}\}}} \!\!\!\!\!\!\dket{\textbf{A}_k} \Big) \notag 
\end{align}
\begin{align}
    & = \sum_{(k_1,k_2,\ldots,k_m)} \cdots * \sum_{k_{m+1}} \dket{V_{\{k_1,\ldots,k_{m-1}\},k_m}^{\to k_{m+1}}}^{A_{k_m}^O\alpha_m,A_{k_{m+1}}^I\alpha_{m+1}} \notag \\[-1mm]
    & \hspace{10mm} * \bigoplus_{k_{m+2}} \dket{V_{\{k_1,\ldots,k_m\},k_{m+1}}^{\to k_{m+2}}}^{A_{k_{m+1}}^O\alpha_{m+1},A_{k_{m+2}}^I\alpha_{m+2}} * \dket{\id}^{A_{k_{m+2}}^O\alpha_{m+2},\HH_{\{k_1,\ldots,k_{m+1}\},k_{m+2}}^{\textbf{A}_{k_{m+2}}\to\textbf{V}_{m+3}}} \notag \\[-3mm]
    & \hspace{95mm} * \Big( \bigotimes_{k\in\N\backslash\{k_1,\ldots,k_{m+2}\}} \!\!\!\dket{\textbf{A}_k} \Big) ,
\end{align}
so that we obtain the same form as \cref{eq:recover_wQCQC_recursively_for_m}, for the value of $m+1$ instead of $m$ -- which indeed proves recursively that \cref{eq:recover_wQCQC_recursively_for_m} holds for all $m=0,\ldots,N-1$.
For $m=N-1$ more specifically, we obtain:
\begin{align}
    & \Big( \bigotimes_{n=0}^{N-1} \dket{\textbf{V}_{n+1}} \Big) * \Big( \bigotimes_{k\in\N} \dket{\textbf{A}_k} \Big) \notag \\
    & = \sum_{(k_1,\ldots,k_{N-1})} \!\!\dket{V_{\emptyset,\emptyset}^{\to k_1}}^{P,A_{k_1}^I\alpha_1} * \dket{V_{\emptyset,k_1}^{\to k_2}}^{A_{k_1}^O\alpha_1,A_{k_2}^I\alpha_2} * \cdots * \dket{V_{\{k_1,\ldots,k_{N-3}\},k_{N-2}}^{\to k_{N-1}}}^{A_{k_{N-2}}^O\alpha_{N-2},A_{k_{N-1}}^I\alpha_{N-1}} \notag \\[-3mm]
    & \hspace{24mm} * \bigoplus_{k_N} \dket{V_{\{k_1,\ldots,k_{N-2}\},k_{N-1}}^{\to k_N}}^{A_{k_{N-1}}^O\alpha_{N-1},A_{k_N}^I\alpha_N} * \dket{\id}^{A_{k_N}^O\alpha_N,\HH_{\{k_1,\ldots,k_{N-1}\},k_N}^{\textbf{A}_{k_N}\to\textbf{V}_{N+1}}} .
\end{align}

\medskip

With this in place, we can then write
\begin{align}
    & \Big( \bigotimes_{n=0}^N \dket{\textbf{V}_{n+1}} \Big) * \Big( \bigotimes_{k\in\N} \dket{\textbf{A}_k} \Big) = \Big( \bigotimes_{n=0}^{N-1} \dket{\textbf{V}_{n+1}} \Big) * \Big( \bigotimes_{k\in\N} \dket{\textbf{A}_k} \Big) * \dket{\textbf{V}_{N+1}} \notag \\
    & = \sum_{(k_1,\ldots,k_{N-1})} \!\!\dket{V_{\emptyset,\emptyset}^{\to k_1}}^{P,A_{k_1}^I\alpha_1} * \dket{V_{\emptyset,k_1}^{\to k_2}}^{A_{k_1}^O\alpha_1,A_{k_2}^I\alpha_2} * \cdots * \dket{V_{\{k_1,\ldots,k_{N-3}\},k_{N-2}}^{\to k_{N-1}}}^{A_{k_{N-2}}^O\alpha_{N-2},A_{k_{N-1}}^I\alpha_{N-1}} \notag \\[-3mm]
    & \hspace{24mm} * \bigoplus_{k_N} \dket{V_{\{k_1,\ldots,k_{N-2}\},k_{N-1}}^{\to k_N}}^{A_{k_{N-1}}^O\alpha_{N-1},A_{k_N}^I\alpha_N} * \dket{\id}^{A_{k_N}^O\alpha_N,\HH_{\{k_1,\ldots,k_{N-1}\},k_N}^{\textbf{A}_{k_N}\to\textbf{V}_{N+1}}} \notag \\[-2mm]
    & \hspace{35mm} * \bigoplus_{k\in\N} \dket{V_{\N\backslash k,k}^{\to F}}^{\HH_{\N\backslash k,k}^{\textbf{A}_k\to\textbf{V}_{N+1}},F} \notag \\[1mm]
    & = \sum_{(k_1,\ldots,k_{N-1})} \!\!\dket{V_{\emptyset,\emptyset}^{\to k_1}}^{P,A_{k_1}^I\alpha_1} * \dket{V_{\emptyset,k_1}^{\to k_2}}^{A_{k_1}^O\alpha_1,A_{k_2}^I\alpha_2} * \cdots * \dket{V_{\{k_1,\ldots,k_{N-3}\},k_{N-2}}^{\to k_{N-1}}}^{A_{k_{N-2}}^O\alpha_{N-2},A_{k_{N-1}}^I\alpha_{N-1}} \notag \\[-3mm]
    & \hspace{24mm} * \sum_{k_N} \dket{V_{\{k_1,\ldots,k_{N-2}\},k_{N-1}}^{\to k_N}}^{A_{k_{N-1}}^O\alpha_{N-1},A_{k_N}^I\alpha_N} * \dket{V_{\{k_1,\ldots,k_{N-1}\},k_N}^{\to F}}^{A_{k_N}^O\alpha_N,F},
\end{align}
which gives precisely \cref{eq:recover_wQCQC}.

\section{Proofs for \cref{sec:variations}}
\label{sec:proof5}

\genericSplitGraph*
\begin{proof}
    The split-node generic routed graph is a modified version of the generic routed graph, featuring multiple nodes $\mathbf{\hat V}^{\K_n}_{n+1}$ in place of a single original node $\mathbf{V}_{n+1}$, being endowed with a subset of its arrows.
    However, all existing branches are preserved under this change.
    
    The modification preserves the property of univocality: For the additional branches $\mathbf{\bar V}_{n+1}^{\K_n}$ with their singleton input/output index sets, no non-trivial bifurcation choice $\Ind^\text{out}_{\mathbf{\bar V}_{n+1}^{\K_n}}$ is made. Therefore, the choice function for $\cG_{\text{QC-QC}(N)}$ (cf.\ \cref{eq:choice-function}) remains practically unchanged:
    Whenever $\Lambda_{(\Gamma,(\lambda_\mathbf{N})_\mathbf{N})}$ evaluates to branch $\mathbf{V}_{n+1}^{\K_n}$ for node $\mathbf{V}_{n+1}$, the new choice relation $\Lambda_{(\Gamma',(\lambda'_\mathbf{N})_\mathbf{N})}$ will evaluate to branch $\mathbf{V}_{n+1}^{\K_n}$ for node $\mathbf{\hat V}_{n+1}^{\K_n}$, and to branch $\mathbf{\bar V}_{n+1}^{\K'_n}$ for any node $\mathbf{\bar V}_{n+1}^{\K'_n}$ with $\K_n' \neq \K_n$.
    An analogous argument applies for the adjoint for the graph, confirming bi-univocality of the resulting graph.

    For the branch graph, the existing nodes $\mathbf{V}_{n+1}^{\K_n}$ and the arrows connected therewith remain unchanged, while we receive new arrows $\mathbf{\bar V}_{n+1}^{\K_n}$ endowed with the same incoming green arrows and outgoing red arrows as $\mathbf{V}_{n+1}^{\K_n}$.
    By contrast, the $\mathbf{V}_{n+1}^{\K_n}$ do not participate in any strong parent/child relationship, as they feature only a single, one-dimensional index value.
    Hence, the partial order of arrows, independent of colour, in the branch graph is preserved from the generic routed graph, and the rewriting does not introduce any loops into the new branch graph.
\end{proof}

\genericSplitEquivalence*
\begin{proof}
    Any routed isometry $V$ fleshing out the slot for a node $\mathbf{V}_{n+1}$ needs to follow the branched route $\lambda_{\mathbf{V}_{n+1}}$ and therefore be of the form
    $V = \bigoplus_{\K_n} V^{\K_n}$
    with 
    $V^{\K_n}: \HH^{\text{in}(\mathbf{V}_{n+1})}_{\K_n} \to \HH^{\text{out}(\mathbf{V}_{n+1})}_{\K_n}$.
    Therefore, we can take these $V^{\K_n}$ individually to flesh out the slots associated with the branch nodes $\mathbf{\hat V}_{n+1}^{\K_n}$, with
    $\hat V^{\K_n} \coloneqq  V^{\K_n} \oplus \id^{\mathrm{in}(\mathbf{\hat V}^{\K_n}_{n+1})\to\mathrm{out}(\mathbf{\hat V}^{\K_n}_{n+1})}_{\mathbf{\bar V}^{\K_n}_{n+1}}$, following the respective routes.
    Therefore, any routed circuit decomposition based on $\cG$ can equivalently be obtained using $\cG^\text{split}$, provided that the slot for $\mathbf{V}_{n+1}$ is fleshed out with a routed isometry rather than an routed superisometry.
    
    Conversely, consider arbitrary routed circuit compositions for $\cG^\text{split}$.
    While the most general fleshing out for the slot associated with $\mathbf{\hat V}_{n+1}^{\K_n}$ would be
    \begin{equation}
        \label{eq:split-isometry}
        \hat V^{\K_n} = V^{\K_n} \oplus\; e^{i\varphi_{\K_n}} \cdot \,\id^{\mathrm{in}(\mathbf{\hat V}^{\K_n}_{n+1})\to\mathrm{out}(\mathbf{\hat V}^{\K_n}_{n+1})}_{\mathbf{\bar V}^{\K_n}_{n+1}}\, ,
    \end{equation}    
    we can restrict to $\phi_{\K_n} = 0$, as a given phase $\phi_{\K_n}$ could always be absorbed into the remaining $V^{\K'_n}$.
    Therefore, any routed circuit decomposition based on $\cG^\text{split}_{\text{QC-QC}(N)}$ can equivalently be obtained using $\cG_{\text{QC-QC}(N)}$ as well (again, given that the slot for $\mathbf{V}_{n+1}$ is fleshed out with a routed isometry rather than an routed superisometry).
\end{proof}

\generalMergeOne*
\begin{proof}
    By construction, every $k \in \Ind^\text{prac}_{\mathbf{V} \to \mathbf{A}} = \bigsqcup_{\mathbf{V}^\gamma} \Ind_{\mathbf{V}^\gamma}^\text{out}$ can be associated with a branch $\mathbf{V}^\gamma$. 
    \Cref{eq:merge-condition} additionally ensures that each $k$ is associated with exactly one branch $\mathbf{A}^\beta$ and
    $\bigsqcup_{\mathbf{A}^\beta : \mathtt{LinkVal} ( \mathbf{V}^\gamma , \mathbf{A}^\beta ) \neq \emptyset}\Ind_{\mathbf{A}^\beta}^\text{in} = \Ind_{\mathbf{V}^\gamma}^\text{out}$.

    Concatenating the branches $\mathbf{A}^\beta$ and $\mathbf{V}^\gamma$ (or rather, their respective part of the relation) to a single branch $\mathbf{A}'^\gamma$ within a joint node $\mathbf{A}'$ yields a \enquote{completely connected} (sub-)relation
    $\Ind_{\mathbf{V}^\gamma}^\text{in} \to \bigsqcup_{\mathbf{A}^\beta} \Ind_{\mathbf{A}^\beta}^\text{out}$, which may correspond to a branch (if the entire relation turns out to be branched).
    Concatenating all branches in this fashion, altogether we indeed confirm the resulting relation $\lambda_{\mathbf{A}'} = \lambda_\mathbf{V} \ast \lambda_\mathbf{A}$ to be branched.

    We now consider how this changes the choice function $\Lambda_\mathcal{G}$:
    For the arguments, the bifurcation choices $\Ind^\text{out}_{\mathbf{V}^\gamma}$ and $\Ind^\text{out}_{\mathbf{A}^\beta}$ are affected, while in its image, multiple branches of $\mathbf{A}^\beta$ may be coarse-grained to a single value of the choice function.
    However, due to the structure of the graph, any choice made at $\mathbf{V}^\gamma$ is completely washed out at $\mathbf{A}^\beta$ anyway, keeping the arguments of the choice function practically unchanged for determining the remaining indices.
    Therefore, the choice function remains a function, and the graph is univocal.

    As the scenario is inherently asymmetric, we also consider univocality for the adjoint graph $\cG^\top$ explicitly.
    In this graph, we have $\bigsqcup_{\mathbf{A}^\beta : \mathtt{LinkVal} ( \mathbf{A}^\beta , \mathbf{V}^\gamma ) \neq \emptyset}\Ind_{\mathbf{A}^\beta}^\text{out} = \Ind_{\mathbf{V}^\gamma}^\text{in}$, and a branched route $\lambda^\top_{\mathbf{A'}}$.
    However, the arguments of the choice function change in a different manner, with any bifurcation choice made within the branch $\mathbf{V}^\gamma$ being preserved through $\mathbf{A}^\beta$, and therefore leaving the choice function practically invariant once more.

    Finally, if the original graph did not feature any loops (which means that the branch graph is acyclic even ignoring any colours), contracting across an arrow can only introduce loops into the graph if another directed path exists between the respective nodes in the branch graph.
    As $\mathbf{A}$ is the sole child of $\mathbf{V}$, this can not be the case here.
    Therefore, the routed graph is valid.
\end{proof}

\generalMergeTwo*
\begin{proof}
    This is the adjoint scenario for \cref{thm:var-split}, so the proof follows by analogy.
\end{proof}

\generalMergeThree*
\begin{proof}
    The backward direction of the proof is obvious: 
    For any fleshing out of a mergeable node pair $\mathbf{A},\mathbf{V}$ in $\cG$, we obtain an associated fleshing out for $\cG^{\text{merge}\uparrow/\downarrow}$ by fleshing out the resulting node with $\dket{\mathbf{A}} \ast \dket{\mathbf{V}}$, composing the repesctive routed (super)maps along the composition of the respective routes.

    For the forward direction, it remains to show that splitting the merged node into two, \enquote{undoing} the merge, does not restrict the set of possible operations due to a insufficient dimension for (some) sectors of $\HH^{\mathbf{V}\to\mathbf{A}}$ / $\HH^{\mathbf{A}\to\mathbf{V}}$, which could lead to a more restricted set of processes.
    To do so, we remind of the fact that the practical (and therefore operationally relevant) Hilbert space $\HH^{\text{in/out}(\mathbf{N})}_\text{prac}$ of a node $\mathbf{V}$ is given by the direct sum of all individual branch spaces $\HH^{\text{in/out}(\mathbf{N})}_{\mathbf{V}^\gamma}$.
    
    \textbf{I. Proof for $\cG^{\mathrm{merge}\uparrow}$:}
    For any fleshing out of $\mathbf{V}$ (with a routed isometry), we have
    $
        \forall \, \mathbf{V}^\gamma \in \Bran(\lambda_{\mathbf{V}}) : \
        \dim(\HH^{\text{out}(\mathbf{V})}_\mathbf{V^\gamma}) \geq \dim(\HH^{\text{in}(\mathbf{V})}_\mathbf{V^\gamma})
    $.
    Therefore, no valid dimension assignment for $\HH^{\mathbf{V}\to\mathbf{A}}$ (with
    $
        \dim(\HH^{\text{out}(\mathbf{V})}_\mathbf{V^\gamma}) =
        \dim \big( \bigoplus_{k \in \Ind^\text{out}_{\mathbf{V}^\gamma}} \HH^{\mathbf{V}\to\mathbf{A}}_k \big)
    $)
    can restrict the set of possible operations of form
    $\HH^{\text{in}(\mathbf{V})} \xrightarrow{\lambda_{\mathbf{A} \ast \mathbf{V}}} \HH^{\text{out}(\mathbf{A})}$.

    \textbf{II. Proof for $\cG^{\mathrm{merge}\downarrow}$:}

    First, consider the case where the skeletal supermap specifies that $\mathbf{V}$ must be fleshed out by a routed unitary, which entails that the dimension assignment for the skeletal supermap is such that
    $
        \forall \, \mathbf{V}^\gamma \in \Bran(\lambda_{\mathbf{V}}) : \
        \dim(\HH^{\text{out}(\mathbf{V})}_\mathbf{V^\gamma})
        = \dim(\HH^{\text{in}(\mathbf{V})}_\mathbf{V^\gamma})
    $.
    In this case, the dimension for $\dim(\HH^{\text{in}(\mathbf{V})}_\mathbf{V^\gamma}) = \dim \Big( \bigoplus_{k \in \Ind^\text{in}_{\mathbf{V}^\gamma}} \HH^{\mathbf{V}\to\mathbf{A}}_k \Big)$ is clearly sufficient to not restrict the set of possible operations
    $\HH^{\text{in}(\mathbf{A})} \xrightarrow{\lambda_{\mathbf{A} \ast \mathbf{V}}} \HH^{\text{out}(\mathbf{V})}$.
    However, for the routed circuit decomposition as a whole, the restriction to a routed unitary is actually without loss of generality:
    Specifying a monopartite routed superisometry to flesh out $\mathbf{A}$, we may choose it so that it prepares some independent state and sends it to $\mathbf{V}$, allowing to recover any (routed) isometry by composition of this state with the (routed) unitary.
    The specifics of this argument are analogous to the proof of \cref{thm:superisometry}.
\end{proof}

\newpage
\section{Reconstruction of the routed circuit decomposition of the Grenoble process \texorpdfstring{in \cite{Vanrietvelde2023}}{}}
\label{sec:Grenoble-reconstruction}

In Figure 25 of \cite{Vanrietvelde2023}, a routed quantum circuit decomposition of the Grenoble process is displayed.
While we outlined the general procedure to obtain this representation from the $\cG_{\text{QC-QC}(N)}$ in \cref{ex:Grenoble}, we did not specify a detailed mapping between the two representations.

Here, we provide the detailed translations required to relate both representations.
For the index values, we have:
\begin{align}
    l=1 \to \XX_\emptyset^A = \{A\} \,, \qquad
    m=1 \to \XX_\emptyset^B = \{B\} \,, \qquad
    n=1 \to \XX_\emptyset^C = \{C\} \,, \notag \\
    l_1=1 \to \XX_A^B = \{A, B\} \,, \quad
    m_1=1 \to \XX_B^C = \{B, C\} \,, \quad
    n_1=1 \to \XX_C^A = \{C, A\} \,, \notag \\
    l_2=1 \to \XX_A^C = \{A, C\} \,, \quad
    m_2=1 \to \XX_B^A = \{A, B\} \,, \quad
    n_2=1 \to \XX_C^B = \{B, C\} \,, \notag \\
    f=1 \to \XX_{BC}^A = \{A, B, C\} \,, \qquad
    g=1 \to \XX_{AC}^B = \{A, B, C\} \,, \qquad
    h=1 \to \CC_{AB}^C = \{A, B, C\} \,.
\end{align}
For all of their indices, if they take the value $0$, we set the associated index $\XX_{\K_n}^k = \ravnothing$.

Slightly modifying the circuit by routing $E^g$ trivially through $W_B$, for the Hilbert spaces we have:
\begin{align}
    \HH^{P_C} \otimes \HH^{P_T} \quad &\sim \quad \HH^P , \\
    \HH^{F_T} \quad &\sim \quad \HH^F , \\
    \HH^{F_A} \otimes \HH^{F_C} \quad &\sim \quad \HH^{\alpha_F} , \\
    \HH^{E^g} \otimes \HH^{Q^{gm l_1 n_2}} \quad &\sim \quad \HH^{\bar \alpha} \otimes \HH^{\bar C^{(k)}} , \\
    \HH^{E^g} \otimes \HH^{S^g} \quad &\sim \quad \HH^{\mathbf{C} \to \mathbf{V}_4} , \\
    \HH^{E^{g=1}} \otimes \HH^{S^{g=1}} \quad &\sim \quad \HH^{\alpha_3} \otimes \HH^{C^{(B)}_3} , \\
    \HH^{X^{m l_1}} \quad &\sim \quad \HH^{\mathbf{B}\to\mathbf{C}} , \\
    \HH^{Y^{m_2 n_2}} \quad &\sim \quad \HH^{\mathbf{B}\to\mathbf{A}}
\end{align}
etc. Similarly, we identify
\begin{align}
    U_P \quad &\sim \quad \dket{\mathbf{V}_1} \,,  \qquad\qquad
    \textit{\AE}_F \quad \sim \quad \dket{\mathbf{V}_4} \,, \\
    V_B \quad &\sim \quad \dket{J_{\textbf{V}^{AC}_{3}}^\text{in}} 
        \ast \dket{\tilde{V}_{3}^{AC}}
        \ast \dket{J_{\textbf{V}^{AC}_{3}}^\text{out}}
        \ast \dket{J_{\textbf{B}}^\text{in}} \,, \\
    W_B \quad &\sim \quad \dket{J_{\textbf{B}}^\text{out}}
        \ast \dket{J_{\textbf{V}^{B}_{2}}^\text{in}} 
        \ast \dket{\tilde{V}_{2}^{B}}
        \ast \dket{J_{\textbf{V}^{B}_{2}}^\text{out}}
\end{align}
etc.
Indeed we see that system $E^g$ being withheld from the agent $B$ ensures that its output dimension does not change, even though $2 = \dim(\HH^{\alpha_3}) > \dim(\HH^{\alpha_2}) = 1$, while the agent's system dimensions are constant.
While our notations may differ significantly, we reproduce essentially the same representation for the process.
\newpage

\section{Restricting to agent-localised processing}
\label{sec:local}

In \cref{sec:grouping}, we have demonstrated how to obtain an alternative routed circuit decomposition with fewer internal nodes by modifying our generic routed graph $\cG_{\text{QC-QC}(N)}$, eliminating $\mathbf{V}_2$ and $\mathbf{V}_N$.
Based on this, we consider an example which demonstrates how a removal of further internal nodes (between $\mathbf{V}_2$ and $\mathbf{V}_{N}$) generally fails, and under which conditions we may recover the possibility to get rid of further internal nodes.

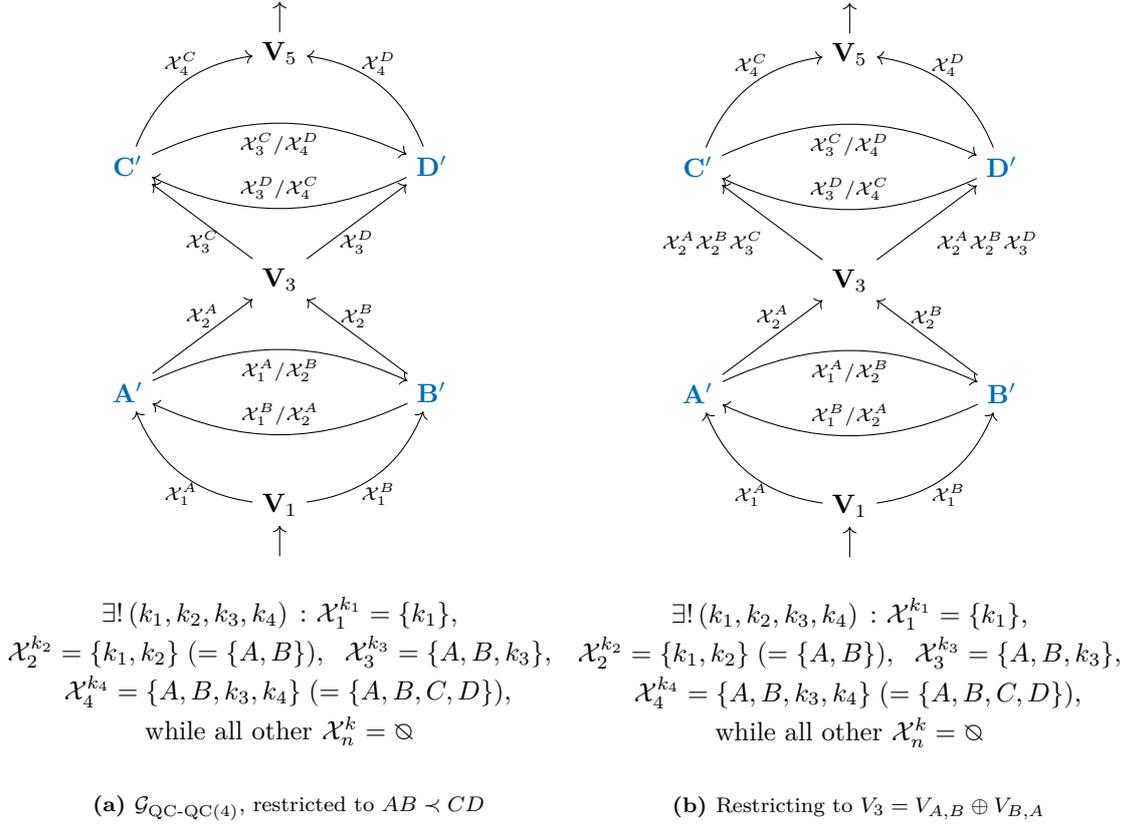
\begin{figure}
	\centering
	\begin{subfigure}[b]{0.48\textwidth}
		\centering
		\!\!\!\!\begin{tikzpicture}[transform shape]
			\node (bottom) at (0,-0.3) {};
			\node (V0) at (0,0.5) {$\mathbf{V}_1$};
			\node[text=NavyBlue] (A) at (-2,2) {$\mathbf{A}'$};
			\node[text=NavyBlue] (B) at (2,2) {$\mathbf{B}'$};
			\node (V2) at (0,3.5) {$\mathbf{V}_3$};
            \node[text=NavyBlue] (C) at (-2,5) {$\mathbf{C}'$};
			\node[text=NavyBlue] (D) at (2,5) {$\mathbf{D}'$};
            \node (V4) at (0,6.5) {$\mathbf{V}_5$};
			\node (top) at (0,7.3) {};
			\node (i1) at (0,-1.7) [align=center] {
                $\exists! \, (k_1, k_2, k_3, k_4) \, :$
                $\XX_1^{k_1} = \{ k_1 \},$\\[1mm]
                $\XX_2^{k_2} = \{k_1, k_2 \} \mathrel{(=} \{A, B \}),\;\ \XX_3^{k_3} = \{ A, B, k_3 \},$\\[1mm]
                $ \;\ \XX_4^{k_4} = \{A,B,k_3,k_4 \} \mathrel{(=} \{A, B,C,D \}),$\\[1mm]
                while all other $\XX^k_n = \ravnothing$};
			\node (i2) at (0,-3) {};
			
			\begin{scope}[
				every node/.style={fill=white,circle,inner sep=0pt}
				every edge/.style=routedarrow]
				\path [->] (bottom) edge (V0);
                
				\path [->] (V0) edge[bend left] node[below] {$\scriptstyle{\XX^A_1}$} (A);
				\path [->] (V0) edge[bend right] node[below] {$\scriptstyle{\XX^B_1}$} (B);
				\path [->] (A) edge[bend left=25] node[below] {$\scriptstyle{\XX^A_1 / \XX^B_2}$} (B);
				\path [->] (B) edge[bend left=25] node[above] {$\scriptstyle{\XX^B_1 / \XX^A_2}$} (A);
				\path [->] (B) edge node[above] {$\scriptstyle{\XX^B_2}$} (V2);
				\path [->] (A) edge node[above] {$\scriptstyle{\XX^A_2}$} (V2);

                \path [->] (V2) edge node[below] {$\scriptstyle{\XX^C_3}$} (C);
				\path [->] (V2) edge node[below] {$\scriptstyle{\XX^D_3}$} (D);
				\path [->] (C) edge[bend left=25] node[below] {$\scriptstyle{\XX^C_3 / \XX^D_4}$} (D);
				\path [->] (D) edge[bend left=25] node[above] {$\scriptstyle{\XX^D_3 / \XX^C_4}$} (C);
				\path [->] (D) edge[bend right] node[above] {$\scriptstyle{\XX^D_4}$} (V4);
				\path [->] (C) edge[bend left] node[above] {$\scriptstyle{\XX^C_4}$} (V4);
                
			    \path [->] (V4) edge (top);
			\end{scope}
		\end{tikzpicture}
		\caption{$\cG_\text{QC-QC(4)}$, restricted to $AB \prec CD$}
        \label{fig:Zurich}
	\end{subfigure}%
	\hfill
    \begin{subfigure}[b]{0.48\textwidth}
		\centering
		\!\!\!\!\begin{tikzpicture}[transform shape]
			\node (bottom) at (0,-0.3) {};
			\node (V0) at (0,0.5) {$\mathbf{V}_1$};
			\node[text=NavyBlue] (A) at (-2,2) {$\mathbf{A}'$};
			\node[text=NavyBlue] (B) at (2,2) {$\mathbf{B}'$};
			\node (V2) at (0,3.5) {$\mathbf{V}_3$};
            \node[text=NavyBlue] (C) at (-2,5) {$\mathbf{C}'$};
			\node[text=NavyBlue] (D) at (2,5) {$\mathbf{D}'$};
            \node (V4) at (0,6.5) {$\mathbf{V}_5$};
			\node (top) at (0,7.3) {};
			\node (i1) at (0,-1.7) [align=center] {
                $\exists! \, (k_1, k_2, k_3, k_4) \, :$
                $\XX_1^{k_1} = \{ k_1 \},$\\[1mm]
                $\XX_2^{k_2} = \{k_1, k_2 \} \mathrel{(=} \{A, B \}),\;\ \XX_3^{k_3} = \{ A, B, k_3 \},$\\[1mm]
                $ \;\ \XX_4^{k_4} = \{A,B,k_3,k_4 \} \mathrel{(=} \{A, B,C,D \}),$\\[1mm]
                while all other $\XX^k_n = \ravnothing$};
			\node (i2) at (0,-3) {};
			
			\begin{scope}[
				every node/.style={fill=white,circle,inner sep=0pt}
				every edge/.style=routedarrow]
				\path [->] (bottom) edge (V0);
                
				\path [->] (V0) edge[bend left] node[below] {$\scriptstyle{\XX^A_1}$} (A);
				\path [->] (V0) edge[bend right] node[below] {$\scriptstyle{\XX^B_1}$} (B);
				\path [->] (A) edge[bend left=25] node[below] {$\scriptstyle{\XX^A_1 / \XX^B_2}$} (B);
				\path [->] (B) edge[bend left=25] node[above] {$\scriptstyle{\XX^B_1 / \XX^A_2}$} (A);
				\path [->] (B) edge node[above] {$\scriptstyle{\XX^B_2}$} (V2);
				\path [->] (A) edge node[above] {$\scriptstyle{\XX^A_2}$} (V2);

                \path [->] (V2) edge node[below left] {$\scriptstyle{\XX^A_2 \XX^B_2 \XX^C_3}$} (C);
				\path [->] (V2) edge node[below right] {$\scriptstyle{\XX^A_2 \XX^B_2 \XX^D_3}$} (D);
				\path [->] (C) edge[bend left=25] node[below] {$\scriptstyle{\XX^C_3 / \XX^D_4}$} (D);
				\path [->] (D) edge[bend left=25] node[above] {$\scriptstyle{\XX^D_3 / \XX^C_4}$} (C);
				\path [->] (D) edge[bend right] node[above] {$\scriptstyle{\XX^D_4}$} (V4);
				\path [->] (C) edge[bend left] node[above] {$\scriptstyle{\XX^C_4}$} (V4);
                
			    \path [->] (V4) edge (top);
			\end{scope}
		\end{tikzpicture}
		\caption{Restricting to $V_3 = V_{A,B} \oplus V_{B,A}$}
        \label{fig:not-Zurich}
	\end{subfigure}%
	\hfill
    \begin{subfigure}[b]{0.33\textwidth}
	\end{subfigure}%
	\caption{
        Two routed graphs for 4-partite processes with $AB \prec CD$. While {\bf (a)} can represent any QC-QC with this order, representation {\bf (b)} is only applicable if $\mathbf{V}_3$ admits a more fine-grained branch structure into two branches $\mathbf{V}_{A,B}$ and $\mathbf{V}_{B,A}$.
        \Cref{ex:Zurich} can not be represented with the latter graph.}
	\label{fig:routed-graphs-four}
\end{figure}

\begin{figure}
    \centering
    \!\!\!\!\begin{tikzpicture}[transform shape]
        \node (bottom) at (0,-0.3) {};
        \node (V0) at (0,0.5) {$\mathbf{V}_1$};
        \node[text=NavyBlue] (A) at (-2,2) {$\mathbf{A}''$};
        \node[text=NavyBlue] (B) at (2,2) {$\mathbf{B}''$};
        \node[text=NavyBlue] (C) at (-2,5) {$\mathbf{C}'$};
        \node[text=NavyBlue] (D) at (2,5) {$\mathbf{D}'$};
        \node (V4) at (0,6.5) {$\mathbf{V}_5$};
        \node (top) at (0,7.3) {};
        \node (i1) at (0,-1.7) [align=center] {
            $\exists! \, (k_1, k_2, k_3, k_4) \, :$
            $\XX_1^{k_1} = \{ k_1 \},$\\[1mm]
            $\XX_2^{k_2} = \{k_1, k_2 \} \mathrel{(=} \{A, B \}),\;\ \XX_{3,k_2}^{k_3} = \{ A, B, k_3 \},$\\[1mm]
            $ \;\ \XX_4^{k_4} = \{A,B,k_3,k_4 \} \mathrel{(=} \{A, B,C,D \}),$\\[1mm]
            while all other $\XX^k_n,\, \XX^{k_3}_{3,k_2} = \ravnothing$};
        \node (i2) at (0,-3) {};
        
        \begin{scope}[
            every node/.style={fill=white,circle,inner sep=0pt}
            every edge/.style=routedarrow]
            \path [->] (bottom) edge (V0);
            
            \path [->] (V0) edge[bend left] node[below] {$\scriptstyle{\XX^A_1}$} (A);
            \path [->] (V0) edge[bend right] node[below] {$\scriptstyle{\XX^B_1}$} (B);
            \path [->] (A) edge[bend left=25] node[below] {$\scriptstyle{\XX^A_1 / \XX^B_2}$} (B);
            \path [->] (B) edge[bend left=25] node[above] {$\scriptstyle{\XX^B_1 / \XX^A_2}$} (A);
            
            \path [->] (B) edge[bend right=10] node[above right] {$\scriptstyle{\XX^D_{3,B}}$} (D);
            \path [->] (A) edge[bend left=10] node[above left] {$\scriptstyle{\XX^C_{3,A}}$} (C);
            \path [->] (B) edge node[right,pos=0.4] {$\scriptstyle{\XX^C_{3,B}}$} (C);
            \path [->] (A) edge node[left,pos=0.4] {$\scriptstyle{\XX^D_{3,A}}$} (D);

            \path [->] (C) edge[bend left=25] node[below] {$\scriptstyle{\XX^C_3 / \XX^D_4}$} (D);
            \path [->] (D) edge[bend left=25] node[above] {$\scriptstyle{\XX^D_3 / \XX^C_4}$} (C);
            \path [->] (D) edge[bend right] node[above] {$\scriptstyle{\XX^D_4}$} (V4);
            \path [->] (C) edge[bend left] node[above] {$\scriptstyle{\XX^C_4}$} (V4);
            
            \path [->] (V4) edge (top);
        \end{scope}
    \end{tikzpicture}
    \caption{
        A routed graph which may represent all 4-partite processes which feature order $AB \prec CD$ and support only agent-localised transformations.
        Any process supported by this routed graph can be understood as a wiring of independent pre- and post-processing (with memory) localised to the individual agents.
        This representation is obtained starting from \cref{fig:not-Zurich}, by additional assuming a more fine-grained branch structure imposing $V_3 = V_{A,B} \oplus V_{B,A}$ for the internal node $\mathbf{V}_3$.
        We split the respective node into two nodes $\mathbf{\hat V}_{B,A}$ and $\mathbf{\hat V}_{A,B}$ according to \cref{sec:var-split}, and merge the resulting nodes into their respective parents $\mathbf{A}'$ and $\mathbf{B}'$ afterwards to obtain $\mathbf{A}''$ and $\mathbf{B}''$.
        One would obtain the same final result for the routed graph by assuming a branch structure imposing $V_3 = V^{\to C} \oplus V^{\to D}$ instead, albeit with $\mathbf{C}$ and $\mathbf{D}$ featuring the $''$ instead. }
    \label{fig:local-graph}
\end{figure}

\begin{example}[{Zurich process \cite{Salzger2022, Lukas}}]
    \label{ex:Zurich}
    The \emph{Zurich process}, introduced in \cite{Salzger2022} as double quantum switch (and later independently proposed in Appendix~B of \cite{Lukas}),
    is given as follows:
    \begin{align}
    	V^{\rightarrow k_1}_{\emptyset, \emptyset} &= \tfrac{1}{\sqrt{2}} \ket{\psi}^{A^I_{k_1}} \\
    	V^{\rightarrow k_2}_{\emptyset, k_1} &= \mathbb{1}^{A^O_{k_1} \rightarrow A^I_{k_1}} \\
    	V^{\rightarrow k_3}_{\{k_1\}, k_2} &= \begin{cases}
    		-\frac{1}{\sqrt{2}} \mathbb{1}^{A^O_{k_2} \rightarrow A^I_{k_3}}, &k_2=B \text{ and } k_3=C \\
    		\frac{1}{\sqrt{2}} \mathbb{1}^{A^O_{k_2} \rightarrow A^I_{k_3}}, &\text{else}
    	\end{cases} \\
    	V^{\rightarrow k_4}_{\{k_1, k_2\}, k_3} &= \mathbb{1}^{A^O_{k_3} \rightarrow A^I_{k_4}} \\
    	V^{\rightarrow F}_{\{k_1, k_2, k_3\}, k_4} &= \mathbb{1}^{A^O_{k_4} \rightarrow F} \otimes \ket{k_4 \text{ mod } 2}^{\alpha_F}
    \end{align}
    where it is assumed that $k_1, k_2 \in \{A,B\}$ and $k_3, k_4 \in \{C, D\}$, while all other internal operations are assumed to vanish. Hence, it can be represented using the routed graph depicted in \cref{fig:Zurich}.
    Intuitively, we can see the process as two subsequent quantum switches, with the operation performed at $\tilde{V}_3$ acting as a basis change of the control system from the computational basis to $\{\ket{+}, \ket{-}\}$.\footnote{
        In particular, this implies that even dependent on isometries $A, B, C, D$ chosen by the agents, no POVM $\mathcal{M}(A,B,C,D)$ on $\HH^F \otimes \HH^{\alpha_F}$ can return the entire causal order in the process, while this is possible for the quantum switch or the Grenoble process.
    }
    
    Having only a single branch, $\tilde{V}_3$ does not admit a direct sum decomposition $\tilde{V}_3 = V_{A,B} \oplus V_{B,A}$ or $\tilde{V}_3 = V^{\to C} \oplus V^{\to D}$ into independent isometries.
    Indeed, the respective maps $V_{k_1,k_2}^{\to k_3}$ do not constitute isometries on the respective subspaces when considered in isolation.

    This raises the question whether we can capture relevant subclasses of processes, allowing for such a decomposition, by adjusting the index constraints to produce a more fine-granular branch structure.
    For instance, the structure $V_{A,B} \oplus V_{B,A}$ suggests that at the third agent $C$ or $D$, the second agent is encoded in $\HH^{\text{in}(\mathbf{C}/\mathbf{D})}$\footnote{
        In absence of an ancillary system $\alpha_3$, it will be encoded in $\HH^{C^I/D^I}$. If the respective information is passed through an ancillary, it will be passed further along through the process instead.}.
    This is not the case for the Zurich process, but would allow for many simpler 4-partite processes, including a concatenation of two quantum switches controlled by the same basis or by two separate, uncorrelated qubits.

    Encoding this branch structure using additional index values and constraints upon them, we obtain \cref{fig:not-Zurich} from \cref{fig:Zurich}.
    The changed branch structure then allows to perform further node splitting (according to \cref{thm:split-equivalence}) and node merging (according to \cref{thm:merge-one}), getting rid of $\mathbf{V}_3$ in the routed graph similarly as for $\mathbf{V}_2$.
    Remarkably, we would have received the same end result for the routed graph if we had chosen a branch structure imposing $V^{\to C} \oplus V^{\to D}$ instead, ultimately performing a merge of $\mathbf{\hat V}^{\to C} \hookrightarrow \mathbf{C}'$ and $\mathbf{\hat V}^{\to D} \hookrightarrow \mathbf{D}'$.
    This reflects the fact that after merging, the routed graph recovers the symmetry of the original routed graph \cref{fig:Zurich} under flipping all arrows.
    
    This suggests a more general class of processes with an arbitrary number of agents, which can be represented without any need of internal nodes other than $\mathbf{V}_1$ and $\mathbf{V}_{N+1}$.
    The action of the internal isometries is then restricted to 
    local pre- and post-processing\footnote{with a side-channel $\bar\alpha$} at the respective agents:
    At each node, the next agent is independently chosen, without interference from any other agents.
    Altogether, this amounts of a process which decomposes into a set of quantum 1-combs (and two routed isometries) wired together in a routed circuit.
    We specify the generic routed graph $\mathcal{G}^\text{local}_{\textup{QC-QC}(N)}$ capturing implementations of this form as follows:

    \begin{definition}[{$\mathcal{G}^\text{local}_{\textup{QC-QC}(N)}$}]
        \label{def:local-graph}
        The routed graph $\mathcal{G}^\text{local}_{\textup{QC-QC}(N)}$, representing \emph{agent-localised processes} with minimal internal nodes,
        is defined as follows:
        \begin{itemize}
            \item There are 2 types of nodes: ``\textbf{A}-nodes'' $\textbf{A}_k$, for $k=1,\ldots,N$, \:\ and ``\textbf{V}-nodes'' $\textbf{V}_1$ and $\mathbf{V}_{n+1}$.
            \item The indexed graph involves the following indexed arrows:
            \begin{itemize}
                \item open-ended arrows $\quad \longrightarrow \mathbf{V}_1$ and $\mathbf{V}_{N+1}\longrightarrow \quad$, 
                \item $\forall\,k=1,\ldots,N$,
                \begin{equation}
                    \textbf{V}_1\xrightarrow{\ \XX_1^k\ }\textbf{A}_k \quad \text{and} \quad \textbf{A}_k\xrightarrow{\ {\XX_N^k}\ }\textbf{V}_{N+1},
                    \label{eq:local-arrows}
                \end{equation}
                with the indices ${\XX_n^k}$ taking one of two values each: either ${\XX_1^k} = k$ or ${\XX_N^k} = \cN \setminus k$ as its value, or ${\XX_n^k} = \ravnothing$.
                The latter value will later be considered 1-dimensional.
                \item $\forall\,n=2,\ldots,N-1 , \quad \forall\,k=1,\ldots,N , \quad \forall\,\ell=1,\ldots,N$,
                \begin{equation}
                    \textbf{A}_k\xrightarrow{\ \XX_{n,k}^\ell\ }\textbf{A}_\ell \,,
                \end{equation}
                with a given index ${\XX_{n,k}^\ell}$ taking either ${\XX_{n,k}^\ell} = \K_n$ with $k,\ell \in \K_n$ as its value, or ${\XX_n^k} = \ravnothing$.
                The latter value will later be considered 1-dimensional.
            \end{itemize}

            \item The routes are specified such that
            \begin{itemize}
                \item for each node $\mathbf{A}_k$, the lists of all input and output index values $(\XX_{n,j}^k)_{n,j} \in \Ind^\text{in}_{\mathbf{A}_{k}}$ and $({\XX_{n,k}^\ell})_{n,\ell} \in \Ind^\text{out}_{\mathbf{A}_{k}}$ must satisfy that there is exactly one $1 \le n \le N-1$ such that
                \begin{equation}
                    \exists!\,j, \XX_{n,j}^k \neq\ravnothing, \;\ \exists!\,\ell, \XX_{n+1,k}^\ell\neq\ravnothing; \quad \text{ and } \ \XX_{n+1,k}^\ell=\XX_{n,j}^k\cup \ell \,.
                \end{equation}
                For all other values of $n$, all indices $\XX_{n,j}^k$ and $\XX_{n,k}^\ell$ are $\ravnothing$.
                \item for $\mathbf{V}_1$, $\XX_\emptyset^\ell \in \Ind^\text{out}_{\mathbf{V}_{1}}$ must satisfy
                $\exists!\,\ell, \ \XX_\emptyset^\ell\neq\ravnothing$.
                \item for $\mathbf{V}_N$, $\XX_\cN^k \in \Ind^\text{in}_{\mathbf{V}_{N+1}}$ must satisfy
                $\exists!\,k, \ \XX_{\cN \setminus k}^k\neq\ravnothing$.
            \end{itemize} 
            They condense to the \emph{global index constraints} that 
            $\exists!\,(k_1,k_2,\ldots,k_N):$
            \begin{equation*}
                \XX_{1}^{k_1} = \{k_1\}, \; \XX_{2,k_1}^{k_2} = \{k_1,k_2\}, \; \ldots, \; \XX_{N-1,k_{N-2}}^{k_{N-1}} = \{k_1,k_2,\ldots,k_{N-1}\}, \; \XX_{N}^{k} = \cN
            \end{equation*}
            while all other $\XX_{n,k}^\ell=\ravnothing$.
            This captures the understanding that, along each global path, all operations $A_k$ are applied once and only once, in a given order.
        \end{itemize}
    \end{definition}
    For $N=4$, this graph is shown in \cref{fig:local-graph}.

    Actually, one may simplify $\mathcal{G}^\text{local}_{\textup{QC-QC}(N)}$ by removing the label $n$ for each index $\XX_{n,k}^\ell$, as generally $\abs{\XX_{n,k}^\ell} = n$, further estranging $\cG^\text{local}_{\text{QC-QC}(N)}$ from the original generic routed graph $\mathcal{G}_{\textup{QC-QC}(N)}$.
    As expected, we confirm that up to relabelling, $\mathcal{G}^\text{local}_{\textup{QC-QC}(2)}$ and $\mathcal{G}^\text{local}_{\textup{QC-QC}(3)}$ reproduce the generic routed graphs we received after node merging in \cref{sec:grouping}.
    
    We conjecture that there is an implementation-agnostic SDP characterisation of the subclass of QC-QCs which may be implemented in this fashion, and leave proving this conjecture for future work.
    
\end{example}

\end{document}